\newcommand{\hollowstar}{\text{\ding{73}}}
\numberwithin{equation}{section}
\numberwithin{figure}{section}
\newtheorem{theorem}{Theorem}[section]
\newtheorem{lemma}[theorem]{Lemma}
\newtheorem{corollary}[theorem]{Corollary}
\newtheorem{definition}{Definition}[section]
\newtheorem{remark}{Remark}[section]
\title{Emergence of Homophily under Contextual Mechanisms\thanks{A more detailed title we used before is: ``Emergence of Homophily: Policy Requirement, Unrestricted Credit and Strategic Segmentation''. The model in this paper mainly follows the spirit of Schelling (1978).}} 
\author{Jiaxing Weng, Haijun Yang, Tongyu Wang\thanks{J. Weng: School of Economics, Ocean University of China, Qingdao, 266100, China; wengjiaxing@stu.ouc.edu.cn. H. Yang: School of Economics and Management, Beihang University, Beijing, 100191, China; navy@buaa.edu.cn.T. Wang: School of Economics, Ocean University of China, Qingdao, 266100, China; wty@ouc.edu.cn.}}
\begin{document}
\maketitle
\begin{abstract}
	\begin{sloppypar}
		This paper introduces a tractable model to study incentive-compatible homophily under both external environments—such as exogenous shocks or policy constraints—and internal micromotives based on interactive attributes. We propose a set of invariants that capture main features of homophily and the well-defined partition dynamics leading to perfect global homophily. The criteria for homophily formation are characterized via isomorphism. Within this framework, we demonstrate the emergence of macro-complementarity coupled with micro-substitution, where local individuals' utility function is nonlinear and submodular. We discuss two types of financial networks and their differences: hierarchical structure emerges from short-term liquidity transactions, whereas core–periphery structure is based on a stock-based perspective.
	\end{sloppypar}
\end{abstract}

\textbf{Keywords:} Emergence, Homophily, Financial Networks, Mechanism, Strategic Segmentation.
\newpage
\section{Introduction}
\noindent\textbf{Motivation} ``Birds of a feather flock together'' suggests an emergent collective adaptation to contextual mechanisms, whereby the strategic interactions of agents possessing salient attributes endogenously shape the \textit{homophily} patterns of network under given constraints. 

For example, banks typically choose their strategic behaviors within a given macro- or micro-prudential regulatory framework. When liquidity environment deteriorates, banks, while complying with regulatory requirements (such as the statutory leverage ratio and reserve requirements), determine their optimal counterparties and transaction volumes based on their own parameters, such as leverage, cash holdings, and illiquid assets. Optimality in this context is always grounded in self-interest, and is commonly defined through concepts such as Nash equilibrium. In many special cases—such as when strategic complementarities or positive externalities exist within the trading network—the system tends to exhibit a unique form of homophily, with all banks clustering together. However, in settings akin to zero-sum games, this outcome may not hold (see Proposition \ref{Homogeneous and Homophily}); instead, the evolved stable state may consist of multiple distinct forms of homophily. Motivated by this, we propose a new analytical framework that allows for exogenous macro-level mechanisms while preserving the dynamic interactions among banks. This framework provides a consistent definition of optimality, characterizes the conditions for the emergence of new configurations of homophily, and captures both the internal structures of networks at their onset and the principles underlying transitions between different homophily patterns.

In this theory, we aim to address the following questions: What mechanisms drive the formation or differentiation of incentive-compatible homophily patterns? What are the inherent characteristics of a homophily partition? Under what internal structures can agents sustain a stable form of homophily? And how do we characterize the comparative statics of homophily with respect to external environments and individual attributes?

We study a model in which banks are subject to homogeneous macro-prudential policies and exogenous shocks, but face heterogeneous micro-prudential policy requirements and possess heterogeneous attributes. The liquidation demand function of bank $i$, denoted $s^i$, is submodular and nonlinear, and its fire-sale outcome $x^i$ depends on the policy context, market liquidity conditions, risk environment, and the choices of other banks. We allow $x^i \in \mathbb{R}$, where $x^i \leq 0$ is interpreted as credit creation and $x^i > 0$ as asset sales. This does not imply a zero-sum game; rather, it reflects the heterogeneity in the signs of banks' utilities.

We characterize homophily as different subspaces $\mathcal{P}_{s\,|\,t}$, each of them ensures the existence and uniqueness of a liquidation equilibrium. Within a given subspace, individual banks decide whether to belong to the current homophily according to an incentive compatibility condition. Perfection dynamics refers to the iterative application of joint operations within each subspace $\mathcal{P}_{s\,|\,t}$: banks that satisfy the relevant micromotives are retained within the homophily, while those whose incentives differ are guided to alternative homophily patterns. To capture this evolutionary process, we extract invariants within $\mathcal{P}_{s\,|\,t}$.

\vspace{0.15cm}

\noindent\textbf{Results} Our first main result characterizes the homophily transition under escalating exogenous shocks. We employ an equivalent form of the incentive compatibility condition, termed the crowding-out effect (henceforth, COE), to captures banks' interactive states and the alignment (or misalignment) of incentives to form homophily. When banks can be distinctly partitioned into those engaging in credit creation and those conducting asset sales, the two exhibit opposite incentives: the former tends to form a homophily, while the latter resists it (Lemma \ref{Lemma: Crowding-Out Effect Opposite}). However, zooming in to a local perspective, the credit-creating banks lack sufficient internal cohesion to sustain a stable cluster, whereas the latter's cluster configuration is micro-incentive-compatible (Lemma \ref{Lemma: Crowding-Out Effect Homophily}). Building on this, we generalize the COE to a broader range. 

In a two-bank setting, we examine how homophily-formation incentives evolve as systemic risk deteriorates. The comparative statics identify an ignition condition under risk-free environment, which determines the banks' initial liquidation state profiles—the starting point. This condition allows asymmetric behaviors: one bank to create credit and another liquidates assets, while guaranteeing the irreversibility of this configuration. The leverage-based threshold hierarchy rules out the possibility of role-swapping, yet the system still admits simultaneous credit creation under moderate risk (Proposition \ref{Propositions about Counterfactual Monopoly Space2}).

We then explore three types of homophily transition based on banks' relative attributes. The sequence of transitions is fully governed by the threshold hierarchy: highly leveraged banks, being more resilient to external shocks, adjust their behavior later than less leveraged ones (Proposition \ref{Propositions about Counterfactual Monopoly Space} and Proposition \ref{Propositions about Counterfactual Monopoly Space2}). Extending the analysis to general bank groups, we define the maximal bailout cluster $\mathcal{B}(\varepsilon)$ as the subset in which all banks maintain credit creation, and the maximal bail-in cluster $\mathcal{B}^{\mathcal{I}}(\varepsilon)$ as the subset in which all conduct asset sales. Theorem \ref{Decomposition and Compression Equivalence} (Compression Equivalence) describes the banking system's phase transition as financial environment deteriorate, providing the exogenous shock intervals corresponding to each unit reduction in the size of $\mathcal{B}(\varepsilon)$ and the associated COE states. The overall phase transition depends on local critical points and is equivalent to the state shifts of boundary banks of $\mathcal{B}(\varepsilon)$ and $\mathcal{B}^{\mathcal{I}}(\varepsilon)$. The proposed concept ``Chain'' further refines the connection between the threshold hierarchy and homophily transitions, distinguishing the partial order of liquidation volumes from the hierarchy—showing that an inferior bank may still exhibit a higher partial order in credit creation (see Figure \ref{Partial order and Hierarchy}).

We next turn to the homophily differentiation dynamics, which incorporate bank heterogeneity, risk environment, and the policy context. We establish the necessary and sufficient condition for the existence and uniqueness of the clearing system equilibrium, which requires sufficient market liquidity to accommodate exogenous shocks (Proposition \ref{Solutions in FCP}). At the same time, considering the inapplicability of such clearing process, we introduce a clearing structure that, while unrestricted, preserves the desirable properties of clearing equilibrium (Lemma \ref{Lemma ForDecomposition and Compression Equivalence}). We define this structure as homophily, and refer to the collection of subspaces evolving from a primitive generative space (i.e., the initial homophily) as the perfection generating space.

The evolutionary process is characterized by each bank's ``stay-or-leave'' decision following its assessment of individual or group COE states (termed the individual or partition perfection operation, respectively) and heterogeneous micro-prudential policy (termed the switch $p^s$). We demonstrate that when strict regulation is imposed on credit-creating banks ($p^s = 1$), the partition perfection operation on the subspace $\mathcal{P}_{s\,|\,t}$ is equivalent to performing individual perfection for bail-out banks and partition perfection for bail-in banks (Remark \ref{The equivalent operations p^s=1}). Formally, this shows that the partition perfection operation is well-defined, satisfying the incentive compatibility condition within each cluster (Lemma \ref{Lemma: Crowding-Out Effect Opposite} and Lemma \ref{Lemma: Crowding-Out Effect Homophily}).

Furthermore, by imposing the finite-risk mitigation condition, we ensure that the outcome of the joint operation is well-defined. This assumption implies that the generalized supply within the current subspace $\mathcal{P}_{s\,|\,t}$ cannot simultaneously satisfy the demands of all banks (Inequality \ref{finite-risk mitigation2}). A direct implication is that, after the joint operation, the subset $\mathcal{B}(\varepsilon)$ persists within the current subspace, while $\mathcal{B}^{\mathcal{I}}(\varepsilon)$ either migrates or generates a new homophily subspace (Lemma \ref{Lemma For Decomposition Chains in any Generating Space}). These lead to a well-defined perfection dynamics, consistent with both individual and cluster-level incentive compatibility conditions.

The above perfection dynamics rest on the fact that each subspace $\mathcal{P}_{s\,|\,t}$ can be weakly decomposed into three regions or strongly decomposed into four regions, two of which necessarily contain $\mathcal{B}(\varepsilon)$ and $\mathcal{B}^{\mathcal{I}}(\varepsilon)$, respectively (see Figure \ref{fig: Weak and Strong Decomposition}, Theorem \ref{Decomposition and Compression Equivalence} and Theorem \ref{Partition-induced transition}).  By adjusting the banks' relative attribute values, we can flexibly alter the number of banks located within each region of the weakly decomposed sets of $\mathcal{B}^{\mathcal{I}}(\varepsilon)\cup\mathcal{B}(\varepsilon)$ (Lemma \ref{Lemma ForPartitionInduced Equilibrium Transition Snd}). Under strong decomposition, the two intermediate regions form two pre-regular chains, any one of which can be further decomposed into a regular chain and another distinct pre-regular chain (Lemma \ref{Lemma for Chain}). 

Theorem \ref{Partition-induced transition} establishes that applying similar regulatory adjustments across banks allows precise control over the number of regular chains in the decomposition, while keeping the existing level of exogenous shocks unchanged. This can be viewed as the inverse process of Theorem \ref{Decomposition and Compression Equivalence}, through which we construct a bijective correspondence among individual bank assets $A_i$, leverage ratios $\theta^i$, and the exogenous shock $\varepsilon$, and verify the isomorphism structure among them (see Figure \ref{Maps}, Lemma \ref{Lemma ForPartitionInduced Equilibrium Transition} and Lemma \ref{Lemma For Commutative Diagram}). Therefore, the perfection dynamics can be also viewed as iterative weak decompositions of the primitive generative space $\mathcal{P}_{0}$ (see Figure \ref{fig: Weak and Strong Decomposition} and Figure \ref{PGS_p=1}). 

Moreover, Theorem \ref{Partition-induced transition} together with the isomorphic relation among variables determines both $(i)$ the strength of internal cohesion among banks forming a stable homophily—i.e., sufficient to constitute a pure $\mathcal{B}(\varepsilon)$ or $\mathcal{B}^{\mathcal{I}}(\varepsilon)$—and $(ii)$ the logic of transitions across distinct homophily structures, namely, how changes in individual relative attributes induce transitions to another homophily. In other words, banks with similar attributes values tend to cluster under given exogenous shocks and policy contexts, and the attribute range required to sustain a specific homophily can be precisely tuned. Proposition \ref{Homogeneous and Homophily} presents a comparative-statics application, showing how we can easily extend from a simple structural configuration to more complex ones through Theorem \ref{Partition-induced transition}.

Having characterized the perfection dynamics, we further describe its equilibrium properties. Each subspace $\mathcal{P}_{s\,|\,t}$ within the perfection generating space is mutually independent or evolves independently, and banks exhibit a ``no-turning-back'' property—they never return any subspace they have previously exited. We demonstrate that the perfection dynamics necessarily converge to an equilibrium state in finite time, as both the number of banks and weak decomposition steps are finite. Moreover, we provide the exact upper and lower time bounds for this iterative process, determined by the relative number and attributes of banks in the binary structure ${_\chi}\mathcal{P}_{s\,|\,t}\,\cup\,{^\chi}\mathcal{P}_{s\,|\,t}$ of the primitive generative space. The maximum time complexity is $\mathcal{O}(n^2)$; however, for the special tiered structure described in Proposition \ref{Homogeneous and Homophily}, equilibrium arises within three iterations (see Figure \ref{Homgeneity and hierarchy}).  Furthermore, once the perfection dynamics stabilize, each subspace $\mathcal{P}_{s\,|\,t}$ becomes purified.

In addition to the aforementioned properties, we further generalize several of Schelling's insights \citep{schelling1978micromotives}. We introduce two distinct types of individual incentives: one reflecting the expectation of others' behavior (termed Schelling point) and the other representing the optimal behavior under uncertainty (termed Knightian point). For banks in $\mathcal{B}^{\mathcal{I}}(\varepsilon)$, the Schelling point corresponds to the micro-prudential policy imposed on $\mathcal{B}(\varepsilon)$, i.e., $p^s = 1$, while the Knightian point corresponds to the no-turning-back property (Proposition \ref{Schelling Point and Knightian Point}). These two perspectives are dual: they jointly guarantee the maximum utility of $\mathcal{B}^{\mathcal{I}}(\varepsilon)$, whereas the minimum utility of $\mathcal{B}(\varepsilon)$. Moreover, for $\mathcal{B}^{\mathcal{I}}(\varepsilon)$, the entry of new banks that preserve the subspace's structural characteristics increases the utility of all existing banks within $\mathcal{B}^{\mathcal{I}}(\varepsilon)$ (termed emergence of complementarity, Proposition \ref{Emergence of Complementarity}); in contrast, for $\mathcal{B}(\varepsilon)$, the reverse holds (termed persistence of substitution, Proposition \ref{Persistence of Substitution}). We find this asymmetry persists when the utility function is supmodular (Proposition \ref{Emergence of Substitution} and Proposition \ref{Persistence of Complementarity}).

Lastly, we discuss several illustrative examples, with particular attention to financial networks.   Proposition \ref{Homogeneous and Homophily} shows that the hierarchical structure originates from short-term liquidity trading, which can be formulated as either a matching or an optimal transport problem. In contrast, the core–periphery structure is examined from a stock-based view. We examine this highly clustered and self-fulfilling network formation in the context of bank-runs.
\vspace{0.15cm}

\noindent\textbf{Related Literature} Our paper is related to the literature on sorting and matching, such as \citet{boerma2023composite}, \citet{levy2015preferences}, \citet{anderson2024comparative}, \citet{lindenlaub2017sorting}, \citet{moldovanu2007contests}, and \citet{staab2024formation}. Among them, \citet{boerma2023composite} employs optimal transport theory to characterize composite sorting, while we analyze the homophily formation dynamics of heterogeneous agents facing heterogeneous micro-prudential policy requirements, homogeneous macro-prudential policies, and exogenous shocks. Building on the incentive compatibility conditions discussed in \citet{levy2015preferences}, we develop a theory that fully characterizes the invariants and equilibrium properties of perfection dynamics. Although our methodology differs from these studies, it yields several common structural insights regarding group structure and their dynamics: $(i)$ within a group (homophily), agents with different attributes and utilities (i.e., regional hierarchy) can coexist, and this property holds independently other groups' internal configurations; $(ii)$ we characterize the rationale for transitions between different groups, clarifying the boundaries between groups; $(iii)$ we introduce a concept analogous to the ``layer'' notion in optimal problem proposed by \citet{boerma2023composite} and \citet{delon2012local}, which we refer to as homophily, allowing independent analysis across different homophilies; and $(iv)$ comparative statics are tractable. Unlike \citet{anderson2024comparative}, who focus on positive quadrant dependence, we employ an isomorphism among variables to characterize interdependence.\footnote{\citet{ham2021notionsanonymityfairnesssymmetry} analyzes the strategic structure of finite strategic-form games through the lens of game isomorphism.} This enables us to extend from a simple intra-group structure to a complex combination of bank settings without altering the external group's macro structure.

In a broader perspective, our theory can be also applied to the analysis of the Schelling model \citep{schelling1978micromotives}. A strand of literature investigates segregation dynamics. For instance, \citet{ortega2021schelling} develop a residential segregation system on a lattice and explore its dynamics driven by agents' certain preferences and tolerance levels, employing numerical simulations to analyze the resulting statistical properties of networks. Similarly, \citet{zakine2024socioeconomic} apply mean-field approximation to agent-based model to analyze segregation dynamics. Their simulation results reveal phase separation phenomena and identify the corresponding critical points and scaling exponents. Our group segregation results are derived analytically from the well-defined perfection dynamics, where the formation of group emerges endogenously from incentive-compatible interactions among agents. Moreover, our theory allows for explicit analysis of how exogenous variables influence group formation. Agents' micromotives are also constrained by locally heterogeneous prudential policies, ensuring the dynamics operator satisfies within-cluster cohesion. Furthermore, the proposed invariants characterize the process of equilibrium phase transitions, identify the boundaries of transitions both between and within groups as well as the equivalence relations among them. In addition, our framework can trace how macro-level behaviors emerge endogenously from micromotives.

Our analysis of the banking system also relates to the literature on endogenous formation of financial networks \citep{jackson2021systemic, bernard2022bail,elliott2021systemic}, and the identification of liquidity transmission channels \citep{chen2025trade, acharya2024liquidity, eisenschmidt2024monetary, hachem2021liquidity, kahn2021sources, denbee2021network, wang2025biggest}.

The remainder of the paper is organized as follows: Section \ref{Section: Model} introduces the model, Section \ref{Section: Properties of Equilibrium} provides the properties of clearing equilibrium and the equivalent incentive compatibility conditions, and characterizes the homophily transition under escalating exogenous shocks. Section \ref{Characterization of Invariants, Perfection and Homophily} proposes a set of invariants to develop the perfection dynamics and comparative statics. Section \ref{Section: Micromotives, Mechanism and Macrobehavior} generalizes Schelling's insights. Section \ref{Section: Discussion} discusses several illustrative examples. Section \ref{Conlusion} concludes our contribution.
\section{Model}\label{Section: Model}
\noindent In order to elucidate the model's core mechanisms and key insights with maximum transparency, we develop the analysis in a simplified interaction framework.
\vspace{0.15cm}

\noindent\textbf{Agents} We consider the financial market consisting of $n$ banks and $k$ industries (or investment portfolios, collectively referred to as industries), along with a sink node $Q$ (representing self-interested external entities). Individual banks are indexed by $\mathcal{I}_{\text{Bank}} = \{i, j, \ldots, n \mid n \in \mathbb{N}_{+}, n < +\infty\}$, and industries are indexed by $\mathcal{I}_{\text{Ind}} = \{1, \ldots, l, \ldots, k \mid k \in \mathbb{N}_{+}, k < +\infty\}$.

For each bank $i$, its initial assets, liabilities, and equity are denoted as $A_{i}, L_{i},$ and $E_{i}$, respectively. The balance sheet identity requires that $A_{i} = L_{i} + E_{i}$. Each bank allocates all of its assets across industries. Let $V_{ik}$ represent the amount of assets that bank $i$ invests in industry $k$. It follows that $A_{i} = \sum_{k} V_{ik}$. All liabilities $L_{i}$ and equity $E_{i}$ of the banks are ultimately absorbed by the sink node $Q$.
\vspace{0.15cm}

\noindent\textbf{Response to Crisis} We assume that the regulatory leverage ratio for banks is set at $\bar{\theta} \in (0,1)$, which implies $\frac{E_i}{A_i}\geq \bar{\theta}$. The economic system experiences an initial shock of magnitude $\varepsilon \in (0,1)$, which leads to a write-down of assets to $(1-\varepsilon) A_i$. In accordance with the accounting identity, the reduction in assets equals the reduction in equity, as short-term liabilities remain unchanged. The losses resulting from the shock are absorbed by arbitrageurs. These dynamics give rise to Lemma \ref{Drop in LR} and Corollary \ref{Rational sink node}, where Corollary \ref{Rational sink node} indicates that banks have an incentive to sell assets, that is, such sales are individually rational.\footnote{This does not suggest that the sink node abstains from injecting funds into the banking system. It will employ interventions in the form of equity when doing so yields a more favorable outcome (for sink node) than inaction. The incentives behind these decisions are discussed in detail in Online Appendix Section OA4.}

\begin{lemma} \label{Drop in LR}
	\textnormal{For any bank $i$, if $A_i \gg E_i$, then its leverage ratio declines after the exogenous shock. That is, if $\Delta A_i(A_i, \varepsilon) = \Delta E_i$, then $\theta_{t=0}^i > \theta_{t=1}^i$.}
\end{lemma}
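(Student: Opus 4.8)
The plan is to treat this as a short computation on the post-shock balance sheet, after pinning down what ``leverage ratio'' means here. First I would fix notation: write $\theta_t^i$ for the equity-to-asset ratio $E_i(t)/A_i(t)$, so that the regulatory constraint stated in the model, $\frac{E_i}{A_i}\ge\bar\theta$, reads $\theta^i\ge\bar\theta$; record the pre-shock values $A_i(0)=A_i$, $E_i(0)=E_i$, hence $\theta_0^i=E_i/A_i$.

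Next I would translate the two hypotheses into the time-$1$ balance sheet. The shock writes assets down to $A_i(1)=(1-\varepsilon)A_i$, i.e.\ $\Delta A_i(A_i,\varepsilon)=\varepsilon A_i$; since short-term liabilities are unchanged, the accounting identity $A_i=L_i+E_i$ together with the assumed $\Delta A_i=\Delta E_i$ forces $E_i(1)=E_i-\varepsilon A_i$, so $\theta_1^i=\dfrac{E_i-\varepsilon A_i}{(1-\varepsilon)A_i}=\dfrac{\theta_0^i-\varepsilon}{1-\varepsilon}$. Then I would compare the two ratios directly: forming $\theta_0^i-\theta_1^i$ over the common positive denominator $(1-\varepsilon)A_i$ (note $\varepsilon\in(0,1)$, $A_i>0$), the numerator collapses to $\varepsilon(A_i-E_i)=\varepsilon L_i$. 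Since $A_i\gg E_i$ gives in particular $L_i=A_i-E_i>0$, and $\varepsilon>0$, we conclude $\theta_0^i-\theta_1^i>0$, which is the claim. (If $\varepsilon A_i>E_i$ the bank is rendered insolvent and $\theta_1^i<0<\theta_0^i$, so the inequality holds a fortiori; the computation covers this case uniformly.) An equivalent one-line phrasing: the affine map $x\mapsto\frac{x-\varepsilon}{1-\varepsilon}$ has slope $\frac{1}{1-\varepsilon}>1$ and fixes $x=1$, hence sends any $x=\theta_0^i<1$ strictly below itself.

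There is no genuine obstacle here; the only points needing care are (i) making explicit that ``leverage ratio'' denotes the equity-to-asset ratio $E_i/A_i$ appearing in the constraint $\frac{E_i}{A_i}\ge\bar\theta$, so that ``declines'' is the correct conclusion — under the reciprocal convention $A_i/E_i$ the quantity would instead rise — and (ii) noting that the strict-dominance hypothesis $A_i\gg E_i$ is used only through $A_i>E_i$ (equivalently $L_i>0$), with the heavy-inequality notation merely signalling that banks are highly leveraged and so the effect is pronounced.
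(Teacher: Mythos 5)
Your proof is correct and is essentially the paper's own argument: both reduce to the identity $\theta_0^i-\theta_1^i=\frac{(A_i-E_i)\,\Delta A_i}{A_i(A_i-\Delta A_i)}>0$, using the hypothesis only through $A_i-E_i>0$. The extra remarks (the affine-map reformulation, the insolvency case) are fine but not needed.
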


\begin{corollary} \label{Rational sink node}
	\textnormal{Under the assumption of a rational sink node, bank $i$ can only comply with the leverage regulation by selling assets.}
\end{corollary}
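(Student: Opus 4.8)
The plan is to argue by elimination over the balance-sheet responses available to bank $i$ after the shock: Lemma \ref{Drop in LR} establishes that a response is needed, and the rationality of the sink node $Q$ deletes every channel except asset sales.

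First I would record the post-shock position implied by the accounting identity together with the assumption that the write-down is absorbed on the equity side: assets become $(1-\varepsilon)A_i$, equity becomes $E_i - \varepsilon A_i$, and the short-term liabilities $L_i$ are unchanged. Lemma \ref{Drop in LR} then yields $\theta^i_{t=1} < \theta^i_{t=0}$, so for a bank operating at or near the regulatory floor the post-shock ratio satisfies $\theta^i_{t=1} < \bar\theta$ and the constraint $E_i/A_i \ge \bar\theta$ is violated; hence $i$ must adjust its balance sheet to regain compliance.

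Next I would enumerate the operations that could restore $\theta^i \ge \bar\theta$ and show each reduces to one of three cases: (a) issuance of new equity; (b) a write-down or forgiveness of the liabilities $L_i$; (c) a contraction of the asset side through sales — including sales into an illiquid market at a discount, with the proceeds applied to retiring liabilities — which is precisely the fire-sale channel that produces $x^i$. A routine computation on the ratio $\frac{E_i - \varepsilon A_i - (1-q)\Delta}{(1-\varepsilon)A_i - \Delta}$ confirms that, for any attainable liquidation price $q$, operation (c) moves the leverage ratio upward, so it is feasible in principle. The crux is then to rule out (a) and (b): both require a voluntary transfer from the sink node $Q$ into a bank whose assets have just been marked down, and under the ``rational sink node'' assumption $Q$ undertakes such a transfer only when it is strictly more profitable for $Q$ than inaction. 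In the baseline crisis environment inaction dominates — the detailed incentive comparison is deferred to Online Appendix Section OA4 — so a rational $Q$ neither recapitalizes $i$ nor forgives its debt, which eliminates (a) and (b). Since the model stipulates that all of $L_i$ and $E_i$ are held by $Q$, there is no outside party who could instead supply equity or debt relief, so the elimination is exhaustive and (c) is the unique remaining channel.

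The main obstacle is making precise the dominance claim behind the exclusion of (a) and (b); because the paper develops the sink node's incentives separately, I would not reprove that here but simply invoke the operative meaning of ``rational sink node'' (act only when strictly beneficial for $Q$) together with the fact that, in the configuration of Lemma \ref{Drop in LR}, equity injection or debt forgiveness is not strictly beneficial for $Q$. The remainder is the bookkeeping check that asset sales move the leverage ratio in the right direction and that no admissible operation restores compliance without them.
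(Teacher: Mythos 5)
Your proposal is correct and follows the same logic the paper relies on: the paper gives no standalone proof of this corollary, treating it as immediate from Lemma \ref{Drop in LR} together with the stipulations that all of $L_i$ and $E_i$ are absorbed by the sink node $Q$ and that $Q$'s incentive to inject equity is analyzed separately in Online Appendix OA4. Your elimination over the three balance-sheet channels, with (a) and (b) excluded by $Q$'s rationality and (c) verified by the ratio computation, is exactly the intended argument made explicit; the one step you defer (that inaction dominates injection for $Q$ in the baseline crisis configuration) is the same deferral the paper itself makes in its footnote.
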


The value of industry $k$ is given by $M_k = \sum_j V_{jk}$. Banks hold homogeneous disposal rights over industry assets, as formalized by the disposal matrix $\mathbf{\Pi} \in \mathbb{R}^{n \times k}$ (see Definition \ref{Matrix for disposal}). When a bank $i$ anticipates a fire sale of amount $s^i$ to meet the regulatory leverage requirement, it triggers a devaluation in the affected industries. The market impact on a industry is modeled by the devaluation factor $\kappa = e^{-\beta \sum_i x^i \pi_{ik}}$, where $\beta$ is the devaluation parameter and $x^i$ is the actual sales volume.\footnote{The framework for fire sales and the disposal matrix (corresponding to the ``relative liabilities matrix'' in their models) used here are adapted from the models established by \citet{bernard2022bail} and \citet{eisenberg2001systemic}.} Consequently, the post-sale value of the industry is reduced to $M_k \times \kappa$, resulting in a total devaluation loss of $M_k \times (1 - \kappa)$. 

We assume that industries possess no short-term capacity for expansion during a shock; therefore, the total fire sales volume in each industry $k$ must be non-negative, i.e., $\sum_i \beta x^i \pi_{ik} \geq 0$ for all $k$. In addition, although the contagion effect of fire sales propagates through the banking system whenever any bank sells an asset, the aggregate value of the industry remains unchanged. We will elaborate on this point in Online Appendix Section OA4 by examining the incentives of the sink node $Q$.

\begin{definition}\label{Matrix for disposal}
	\textnormal{(Disposal Matrix and Contagion Matrix)} 
	
	\textnormal{The \textit{disposal matrix} $\mathbf{\Pi} = (\pi_{ik})^{n \times k} \in \mathbb{R}^{n \times k}$ consists of each bank's relative asset share across industries, where $\pi_{ik}$ represents the proportion of bank $i$'s assets in industry $k$ relative to its total assets. The \textit{contagion matrix} $\mathbf{P} = (p_{ki})^{k \times n} \in \mathbb{R}^{k \times n}$ consists of each industry's relative value share across banks, where $p_{ki}$ represents the proportion of bank $i$'s assets in industry $k$ relative to the total value in that industry.}
	\begin{subequations}
		\begin{align}
			\pi_{ik} &= \begin{cases} \frac{V_{ik}}{\sum\limits_k V_{ik}} & \text{if} \ V_{ik} > 0\text{,} \\ 0 & \text{otherwise.} \end{cases} \\
			p_{ki} &= \begin{cases} \frac{V_{ik}}{\sum\limits_i V_{ik}} & \text{if} \ V_{ik} > 0\text{,} \\ 0 & \text{otherwise.} \end{cases}
		\end{align}
	\end{subequations}
\end{definition}
\vspace{0.15cm}

\noindent\textbf{Monopoly and Specialization} Bank $i$ has a \textit{monopoly} position in industry $k$ if $V_{ik}=M_k$ (i.e., $p_{ki}=1$). Bank  \textit{specializes} in industry $k$ if $V_{jk}>0$ and $V_{jl}=0,\,l\neq k$ (i.e., $\pi_{jk}=1$). Moreover, if a bank has more than one monopoly position, we add them together into one entry in order to get a full rank matrix. In Figure \ref{fig:network-adjacency, 2 to 2}, bank $A$ monopolizes industry $1$, while bank $B$ specializes in industry $2$.

We reformulate disposal matrix into its reduced-form $\mathring{\mathbf{\Pi}}$. The procedure is, we eliminate all columns corresponding to monopolized industries and the rows associated with banks who have monopolistic position from matrix  (i.e., for all bank $i$ who monopolizes industry $k$, we remove $\{V_{il},\forall l\in \mathcal{I}_{Ind}\}$ and $\{V_{jk},\forall j\in \mathcal{I}_{Bank}\}$ from matrix $\mathbf{\Pi}$). The same operation is applied to banks holding specialized positions and ultimately yielding the matrix $\mathring{\mathbf{\Pi}}$. 

Clearly, if not all banks monopolize or specialize an industry, then $\mathring{\mathbf{\Pi}} \neq \varnothing$. We assume the investment proportions of different banks across distinct industries in matrix $\mathring{\mathbf{\Pi}}^{i'\times k'}$ are linearly independent (i.e., there does not exist constant $\omega$ such that $(\pi'_{ik})^{1\times k'}=\omega \cdot (\pi'_{jk})^{1\times k'}$), which suggests a stochastic selection for these proportions\footnote{This assumption aligns with empirical evidence demonstrating inherent randomness in a subset of banking sector choices \citep{wang2025biggest}.}.
\begin{figure}[htbp]
	\centering
	
	\begin{subfigure}[b]{0.4\textwidth}
		\centering
		\begin{tikzpicture}[
			node distance=1.5cm and 3cm,  
			vertex/.style={circle, draw=black, thick, fill=gray!25, minimum size=10mm, font=\normalsize},  
			square/.style={rectangle, draw=black, thick, minimum size=10mm, font=\normalsize},  
			edge/.style={-{Stealth[length=3.5mm]}, thick, bend left=18},  
			labelstyle/.style={font=\large,fill=white, inner sep=1pt}  
			]
			
			\draw[red, dashed] (-1.2,-2.4) rectangle (4.5,2.5);
			
			\node[vertex] (A) at (0,1) {A};          
			\node[vertex, below=of A] (B) {B};       
			\node[square, right=2.4cm of A, yshift=0.5cm] (1) {1};  
			\node[square, below=of 1] (2) {2};       
			
			\draw[edge] (A) to node[midway, above left, labelstyle] {15} (1);  
			\draw[edge] (A) to node[midway, above, labelstyle] {10} (2);       
			\draw[edge] (B) to node[midway, right, labelstyle] {20} (2);       
			
		\end{tikzpicture}
		\caption{Network Structure}
	\end{subfigure}
	\hfill
	\begin{subfigure}[b]{0.1\textwidth}
		\centering
		\vspace{2.5cm} 
		{\Huge \boldmath$\Longleftrightarrow$}
		\vspace{2.5cm} 
	\end{subfigure}
	\hfill
	\begin{subfigure}[b]{0.4\textwidth}
		\centering
		\begin{tikzpicture}[
			matrixcell/.style={draw, minimum width=1.8cm, minimum height=1.8cm,fill=gray!7, anchor=center, font=\large},
			boundingbox/.style={draw=red, dashed, minimum width=\textwidth, minimum height=5cm}
			]
			
			\draw[red, dashed] (-3,-2.4) rectangle (2.7,2.5);
			
			\matrix (mat) [matrix of nodes,
			nodes={matrixcell},
			row sep=-\pgflinewidth,
			column sep=-\pgflinewidth,
			ampersand replacement=\&] {
				15 \& 10 \\
				0  \& 20 \\
			};
			
			\node[left=3pt of mat-1-1] {\textbf{A}};
			\node[left=3pt of mat-2-1] {\textbf{B}};
			\node[above=3pt of mat-1-1] {\textbf{1}};
			\node[above=3pt of mat-1-2] {\textbf{2}};
		\end{tikzpicture}
		\caption{Corresponding Adjacency Matrix}
	\end{subfigure}
	\caption{Network structure and its corresponding adjacency matrix representation}
	\label{fig:network-adjacency, 2 to 2}
\end{figure}
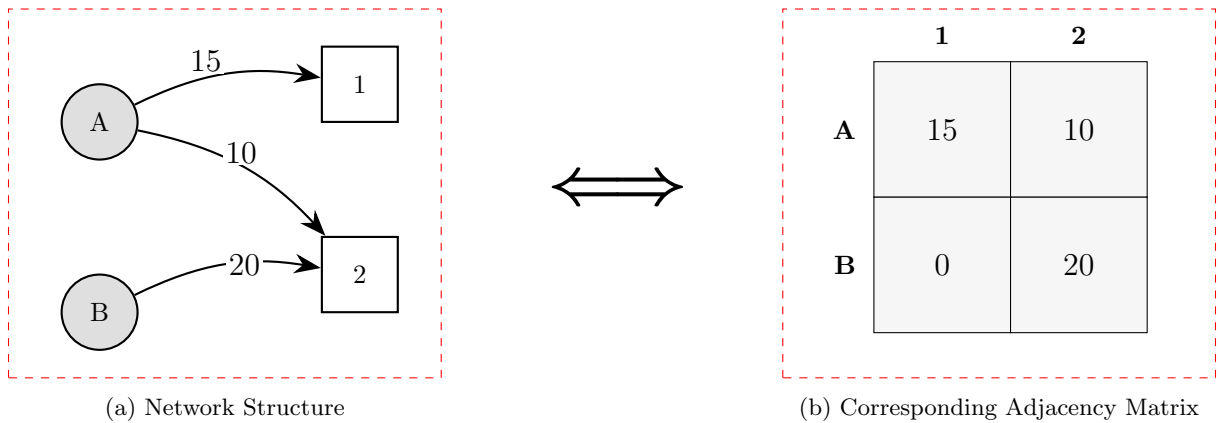
\subsection{Clearing Process}
\noindent During a crisis, banks with severely impaired leverage ratios need to decide and liquidate a certain amount of assets. The magnitude of this liquidation depends on (i) the individual bank's objectives, such as merely meeting regulatory requirements to resolve immediate distress, or preserving a higher leverage ratio to buffer against future uncertain shocks; and (ii) regulatory constraints on credit, for instance, requiring that the amount of assets sold by a bank be non-negative, i.e., $x^i \in \mathbb{R}_+$, which we term \textit{restricted credit}. The alternative, \textit{unrestricted credit}, permits $x^i \in \mathbb{R}$.

We adopt the unrestricted credit coupled with a \textit{basic clearing process} as our benchmark to determine magnitude of ``liquidation''. In this setup, $x^i \geq 0$ denotes \textit{asset sales}, while $x^i < 0$ corresponds to \textit{credit creation}. The logic of the basic clearing process, as formalized in Definition \ref{Clearing Process for Regulational Requirement}, is to liquidate exact amount to meet the regulatory requirement. We demonstrate that this benchmark is superior to a restricted credit regime (where $x^i \in \mathbb{R}_+$ is imposed) in terms of both individual rationality and social efficiency. Social efficiency lies in offsetting interbank devaluation spillovers, while individual rationality is reflected in benefiting from industrial booms and minimizing excessive asset depreciation. A comprehensive analysis of alternative clearing processes, including the restricted credit case, is provided in Online Appendix Section OA1.

\begin{definition}\label{Clearing Process for Regulational Requirement}
	Basic Clearing Process (BCP).
	
	\textnormal{(i) Post-Shock State: Following an exogenous shock $\varepsilon$, a bank $i$'s state is given by $E_i' = E_i - \Delta E_i$ and $A_i' = A_i - \Delta A_i = (1-\varepsilon) A_i$.}
	
	\textnormal{(ii) Bank Heterogeneity and Regulatory Status: Banks are heterogeneous, which implies that after the shock, some banks with relatively low leverage ratios satisfy $\frac{E_i'}{A_i'} < \bar{\theta}$, while banks with sufficient leverage satisfy $\frac{E_i'}{A_i'} \geq \bar{\theta}$. That is, $\theta^1 \geq \theta^2 \gg \cdots \gg \theta^j \geq \bar{\theta}$.}
	
	\textnormal{(iii) Liquidation Rule: To meet the requirement $\bar{\theta}$ while avoiding unnecessary asset devaluation, bank $i$ liquidates a notional value $s^i \in \mathbb{R}$, solving $\frac{E_i'}{A_i' - s^i} = \bar{\theta}$ (i.e., $s^i$ is solely a function of $\varepsilon$, $A_i$, $E_i$ and $\bar{\theta}$). Notably, if $\varepsilon = \theta^i$, then $s^i = A_i'$. Due to fire sale effect, the actual sales volume is $x^i$.}
\end{definition}
\subsection{Counterfactual Monopoly and Cluster}
\noindent Within our framework, homophily is modeled as incentive-compatible partitions. We aim to answer three questions: (i) the characteristics of an incentive-compatible partition; (ii) the internal structure of a partition—specifically, the attributes of a bank cluster that give rise to homophily; and (iii) comparative statics on homophily. 

We introduce two basic concepts which are helpful to address these questions: the \textit{Counterfactual Monopoly Space} and the \textit{Counterfactual Cluster Space}. They are designed to characterize and analyze the incentive-compatible partitions referenced from \citet{levy2015preferences}. A critical feature is that these incentive-compatible partitions are mutually independent. This property allows us to analyze the internal structure of each partition and conduct comparative statics in isolation, thereby significantly simplifying the problem by avoiding an intricate dissection of the entire system.
\begin{definition}\label{Counterfactual Monopoly Space}
	Counterfactual Monopoly Space
	
	\textnormal{Consider $n$ parallel spaces constructed in bijection with the rows of the original co-investment network, each satisfying the BCP. For the $i$-th space, the solution $\tilde{x}^{i}$ denotes the preimage $\Phi^{-1}(\tilde{s}^i)$, where the mapping $\Phi$ is given by:}
	\begin{equation}
		\Phi(\tilde{x}^i) = \tilde{x}^i + \sum\limits_{k}\left(1 - e^{-\beta \cdot \pi_{ik} \cdot \tilde{x}^i}\right) \cdot M_k \cdot p_{ki}
	\end{equation}
	
	\noindent\textnormal{The composite space formed by all parallel spaces is termed the \textit{Counterfactual Monopoly Space}, with its solution represented by the vector pairs $(\tilde{\mathbf{x}}, \tilde{\mathbf{s}})$. Furthermore, let $\varepsilon_{\tilde{s}^i=0}$ be the shock size such that $\tilde{s}^i(\varepsilon_{\tilde{s}^i=0}) = 0$ and thus $\tilde{x}^i=0$.}
\end{definition}

Analogous to the Counterfactual Monopoly Space (henceforth CMS), we define Counterfactual Cluster Space (henceforth CCS). Its solution, denoted by the vector $\tilde{\mathbf{\Phi}}\left(\mathbf{s}^{|\mathcal{P}|\times 1}\right)$, is implicitly defined as the solution to the corresponding system of equations. 
\begin{definition}\label{Counterfactual Cluster Space}
	Counterfactual Cluster Space
	
	\textnormal{The CCS is constructed by selecting subspaces from the CMS and combining them. The clearing process within $\mathcal{P}$ satisfies the BCP, and its equilibrium is given by the solution vector $\tilde{\mathbf{\Phi}}\left(\mathbf{s}^{|\mathcal{P}|\times 1}\right)$.}
\end{definition}

The left and right panels of Figure \ref{fig:Banking Modes} illustrate the CMS and CCS, respectively. We rewrite the solution system of the banks extracted from the CMS to form the CCS as $\mathbf{\Phi}^{-1}\left(\mathbf{s}^{|\mathcal{P}|\times 1}\right)$, where the individual solution is denoted by $\Phi^{-1}(\tilde{s}^i)$. An intuitive approach to compare the two spaces and analyze bank incentive compatibility is to take the difference between the two solution systems: $\tilde{\mathbf{\Phi}}\left(\mathbf{s}^{|\mathcal{P}|\times 1}\right) - \mathbf{\Phi}^{-1}\left(\mathbf{s}^{|\mathcal{P}|\times 1}\right)$. However, the viability of this metric hinges on the existence and uniqueness of the solutions, which are not always guaranteed. We explore this issue in detail in Online Appendix OA1.1, and provide necessary and sufficient conditions for the existence and uniqueness of solutions in Section \ref{Existence and Uniqueness of Equilibrium & Properties}.

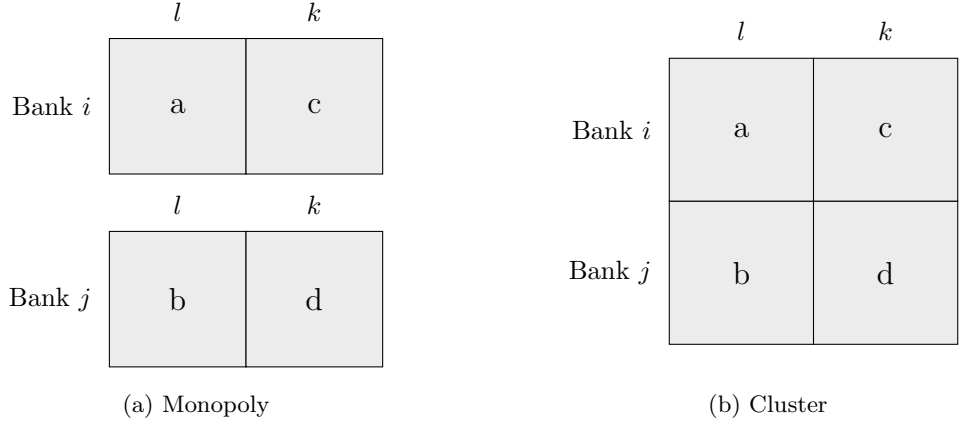
\begin{figure}[htbp]
	\centering
	
	\begin{subfigure}[b]{0.4\textwidth}
		\centering
		\begin{tikzpicture}[
			matrixcell/.style={draw, minimum width=1.8cm, minimum height=1.8cm,fill=gray!15, anchor=center, font=\large},
			]
			
			\matrix (mat1) [matrix of nodes,
			nodes={matrixcell},
			row sep=-\pgflinewidth,
			column sep=-\pgflinewidth,
			ampersand replacement=\&] {
				a \& c \\
			};
			
			\node[left=3pt of mat1-1-1] {Bank $i$};
			\node[above=3pt of mat1-1-1] {$l$};
			\node[above=3pt of mat1-1-2] {$k$};
			
			\matrix (mat2) [matrix of nodes,
			nodes={matrixcell},
			row sep=-\pgflinewidth,
			column sep=-\pgflinewidth,
			ampersand replacement=\&,
			below=0.5cm of mat1] {  
				b \& d \\
			};
			
			\node[left=3pt of mat2-1-1] {Bank $j$};
			\node[above=3pt of mat2-1-1] {$l$};
			\node[above=3pt of mat2-1-2] {$k$};
		\end{tikzpicture}
		\caption{Monopoly}
	\end{subfigure}
	\hspace{0.5cm}
	\begin{subfigure}[b]{0.4\textwidth}
		\centering
		\begin{tikzpicture}[
			matrixcell/.style={draw, minimum width=1.9cm, minimum height=1.9cm,fill=gray!15, anchor=center, font=\large},
			]
			
			\matrix (mat) [matrix of nodes,
			nodes={matrixcell},
			row sep=-\pgflinewidth,
			column sep=-\pgflinewidth,
			ampersand replacement=\&] {
				a \& c \\
				b  \& d \\
			};
			
			\node[left=3pt of mat-1-1] {Bank $i$};
			\node[left=3pt of mat-2-1] {Bank $j$};
			\node[above=3pt of mat-1-1] {$l$};
			\node[above=3pt of mat-1-2] {$k$};
		\end{tikzpicture}
		\vspace{0.3cm}
		\caption{Cluster}
	\end{subfigure}
	\caption{Banking Modes}
	\label{fig:Banking Modes}
\end{figure}

\section{Properties of Equilibrium}\label{Section: Properties of Equilibrium}
\noindent In this section, we establish the necessary and sufficient conditions for the existence and uniqueness of solutions under BCP, as well as their properties of these solutions. We then introduce the \textit{Crowding-Out Effect} to characterize $\tilde{\mathbf{\Phi}}\left(\mathbf{s}^{|\mathcal{P}|\times 1}\right) - \mathbf{\Phi}^{-1}\left(\mathbf{s}^{|\mathcal{P}|\times 1}\right)$. To highlight the economic intuition of incentive compatibility, we present the equilibrium properties of CMS and analyze a two-bank case. Section \ref{Characterization of Invariants, Perfection and Homophily} generalizes this idea, and establishes an economic analytical framework that guarantees equilibrium existence and uniqueness.

\subsection{Existence and Uniqueness of Equilibrium}\label{Existence and Uniqueness of Equilibrium & Properties}
\noindent The liquidation amount $s^i$ for bank $i$ is a function of the exogenous shock:
\begin{equation}\label{Shock-dependent liquidation}
	s^i=\frac{1}{\bar{\theta}}\left[\bar{\theta}A_i-E_i+(1-\bar{\theta})\varepsilon A_i  \right]
\end{equation}
where $s^i$ is independent of the devaluation parameter $\beta$. Under the definition of the BCP, the equilibrium solution system is characterized by equation system:
\begin{align}\label{Equilibrium solution under BCP}
	s^i&:=\underbrace{x^i}_{\text{Sales Volume}}+\underbrace{\sum\limits_k\left(1-e^{\sum\limits_i -\beta x^i\pi_{ik}} \right)M_k\cdot p_{ki}}_{\text{Devaluation}}  \\  &\quad\quad\quad\quad\quad\quad\quad\quad\quad \notag  \vdots
\end{align}
where the first term represents the bank's actual sales volume, and the second term corresponds to the asset devaluation resulting from fire sales. We can interpret $\beta$ as a measure of market liquidity, where a smaller $\beta$ corresponds to lower price devaluation and higher market liquidity. An immediate observation is that when $\beta = 0$, the equation system \ref{Equilibrium solution under BCP} admits a unique solution, since the contagion effect diminishes and each bank's actual asset sales depend only on its own assets. The following proposition establishes the necessary and sufficient conditions for the existence and uniqueness of solutions to the BCP, along with the properties of these solutions.
\begin{restatable}{thm}{SolutionsinFCP}\label{Solutions in FCP}
	Corresponding to BCP, we have:
	
	$(a)$ \textnormal{There exist a threshold $\bar{\beta}(\varepsilon)$, when $\beta\leq \bar{\beta}(\varepsilon)$, there exist an unique solution $\mathbf{x}^*$ in solution system \ref{Equilibrium solution under BCP}.}
	
	$(b)$ \textnormal{Given the condition of $a)$. There exist a threshold $\bar{\varepsilon}$, when $\varepsilon<\bar{\varepsilon}$, there exist at least one component $x^p$ of $\mathbf{x}^*$ is negative. The corresponding $s^p$ is negative as well.}
	
	$(c)$ \textnormal{Given the condition of $a)$. There exist a threshold $\tilde{\varepsilon}$, when $\varepsilon>\tilde{\varepsilon}$, there exist at least one component $x^p$ of $\mathbf{x}^*$ is greater than bank $p$'s asset $A_i'$. The corresponding $s^p$ is greater than $A_i'$ as well.}
\end{restatable}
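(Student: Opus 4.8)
The plan is to read the clearing condition \eqref{Equilibrium solution under BCP} as one vector equation $F(\mathbf{x})=\mathbf{s}$, where $\mathbf{s}=(s^i)$ is the affine function of $\varepsilon$ in \eqref{Shock-dependent liquidation} and $F=\mathrm{id}+G$ with the devaluation map $G_i(\mathbf{x})=\sum_k\bigl(1-e^{-\beta\sum_j x^j\pi_{jk}}\bigr)M_k p_{ki}$. Part $(a)$ then says $F$ is a perturbation of the identity for small contagion, so a contraction/fixed-point argument applies; parts $(b)$ and $(c)$ follow by reading off the sign and the size of individual components of the solution from the closed form of $\mathbf{s}$ after controlling the devaluation term $G(\mathbf{x}^*)$.

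For $(a)$ I would restrict to the region $D=\{\mathbf{x}:\sum_j x^j\pi_{jk}\ge 0\text{ for all }k\}$ coming from the no-short-term-expansion assumption, on which every devaluation factor $\kappa_k=e^{-\beta\sum_j x^j\pi_{jk}}$ lies in $(0,1]$ and hence $0\le G_i(\mathbf{x})\le\sum_k M_k p_{ki}=A_i$. There the Jacobian is $DF(\mathbf{x})=I+\beta B(\mathbf{x})$ with $B_{il}(\mathbf{x})=\sum_k\pi_{lk}\kappa_k M_k p_{ki}$, whose entries are dominated uniformly by the network primitives $\sum_k\pi_{lk}M_k p_{ki}$; choosing $\bar\beta(\varepsilon)$ small enough that $\beta\lVert B\rVert<1$ makes the clearing map $T(\mathbf{x})=\mathbf{s}-G(\mathbf{x})$ Lipschitz with constant $L<1$. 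Uniqueness is then immediate, since $F(\mathbf{x})=F(\mathbf{y})$ forces $\lVert\mathbf{x}-\mathbf{y}\rVert\le L\lVert\mathbf{x}-\mathbf{y}\rVert$; for existence one either verifies that $T$ maps a suitably large, $\varepsilon$-adapted closed subset of $D$ into itself and invokes Banach's theorem, or runs a homotopy/degree continuation in $\beta$ from the trivial solution $\mathbf{x}=\mathbf{s}$ at $\beta=0$ (valid because $DF$ stays invertible along the path). The dependence of $\bar\beta$ on $\varepsilon$ is exactly the price of this: a larger shock produces a larger $\mathbf{s}(\varepsilon)$, pushing the candidate solution further from the region where the exponentials are well behaved, so $\beta$ must be taken correspondingly smaller. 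I expect this domain bookkeeping — confining the iteration to where $\kappa_k\le 1$ and thereby pinning down the precise shape of $\bar\beta(\varepsilon)$ — to be the main obstacle; the remainder of $(a)$ is routine.

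For $(b)$ and $(c)$ the key preliminary fact is a uniform control of the devaluation correction: writing $x^{*i}=s^i-G_i(\mathbf{x}^*)$ with $\lvert G_i(\mathbf{x}^*)\rvert\le A_i$, the solution is bounded uniformly over $\varepsilon\in(0,1)$, and then expanding $1-e^{-\beta t}$ upgrades this to $\lvert G_i(\mathbf{x}^*)\rvert\le C_i\beta$ with $C_i$ depending only on the network primitives and the regime. Now \eqref{Shock-dependent liquidation} gives $s^i<0\iff\varepsilon<(E_i/A_i-\bar\theta)/(1-\bar\theta)$, a strictly positive number for any bank whose leverage constraint was slack before the shock (one exists by heterogeneity). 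Taking $p$ to be the bank with the largest $E_i/A_i$, $\bar\varepsilon$ the corresponding bound, and shrinking $\bar\beta$ so that $C_p\beta<\lvert s^p(\varepsilon)\rvert$, we obtain $x^{*p}=s^p-G_p(\mathbf{x}^*)<0$ together with $s^p<0$ for all $\varepsilon<\bar\varepsilon$ — this is $(b)$. Symmetrically, a one-line computation from \eqref{Shock-dependent liquidation} and Definition \ref{Clearing Process for Regulational Requirement}$(iii)$ gives $s^i-A_i'=(A_i/\bar\theta)(\varepsilon-E_i/A_i)$; taking $p$ now to be the bank with the smallest $E_i/A_i$, $\tilde\varepsilon=E_p/A_p\in[\bar\theta,1)$, and shrinking $\bar\beta$ so that $C_p\beta<(A_p/\bar\theta)(\varepsilon-E_p/A_p)$ for $\varepsilon>\tilde\varepsilon$ (this forces $\bar\beta(\varepsilon)\to 0$ as $\varepsilon\downarrow\tilde\varepsilon$, consistent with $(a)$), we obtain $x^{*p}=s^p-G_p(\mathbf{x}^*)>A_p'$ together with $s^p>A_p'$ for $\varepsilon>\tilde\varepsilon$ — this is $(c)$. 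The only delicate point in $(b)$–$(c)$ is the interaction between these thresholds and the window $\beta\le\bar\beta(\varepsilon)$, handled by defining $\bar\beta(\varepsilon)$ to be the minimum of the contraction threshold from $(a)$ and the two bounds needed to dominate $G_p$.
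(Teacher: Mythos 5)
Your overall strategy (a contraction/fixed-point argument for $(a)$, then sign and magnitude analysis of $s^i(\varepsilon)$ plus control of the devaluation correction for $(b)$ and $(c)$) matches the paper's, but there is one concrete flaw in part $(a)$: the restriction of the fixed-point argument to the region $D=\{\mathbf{x}:\sum_j x^j\pi_{jk}\ge 0\ \forall k\}$. The iteration map $T(\mathbf{x})=\mathbf{s}-G(\mathbf{x})$ does not preserve $D$, and in precisely the regime needed for part $(b)$ it cannot have a fixed point in $D$ at all: for $\varepsilon$ small enough that every $s^i<0$, your own bound $G_i\ge 0$ on $D$ gives $T(\mathbf{x})^i\le s^i<0$ for every $i$, hence $\sum_j T(\mathbf{x})^j\pi_{jk}<0$ for every $k$, so $T(D)\cap D=\varnothing$ and no invariant closed subset of $D$ exists for Banach's theorem (the homotopy in $\beta$ would likewise have to leave $D$, since the trivial solution $\mathbf{x}=\mathbf{s}$ at $\beta=0$ already violates the defining inequalities). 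The fix is routine but changes the bookkeeping you flagged as "the main obstacle": one must work on a set on which the exponentials are merely \emph{bounded}, not sign-constrained. This is what the paper does — it passes to the industry aggregates $Q_k=\beta\sum_i x^i\pi_{ik}$, runs Brouwer plus a contraction estimate on a symmetric ball $\|\mathbf{Q}\|\le R$ using $|1-e^{-Q_k}|\le e^{R}-1$, and derives $\bar\beta(\varepsilon)$ from an explicit self-mapping inequality; it then needs the nonsingularity of $\mathbf{\Pi}$ (its monopoly/specialization lemma) to recover $\mathbf{x}$ from $\mathbf{Q}$. Your bank-space formulation would avoid that last step, which is a genuine simplification, but only once the domain is corrected.

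For $(b)$ and $(c)$ your route differs from the paper's in a way worth noting. The paper uses only the \emph{sign} of the devaluation term: writing $x^p=s^p+d^p$ with $d^p=\sum_k(e^{-Q_k}-1)M_kp_{ki}\le 0$ under the no-expansion constraint, it gets $x^p\le s^p<0$ for $\varepsilon<\bar\varepsilon$ immediately, with no further restriction on $\beta$; for $(c)$ it likewise uses $x^p\le s^p$ for the claim on $s^p$ and solves $x^p>A_p$ directly for the threshold $\tilde\varepsilon$. Your magnitude bound $|G_p(\mathbf{x}^*)|\le C_p\beta$ followed by shrinking $\bar\beta(\varepsilon)$ also works (the proposition only asserts existence of \emph{some} threshold, so redefining $\bar\beta$ as a minimum is legitimate), and your identity $s^i-A_i'=(A_i/\bar\theta)(\varepsilon-E_i/A_i)$ is correct and arguably cleaner than the paper's implicit threshold for $(c)$, which depends on the equilibrium value $d^i$ itself. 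But be aware that the uniform bound $|G_i(\mathbf{x}^*)|\le C_i\beta$ again presupposes an a priori bound on $\mathbf{x}^*$ that you currently derive from membership in $D$; once the domain in $(a)$ is repaired to a ball, that bound should be restated in terms of the ball's radius rather than the sign constraint.
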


Proposition $(a)$ formalizes the intuition that sufficiently strong market liquidity enables 
banks to match liquidation amounts exactly to regulatory requirements. We demonstrate the existence of a liquidity threshold contingent on the exogenous shock's magnitude. Proposition $(b)$ states that in conditions of abundant liquidity and moderate risk, the unrestricted credit policy applies to at least one bank—implying that while low-leverage banks conduct fire sales, others simultaneously engage in credit creation. Proposition $(c)$ describes a bank exit mechanism: when the financial crisis severity exceeds a critical level, a bank experiences complete equity erosion, and it cannot restore solvency even through a complete asset liquidation. Our assumption of no industrial expansion is made to facilitate the exposition of the above propositions. In practice, industrial expansion is possible under moderate risk environments, as examined in Online Appendix OA3.1. We are now ready to present the comparative statics results for the CMS, which are derived independently of both the strict no-expansion assumption and the liquidity threshold existence condition (i.e., $\beta \leq \bar{\beta}(\varepsilon)$).
\begin{restatable}{thm3}{CounterfactualMonopolySpaceLemma}\label{Corresponding to Counterfactual Monopoly Space}
	Corresponding to Counterfactual Monopoly Space, we have:
	
	$(a)$ \textnormal{$\mathbf{s}=\tilde{\mathbf{s}}$, and $\tilde{\mathbf{x}}$ exists regardless of $\beta$.}
	
	$(b)$ \textnormal{$\tilde{s}^i(\varepsilon) < 0$ and $\tilde{x}^i < 0$ if and only if $\varepsilon < \varepsilon_{\tilde{s}^i=0}$; conversely, $\tilde{s}^i(\varepsilon) > 0$ and $\tilde{x}^i > 0$ if and only if $\varepsilon > \varepsilon_{\tilde{s}^i=0}$.}
	
	$(c)$ \textnormal{$\Phi^{-1}\left[\tilde{s}^i\left(\varepsilon=0\right)\right]$ is non-decreasing in $A_i$ and $\beta$. Furthermore, $\sup\limits_{A_i,\beta}\Phi^{-1}\left[\tilde{s}^i\left(\varepsilon=0\right)\right]=0$.}
	
	$(d)$ \textnormal{$\Phi^{-1}\left[\tilde{s}^i\left(\varepsilon=0\right)\right]$ is non-increasing in $\theta^i$. Moreover, $\max\limits_{\theta^i}\Phi^{-1}\left[\tilde{s}^i\left(\varepsilon=0\right)\right]$ is non-increasing in $A_i$, or $\Phi^{-1}\left[\tilde{s}^i\left(\varepsilon=0\right)|\theta^i\right]$ is non-increasing in $A_i$.}
	
	$(e)$ \textnormal{$e^{-\beta\cdot\Phi^{-1}\left[\tilde{s}^i\left(\varepsilon=0\right)\right]}$ is non-decreasing in $\beta$. Furthermore,  $\sup\limits_{\Phi^{-1}\left[\tilde{s}^i\left(\varepsilon=0\right)\right]}\left[\sup\limits_{\beta}\sum\limits_k \left(e^{-\beta\cdot\pi_{ik}\cdot \Phi^{-1}\left[\tilde{s}^i\left(\varepsilon=0\right)\right]}\cdot \pi_{ik}\right)\right]=\frac{\theta^i}{\bar{\theta}}$, $\inf \limits_{\beta}e^{-\beta\cdot\Phi^{-1}\left[\tilde{s}^i\left(\varepsilon=0\right)\right]}=1$ and $\inf\limits_{\Phi^{-1}\left[\tilde{s}^i\left(\varepsilon=0\right)\right]}\left[\sup\limits_{\beta}e^{-\beta\cdot\Phi^{-1}\left[\tilde{s}^i\left(\varepsilon=0\right)\right]}\right]=\frac{\theta^i}{\bar{\theta}}$.}
	
	$(f)$ \textnormal{$e^{-\beta\cdot\Phi^{-1}\left[\tilde{s}^i\left(\varepsilon=0\right)|\theta^i\right]}$ is non-decreasing in $A_i$. Furthermore,  $\inf\limits_{\Phi^{-1}\left[\tilde{s}^i\left(\varepsilon=0\right)|\theta^i\right]}\left[\sup\limits_{A_i}e^{-\beta\cdot\Phi^{-1}\left[\tilde{s}^i\left(\varepsilon=0|\theta^i\right)\right]}\right]=\frac{\theta^i}{\bar{\theta}}$ and $\inf \limits_{A_i}e^{-\beta\cdot\Phi^{-1}\left[\tilde{s}^i\left(\varepsilon=0|\theta^i\right)\right]}=1$. $e^{-\beta\cdot\Phi^{-1}\left[\tilde{s}^i\left(\varepsilon=0\right)|\theta^i\right]}$ represents $e^{-\beta\cdot\Phi^{-1}\left[\tilde{s}^i\left(\varepsilon=0\right)\right]}$ conditional on fixed $\theta^i$.}
\end{restatable}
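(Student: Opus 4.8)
The plan rests on the observation that, in contrast with the coupled system \ref{Equilibrium solution under BCP}, the CMS map for bank $i$ depends only on bank $i$'s own portfolio: since $M_k p_{ki}=V_{ik}$ and $\pi_{ik}=V_{ik}/A_i$, we have $\Phi(\tilde x^i)=\tilde x^i+\sum_k(1-e^{-\beta\pi_{ik}\tilde x^i})V_{ik}$, with $\Phi'(\tilde x^i)=1+\beta\sum_k\pi_{ik}V_{ik}e^{-\beta\pi_{ik}\tilde x^i}\ge 1>0$, $\Phi(0)=0$, and $\Phi(\tilde x^i)\to\pm\infty$ as $\tilde x^i\to\pm\infty$ (the exponential dominates the linear term as $\tilde x^i\to-\infty$; the case $\beta=0$ is trivial). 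Hence $\Phi\colon\mathbb{R}\to\mathbb{R}$ is an increasing bijection for every $\beta\ge 0$, which yields part $(a)$: the preimage $\tilde x^i=\Phi^{-1}(\tilde s^i)$ exists whatever $\beta$ is, and $\tilde{\mathbf s}=\mathbf s$ because the BCP liquidation rule (Definition \ref{Clearing Process for Regulational Requirement}) determines both $s^i$ and $\tilde s^i$ through the same scalar equation $E_i'/(A_i'-s)=\bar{\theta}$, so by equation \ref{Shock-dependent liquidation} they depend only on $(\varepsilon,A_i,E_i,\bar{\theta})$ and coincide.

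For part $(b)$, equation \ref{Shock-dependent liquidation} shows $s^i(\varepsilon)$ is affine in $\varepsilon$ with slope $(1-\bar{\theta})A_i/\bar{\theta}>0$; by the very definition of $\varepsilon_{\tilde s^i=0}$ this gives $\tilde s^i(\varepsilon)<0\iff\varepsilon<\varepsilon_{\tilde s^i=0}$, and the reverse strict inequality in the ``$>0$'' case. Composing with the increasing bijection $\Phi^{-1}$, which fixes $0$, transfers each sign to $\tilde x^i$.

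Parts $(c)$–$(f)$ are comparative statics at $\varepsilon=0$, where $\tilde s^i(\varepsilon=0)=A_i-E_i/\bar{\theta}=A_i(1-\theta^i/\bar{\theta})\le 0$ by the standing assumption $E_i/A_i\ge\bar{\theta}$, so $\tilde x^i\le 0$ throughout. Substituting $\sum_kV_{ik}=A_i$ into $\Phi(\tilde x^i)=\tilde s^i(\varepsilon=0)$ gives the master identity
\begin{equation}\label{eq:cms-identity}
\sum_k\pi_{ik}\,e^{-\beta\pi_{ik}\tilde x^i}=\frac{\tilde x^i}{A_i}+\frac{\theta^i}{\bar{\theta}},
\end{equation}
which will do most of the work. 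Each monotonicity claim follows by implicit differentiation of $\Phi(\tilde x^i)=\tilde s^i(\varepsilon=0)$: since $\Phi'(\tilde x^i)>0$, the sign of the change in $\tilde x^i$ is minus the sign of the corresponding partial of $\Phi(\tilde x^i)-\tilde s^i(\varepsilon=0)$. Monotonicity in $\beta$ (parts $(c)$, $(e)$) is immediate since $\partial_\beta\Phi(\tilde x^i)=\sum_k\pi_{ik}V_{ik}\tilde x^i e^{-\beta\pi_{ik}\tilde x^i}\le 0$, and monotonicity in $\theta^i$ (part $(d)$) holds because $\tilde s^i(\varepsilon=0)$ decreases in $\theta^i$ while $\Phi$ is $\theta^i$-free. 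The genuinely delicate point — and the main obstacle — is the dependence on $A_i$, which is not scale-free: holding $E_i$ fixed (portfolio scaling with $A_i$) one gets $\partial_{A_i}[\Phi(\tilde x^i)-\tilde s^i(\varepsilon=0)]=\sum_k\pi_{ik}(1-e^{-\beta\pi_{ik}\tilde x^i})-1<0$ for $\tilde x^i<0$, so $\tilde x^i$ is non-decreasing in $A_i$ (part $(c)$); whereas holding $\theta^i$ fixed, dividing through by $A_i$ introduces the term $-\tilde x^i/A_i^2$ with the opposite sign, so $\tilde x^i$ is non-increasing in $A_i$ conditional on $\theta^i$ (parts $(d)$, $(f)$). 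Stating each claim with the correct primitive held fixed is exactly what the ``$|\theta^i$'' notation in $(d)$ and $(f)$ records.

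The boundary constants are read off from \ref{eq:cms-identity} by monotone-convergence arguments. As $\beta\to\infty$ with $A_i,E_i$ fixed — and likewise as $A_i\to 0$ or $A_i\to\infty$ with $\theta^i$ fixed — $\tilde x^i$ cannot run off to $-\infty$, for then the left side of \ref{eq:cms-identity} would diverge while the right side stays bounded; combined with the monotonicity already established, $\tilde x^i$ converges, and as $\beta\to\infty$ (resp. $A_i\to 0$) it tends to $0$. This gives $\sup_{A_i,\beta}\Phi^{-1}[\tilde s^i(\varepsilon=0)]=0$ (also approached as $\theta^i\downarrow\bar{\theta}$, where $\tilde s^i(\varepsilon=0)\uparrow 0$), the infima of $e^{-\beta\tilde x^i}$ equal to $1$ (also directly at $\beta=0$, where $\Phi=\mathrm{id}$), and — via $\sum_k\pi_{ik}e^{-\beta\pi_{ik}\tilde x^i}\to\theta^i/\bar{\theta}$ as $\beta\to\infty$ or $A_i\to\infty$ (since $\tilde x^i/A_i\to 0$) — the supremum $\theta^i/\bar{\theta}$ in $(e)$. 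For the single exponential $e^{-\beta\tilde x^i}$, convexity of $t\mapsto e^{\beta t|\tilde x^i|}$ together with $\sum_k\pi_{ik}=1$ gives $\sum_k\pi_{ik}e^{\beta\pi_{ik}|\tilde x^i|}\le e^{\beta|\tilde x^i|}$, with equality iff the disposal row degenerates (specialization), so the infimum over disposal structures of $e^{-\beta\tilde x^i}$ equals $\theta^i/\bar{\theta}$ in parts $(e)$ and $(f)$. Finally, $e^{-\beta\tilde x^i}$ is non-decreasing in $\beta$ (part $(e)$) and in $A_i$ conditional on $\theta^i$ (part $(f)$): differentiating \ref{eq:cms-identity} yields $d\tilde x^i/d\beta=-\tilde x^i S_2/(1/A_i+\beta S_2)$ with $S_2=\sum_k\pi_{ik}^2e^{-\beta\pi_{ik}\tilde x^i}>0$, whence $\tfrac{d}{d\beta}e^{-\beta\tilde x^i}=e^{-\beta\tilde x^i}(-\tilde x^i/A_i)/(1/A_i+\beta S_2)\ge 0$, and the analogous $A_i$-computation (with $\theta^i$ fixed) gives $(f)$. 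The remaining work is bookkeeping plus justifying the limit exchanges, which the monotonicity-plus-boundedness structure of \ref{eq:cms-identity} supplies.
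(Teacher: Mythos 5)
Your proposal is correct and follows essentially the route the paper takes for its counterfactual-space results: reduce the CMS to the decoupled scalar identity $\sum_k\pi_{ik}e^{-\beta\pi_{ik}\tilde{x}^i}=\tilde{x}^i/A_i+\theta^i/\bar{\theta}$ (the single-bank, $\varepsilon=0$ analogue of the paper's system \ref{system Exp}), establish that $\Phi$ is an increasing bijection fixing $0$, and obtain all monotonicity and boundary constants by implicit differentiation and monotone limits in $\beta$, $A_i$, $\theta^i$ — exactly the computations the paper reuses in Lemma \ref{Lemma ForDecomposition and Compression Equivalence}. Your explicit identification of what is held fixed when varying $A_i$ (fixed $E_i$ in $(c)$ versus fixed $\theta^i$ in $(d)$ and $(f)$) correctly resolves the only point where the statement could be misread.
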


Result $(a)$ shows that the equilibrium of the CMS is guaranteed to exist, irrespective of the specific value of $\beta$, implying the existence and uniqueness of $\mathbf{\Phi}^{-1}\left(\mathbf{s}^{|\mathcal{P}|\times 1}\right)$. The analytical framework in Section \ref{Characterization of Invariants, Perfection and Homophily} ensures that $\tilde{\mathbf{\Phi}}\left(\mathbf{s}^{|\mathcal{P}|\times 1}\right)$ shares these properties, which we treat as established results to intuitively describe the crowding-out effect. Result $(b)$ characterizes liquidation transitions induced by exogenous shocks. We will utilize this fundamental result to investigate phase transition of the incentive compatibility measure $\tilde{\mathbf{\Phi}}\left(\mathbf{s}^{|\mathcal{P}|\times 1}\right) - \mathbf{\Phi}^{-1}\left(\mathbf{s}^{|\mathcal{P}|\times 1}\right)$ with respect to exogenous shocks.  Measure sign changes correspond to homophily degree shifts among individual banks, as their willingness to form clusters varies with crisis severity. The remaining results, termed \textit{ignition conditions}, characterize crowding-out effect outcomes without systemic risk ($\varepsilon=0$), providing the homophily transition baseline where $\varepsilon=0$ serves as the starting point.
\subsection{Crowding-Out Effect}\label{Section: Crowding-out effect}
\noindent We begin with two intuitive lemmas illustrating how the crowding-out effect shapes incentive-compatible partitions (homophily), then formally characterize its distinct forms. Consider a cluster $\mathcal{P}$ which can be precisely partitioned into $\mathcal{P} = \mathcal{P^+} \cup \mathcal{P^-}$, where $\mathcal{P^+}$ denotes the \textit{Maximal Bail-in Cluster} and $\mathcal{P^-}$ the \textit{Maximal Bail-out Cluster}. The following lemma holds:
\begin{restatable}{thm3}{LemmaCrowdingOutEffectOpposite}\label{Lemma: Crowding-Out Effect Opposite}
	Opposite Incentives
	
	\textnormal{Suppose the banks in the CCS can be decomposed into two disjoint subsets, $\mathcal{P} = \mathcal{P^+} \cup \mathcal{P^-}$, with $\mathcal{P^+}$ and $\mathcal{P^-}$ denoting respectively the sets of banks for which $\tilde{\mathbf{\Phi}}(\mathbf{s}^{|\mathcal{P}^+|\times 1}) \geq 0$ and $\tilde{\mathbf{\Phi}}(\mathbf{s}^{|\mathcal{P}^-|\times 1}) < 0$. Then the following results hold:}
	
	$(a)$ $\tilde{\mathbf{\Phi}}\left(\mathbf{s}^{|\mathcal{P}^+|\times 1}\right) - \left[\tilde{\mathbf{\Phi}}\left(\mathbf{s}^{|\mathcal{P}|\times 1}\right)\right]^{|\mathcal{P}^+|\times 1}<0$
	
	$(b)$ $\tilde{\mathbf{\Phi}}\left(\mathbf{s}^{|\mathcal{P}^-|\times 1}\right) - \left[\tilde{\mathbf{\Phi}}\left(\mathbf{s}^{|\mathcal{P}|\times 1}\right)\right]^{|\mathcal{P}^-|\times 1}>0$
	
	\noindent \textnormal{where the indices of banks and their equilibrium solutions in $\left[\tilde{\mathbf{\Phi}}\left(\mathbf{s}^{|\mathcal{P}|\times 1}\right)\right]^{|\mathcal{P}^+|\times 1}$ correspond exactly to those represented in $\tilde{\mathbf{\Phi}}\left(\mathbf{s}^{|\mathcal{P}^+|\times 1}\right)$.}
\end{restatable}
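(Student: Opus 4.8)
The plan is to uncover a hidden convex‑potential structure in the BCP clearing equations and turn both claims into monotone comparative statics of a minimizer. Write $v_m:=(V_{mk})_k\geq 0$, set $w_k:=1-e^{-\beta Y_k}$ with $Y_k=\sum_{m\in\mathcal G}x^m\pi_{mk}$, and eliminate $x^m=s^m-\sum_k V_{mk}w_k=s^m-\langle v_m,w\rangle$. Using $M_k p_{km}=V_{mk}$ one checks that, for any cluster $\mathcal G$, the CCS equilibrium is exactly the stationarity condition of the strictly convex program
\[
\min_{w_k<1}\ \mathcal J^{\mathcal G}(w):=\sum_k F(w_k)+\tfrac12\,w^{\top}S^{\mathcal G}w-(\sigma^{\mathcal G})^{\top}w,\qquad F''>0,\ \ S^{\mathcal G}=\sum_{m\in\mathcal G}\tfrac1{A_m}v_m v_m^{\top}\succeq0,\ \ \sigma^{\mathcal G}_k=\sum_{m\in\mathcal G}\pi_{mk}s^m .
\]
Since $S^{\mathcal G}$ and $\sigma^{\mathcal G}$ are additive over the banks of $\mathcal G$, one has $\mathcal J^{\mathcal P}(w)=\mathcal J^{\mathcal P^+}(w)+\sum_{j\in\mathcal P^-}\tfrac1{2A_j}(\langle v_j,w\rangle-s^j)^2+\mathrm{const}$ and, dually, $\mathcal J^{\mathcal P}(w)=\mathcal J^{\mathcal P^-}(w)+\sum_{i\in\mathcal P^+}\tfrac1{2A_i}(\langle v_i,w\rangle-s^i)^2+\mathrm{const}$. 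Let $w^{+},w^{-},w^{\mathcal P}$ denote the three (unique, by strict convexity) minimizers. Because $x^m=s^m-\langle v_m,w\rangle$, statement $(a)$ is equivalent to $\langle v_i,\,w^{\mathcal P}-w^{+}\rangle<0$ for every $i\in\mathcal P^+$, and statement $(b)$ to $\langle v_j,\,w^{\mathcal P}-w^{-}\rangle>0$ for every $j\in\mathcal P^-$.

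\textbf{Sign bookkeeping.} Next I would pin down the signs that drive the argument. The partition hypothesis says the $\mathcal P^+$‑system solution is $\geq0$ and the $\mathcal P^-$‑system solution is $<0$; hence $Y^{+}\geq0$, so $w^{+}\geq0$, so $\langle v_m,w^{+}\rangle\geq0$ for every bank, which with $s^m=x^m+\langle v_m,w^{+}\rangle$ forces $s^i\geq0$ for all $i\in\mathcal P^+$; symmetrically $w^{-}\leq0$ and $s^j<0$ for all $j\in\mathcal P^-$. Evaluating the extra penalty at the opposite minimizer, the gradient of $\sum_{j\in\mathcal P^-}\tfrac1{2A_j}(\langle v_j,w\rangle-s^j)^2$ at $w^{+}$ is $\sum_{j\in\mathcal P^-}\tfrac1{A_j}\big(\langle v_j,w^{+}\rangle-s^j\big)v_j$, an entrywise‑nonnegative, nonzero "upward" perturbation (because $\langle v_j,w^{+}\rangle\geq0>s^j$ and $v_j\geq0$); dually the $\mathcal P^+$‑penalty gradient at $w^{-}$ is an entrywise‑nonpositive "downward" perturbation. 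This is the analytic form of the crowding‑out effect: injecting credit‑creating banks slackens devaluation pressure on $\mathcal P^+$, injecting asset‑selling banks tightens it on $\mathcal P^-$.

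\textbf{Comparative statics.} Then I would run a continuation in the coupling strength. For $t\in[0,1]$ let $w(t)$ minimize $\mathcal J^{\mathcal P^+}(w)+t\,g^{-}(w)$ with $g^{-}(w):=\sum_{j\in\mathcal P^-}\tfrac1{2A_j}(\langle v_j,w\rangle-s^j)^2$, so $w(0)=w^{+}$, $w(1)=w^{\mathcal P}$, and $\dot w(t)=-H(t)^{-1}\nabla g^{-}(w(t))$ with $H(t)=\mathrm{diag}(F'')+S^{\mathcal P^+}+t\,S^{\mathcal P^-}\succ0$. Part $(a)$ follows once $\tfrac{d}{dt}\langle v_i,w(t)\rangle<0$ for every $i\in\mathcal P^+$, i.e. $v_i^{\top}H(t)^{-1}\big(\sum_{j\in\mathcal P^-}\mu_j(t)v_j\big)>0$ with $\mu_j(t)=\tfrac1{A_j}(\langle v_j,w(t)\rangle-s^j)$. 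I would carry along the invariant $\mu_j(t)>0$ (it holds at $t=0$ and can fail only if some $\langle v_j,w(t)\rangle$ falls below $s^j$, which I would rule out from the fact that the motion keeps lowering $w$ on the industries $j$ touches), thereby reducing $(a)$ to the sign of the quadratic forms $v_i^{\top}H(t)^{-1}v_j$. Here I would invoke the liquidity regime $\beta\leq\bar\beta(\varepsilon)$ of Proposition~\ref{Solutions in FCP}: since $\mathrm{diag}(F'')=\mathrm{diag}\!\big(1/(\beta(1-w_k))\big)$ dominates $S^{\mathcal G}$ as $\beta$ shrinks, the Neumann expansion $H^{-1}=D^{-1}-D^{-1}SD^{-1}+\cdots$ has a strictly positive diagonal of order $\beta$ and off‑diagonal corrections of order $\beta^2$, giving $v_i^{\top}H^{-1}v_j=\beta\sum_k V_{ik}V_{jk}+O(\beta^2)\geq0$, strictly positive whenever $i$ and $j$ co‑invest; the reduced‑form hypothesis that the rows of $\mathring{\mathbf{\Pi}}$ are linearly independent supplies the irreducibility that forces every $i\in\mathcal P^+$ to co‑invest with some $j\in\mathcal P^-$, so the $\Theta(\beta)$ positive term dominates the sum. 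Part $(b)$ is the mirror image, running the continuation from $w^{-}$ with $g^{+}$ and the reversed sign of the perturbation.

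\textbf{Main obstacle.} The crux is precisely this comparative‑statics step, and the obvious shortcuts fail. Adding an entrywise‑nonnegative term to the gradient of a convex function need not push every coordinate of the minimizer down, because $\nabla^2\mathcal J=\mathrm{diag}(F'')+S^{\mathcal G}$ has nonnegative (not nonpositive) off‑diagonal entries, so it is not an M‑matrix and $H^{-1}$ is not entrywise sign‑definite; likewise the best‑response map $x\mapsto s-(\text{devaluation})(x)$ is order‑reversing, so a plain monotone‑iteration comparison between the $\mathcal P^+$‑ and $\mathcal P$‑systems does not go through and passing to its monotone square does not by itself yield a cross‑cluster inequality. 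What rescues the statement is the conjunction of (i) the liquidity/contraction regime $\beta\leq\bar\beta(\varepsilon)$, which makes the diagonal dominate and renders $H^{-1}$ "almost nonnegative," and (ii) the reduced‑form linear‑independence structure, which keeps the industry–bank incidence pattern irreducible and makes the positive leading terms win; the strictness in $(a)$ and $(b)$ then comes from strict convexity (uniqueness of each minimizer) together with the strict positivity of at least one penalty scalar. If the perturbation‑expansion bookkeeping turns unwieldy, the fallback I would keep in reserve is to argue directly on the industry‑pressure reduction $Y_k=\sigma_k-\sum_l S_{kl}(1-e^{-\beta Y_l})$, proving the two‑sided bound $Y^{-}\leq Y^{\mathcal P}\leq Y^{+}$ coordinatewise and then substituting into the per‑bank identities—though this rests on the same monotone comparative statics and the same two structural inputs.
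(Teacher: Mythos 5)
Your route is genuinely different from the paper's and, as written, it has gaps at exactly the points you flag as the crux. The paper proves this lemma as a direct corollary of Lemma \ref{Lemma: Crowding-Out Effect}\,(a)--(b), whose proof works with the aggregate system $s^i = x^i + \bigl(1-e^{-\beta\sum_{j}x^j}\bigr)A_i$: every bank is exposed to a \emph{single common} devaluation factor with positive weight $A_i$, so adding $\mathcal{P}^-$ (all components negative) strictly lowers $\sum_j x^j$, strictly lowers the common devaluation term, and a one-line contradiction with the fixed targets $s^i$ forces every $x^i$, $i\in\mathcal{P}^+$, strictly up; part (b) is the mirror image. Uniqueness of the merged equilibrium is supplied by Lemma \ref{Lemma ForDecomposition and Compression Equivalence}. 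In that setting there is no cross-industry sign problem at all, which is why the paper's argument is short.

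Your proposal instead works in the full multi-industry geometry, and three steps do not go through. First, the invariant $\mu_j(t)>0$ along the homotopy is justified by ``the motion keeps lowering $w$ on the industries $j$ touches,'' but whether the motion lowers $w$ coordinatewise is equivalent to the entrywise sign-definiteness of $H^{-1}$ that you have already conceded fails; the justification is circular. Second, the claim that linear independence of the rows of $\mathring{\mathbf{\Pi}}$ ``forces every $i\in\mathcal{P}^+$ to co-invest with some $j\in\mathcal{P}^-$'' is false: a block-diagonal disposal matrix has linearly independent rows, and a bank $i$ sharing no industry with any member of $\mathcal{P}^-$ has $v_i^{\top}H^{-1}v_j=0$ to leading order, so your argument cannot deliver the \emph{strict} inequality in (a) for that bank. (This is a real obstruction to the statement in full multi-industry generality, not just to your proof; the paper's aggregate formulation is what removes it.) Third, the dominance of the $\Theta(\beta)$ term over the $O(\beta^2)$ corrections along the entire path $t\in[0,1]$ is asserted, not proved, and it is tied to a smallness condition on $\beta$ that the lemma does not impose --- the paper's Section 4 framework guarantees existence and uniqueness of $\tilde{\mathbf{\Phi}}$ for all finite $\beta$, and the COE is meant to hold there. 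If you want to salvage your approach, the cleanest fix is to specialize to the common-factor system the paper actually uses (one devaluation aggregate), at which point your potential collapses to a scalar strictly convex problem and the comparative statics become the paper's one-dimensional monotonicity argument.
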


These two conditions demonstrate that cluster $\mathcal{P}$ is incentive-incompatible for sub-cluster $\mathcal{P^-}$, while the reverse holds for $\mathcal{P^+}$. Thus, $\mathcal{P^+}$ are disincentivized from clustering with $\mathcal{P^-}$, as such integration would compel them to increase asset sales. For $\mathcal{P^-}$, however, the combined cluster would expand their credit creation. This asymmetric impact—unilaterally harmful for $\mathcal{P^+}$ but beneficial for $\mathcal{P^-}$—defines the \textit{crowding-out effect} (henceforth COE).\footnote{A straightforward criterion applies: an inequality greater than zero favors the cluster denoted by the second term; conversely, negative value favors the first term's.} What, then, characterizes the incentive compatibility condition within a cluster? The following lemma formalizes this condition.
\begin{restatable}{thm3}{LemmaCrowdingOutEffectHomophily}\label{Lemma: Crowding-Out Effect Homophily}
	Partition and Homophily
	
	$(a)$ \textnormal{If $\mathcal{P}=\mathcal{P^+}$, then $\tilde{\mathbf{\Phi}}\left(\mathbf{s}^{|\mathcal{P}|\times 1}\right) - \mathbf{\Phi}^{-1}\left(\mathbf{s}^{|\mathcal{P}|\times 1}\right)<0$}
	
	$(b)$ \textnormal{If $\mathcal{P}=\mathcal{P^-}$, then $\tilde{\mathbf{\Phi}}\left(\mathbf{s}^{|\mathcal{P}|\times 1}\right) - \mathbf{\Phi}^{-1}\left(\mathbf{s}^{|\mathcal{P}|\times 1}\right)>0$}
\end{restatable}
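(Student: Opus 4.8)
\textbf{Proof plan for Lemma \ref{Lemma: Crowding-Out Effect Homophily}.}

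The plan is to reduce both parts to Lemma \ref{Lemma: Crowding-Out Effect Opposite} by choosing a suitable enveloping cluster, and then to pin down the remaining term $\mathbf{\Phi}^{-1}(\mathbf{s}^{|\mathcal{P}|\times 1})$ using the componentwise characterization of the Counterfactual Monopoly Space in Lemma \ref{Corresponding to Counterfactual Monopoly Space}, especially parts $(a)$ and $(b)$. Recall that $\mathbf{\Phi}^{-1}(\mathbf{s}^{|\mathcal{P}|\times 1})$ is by construction the stack of the \emph{individual} CMS solutions $\Phi^{-1}(\tilde s^i)$ for $i\in\mathcal{P}$ (each parallel space solving its own BCP in isolation), whereas $\tilde{\mathbf{\Phi}}(\mathbf{s}^{|\mathcal{P}|\times 1})$ is the \emph{joint} CCS solution where the banks in $\mathcal{P}$ interact through the shared devaluation terms $M_k p_{ki}$. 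So the quantity $\tilde{\mathbf{\Phi}}(\mathbf{s}^{|\mathcal{P}|\times 1}) - \mathbf{\Phi}^{-1}(\mathbf{s}^{|\mathcal{P}|\times 1})$ is exactly the COE of the whole cluster $\mathcal{P}$ relative to the degenerate partition into singletons, and the claim is that this COE has a definite sign determined by whether $\mathcal{P}$ lies entirely in the bail-in or bail-out regime.

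For part $(a)$, assume $\mathcal{P}=\mathcal{P^+}$, i.e. $\tilde{\mathbf{\Phi}}(\mathbf{s}^{|\mathcal{P}|\times 1})\geq 0$. First I would observe that $\mathbf{\Phi}^{-1}(\mathbf{s}^{|\mathcal{P}|\times 1})$ is the joint solution of a cluster consisting of $|\mathcal{P}|$ mutually non-interacting banks, which is the CCS equilibrium of a cluster that is itself ``all bail-in'' in the trivial sense. The key monotonicity fact I need is: adding interaction (i.e., turning on the cross terms $e^{-\beta x^j\pi_{jk}}$ coupling) to a cluster all of whose members are selling ($x^i\geq 0$) forces each member to sell \emph{more} notional, because each bank now also absorbs the devaluation spillover caused by the others' sales, so to meet the same target liquidation value $s^i$ the actual volume $x^i$ must rise. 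Concretely I would argue this via the fixed-point/monotone-operator structure already used to prove Lemma \ref{Lemma: Crowding-Out Effect Opposite}: write the joint system as $\Phi_{\text{joint}}(\mathbf{x})=\mathbf{s}$ where the off-diagonal coupling is nonnegative in the relevant regime, compare to the decoupled system $\Phi_{\text{ind}}(\mathbf{x})=\mathbf{s}$, and conclude $\mathbf{x}_{\text{joint}}\geq \mathbf{x}_{\text{ind}}=\mathbf{\Phi}^{-1}(\mathbf{s})$ componentwise, hence $\tilde{\mathbf{\Phi}}(\mathbf{s}^{|\mathcal{P}|\times 1}) - \mathbf{\Phi}^{-1}(\mathbf{s}^{|\mathcal{P}|\times 1})\geq 0$ — wait, that is the wrong sign, so I must be careful about orientation: since $\tilde{\mathbf{\Phi}}$ here denotes the $\mathbf{x}$-solution and the footnote's sign convention says a positive value ``favors'' the joint cluster, the actual inequality to establish is that the decoupled (singleton) configuration requires a \emph{smaller} magnitude of credit-side adjustment, i.e. the bail-in banks are crowded out by joining, giving the stated $<0$. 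The cleanest route is therefore to invoke Lemma \ref{Lemma: Crowding-Out Effect Opposite}$(a)$ with the roles $\mathcal{P}^+\mapsto\mathcal{P}$ and $\mathcal{P}\mapsto$ (the singleton-decomposed version), since that lemma already packages precisely this monotonicity with the right orientation; part $(b)$ is then the mirror image using Lemma \ref{Lemma: Crowding-Out Effect Opposite}$(b)$ with $\mathcal{P}=\mathcal{P^-}$, where all $x^i<0$ (credit creation), and now each bank benefits from the others' credit creation through a positive price effect, so the joint cluster lets each create \emph{more} credit, i.e. the difference is $>0$.

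The main obstacle I anticipate is handling the sign of the coupling terms correctly in the mixed-orientation bookkeeping: $\tilde{\mathbf{\Phi}}$ sometimes denotes the $\mathbf{x}$-vector and sometimes the map, the comparison cluster in Lemma \ref{Lemma: Crowding-Out Effect Opposite} is an arbitrary enveloping $\mathcal{P}$ rather than the singleton decomposition, and the devaluation factor $1-e^{-\beta\pi_{ik}x^i}$ changes sign with $x^i$, so the monotone-operator comparison must be set up separately on the pure-bail-in and pure-bail-out subspaces where the relevant Jacobian off-diagonals have a constant sign. Once $\mathcal{P}$ is known to be homogeneous (all $\mathcal{P^+}$ or all $\mathcal{P^-}$), by Lemma \ref{Corresponding to Counterfactual Monopoly Space}$(b)$ every individual CMS solution $\Phi^{-1}(\tilde s^i)$ inherits the same sign as the cluster's regime ($\varepsilon$ on the correct side of each $\varepsilon_{\tilde s^i=0}$), which is exactly what makes the coupling terms uniformly signed and lets the monotone comparison go through; I would make sure to state this sign-homogeneity of the singleton solutions explicitly as the bridge between the two lemmas. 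The remaining steps — existence and uniqueness of both $\tilde{\mathbf{\Phi}}(\mathbf{s}^{|\mathcal{P}|\times 1})$ and $\mathbf{\Phi}^{-1}(\mathbf{s}^{|\mathcal{P}|\times 1})$ — are supplied by Lemma \ref{Corresponding to Counterfactual Monopoly Space}$(a)$ and the surrounding framework, so no separate argument is needed there.
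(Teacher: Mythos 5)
Your overall architecture is the right one and matches the paper's: the quantity $\tilde{\mathbf{\Phi}}\left(\mathbf{s}^{|\mathcal{P}|\times 1}\right) - \mathbf{\Phi}^{-1}\left(\mathbf{s}^{|\mathcal{P}|\times 1}\right)$ is handled componentwise by treating each singleton $\{i\}$ (whose standalone CCS solution is exactly $\Phi^{-1}(\tilde s^i)$) as the base cluster and adjoining the remaining banks of $\mathcal{P}$, all of which carry the same sign because $\mathcal{P}$ is pure. However, the lemma you invoke to close the argument is the wrong one. Lemma \ref{Lemma: Crowding-Out Effect Opposite} requires the complementary sub-cluster to have the \emph{opposite} sign ($\mathcal{P^-}$ with negative solutions enveloping $\mathcal{P^+}$), and its conclusion points the other way: clustering sellers with credit creators makes the sellers sell \emph{more}. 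What Lemma \ref{Lemma: Crowding-Out Effect Homophily} needs is the dual, same-sign statement — adding positive-solution banks to a positive cluster lowers each incumbent's solution; adding negative-solution banks to a negative cluster raises it — which the paper proves separately as Lemma \ref{SubLemma: Crowding-Out Effect} (by the same contradiction/fixed-point device as Lemma \ref{Lemma: Crowding-Out Effect}, with the coupling term moving in the opposite direction) and then cites to finish. Your monotone-operator comparison would have to establish exactly this same-sign version; pointing at Lemma \ref{Lemma: Crowding-Out Effect Opposite} does not supply it.

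The more serious problem is that your sign analysis of that comparison is backwards in both parts, and the attempted repair is not a repair. In the clearing identity $s^i = x^i + \sum_k\bigl(1-e^{-\beta\sum_j x^j\pi_{jk}}\bigr)M_kp_{ki}$ with all $x^j\geq 0$, switching on the coupling \emph{enlarges} the devaluation term, and since that term counts toward the fixed target $s^i$, the actual volume $x^i$ must \emph{fall}: incumbents free-ride on the others' fire sales and sell less, which is precisely where the $<0$ in part $(a)$ comes from. You derive the opposite ("forces each member to sell more"), observe the sign clash, and then appeal to an "orientation"/footnote convention to flip it — but there is no orientation ambiguity here; both $\tilde{\mathbf{\Phi}}$ and $\mathbf{\Phi}^{-1}$ are $x$-solutions and the claim is a literal componentwise inequality. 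Part $(b)$ has the mirror inconsistency: you assert the joint cluster "lets each create more credit" yet conclude the difference is $>0$; more credit creation means a more negative $x^i$, i.e.\ a $<0$ difference. The correct mechanics is that joint credit creation produces a larger (negative) appreciation term, so each bank needs to create \emph{less} credit to hit the same $s^i$, making the joint solution less negative than the singleton one — hence $>0$. The bridging facts you take from Lemma \ref{Corresponding to Counterfactual Monopoly Space} (existence, uniqueness, sign-homogeneity of the singleton solutions) are fine; it is the direction of the comparison that must be redone.
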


The lemma indicates that banks in cluster $\mathcal{P^+}$ exhibit stronger incentives for coalition formation rather than operate as independent entities. We designate such incentive-compatible groupings as \textit{homophily}, and thus cluster $\mathcal{P^+}$ constitutes a type of homophily. In contrast, banks in cluster $\mathcal{P^-}$ prefer to remain dispersed rather than aggregated. The following lemma presents several COE variants, which will facilitate our categorization of different types of homophily in Section \ref{Characterization of Invariants, Perfection and Homophily}.
\begin{restatable}{thm3}{LemmaCrowdingOutEffect}\label{Lemma: Crowding-Out Effect}
	Crowding-Out Effect

	$(a)$ \textnormal{If a bank $i$ (cluster $\mathcal{P^I}$) is added to the $\mathcal{P^+}$ such that $\tilde{\mathbf{\Phi}}^{i}\left(\mathbf{s}^{|\mathcal{P}^+\cup\{i\}|\times 1}\right)<0$ ($\left[\tilde{\mathbf{\Phi}}\left(\mathbf{s}^{|\mathcal{P}^+\cup\mathcal{P^I}|\times 1}\right)\right]^{|\mathcal{P^I}|\times 1}<0$), then $\tilde{\mathbf{\Phi}}\left(\mathbf{s}^{|\mathcal{P}^+|\times 1}\right) - \left[\tilde{\mathbf{\Phi}}\left(\mathbf{s}^{|\mathcal{P}^+\cup\{i\}|\times 1}\right)\right]^{|\mathcal{P}^+|\times 1} < 0$ ($\tilde{\mathbf{\Phi}}\left(\mathbf{s}^{|\mathcal{P}^+|\times 1}\right) - \left[\tilde{\mathbf{\Phi}}\left(\mathbf{s}^{|\mathcal{P}^+\cup\mathcal{P^I}|\times 1}\right)\right]^{|\mathcal{P}^+|\times 1} < 0$) holds.}
	
	$(b)$ \textnormal{If a bank $j$ (cluster $\mathcal{P^J}$) is added to the $\mathcal{P^-}$ such that $\tilde{\mathbf{\Phi}}^{j}\left(\mathbf{s}^{|\mathcal{P}^-\cup\{j\}|\times 1}\right) > 0$ ($\left[\tilde{\mathbf{\Phi}}\left(\mathbf{s}^{|\mathcal{P}^-\cup\mathcal{P^J}|\times 1}\right)\right]^{|\mathcal{P^J}|\times 1} > 0$), then $\tilde{\mathbf{\Phi}}\left(\mathbf{s}^{|\mathcal{P}^-|\times 1}\right) - \left[\tilde{\mathbf{\Phi}}\left(\mathbf{s}^{|\mathcal{P}^-\cup\{j\}|\times 1}\right)\right]^{|\mathcal{P}^-|\times 1} > 0$ ($\tilde{\mathbf{\Phi}}\left(\mathbf{s}^{|\mathcal{P}^-|\times 1}\right) - \left[\tilde{\mathbf{\Phi}}\left(\mathbf{s}^{|\mathcal{P}^-\cup\mathcal{P^J}|\times 1}\right)\right]^{|\mathcal{P}^-|\times 1} > 0$) holds.}
	
	$(c)$ \textnormal{If a cluster $\mathcal{P^N}$ is added to the CCS $\mathcal{P}$ such that $\sum\limits_{b\in \mathcal{P}}\tilde{\mathbf{\Phi}}^{b}\left(\mathbf{s}^{|\mathcal{P}|\times 1}\right) < \sum\limits_{c\in \mathcal{P}\cup\mathcal{P^N}}\tilde{\mathbf{\Phi}}^{c}\left(\mathbf{s}^{|\mathcal{P}\cup\mathcal{P^N}|\times 1}\right)$, then $\tilde{\mathbf{\Phi}}^{i}\left(\mathbf{s}^{|\mathcal{P}\cup\mathcal{P^N}|\times 1}\right) < \tilde{\mathbf{\Phi}}^{i}\left(\mathbf{s}^{|\mathcal{P}|\times 1}\right)$ holds for any bank $i$ in $\mathcal{P}$ with $s^i < 0$.}
	
	$(d)$ \textnormal{If a cluster $\mathcal{P^N}$ is added to the CCS $\mathcal{P}$ such that $\sum\limits_{b\in \mathcal{P}}\tilde{\mathbf{\Phi}}^{b}\left(\mathbf{s}^{|\mathcal{P}|\times 1}\right) > \sum\limits_{c\in \mathcal{P}\cup\mathcal{P^N}}\tilde{\mathbf{\Phi}}^{c}\left(\mathbf{s}^{|\mathcal{P}\cup\mathcal{P^N}|\times 1}\right)$, then $\tilde{\mathbf{\Phi}}^{j}\left(\mathbf{s}^{|\mathcal{P}\cup\mathcal{P^N}|\times 1}\right) > \tilde{\mathbf{\Phi}}^{j}\left(\mathbf{s}^{|\mathcal{P}|\times 1}\right)$ holds for any bank $j$ in $\mathcal{P}$ with $s^j > 0$.}
\end{restatable}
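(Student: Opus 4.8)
The plan is to read each part of Lemma \ref{Lemma: Crowding-Out Effect} as a comparative-statics claim about the cluster clearing equilibrium and, for $(a)$–$(b)$, to reduce it to the monotonicity already exploited in Lemma \ref{Lemma: Crowding-Out Effect Opposite}. The organizing fact is that, by Lemma \ref{Corresponding to Counterfactual Monopoly Space}$(a)$, the notional vector $\mathbf{s}$ is unchanged when a bank is moved from one cluster into a larger one; across configurations only the devaluation term changes, and it enters bank $i$'s equation solely through the aggregate effective sales $w_k=\sum_m\pi_{mk}x^m$ in each industry, via $x^i=s^i-\sum_k\bigl(1-e^{-\beta w_k}\bigr)M_k p_{ki}$. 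Hence, taking the equilibrium of $\mathcal{P^+}\cup\{i\}$ and restricting to the rows in $\mathcal{P^+}$, those rows solve exactly the equilibrium system of $\mathcal{P^+}$ subject to the exogenous per-industry injection $y_k=\pi_{ik}\,\tilde x^i$; the hypothesis $\tilde{\mathbf{\Phi}}^{i}(\mathbf{s}^{|\mathcal{P^+}\cup\{i\}|\times 1})<0$ makes $y_k\le 0$ coordinatewise, and for the cluster version one has $y_k=\sum_{m\in\mathcal{P^I}}\pi_{mk}\,\tilde x^m\le 0$ from $[\tilde{\mathbf{\Phi}}(\mathbf{s}^{|\mathcal{P^+}\cup\mathcal{P^I}|\times 1})]^{|\mathcal{P^I}|\times 1}<0$.

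With this reduction, $(a)$ asserts that a coordinatewise non-positive industry injection weakly raises every bank's solution in the receiving cluster, and $(b)$ is the mirror statement for a non-negative injection into $\mathcal{P^-}$. This is exactly the monotonicity underpinning Lemma \ref{Lemma: Crowding-Out Effect Opposite}$(a)$–$(b)$, whose proof uses only the sign of the injected sales — there that sign is obtained from the added banks creating credit \emph{in isolation}, here directly from their doing so \emph{inside the enlarged cluster}. So I would state once, as a lemma, that the fixed point of $x^i=s^i-\sum_k(1-e^{-\beta w_k})M_k p_{ki}$ moves against any exogenous industry injection, and apply it twice (with a $\le 0$ and a $\ge 0$ injection). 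Strictness of the stated vector inequalities I would recover by propagating the strict effect on the industries in $i$'s support through the co-investment network, using the linear independence of the rows of $\mathring{\mathbf{\Pi}}$ so that every bank in the cluster is reached.

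For $(c)$–$(d)$ the hypothesis is phrased through aggregate actual sales rather than through the sign of the rows of $\mathcal{P^N}$. I would argue by contradiction from the same identity and the invariance of $\mathbf{s}$: were some $i\in\mathcal{P}$ with $s^i<0$ to satisfy $\tilde{\mathbf{\Phi}}^{i}(\mathbf{s}^{|\mathcal{P}\cup\mathcal{P^N}|\times 1})\ge\tilde{\mathbf{\Phi}}^{i}(\mathbf{s}^{|\mathcal{P}|\times 1})$, then $\sum_k(1-e^{-\beta w_k})M_k p_{ki}$ would weakly fall on adding $\mathcal{P^N}$, i.e. the effective sales would weakly fall on $i$'s support; summing the row identities over $\mathcal{P}$ and over $\mathcal{P}\cup\mathcal{P^N}$ and using monotonicity of $1-e^{-\beta w_k}$ then forces the aggregate actual sales to move opposite to the hypothesis, a contradiction. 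Part $(d)$ is the reflection, with $s^j>0$ and the reverse inequality.

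The hard part is the monotonicity lemma behind $(a)$–$(b)$: the cluster map is \emph{anti-monotone}, since $\partial x^i/\partial x^m=-\beta\sum_k\pi_{mk}e^{-\beta w_k}M_k p_{ki}\le 0$, so Topkis-type lattice arguments do not apply directly and one must control the general-equilibrium feedback by which a negative injection lowers devaluation, raises sales, and re-raises devaluation. I would differentiate the fixed point, obtaining $(\mathrm{I}+\beta GE)\,d\mathbf{w}=d\hat{\mathbf{y}}$ with $G=\Pi^{\!\top}\mathrm{diag}(A)\,\Pi\succeq 0$ (positive definite on the relevant subspace by linear independence) and $E=\mathrm{diag}(e^{-\beta w_k})\succ 0$, and then, using the push-through identity and the fact that the injection is a non-negative combination of rows of $\Pi$, verify that the induced change in each $x^j$ has the claimed sign; invertibility and well-posedness are guaranteed by the equilibrium-uniqueness established within the Section \ref{Characterization of Invariants, Perfection and Homophily} framework. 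I expect this sign-tracking, together with the strictness/connectedness step, to absorb most of the work, with $(c)$–$(d)$ following quickly once the identity-and-contradiction template is in place.
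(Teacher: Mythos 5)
Your high-level reading of the lemma as a comparative-statics statement about the clearing fixed point is the right one, and your treatment of $(c)$--$(d)$ (fix $\mathbf{s}$, note the devaluation term must move with the aggregate, conclude the sales term moves oppositely) is essentially the paper's argument. But there is a genuine gap in $(a)$--$(b)$, and it sits exactly where you park it: the ``monotonicity lemma'' that a coordinatewise non-positive per-industry injection $y_k=\sum_{m}\pi_{mk}\tilde x^m\le 0$ raises \emph{every} incumbent's solution. You assert this and defer it to a sign-verification on $(\mathrm{I}+\beta GE)^{-1}$, but that verification is the entire content of the lemma and it does not go through in the generality you set up. $(\mathrm{I}+M)^{-1}$ for $M\ge 0$ is an alternating Neumann series and is not entrywise sign-definite, and the economics mirrors the algebra: an injection concentrated in industry $k$ lowers devaluation in $k$, which raises the sales of banks exposed to $k$; those extra sales raise devaluation in some other industry $l$; a bank in $\mathcal{P}^+$ exposed to $l$ but not to $k$ then ends up selling \emph{less}, reversing the claimed componentwise inequality. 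So in the heterogeneous-exposure, multi-industry system your reduction targets, the statement you need is at best unproven and plausibly false without further structure. A secondary point: you motivate the monotonicity as ``exactly the monotonicity underpinning Lemma \ref{Lemma: Crowding-Out Effect Opposite}$(a)$--$(b)$,'' but in the paper that lemma is a corollary of the present one, so you cannot lean on it without circularity.

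The paper avoids all of this by proving the lemma in the common-devaluation-factor form of the clearing system (equation \ref{Equation system: COE original for proof a}, i.e.\ the form of system \ref{system P}), where each bank's devaluation term is $\bigl(1-e^{-\beta\sum_{j}x^j}\bigr)\cdot A_i$: a single scalar aggregate multiplies every bank's exposure. Because $s^i$ is invariant across configurations, $x^i=s^i-\bigl(1-e^{-\beta\Sigma}\bigr)A_i$ forces all components to move together with $\Sigma$ (this is the ``cross-bank dependence'' the proof cites), so $(a)$ reduces to a one-line scalar contradiction: if the incumbents' solutions all weakly fell, the new aggregate would be strictly below the old one (the newcomers contribute $\left[\tilde{\mathbf{\Phi}}\right]^{|\mathcal{P^I}|\times 1}<0$ on top), the devaluation term would fall, and the incumbents' equations would then force their sales \emph{up} --- a contradiction. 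If you want to keep your route, you must either (i) restrict to the common-factor specification, in which case your injection argument collapses to the paper's scalar one and the matrix machinery is unnecessary, or (ii) actually prove the entrywise sign claim for $(\mathrm{I}+\beta GE)^{-1}$ under explicit conditions on $\mathbf{\Pi}$ (your appeal to linear independence of the rows of $\mathring{\mathbf{\Pi}}$ gives neither connectivity nor the sign structure you need for the strictness step). As written, the proposal does not establish $(a)$--$(b)$.
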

\subsection{Hierarchy and Homophily Transition}
\noindent We employ a two-bank case to examine how exogenous shocks trigger phase transitions in homophily patterns. Moreover, we characterize the initial transition state's dependence on: individual bank attributes, their relative relationships, and market liquidity. Three homophily transition types are identified: Type-1 (bank $i$ dominates bank $j$ in both assets and leverage), Type-2 (near-identical attributes), and Type-3 (bank $i$ has higher leverage but not assets). Type-2 can be viewed as a limiting case of Type-1. The following proposition formalizes results for Type-1 and Type-2.
\begin{restatable}{thm}{CounterfactualMonopolySpace}\label{Propositions about Counterfactual Monopoly Space}
	Type-1 and Type-2 Homophily Transition:
	
	$(a)$ \textnormal{Type-1 Homophily Transition: Consider banks $i,j$ in industry $k$ with asymmetric leverage ratio $\theta^i \geq \theta^j\geq \bar{\theta}$ and investment dominance $A_i = V_{ik} \gg A_j = V_{jk}$. Under ignition condition $ e^{-\beta \cdot \Phi^{-1}\left[\tilde{s}^i\left(\varepsilon=0\right)\right]}\geq \frac{\theta^j}{\bar{\theta}}$ where $\beta>0$, the system exhibits:}
	
	\textnormal{1. \textit{Threshold Hierarchy}: $\varepsilon_{\tilde{s}^i=0} \geq \varepsilon_{\tilde{s}^j=0}$ with $(s^i-s^j)$ non-decreasing in $\varepsilon$.}
	
	\textnormal{2. \textit{Contagion}: Relative liquidation difference $\frac{s^i-x^i}{s^j-x^j} = \frac{A_i}{A_j}$ with cross-bank dependence $x^i = \frac{A_i(\theta^j-\theta^i)}{\bar{\theta}} + \frac{A_i}{A_j}\cdot x^j$ and $\frac{\partial x^i}{\partial x^j} = \frac{A_i}{A_j} > 0$.}
	
	\textnormal{3. \textit{Regime Transitions}:}
	\begin{itemize}
		\item $(i)$ \textnormal{For $\varepsilon \in [0, \varepsilon_{\tilde{s}^j=0}]$: $x^i \leq \tilde{x}^i < 0$, $\tilde{x}^j \leq 0 \leq x^j$}
		\item $(ii)$ \textnormal{For $\varepsilon \in [\varepsilon_{\tilde{s}^j=0}, \varepsilon_{\tilde{s}^i=0}]$: $x^i \leq \tilde{x}^i \leq 0$ , $0 \leq \tilde{x}^j \leq x^j$} 
		\item $(iii)$ \textnormal{For $\varepsilon \in [\varepsilon_{\tilde{s}^i=0}, \varepsilon_{x^i=0}]$: $x^i \leq  0\leq \tilde{x}^i$ , $0 \leq \tilde{x}^j \leq x^j$}
		\item $(iv)$ \textnormal{For $\varepsilon \geq \varepsilon_{x^i=0}$: $0 \leq x^i \leq \tilde{x}^i$ and $0 \leq x^j \leq \tilde{x}^j$}
	\end{itemize}
	
	$(b)$ \textnormal{Type-2 Homophily Transition: Consider banks $i,j$ in industry $k$ with symmetric leverage ratio $\theta^i=\theta^j$ and symmetric investment $A_i=V_{ik}= A_j=V_{jk}$ (i.e., $\frac{A_i}{A_j}\rightarrow 1$). The system exhibits:}
	
	\textnormal{1. \textit{Threshold Hierarchy}: $\varepsilon_{\tilde{s}^i=0} = \varepsilon_{\tilde{s}^j=0}$ with $s^i-s^j=0$.}
	
	\textnormal{2. \textit{Regime Transitions}:}
	\begin{itemize}
		\item $(i)$ \textnormal{For $\varepsilon \in [0, \varepsilon_{\tilde{s}^i=0}]$: $\tilde{x}^i\leq x^i < 0$ , $\tilde{x}^j\leq x^j < 0$}
		\item $(ii)$ \textnormal{For $\varepsilon \geq \varepsilon_{\tilde{s}^i=0}$: $0 \leq x^i \leq \tilde{x}^i$ and $0 \leq x^j \leq \tilde{x}^j$}
	\end{itemize}
\end{restatable}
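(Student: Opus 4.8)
The plan is to exploit the explicit form of the liquidation function $s^i$ in \eqref{Shock-dependent liquidation} together with the monotonicity results already established in Lemma~\ref{Corresponding to Counterfactual Monopoly Space}. The key structural simplification in both Type-1 and Type-2 is that each bank is a monopolist in its own industry ($A_i = V_{ik}$, $A_j = V_{jk}$, and these are distinct industries), so the disposal/contagion weights reduce to $\pi_{ik}=p_{ki}=1$ and the two banks interact only through the single shared industry $k$ via the devaluation factor $\kappa = e^{-\beta(x^i+x^j)}$ (after suitable normalization). First I would write out the clearing system \eqref{Equilibrium solution under BCP} for the two banks explicitly in this reduced form, and separately the CMS system from Definition~\ref{Counterfactual Monopoly Space}, where each bank faces only its own sales in the exponent. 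Subtracting the two pins down the sign of $\tilde x^a - x^a$ for $a\in\{i,j\}$, which is exactly what the ``Regime Transitions'' bullets assert.

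For part $(a)$, step~1 (Threshold Hierarchy) follows directly: from \eqref{Shock-dependent liquidation}, $s^i - s^j = (\theta^i-\theta^j)\!\left(\tfrac{1-\bar\theta}{\bar\theta}\right)\! A_i$-type terms plus the asset-scale gap, and since $\tilde s^a$ and $s^a$ coincide (Lemma~\ref{Corresponding to Counterfactual Monopoly Space}$(a)$) while $\varepsilon_{\tilde s^a=0}$ is the zero of an affine increasing function of $\varepsilon$, the ordering $\varepsilon_{\tilde s^i=0}\ge \varepsilon_{\tilde s^j=0}$ is equivalent to $\theta^i \ge \theta^j$; monotonicity of $s^i-s^j$ in $\varepsilon$ is reading off the coefficient of $\varepsilon$. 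Step~2 (Contagion): because both banks liquidate into the \emph{same} industry $k$ with unit weights, the devaluation term in \eqref{Equilibrium solution under BCP} is identical across the two equations up to the factor $M_k p_{ki}$ versus $M_k p_{kj}$, i.e.\ proportional to $A_i$ versus $A_j$; hence $\tfrac{s^i-x^i}{s^j-x^j}=\tfrac{A_i}{A_j}$, and solving the $i$-equation for $x^i$ in terms of $x^j$ gives the displayed affine relation with slope $A_i/A_j>0$. Step~3 (Regime Transitions) is then a case analysis on $\varepsilon$ across the ordered breakpoints $0 \le \varepsilon_{\tilde s^j=0}\le \varepsilon_{\tilde s^i=0}\le \varepsilon_{x^i=0}$: on each interval the sign of $s^a$ (hence $\tilde x^a$, by Lemma~\ref{Corresponding to Counterfactual Monopoly Space}$(b)$) is fixed, and the sign of $\tilde x^a - x^a$ is fixed by the COE comparison (Lemma~\ref{Lemma: Crowding-Out Effect Opposite}/\ref{Lemma: Crowding-Out Effect Homophily} applied to the pair, using that $i$ is the bail-out bank and $j$ the bail-in bank in the early regimes). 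The ignition condition $e^{-\beta\Phi^{-1}[\tilde s^i(\varepsilon=0)]}\ge \theta^j/\bar\theta$ is precisely what guarantees the breakpoint ordering and the irreversibility (no role-swap) claimed in the surrounding text; I would verify it forces $\tilde x^i(0) \le \tilde x^j(0)\le 0$ via Lemma~\ref{Corresponding to Counterfactual Monopoly Space}$(e)$.

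For part $(b)$, Type-2 is the symmetric limit $A_i/A_j\to 1$, $\theta^i=\theta^j$: the breakpoints collapse to a single $\varepsilon_{\tilde s^i=0}=\varepsilon_{\tilde s^j=0}$, and by symmetry $x^i=x^j=:x$, $\tilde x^i=\tilde x^j=:\tilde x$. The clearing equation becomes $s = x + (1-e^{-2\beta x})\cdot\tfrac{M_k}{2}$ while the CMS equation is $s = \tilde x + (1-e^{-\beta \tilde x})\cdot\tfrac{M_k}{2}$; comparing these for $s<0$ (i.e.\ $\varepsilon<\varepsilon_{\tilde s^i=0}$) gives $\tilde x \le x < 0$, and for $s>0$ gives $0\le x\le \tilde x$. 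This is a one-variable convexity/monotonicity comparison and should be routine once the two scalar equations are in hand.

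The main obstacle I anticipate is Step~3 of part $(a)$: carefully establishing that the four breakpoints are indeed ordered as $0\le\varepsilon_{\tilde s^j=0}\le\varepsilon_{\tilde s^i=0}\le\varepsilon_{x^i=0}$ under the stated ignition condition, and then on the \emph{intermediate} regimes (ii) and (iii), where $i$ is still creating credit but $j$ has switched to selling, correctly signing $\tilde x^i - x^i$ and $\tilde x^j - x^j$ simultaneously — this requires combining the COE inequalities with the cross-dependence $\partial x^i/\partial x^j = A_i/A_j$ rather than applying a single lemma off the shelf. Everything else is either a direct substitution into \eqref{Shock-dependent liquidation} or an invocation of Lemma~\ref{Corresponding to Counterfactual Monopoly Space}.
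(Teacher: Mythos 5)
Your overall architecture --- writing the two-bank clearing system and the CMS system explicitly, using the affine form of $s^i(\varepsilon)$ in \eqref{Shock-dependent liquidation} for the threshold hierarchy, dividing the two clearing equations to obtain the contagion ratio and the cross-bank dependence, and then signing $\tilde{x}^a$ via Lemma~\ref{Corresponding to Counterfactual Monopoly Space}$(b)$ and comparing $x^a$ with $\tilde{x}^a$ interval by interval --- is the right one, and it matches the machinery the paper itself deploys for the generalizations (the same reduction appears in the proofs of Lemma~\ref{Lemma ForDecomposition and Compression Equivalence} and of the cross-bank dependence in Lemma~\ref{Perfection Switch=1}~$1.a)$). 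The Type-2 scalar comparison and the contagion computation are correct. However, two points need repair. First, your description of the network is internally inconsistent: you say each bank is ``a monopolist in its own industry'' and that the industries are ``distinct,'' and then correctly treat both banks as sharing industry $k$. In the proposition both banks place all assets in the \emph{same} industry $k$, so $\pi_{ik}=\pi_{jk}=1$ (specialization, not monopoly), $M_k=A_i+A_j$, $M_k p_{ki}=A_i$, and the shared devaluation factor is $e^{-\beta(x^i+x^j)}$. If the industries were genuinely distinct, the CMS and CCS systems would coincide and the proposition would be vacuous; your later algebra uses the correct model, but the setup must be stated correctly because the entire interaction runs through the shared exponent.

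Second, and more substantively, regime $(i)$ of part $(a)$ cannot be obtained by ``applying Lemma~\ref{Lemma: Crowding-Out Effect Opposite}/\ref{Lemma: Crowding-Out Effect Homophily} to the pair.'' On $[0,\varepsilon_{\tilde{s}^j=0}]$ both CMS solutions are negative, so no decomposition $\mathcal{P}=\mathcal{P}^+\cup\mathcal{P}^-$ satisfying the hypotheses of Lemma~\ref{Lemma: Crowding-Out Effect Opposite} exists ($\{j\}$ is not a valid $\mathcal{P}^+$ since $\tilde{x}^j<0$, and $\mathcal{P}=\mathcal{P}^-$ fails since the cluster solution has $x^j\geq 0$); a naive reading of Lemma~\ref{Lemma: Crowding-Out Effect Homophily}$(b)$ would even yield $x^i>\tilde{x}^i$, the opposite of the claim. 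Regime $(i)$ is exactly where the ignition condition must do direct work: from $x^j=A_j\left[e^{-\beta(x^i+x^j)}-\theta^j/\bar{\theta}+(1-\bar{\theta})\varepsilon/\bar{\theta}\right]$ one needs $e^{-\beta(x^i+x^j)}\geq \theta^j/\bar{\theta}$ at $\varepsilon=0$, while the stated condition controls $e^{-\beta\tilde{x}^i}$, so you must pass from $\tilde{x}^i$ to $x^i+x^j$ using $A_i\gg A_j$ (bank $j$'s contribution to the aggregate is negligible); once $x^j\geq 0$ is secured, $x^i\leq\tilde{x}^i$ follows by comparing the exponents $-\beta(x^i+x^j)$ and $-\beta\tilde{x}^i$. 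Note also that your stated verification target $\tilde{x}^i(0)\leq\tilde{x}^j(0)\leq 0$ is not what the ignition condition delivers --- it delivers $x^j(0)\geq 0$ in the cluster. Conversely, regimes $(ii)$ and $(iii)$, which you flag as the main obstacle, are precisely where Lemma~\ref{Lemma: Crowding-Out Effect Opposite} applies off the shelf with $\mathcal{P}^+=\{j\}$ and $\mathcal{P}^-=\{i\}$. Finally, the placement of the fourth breakpoint, $\varepsilon_{\tilde{s}^i=0}\leq\varepsilon_{x^i=0}$, is asserted but never argued; it requires checking that at $\varepsilon=\varepsilon_{\tilde{s}^i=0}$ the aggregate $T=x^i+x^j$ solves $T=(A_i+A_j)\left(e^{-\beta T}-1\right)+A_j(\theta^i-\theta^j)/\bar{\theta}$, whence $T\geq 0$ and $x^i=A_i\left(e^{-\beta T}-1\right)\leq 0$.
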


The ignition condition functions to enforce a configuration where bank $i$ creates credit while bank $j$ liquidates assets in a risk-free setting. In Type-1, this condition is $\beta$-dependent. Lemma \ref{Corresponding to Counterfactual Monopoly Space} guarantees the irreversibility of this configuration, thereby ruling out the existence of role-swapping ignition conditions. Nevertheless, the system admits simultaneous credit creation at $\varepsilon=0$. We will examine this scenario in Type-3, characterizing the ignition condition through the lens of individual bank attributes.

The proposition illustrates three core properties of the clearing system. The \textit{threshold hierarchy} characterizes the relationship between bank $i$'s leverage ratio $\theta^i$ and its potential maximum tolerable risk level $\varepsilon_{\tilde{s}^i=0}$. In Type-1, bank $i$ can withstand greater risk exposure than bank $j$. \textit{Contagion} captures the one-to-one (bijective) correspondence in liquidation dependencies within a clustered banking pair. A key implication is that the total liquidation amount $x^i + x^j$ allows us to determine individual solutions through variable substitution. Furthermore, comparative statics shows simultaneous increases in their liquidation levels. \textit{Regime transition} characterizes the homophily transition induced by exogenous shocks. The proposition's inequality directly applies the COE from Lemma \ref{Lemma: Crowding-Out Effect}. Only in case $(iv)$ do both banks share the same homophily regime; in all other cases, bank $i$ crowds out bank $j$. Similarly, Type-2 exhibits the same homophily transition behavior.
\begin{restatable}{thm}{CounterfactualMonopolySpaceSnd}\label{Propositions about Counterfactual Monopoly Space2}
	Type-3 Homophily Transition:
	
	\textnormal{Consider banks $i,j$ in industry $k$ with asymmetric leverage ratio $\theta^i\gg \theta^j\geq \bar{\theta}$ and investment dominance $A_i = V_{ik} \ll A_j = V_{jk}$. Then the system exhibits:}
	
	\textnormal{1. There exists a threshold $\beta_{\left\{x^j\left(\varepsilon=0\right)=0\right\}}$ such that $x^j\left(\varepsilon=0\right)=0$ when $\beta= \beta_{\left\{x^j\left(\varepsilon=0\right)=0\right\}}$. Further more, $x^j\left(\varepsilon=0\right)<0$ when $\beta<\beta_{\left\{x^j\left(\varepsilon=0\right)=0\right\}}$ and $x^j\left(\varepsilon=0\right)>0$ when $\beta>\beta_{\left\{x^j\left(\varepsilon=0\right)=0\right\}}$.}
	
	\textnormal{2. $\beta_{\left\{x^j\left(\varepsilon=0\right)=0\right\}}$ is non-decreasing in $\theta^j$ and non-increasing in $A_j$ (i.e., bank $i$ is more likely to be crowded-out with larger $A_j$ conditional on $\theta^i$).}
	
	\textnormal{3. $\beta_{\left\{x^i\left(\varepsilon=0\right)=0\right\}}=+\infty$, that is $x^i\left(\varepsilon=0\right)<0$ is always guaranteed (i.e., regardless of $\beta$).}
	
	\textnormal{4. When $\beta>\beta_{\left\{x^j\left(\varepsilon=0\right)=0\right\}}$, the regime transitions are the same as we mentioned in Proposition \ref{Propositions about Counterfactual Monopoly Space} $a)$.}
	
	\textnormal{5. When $\beta<\beta_{\left\{x^j\left(\varepsilon=0\right)=0\right\}}$, we have $\varepsilon_{x^j=0}<\varepsilon_{s^j=0}<\varepsilon_{s^i=0}$, and the \textit{regime transitions} are:}
	\begin{itemize}
		\item $(i)$ \textnormal{For $\varepsilon \in [0, \varepsilon_{x^j=0}]$: $\tilde{x}^i\leq x^i \leq 0$ , $\tilde{x}^j\leq x^j \leq 0$}
		\item $(ii)$ \textnormal{For $\varepsilon \in \left[\varepsilon_{x^j=0},\varepsilon_{\tilde{s}^j=0}\right]$: $x^i \leq \tilde{x}^i \leq 0$ , $\tilde{x}^j \leq0 \leq  x^j$}
		\item $(iii)$ \textnormal{For $\varepsilon \in \left[\varepsilon_{\tilde{s}^j=0},\varepsilon_{\tilde{s}^i=0}\right]$: $x^i \leq \tilde{x}^i \leq 0$ and $0 \leq \tilde{x}^j \leq x^j$}
		\item $(iv)$ \textnormal{For $\varepsilon \in \left[\varepsilon_{\tilde{s}^i=0},\varepsilon_{x^i=0}\right]$: $x^i \leq 0 \leq \tilde{x}^i$ and $0 \leq \tilde{x}^j \leq x^j$}
		\item $(v)$ \textnormal{For $\varepsilon \geq \varepsilon_{x^i=0}$: $0 \leq x^i \leq \tilde{x}^i$ and $0 \leq x^j \leq \tilde{x}^j$}
	\end{itemize}
\end{restatable}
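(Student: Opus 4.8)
The plan is to exploit that, in this two--bank configuration, each bank is entirely invested in industry $k$, which collapses the CCS clearing system to a single scalar fixed--point equation. With $\pi_{ik}=\pi_{jk}=1$, $M_{k}=A_{i}+A_{j}$, $M_{k}p_{ki}=A_{i}$, $M_{k}p_{kj}=A_{j}$, and setting $\delta:=1-e^{-\beta(x^{i}+x^{j})}$, the clearing identities become $x^{i}=s^{i}-A_{i}\delta$ and $x^{j}=s^{j}-A_{j}\delta$, while $\delta$ is pinned down by $F(\delta):=-\tfrac{1}{\beta}\ln(1-\delta)-(s^{i}+s^{j})+(A_{i}+A_{j})\delta=0$. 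Since $F$ is strictly increasing on $(-\infty,1)$ and runs from $-\infty$ to $+\infty$, it has a unique root $\delta^{\ast}=\delta^{\ast}(\varepsilon,\beta)$, giving a unique pair $(x^{i},x^{j})$ for every $\beta>0$ --- independently of the threshold $\bar\beta(\varepsilon)$ of Proposition \ref{Solutions in FCP}. By Lemma \ref{Corresponding to Counterfactual Monopoly Space} the CMS solutions $\tilde x^{i}=s^{i}-A_{i}\tilde\delta^{i}$, where $-\tfrac{1}{\beta}\ln(1-\tilde\delta^{i})=s^{i}-A_{i}\tilde\delta^{i}$ (and analogously for $j$), exist for all $\beta$ and satisfy $\tilde x^{i}<0\iff s^{i}<0$. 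Throughout I use $s^{i}=\tfrac{A_{i}}{\bar\theta}\bigl[\bar\theta-\theta^{i}+(1-\bar\theta)\varepsilon\bigr]$ from Equation \ref{Shock-dependent liquidation}, so $s^{i}(0)=A_{i}(1-\theta^{i}/\bar\theta)<0$.

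For part 3, suppose $x^{i}(\varepsilon=0)\ge 0$; then $\delta^{\ast}\le s^{i}(0)/A_{i}=1-\theta^{i}/\bar\theta<1-\theta^{j}/\bar\theta$, which forces $x^{i}\ge 0$ and $x^{j}=s^{j}(0)-A_{j}\delta^{\ast}\ge A_{j}(\theta^{i}-\theta^{j})/\bar\theta>0$, hence $x^{i}+x^{j}>0$ and $\delta^{\ast}>0$, contradicting $\delta^{\ast}\le 1-\theta^{i}/\bar\theta<0$; so $x^{i}(\varepsilon=0)<0$ for every $\beta$, i.e.\ $\beta_{\{x^{i}(\varepsilon=0)=0\}}=+\infty$. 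For parts 1 and 2, at $\varepsilon=0$ we have $F(0)=-(s^{i}(0)+s^{j}(0))>0$, so $\delta^{\ast}<0$ for every $\beta>0$; the implicit function theorem, with $\partial_{\delta}F>0$ and $\partial_{\beta}F=\tfrac{1}{\beta^{2}}\ln(1-\delta^{\ast})>0$ at $\delta^{\ast}<0$, yields $\partial\delta^{\ast}/\partial\beta<0$, so $g(\beta):=x^{j}(\varepsilon=0;\beta)=s^{j}(0)-A_{j}\delta^{\ast}(\beta)$ is strictly increasing with $g(0^{+})=s^{j}(0)<0$ and $g(\beta)\to A_{i}A_{j}(\theta^{i}-\theta^{j})/[\bar\theta(A_{i}+A_{j})]>0$ as $\beta\to\infty$; the intermediate value theorem gives the unique zero $\beta_{\{x^{j}(\varepsilon=0)=0\}}$ together with the claimed sign pattern. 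Substituting the value $\delta=1-\theta^{j}/\bar\theta$ (the one making $x^{j}(\varepsilon=0)=0$) into $F(\delta)=0$ at $\varepsilon=0$, the $A_{j}$--terms cancel and one gets the closed form $\beta_{\{x^{j}(\varepsilon=0)=0\}}=\bar\theta\ln(\theta^{j}/\bar\theta)/[A_{i}(\theta^{i}-\theta^{j})]$, whence the comparative statics of part 2 (non--decreasing in $\theta^{j}$; independent of $A_{j}$, hence non--increasing) are immediate.

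For parts 4 and 5 I first fix the ordering of the critical shocks. Differentiating the scalar system in $\varepsilon$ shows $x^{i},x^{j}$ are strictly increasing in $\varepsilon$ (the cross terms cancel because $A_{i}s^{j\prime}=A_{j}s^{i\prime}$), so $\varepsilon_{x^{i}=0}$ and $\varepsilon_{x^{j}=0}$ are well defined, and $\varepsilon_{s^{j}=0}=\tfrac{\theta^{j}-\bar\theta}{1-\bar\theta}<\tfrac{\theta^{i}-\bar\theta}{1-\bar\theta}=\varepsilon_{s^{i}=0}$ (recall $\tilde s^{i}=s^{i}$). Writing $y:=x^{i}+x^{j}$ and using $y=(s^{i}+s^{j})-(A_{i}+A_{j})(1-e^{-\beta y})$, evaluation at $\varepsilon=\varepsilon_{s^{i}=0}$ (where $s^{i}=0$, $s^{j}>0$) forces $y>0$, hence $x^{i}=-A_{i}(1-e^{-\beta y})<0$ there, so $\varepsilon_{s^{i}=0}<\varepsilon_{x^{i}=0}$ for every $\beta$. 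When $\beta>\beta_{\{x^{j}(\varepsilon=0)=0\}}$, parts 1 and 3 give $x^{i}(0)<0<x^{j}(0)$, the sign configuration at $\varepsilon=0$ that the ignition condition enforces in Proposition \ref{Propositions about Counterfactual Monopoly Space}(a); that proposition's regime--transition argument --- which otherwise uses only the ordering $\varepsilon_{s^{j}=0}<\varepsilon_{s^{i}=0}<\varepsilon_{x^{i}=0}$, the contagion identities $\tfrac{s^{i}-x^{i}}{s^{j}-x^{j}}=\tfrac{A_{i}}{A_{j}}$ and $x^{i}=\tfrac{A_{i}(\theta^{j}-\theta^{i})}{\bar\theta}+\tfrac{A_{i}}{A_{j}}x^{j}$, and the crowding--out lemmas --- then delivers the four--interval transitions verbatim. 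When $\beta<\beta_{\{x^{j}(\varepsilon=0)=0\}}$, part 1 gives $x^{i}(0),x^{j}(0)<0$; evaluation of $y$ at $\varepsilon=\varepsilon_{s^{j}=0}$ (where $s^{j}=0$, $s^{i}<0$) forces $y<0$, hence $x^{j}=-A_{j}(1-e^{-\beta y})>0$ there, so $\varepsilon_{x^{j}=0}<\varepsilon_{s^{j}=0}$ and altogether $0<\varepsilon_{x^{j}=0}<\varepsilon_{s^{j}=0}<\varepsilon_{s^{i}=0}<\varepsilon_{x^{i}=0}$. On each of the five resulting subintervals the signs of $\tilde x^{i},\tilde x^{j}$ come from Lemma \ref{Corresponding to Counterfactual Monopoly Space}(b), and the comparisons of the cluster solution with the CMS solution from the crowding--out lemmas applied to the decomposition valid there: Lemma \ref{Lemma: Crowding-Out Effect Homophily}(b) on $[0,\varepsilon_{x^{j}=0}]$ (both banks in $\mathcal{P}^{-}$); Lemma \ref{Lemma: Crowding-Out Effect}(b) on $[\varepsilon_{x^{j}=0},\varepsilon_{s^{j}=0}]$ ($\mathcal{P}^{-}=\{i\}$ with $j$ adjoined at a nonnegative cluster component); Lemma \ref{Lemma: Crowding-Out Effect Opposite} on $[\varepsilon_{s^{j}=0},\varepsilon_{s^{i}=0}]$ ($\mathcal{P}^{+}=\{j\}$, $\mathcal{P}^{-}=\{i\}$); Lemma \ref{Lemma: Crowding-Out Effect}(a) on $[\varepsilon_{s^{i}=0},\varepsilon_{x^{i}=0}]$ ($\mathcal{P}^{+}=\{j\}$ with $i$ adjoined at a negative cluster component); and Lemma \ref{Lemma: Crowding-Out Effect Homophily}(a) on $[\varepsilon_{x^{i}=0},\infty)$ (both banks in $\mathcal{P}^{+}$).

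The main obstacle will be the sign bookkeeping for $\delta^{\ast}$ (equivalently for $y=x^{i}+x^{j}$) across the $\varepsilon$-- and $\beta$--regimes: every threshold location and every ordering above rests on a short contradiction argument read off $F(\delta)=0$, where an inequality is easy to mislabel. A related subtlety is that on the two subintervals where a bank's isolated CMS sign disagrees with its cluster sign --- $[\varepsilon_{x^{j}=0},\varepsilon_{s^{j}=0}]$ for bank $j$ and $[\varepsilon_{s^{i}=0},\varepsilon_{x^{i}=0}]$ for bank $i$ --- the decompositions underlying Lemma \ref{Lemma: Crowding-Out Effect Opposite} and Lemma \ref{Lemma: Crowding-Out Effect Homophily} do not literally apply, so one must instead invoke the adjoined--bank variants Lemma \ref{Lemma: Crowding-Out Effect}(a),(b), whose hypotheses concern the combined--cluster sign rather than the isolated one.
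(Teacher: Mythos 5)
Your proof is correct and follows essentially the same route as the paper's machinery for this class of systems: collapse the single-industry two-bank cluster to the aggregated scalar equation (the two-bank instance of Equation \ref{system P Sum}), obtain existence and uniqueness for every $\beta$ from its strict monotonicity, locate $\beta_{\left\{x^j\left(\varepsilon=0\right)=0\right\}}$ by the intermediate value theorem and implicit differentiation, and read off the five regime intervals from Lemma \ref{Lemma: Crowding-Out Effect Opposite}, Lemma \ref{Lemma: Crowding-Out Effect Homophily} and the adjoined-bank variants in Lemma \ref{Lemma: Crowding-Out Effect}, exactly as the paper does for the general cluster in Lemma \ref{Lemma ForDecomposition and Compression Equivalence}. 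One point to flag rather than assert: your closed form $\beta_{\left\{x^j\left(\varepsilon=0\right)=0\right\}}=\bar\theta\ln\left(\theta^j/\bar\theta\right)/\left[A_i\left(\theta^i-\theta^j\right)\right]$ is literally constant in $A_j$ (the $A_j$-dependence cancels through $s^j/A_j$), so part 2's ``non-increasing in $A_j$'' holds only degenerately in this fully specialized configuration and the parenthetical crowding-out interpretation does not follow from the computation.
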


Proposition \ref{Propositions about Counterfactual Monopoly Space2} formalizes the $\beta$-dependent ignition condition in Type-3, establishing four results: $(i)$ the ignition condition is governed by the market liquidity threshold $\beta_{{x^j(\varepsilon=0)=0}}$; $(ii)$ $\beta_{{x^j(\varepsilon=0)=0}}$ admits characterization through individual attributes $\theta^i$ and $A_i$; $(iii)$ configuration irreversibility stems from the \textit{threshold hierarchy}; $(iv)$ with unchanged threshold hierarchy, homophily transition akin to Type-1 can be achieved by adjusting the $\beta_{{x^j(\varepsilon=0)=0}}$. Additionally, the proposition characterizes the emergence of homophily, exclusively during severe crises, under the condition of simultaneous credit creation by both banks at $\varepsilon=0$.
\begin{figure}[H]
	\centering
	\begin{tikzpicture}[scale=0.03]
		
		\begin{scope}
			\clip (-30,-80) rectangle (75,75); 
			\fill[green!20]
			plot[domain=-30:75] (\x,\x) --
			plot[domain=75:-30] (\x, {-75 + 2*\x}) --
			cycle;
		\end{scope}
		
		\draw[->] (-40,0) -- (100,0) node[right] {$x^j$};
		\draw[->] (0,-80) -- (0,100) node[above] {$x^i$};
		
		\draw[purple!50, thick, domain=-30:90,opacity=70] plot (\x, {-75 + 2*\x});
		\node[red!90, right] at (45,15) {$x^i=\frac{A_i\cdot\left(\theta^j-\theta^i\right)}{\bar{\theta}}+\frac{A_i}{A_j}\cdot x^j$};
		
		\draw[blue!70, thick] (20,-35) -- (30,-15)
		node[midway, below, sloped] {slope = $\frac{A_i}{A_j}$};
		
		\draw[gray, thick, dashed, domain=-30:90] plot (\x,\x);
		\node[black] at (110,85) {$y = x$};
		
		\filldraw[blue!60] (0,-75) circle (3pt) 
		node[left] {$\frac{A_i(\theta^j - \theta^i)}{\bar{\theta}}$};
		
		\draw[black, thick] plot coordinates {
			(-4.78, -26.09)
			(-3.49, -24.85)
			(-2.18, -23.58)
			(-0.86, -22.29)
			(1.85, -19.62)
			(3.23, -18.25)
			(4.63, -16.84)
			(6.05, -15.41)
			(8.94, -12.45)
			(10.42, -10.93)
			(11.92, -9.36)
			(13.43, -7.77)
			(14.97, -6.13)
			(16.53, -4.46)
			(18.11, -2.75)
			(19.70, -1.00)
			(21.32, 0.79)
			(22.96, 2.63)
			(24.63, 4.51)
			(26.31, 6.44)
			(28.01, 8.42)
			(29.74, 10.45)
			(31.48, 12.53)
			(33.25, 14.66)
			(35.04, 16.85)
			(36.85, 19.10)
			(38.69, 21.41)
			(40.54, 23.78)
			(42.42, 26.22)
			(44.31, 28.72)
			(46.23, 31.28)
			(48.17, 33.92)
			(50.13, 36.63)
			(52.11, 39.42)
			(54.11, 42.28)
			(56.14, 45.22)
			(58.18, 48.23)
			(60.24, 51.33)
			(62.32, 54.51)
			(64.43, 57.78)
			(66.55, 61.12)
			(68.69, 64.56)
			(70.85, 68.08)
			(73.02, 71.68)
			(75.22, 75.37)
		};
		
		\draw[red, thick] plot coordinates {
			(17.70, -39.61)
			(18.18, -38.64)
			(18.67, -37.66)
			(19.18, -36.64)
			(19.70, -35.60)
			(20.23, -34.54)
			(20.78, -33.45)
			(21.34, -32.32)
			(21.91, -31.17)
			(22.51, -29.99)
			(23.12, -28.77)
			(23.74, -27.51)
			(24.39, -26.22)
			(25.05, -24.89)
			(25.74, -23.52)
			(26.45, -22.11)
			(27.18, -20.65)
			(27.93, -19.14)
			(28.71, -17.58)
			(29.51, -15.97)
			(30.35, -14.30)
			(31.21, -12.58)
			(32.11, -10.79)
			(33.03, -8.94)
			(33.99, -7.01)
			(34.99, -5.01)
			(36.03, -2.94)
			(37.11, -0.78)
			(38.23, 1.46)
			(39.40, 3.79)
			(40.61, 6.22)
			(41.87, 8.74)
			(43.19, 11.38)
			(44.56, 14.11)
			(45.98, 16.97)
			(47.47, 19.93)
			(49.01, 23.02)
			(50.62, 26.24)
			(52.29, 29.57)
			(54.02, 33.04)
			(55.82, 36.64)
			(57.68, 40.36)
			(59.61, 44.21)
			(61.59, 48.19)
			(63.64, 52.29)
			(65.75, 56.51)
			(67.92, 60.84)
		};
		
		\node[black!90!black, left, font=\small,align=left] at (36,32) {Counterfactual\\Monopoly};
		
		\draw[->, line width=0.3, orange,>=stealth] (10.42, -10.93) -- (22.51, -29.99)
		node[midway, above, sloped] {};
		\draw[->, line width=0.8, orange,>=stealth] (-4.78, -26.09) -- (17.70, -39.61)
		node[midway, above, sloped] {};
		\draw[->, line width=0.3, orange,>=stealth] (18.11, -2.75) -- (25.74, -23.52)
		node[midway, above, sloped] {};
		\draw[->, line width=0.3, orange,>=stealth] (36.85, 19.10) -- (37.11, -0.78)
		node[midway, above, sloped] {};
		\draw[->, line width=0.3, orange,>=stealth] (24.63, 4.51) -- (30.35, -14.30)
		node[midway, above, sloped] {};
		\draw[->, line width=0.8, orange,>=stealth] (46.23, 31.28) -- (43.19, 11.38)
		node[midway, above, sloped] {};
		
		\node[black] at (-7,-43) {$x^i \leq x^j$};
		
	\end{tikzpicture}
	\caption{Crowding-Out Effect}
	\label{CrowdingEffect_2-1}
\end{figure}
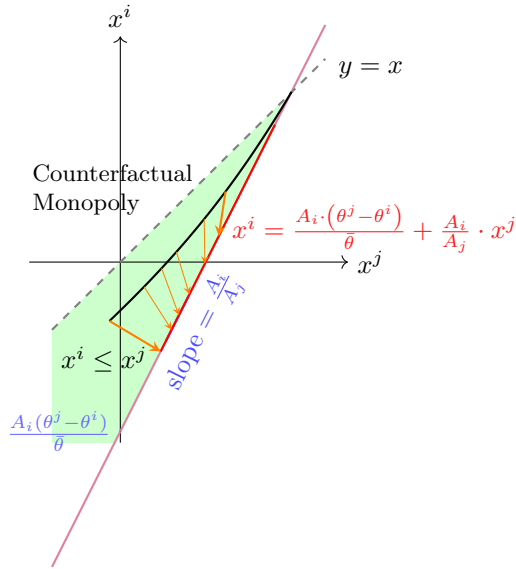
\subsection{Example}
\noindent We introduce an example to illustrate COE-induced homophily transition. Considering Type-1, bank $i$ ($A_i=100$, $E_i=8$, $\theta^i=8\%$) and bank $j$ ($A_j=50$, $E_j=2.5$, $\theta^j=5\%$) operate with market liquidity $\beta=0.02$. Figure \ref{CrowdingEffect_2-1} illustrates the COE state evolution as exogenous shocks escalate from moderate to extreme risk. The solid black line represents the banks' equilibrium solution pair in the CMS, while the red line corresponds to their CCS equilibrium. Arrows indicate CMS-to-CCS directions under fixed risk, enabling identification of their incentive compatibility conditions.

As established in Proposition \ref{Propositions about Counterfactual Monopoly Space}, bank $i$ crowds out bank $j$ at low risk, characterized by $\tilde{\mathbf{\Phi}}^i(\mathbf{s}^{|\mathcal{{i}\cup{j}}|\times 1}) - \Phi^{-1}(\tilde{s}^i) < 0$ and $\tilde{\mathbf{\Phi}}^j(\mathbf{s}^{|\mathcal{{i}\cup{j}}|\times 1}) - \Phi^{-1}(\tilde{s}^j) > 0$, visualized by bottom-right arrows in Figure \ref{CrowdingEffect_2-1}.  Under high-risk conditions, homophily emerges with $\tilde{\mathbf{\Phi}}^i\left(\mathbf{s}^{|\mathcal{\{i\}\cup\{j\}}|\times 1}\right) - \Phi^{-1}(\tilde{s}^i)<0$ and $\tilde{\mathbf{\Phi}}^j\left(\mathbf{s}^{|\mathcal{\{i\}\cup\{j\}}|\times 1}\right) - \Phi^{-1}(\tilde{s}^j)<0$, shown by bottom-left arrows.

\section{Characterization of Invariants, Perfection and Homophily}\label{Characterization of Invariants, Perfection and Homophily}
\noindent This section explores homophily emergence through bank heterogeneity, risk environment, and policy context. We develop a framework guaranteeing equilibrium existence in the CCM, thereby extending previous results. Exogenous-shock-driven homophily transition patterns is generalized as \textit{Compression Equivalence}, explaining banks' incentive compatibility condition changes via risk-oscillation effects on COE states. From a dual perspective, we investigate how partition operations and bank attributes (leverage/assets) trigger homophily transition—a phenomenon termed \textit{Partition-Induced Equilibrium Transition}. In addition to the COE, we introduce tools elucidating the intrinsic relationship between threshold hierarchy and homophily.
\subsection{Definitions of Invariants}
\noindent We first outline the analytical framework, then elaborate on its constituent components. The game is postulated to evolve sequentially from an initial state where all banks constitute a single homophily, or \textit{primitive generative space}. At each stage, banks decide whether to join a new homophilous groupings based on their individual circumstances (relative attributes and incentive compatibility) and environmental factors (peer decisions and mandatory policy requirements). The resulting configuration formed by existing spaces is designated the \textit{perfection generating space}(see Definition \ref{PerfectionGeneratingSpace}, henceforth PGS).\footnote{The subspaces of the PGS are mutually independent. This concept is similar to the notion of ``layers'' in \citet{delon2012local} and \citet{boerma2023composite}, but we construct it through the lens of incentive compatibility conditions.} Figure \ref{PGS_p=1} illustrates a simple PGS configuration.\footnote{We show another example in the Appendix, specifically, Figure \ref{PGS_p=1_Snd}.}
\begin{definition}\label{PerfectionGeneratingSpace}
	Perfection Generating Space
	
	\textnormal{The Primitive Generative Space $\mathcal{P}_{0\,|\,t=0}$ evolves under the joint operations ${p^s \oplus \mathcal{T}_p}$ to yield $\mathcal{P}_{0\,|\,t=1}$ and $\mathcal{P}_{1\,|\,t=1}$. The space $\mathcal{P}^{t=1}=\left[\mathcal{P}_{0\,|\,t=1},\mathcal{P}_{1\,|\,t=1}\right]$ is called the \textit{Perfection Generating Space at $t=1$}, whose \textit{dimension} $d\left(\mathcal{P}^{t=1}\right)=\big|\mathcal{P}^{t=1}\big|$ denotes the cardinality of $\mathcal{P}^{t=1}$. This notation applies recursively. Obviously, $\mathcal{P}^{t=0}=\left[\mathcal{P}_{0\,|\,t=0}\right]$ and $d\left(\mathcal{P}^{t=0}\right)=1$.}  
	
	\textnormal{Under the continuous application of composite operations, the space undergoes successive splitting until $\mathcal{P}^{t}=\mathcal{P}^{t+1}$, at which point we declare $\mathcal{P}^{t}$ to be \textit{stable or perfect}, and this process terminates at time $t$. Moreover, we stipulate that any subset of $\mathcal{P}^{t}$ is non-empty.}
\end{definition}

\begin{figure}[!htbp]
	\centering
	\includegraphics[width = 1\textwidth]{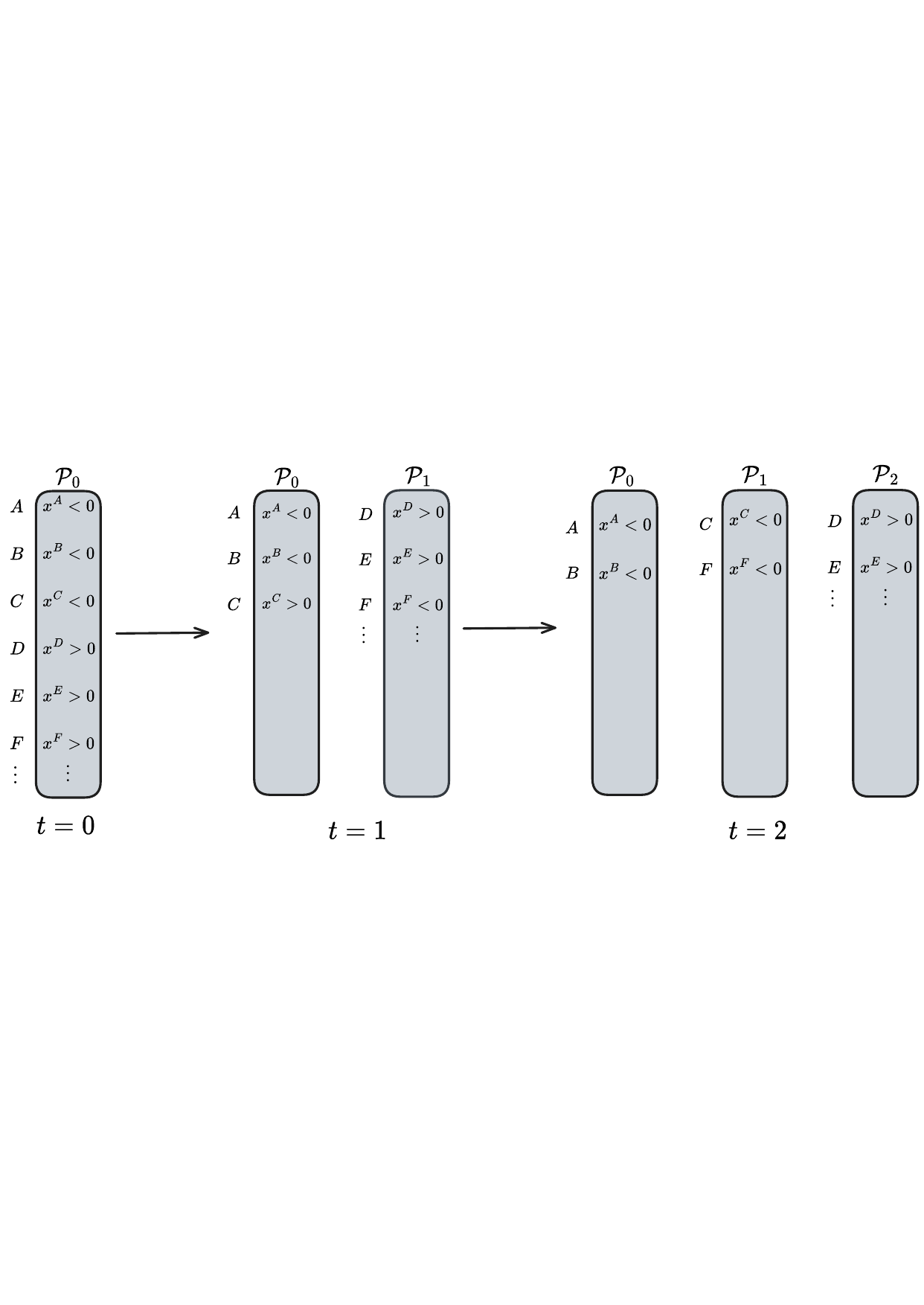}
	\caption{Perfection Generating Space: $p^s=1$}
	\label{PGS_p=1}
\end{figure}

We conceptualize the policy background as a \textit{swith} $p^s$ (see Definition \ref{Switch}), constraining banks' homophily behavior choices.\footnote{The research of \citet{chen2018nexus} and \citet{chen2025trade} actually implies a mandatory requirement of ``no hoarding'' for banks who are still sound under severe exogenous shocks, and such an imperative instruction miraculously enables the banking segmentation to endogenously fulfill the prerequisites of collective rationality of banks who are preparing to liquidate assets (i.e., restruct their balance sheets) to ensure regulatory compliance. We'll discuss this in Section \ref{Section: Homophily}.} Constraints apply selectively across banks: at $p^s=1$, the switch only binds credit-creating banks to maintain existing homophily, while asset-selling banks remain free. At $p^s=0$, all constraints are lifted. At $p^s\in (0,1)$, the switch is partially deactivated. Constraints can be temporary, contingent on relative attribute relationships among banks. As shown in Figure \ref{PGS_p=1}, bank $C$ is bound at $t=0$ but free at $t=1$.
\begin{definition}\label{Switch}
	Switch
	
	\textnormal{The \textit{Switch} $p^s$ is a Lebesgue measurable probabilistic variable and operation embedded within the Perfection Generating Space, whose priority supersedes the perfection operation. Specifically:}
	
	\textnormal{1. When $p^s=1$, banks with $x<0$ \textit{cannot transfer} (or be partitioned);}
	
	\textnormal{2. When $p^s=0$, the switch is \textit{disabled}, and only the perfection operation operates;}
	
	\textnormal{3. When $p^s\in\left(0,1\right)$, banks who in a bad state with $x<0$ transfer with probability $1-p^s$.}
\end{definition}

We observe in Figure \ref{PGS_p=1} that certain bank clusters persist in their subspaces across consecutive periods. For instance, banks $A$ and $B$ consistently remain in subspace $\mathcal{P}_0$ without transitioning. Meanwhile, some fixed clusters persistently exit their subspaces, as observed with banks $D$ and $E$, which continually depart from previous subspaces until PGS stabilization. We define a bank cluster with all members maintaining credit creation as a \textit{maximal bailout cluster}, and one with all members engaged in asset sales as a \textit{maximal bail-in cluster} (see Definition \ref{Maximal Bailout Cluster}).
\begin{definition}\label{Maximal Bailout Cluster}
	 Maximal Bailout (Bail-in) Cluster $\mathcal{B}(\varepsilon)$ $\left(\mathcal{B}^{\mathcal{I}}(\varepsilon)\right)$ 
	
	\textnormal{Under certain $\varepsilon$, Maximal Bailout (Bail-in) Cluster $\mathcal{B}(\varepsilon)$ $\left(\mathcal{B}^{\mathcal{I}}(\varepsilon)\right)$ refers to a clearing system where any bank $i\in \mathcal{B}(\varepsilon)$ $\left(i\in  \mathcal{B}^{\mathcal{I}}(\varepsilon)\right)$ with $s^i< 0$ $\left(s^i\geq 0\right)$ and $x^i< 0$ $\left(x^i\geq 0\right)$. For any bank $j\notin \mathcal{B}(\varepsilon)$ $\left(j\notin  \mathcal{B}^{\mathcal{I}}(\varepsilon)\right)$ satisfying $s^j\leq0$ $\left(s^j\geq 0\right)$ and $\tilde{x}^j\leq 0$ $\left(\tilde{x}^j\geq 0\right)$, the redesigned clearing system $\mathcal{B}'(\varepsilon)=\mathcal{B}(\varepsilon)\cup\{j\}$ $\left( \left(\mathcal{B}^{\mathcal{I}}(\varepsilon)\right)'=\mathcal{B}^{\mathcal{I}}(\varepsilon)\cup\{j\}\right)$ yields $x^j\geq0$ $\left(x^j< 0\right)$ while all other banks $i$ maintain $x^i< 0$ $\left(x^i\geq 0\right)$. If $\{j\}=\varnothing$, then $\mathcal{B}'(\varepsilon)=\mathcal{B}(\varepsilon)$ $\left( \left(\mathcal{B}^{\mathcal{I}}(\varepsilon)\right)'=\mathcal{B}^{\mathcal{I}}(\varepsilon)\right)$ is Maximal Bailout (Bail-in) Cluster.}
\end{definition}

How is the stable state of the PGS characterized? A canonical description is that after applying the joint operations ${p^s \oplus \mathcal{T}_p}$, all subspaces remain unchanged. Figure \ref{PGS_p=1} exhibits key features of stable subspaces: each contains distinct pure clusters, as formalized in Definition \ref{Maximal Bailout Cluster}. We formalize this idea in Definition \ref{Purification}.
\begin{definition}\label{Purification} 
	Space Purification Operation
	
	\textnormal{A subset $\mathcal{P}_{s\,|\,t}$ of $\mathcal{P}^{t}$ constitutes a \textit{Pure Bail-in Space} iff its clearing equilibrium satisfies $\min\limits_{\,i\in \mathcal{P}_{s\,|\,t}} x^i\geq 0$; it is termed a \textit{Pure Bailout Space} iff $\max\limits_{\,i\in \mathcal{P}_{s\,|\,t}} x^i< 0$ holds.}
	
	\textnormal{The \textit{Space Purification Operation} refers to the set of composite operations that enable a subset $\mathcal{P}_{s\,|\,t'_s}$ (abbreviated as $\mathcal{P}_{s}$ from a global perspective) of $\mathcal{P}^{t'_s}$ to achieve purification for the last time at $t=t'_s$. It is denoted as $\bigoplus_{\mathcal{P}_{s\,|\,t'_s}}=\left[ \left\{p^s \oplus \mathcal{T}_p\right\}^{t_s} , \cdots, \left\{p^s \oplus \mathcal{T}_p\right\}^{t'_s} \right]$ where its dimension $d\left(\bigoplus_{\mathcal{P}_{s}}\right) = \left| \bigoplus_{\mathcal{P}_{s\,|\,t'_s}} \right|$ represents the cardinality of $\bigoplus_{\mathcal{P}_{s\,|\,t'_s}}$ and $t_s$ denotes the initial occurrence time of $\mathcal{P}_{s}$. Under $\bigoplus_{\mathcal{P}_{s\,|\,t'_s}}$ and $\bigoplus_{\mathcal{P}_{s\,|\,t'_s+T}}$, we have $\mathcal{P}_{s\,|\,t'_s}=\mathcal{P}_{s\,|\,t'_s+T}$ where $T\in \mathbb{N}_+$.}
\end{definition}

\subsection{Incentive Compatible Partition}
\noindent We have frequently discussed the bank incentive compatibility conditions, we deploy them in the PGS to form an \textit{incentive compatible partition} analogous to \citet{levy2015preferences}. We interpret the operation of perfection from two dual perspectives. Individually: perfection operates when individual bank's departure from its current homophily yields dominant benefits relative to subspace retention, then such bank exits (subject to policy switch constraints). Collectively: perfection manifests when a bank coalition achieves superior benefits through collective departure versus current subspace retention, then such bank cluster exits conditional on switch compliance. In Figure \ref{PGS_p=1}, banks $C$, $D$, $E$, and $F$ constitute a departure-oriented cluster from $\mathcal{P}_0$, yet bank $C$'s exit is switch-prohibited. Consequently, the initial perfection yields a new homophily comprising $(D,E,F)$. The next stages will apply perfection operation again, and so on and so forth.
\begin{definition}\label{Perfection}
	States and Perfection
	
	\textnormal{1. \textit{Individual bank}}
	
	\textnormal{If $\tilde{x}^i - x^i \geq 0$, then bank $i$ is said to be in a \textit{good state} $\tau^i = 1$; otherwise, bank $i$ is in a \textit{bad state} $\tau^i = 0$. \textit{Perfection} $\mathcal{T}_p$ refers to the bipartition of a clearing system under certain shock $\varepsilon$: banks in the good state set $\tau^{\mathcal{G}ood}=\left\{i\big| \tau^i=1\right\}$ will remain, while banks in the bad state set $\tau^{\mathcal{B}ad}=\left\{j\big| \tau^j=0\right\}$ will exit. That is, $\tau^{\mathcal{G}ood} = \mathcal{I}_{bank} \setminus \tau^{\mathcal{B}ad}$. }
	
	\textnormal{2. \textit{Partitions}}
	
	\textnormal{If $\tilde{\mathbf{x}}_{_{\mathcal{P}_{s\,|\,t+1} } }-\,\,\mathbf{x}_{_{\mathcal{P}_{s\,|\,t}  }}\geq 0$ $\left(\tilde{\mathbf{x}}_{_{\mathcal{P}_{s+1\,|\,t+1} } }-\,\,\mathbf{x}_{_{\mathcal{P}_{s\,|\,t}  }}\geq 0\right)$, generative (generated) clearing system $\mathcal{P}_{s\,|\,t}\,\cap\,\mathcal{P}_{s\,|\,t+1}$ $\left(\mathcal{P}_{s\,|\,t}\,\cap\,\mathcal{P}_{s+1\,|\,t+1}\right)$ is said to be in a \textit{good state}; otherwise, it's in a \textit{bad state}. \textit{Perfection} $\mathcal{T}_p$ refers to the bipartition of a clearing system under certain shock $\varepsilon$: bank cluster in the bad state will exit.}
\end{definition}

Our analytical framework adopts the \textit{partition perfection} rather than scrutinizing individual bank incentive compatibility. Partition perfection actually ``aggregates'' the results of individual perfection. In Figure \ref{PGS_p=1}, if bank $D$ didn't transition with $(E,F)$, it would face maximal bail-out cluster $\mathcal{B}(\varepsilon)=\{A,B\}$, altering its incentive compatibility condition and causing migration to $\mathcal{P}_1$ via individual perfection at $t=1$.\footnote{This doesn't always hold for maximal bail-in cluster $\mathcal{B}^{\mathcal{I}}(\varepsilon)$, since the impairment of credit creation maybe too small to change the individual incentive of banks in $\mathcal{B}^{\mathcal{I}}(\varepsilon)$. Therefore the partition perfection is more of our model's setting. Remark \ref{The equivalent operations p^s=1} examines the relationships between perfection operation and switch. We prove it in Appendix \ref{AppendixSection: The equivalent operations}. In addition, partition perfection is well-defined for both $\mathcal{B}(\varepsilon)$ and $\mathcal{B}^{\mathcal{I}}(\varepsilon)$ under assumption \ref{finite-risk mitigation}, since their incentives are opposite (Lemma \ref{Lemma: Crowding-Out Effect Opposite}).} We therefore define \textit{perfection operation} $\mathcal{T}_p$ as partition perfection.

\begin{remark}\label{The equivalent operations p^s=1}
	The effect of $p^s=1$ with Partitions Perfection is equivalent to $p^s=1$ with Individual Perfection on bail-out banks and Partitions Perfection on bail-in banks.
\end{remark}
\subsection{Decomposition Theorems}
\noindent We first generalize previous results regarding exogenous-shock-induced homophily transitions by replacing bank $i$ with a maximal bail-out cluster $\mathcal{B}(\varepsilon)$ and combining it with bank $j \notin \mathcal{B}(\varepsilon)$. The \textit{compression equivalence theorem} states that as the external risk environment deteriorates, the minimal-leverage bank $f$ within $\mathcal{B}(\varepsilon)$ undergoes homophily transition like bank $j$ experienced. With initial cluster ${_n}\mathcal{B}(\varepsilon)$ with ${_{-1}}\mathcal{B}(\varepsilon) = \varnothing$, the remaining banks ${_{n+1}}\mathcal{B}(\varepsilon) = \mathcal{B}(\varepsilon) \setminus {f}$ recursively undergo this process under increasing $\varepsilon$ until ${_{\infty}}\mathcal{B}(\varepsilon) = \varnothing$. Section \ref{Section: Chain} discusses this theorem's role in linking threshold hierarchy and homophily.

\begin{restatable}{thm2}{DecompositionAndCompressionEquivalence}\label{Decomposition and Compression Equivalence}
	Weak Decomposition Theorem and Compression Equivalence Theorem
	
	\textnormal{1. \textit{Weak Decomposition}: }
	
	\textnormal{$\mathcal{B}(\varepsilon)$ and $\mathcal{B}^{\mathcal{I}}(\varepsilon)$ are guaranteed to exist in any Generative Space. Equivalently, any $\mathcal{P}_{s\,|\,t}$ can be decomposed into $\mathcal{B}(\varepsilon)\cup\mathcal{B}^{\mathcal{I}}(\varepsilon)\cup \bigg[\mathcal{P}_{s\,|\,t}\setminus\bigg(\mathcal{B}(\varepsilon)\cup \mathcal{B}^{\mathcal{I}}(\varepsilon)\bigg)\bigg]$. Consequently, $\mathcal{P}_{s\,|\,t}\setminus\bigg(\mathcal{B}(\varepsilon)\cup \mathcal{B}^{\mathcal{I}}(\varepsilon)\bigg)$ can be further decomposed until $\mathcal{B}(\varepsilon)$ and $\mathcal{B}^{\mathcal{I}}(\varepsilon)$ remain unchanged.}
	
	\textnormal{2. \textit{Compression Equivalence}:} 
	
	\textnormal{As for any bank $j\notin \mathcal{B}(\varepsilon)$ $\left(j\notin  \mathcal{B}^{\mathcal{I}}(\varepsilon)\right)$, composite entity $\left\{\mathcal{B}(\varepsilon)\right\}\cup\{j\}$ $\left(\left\{\mathcal{B}^{\mathcal{I}}(\varepsilon)\right\}\cup\{j\}\right)$ appears the same regime transitions as we mentioned in Proposition \ref{Propositions about Counterfactual Monopoly Space} and Proposition \ref{Propositions about Counterfactual Monopoly Space2}. That is, $\mathbf{\Phi}\left(\mathbf{s}^{| _{n+1}\mathcal{B}\left(\varepsilon\right) |\times 1}(\varepsilon)\right)  \leq \tilde{\mathbf{\Phi}}\left(\mathbf{s}^{| _{n+1}\mathcal{B}\left(\varepsilon\right) |\times 1}(\varepsilon)\right)\leq \mathbf{0}$ under $\varepsilon\in\left[ \varepsilon_{\max\limits_{i\in _{n}\mathcal{B}\left(\varepsilon\right)} x^i=0}, \varepsilon_{\max\limits_{f\in _{n+1}\mathcal{B}\left(\varepsilon\right)} x^f=0} \right]$ and ignition condition $\beta\geq\beta_{x^j(\varepsilon=0)=0}=\max \beta_{x^j(\varepsilon)=0}$, where $n\in \left\{ q\,|\, q\geq -1, q\in \mathbb{Z} \right\}$ and $\varepsilon_{\max\limits_{i\in _{-1}\mathcal{B}\left(\varepsilon\right)} x^i}=0$.}
\end{restatable}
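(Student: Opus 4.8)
The plan is to establish the two assertions separately, using the counterfactual monopoly machinery of Lemma~\ref{Corresponding to Counterfactual Monopoly Space} as the backbone. For Part~1 (Weak Decomposition), I would first invoke Definition~\ref{Maximal Bailout Cluster}: given any generative space $\mathcal{P}_{s\,|\,t}$ satisfying the BCP, I construct $\mathcal{B}(\varepsilon)$ constructively as the maximal subset closed under the absorption property ``adding any $j$ with $s^j\le 0$, $\tilde x^j\le 0$ keeps everyone in credit creation.'' Existence follows because the collection of subsets with the pure-bailout property (Definition~\ref{Purification}) is nonempty (it contains $\varnothing$ vacuously, and by Lemma~\ref{Corresponding to Counterfactual Monopoly Space}$(b)$ any single bank with $\varepsilon<\varepsilon_{\tilde s^i=0}$ lies in it) and is closed under the relevant unions by the monotone crowding-out inequalities of Lemma~\ref{Lemma: Crowding-Out Effect}$(a)$; taking the maximal such set gives $\mathcal{B}(\varepsilon)$, and dually Lemma~\ref{Lemma: Crowding-Out Effect}$(b)$ gives $\mathcal{B}^{\mathcal{I}}(\varepsilon)$. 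The set identity $\mathcal{P}_{s\,|\,t}=\mathcal{B}(\varepsilon)\cup\mathcal{B}^{\mathcal{I}}(\varepsilon)\cup[\mathcal{P}_{s\,|\,t}\setminus(\mathcal{B}(\varepsilon)\cup\mathcal{B}^{\mathcal{I}}(\varepsilon))]$ is then a tautology, and I would check $\mathcal{B}(\varepsilon)\cap\mathcal{B}^{\mathcal{I}}(\varepsilon)=\varnothing$ using the sign dichotomy $x^i<0$ versus $x^i\ge 0$. Termination of the recursive decomposition of the residual set is immediate from finiteness of $n$ (each step strictly shrinks $\mathcal{P}_{s\,|\,t}\setminus(\mathcal{B}\cup\mathcal{B}^{\mathcal{I}})$ or leaves it fixed, and in the fixed case we stop).

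For Part~2 (Compression Equivalence), the strategy is to show that the pair $(\mathcal{B}(\varepsilon),\{j\})$ behaves exactly like the pair $(i,j)$ of Propositions~\ref{Propositions about Counterfactual Monopoly Space}--\ref{Propositions about Counterfactual Monopoly Space2}, by treating $\mathcal{B}(\varepsilon)$ as an aggregated ``super-bank.'' First I would identify the minimal-leverage bank $f\in\mathcal{B}(\varepsilon)$; by the threshold hierarchy this is the first member to hit $\varepsilon_{\tilde s^f=0}$, so it plays the role of bank $j$ in Type-1 while the rest of $\mathcal{B}(\varepsilon)$ plays bank $i$. The key monotonicity chain $\mathbf{\Phi}(\mathbf{s}^{|_{n+1}\mathcal{B}(\varepsilon)|\times 1}(\varepsilon))\le\tilde{\mathbf{\Phi}}(\mathbf{s}^{|_{n+1}\mathcal{B}(\varepsilon)|\times 1}(\varepsilon))\le\mathbf{0}$ I would obtain by: (i) the right inequality $\tilde{\mathbf{\Phi}}\le\mathbf{0}$ from Lemma~\ref{Corresponding to Counterfactual Monopoly Space}$(b)$, because on the stated interval every remaining bank still satisfies $\varepsilon\le\varepsilon_{\tilde s^i=0}$ by construction of $_{n+1}\mathcal{B}(\varepsilon)=\mathcal{B}(\varepsilon)\setminus\{f\}$ after $f$'s removal; (ii) the left inequality $\mathbf{\Phi}\le\tilde{\mathbf{\Phi}}$ from Lemma~\ref{Lemma: Crowding-Out Effect Homophily}$(a)$ applied to the cluster $_{n+1}\mathcal{B}(\varepsilon)$, since after excising $f$ the residual cluster is a pure $\mathcal{P}^+$ (all components of $\tilde{\mathbf{\Phi}}$ nonpositive). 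The ignition condition $\beta\ge\beta_{x^j(\varepsilon=0)=0}=\max\beta_{x^j(\varepsilon)=0}$ enters exactly as in Proposition~\ref{Propositions about Counterfactual Monopoly Space2}, guaranteeing $f$ is already doing asset sales (or at worst zero) at the left endpoint of the interval, which anchors the recursion; I would verify the endpoints $\varepsilon_{\max_{i\in{}_n\mathcal{B}(\varepsilon)}x^i=0}$ and $\varepsilon_{\max_{f\in{}_{n+1}\mathcal{B}(\varepsilon)}x^f=0}$ are well-defined via Proposition~\ref{Solutions in FCP}$(b)$--$(c)$ and the convention $\varepsilon_{\max_{i\in{}_{-1}\mathcal{B}(\varepsilon)}x^i}=0$ seeds the base case $n=-1$.

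The recursion itself is then a finite induction on $n$: at each stage the minimal-leverage survivor of $_{n+1}\mathcal{B}(\varepsilon)$ becomes the new ``$f$'' and exits as $\varepsilon$ crosses the next threshold, producing $_{n+2}\mathcal{B}(\varepsilon)$, and because $|\mathcal{B}(\varepsilon)|<n<\infty$ we reach $_\infty\mathcal{B}(\varepsilon)=\varnothing$ in finitely many steps. I expect the main obstacle to be making the ``aggregation'' rigorous --- namely justifying that a whole cluster $\mathcal{B}(\varepsilon)$ can be substituted for a single bank $i$ in the two-bank propositions. This requires showing that the internal equilibrium of $\mathcal{B}(\varepsilon)$ (with its own internal hierarchy $\theta^1\ge\cdots\ge\theta^f$) does not destroy the bijective contagion relation $\partial x^i/\partial x^j=A_i/A_j$ at the cluster level; the cleanest route is to aggregate via the total liquidation $\sum_{i\in\mathcal{B}(\varepsilon)}x^i$ as flagged in the discussion after Proposition~\ref{Propositions about Counterfactual Monopoly Space}, treating $\mathcal{B}(\varepsilon)$'s aggregate $\Phi$-map as a single increasing function of its aggregate argument, and then noting that Lemma~\ref{Lemma: Crowding-Out Effect}$(c)$--$(d)$ controls how the addition of $\{j\}$ shifts each component in the required direction. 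I would also need to confirm that removing $f$ preserves the maximal-bailout property of the residual set (so the induction hypothesis regenerates), which follows from the absorption characterization in Definition~\ref{Maximal Bailout Cluster} together with Lemma~\ref{Lemma: Crowding-Out Effect}$(a)$.
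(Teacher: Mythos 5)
Your overall architecture for Part~2 (peel off the minimal-leverage bank $f$, aggregate the cluster into a single scalar equation in $\sum_{i}x^{i}$, and induct on $n$) is the same as the paper's, and your Part~1 route (a maximality/closure argument rather than the paper's case analysis on $\varepsilon$ relative to $\min_i\frac{\theta^i-\bar\theta}{1-\bar\theta}$ and $\max_i\frac{\theta^i-\bar\theta}{1-\bar\theta}$) would be an acceptable alternative if it worked. However, there are three concrete gaps. First, in Part~1 your existence argument rests on the claim that the collection of pure-bailout subsets is ``closed under the relevant unions by Lemma~\ref{Lemma: Crowding-Out Effect}$(a)$.'' That lemma actually points the other way: adding further credit-creating banks to a credit-creating cluster \emph{raises} each incumbent's $x^i$ (the appreciation term $1-e^{-\beta\sum x}$ becomes more negative, so $x^i$ must increase to keep $s^i$ fixed), so the union of two pure bailout sets need not be pure bailout. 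This is precisely why Definition~\ref{Maximal Bailout Cluster} is phrased as a one-bank-at-a-time absorption test rather than a union. The paper instead shows $\mathcal{B}(\varepsilon)\neq\varnothing$ by contradiction (if $\mathcal{B}(\varepsilon)=\varnothing$ were maximal, adjoining any single $i$ with $s^i<0$ would have to give $x^i\geq0$, but the singleton system gives $x^i=\tilde x^i<0$), and handles the boundary regimes where one of the two clusters is genuinely empty by explicit cases.

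Second, in Part~2 your justification of $\tilde{\mathbf{\Phi}}\leq\mathbf{0}$ on the interval uses the CMS thresholds $\varepsilon_{\tilde s^i=0}$ from Lemma~\ref{Corresponding to Counterfactual Monopoly Space}$(b)$, but the interval endpoints in the statement are $\varepsilon_{\max_{f\in{}_{n+1}\mathcal{B}(\varepsilon)}x^f=0}$, which are defined by the \emph{cluster} solution, not the monopoly solution; these do not coincide, and the whole content of the compression step is that the cluster solutions stay negative up to the cluster threshold and that the minimal-leverage bank is the one whose cluster solution crosses zero first. Supplying this requires the cluster-level comparative statics and hierarchy results the paper develops in Lemma~\ref{Lemma ForDecomposition and Compression Equivalence} (in particular items 2, 3 and 6.i)--6.iv)), together with the verification that $_{n+1}\mathcal{B}(\varepsilon)={_{n}}\mathcal{B}(\varepsilon)\setminus\{f\}$ is again a maximal bailout cluster at the crossing point (item 6.iv)); none of this is derivable from the two-bank propositions alone. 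Third, a sign/labeling slip: the residual cluster with all components of $\tilde{\mathbf{\Phi}}$ nonpositive is $\mathcal{P}^{-}$ in the paper's convention, and the inequality $\mathbf{\Phi}\leq\tilde{\mathbf{\Phi}}$ follows from Lemma~\ref{Lemma: Crowding-Out Effect}$(b)$ (adjoining the positive seller $j$ pushes the bailout cluster further negative), not from Lemma~\ref{Lemma: Crowding-Out Effect Homophily}$(a)$, which compares a bail-in cluster to the CMS.
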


Next, we delineate the distinctions between two decomposition theorems and their respective functions. The \textit{weak decomposition theorem} partitions subspace $\mathcal{P}_{s|t}$ into three components: the maximal bailout cluster $\mathcal{B}(\varepsilon)$, the maximal bail-in cluster $\mathcal{B}^{\mathcal{I}}(\varepsilon)$, and the residual set $\mathcal{P}_{s|t} \setminus \left( \mathcal{B}(\varepsilon) \cup \mathcal{B}^{\mathcal{I}}(\varepsilon) \right)$. In contrast, the \textit{strong decomposition theorem} utilizes properties of $\mathcal{B}(\varepsilon)$ and $\mathcal{B}^{\mathcal{I}}(\varepsilon)$ to further splits the remainder into $\left\{ i\,|\, s^i< 0,\, i\notin \mathcal{B}(\varepsilon) \right\}$ and $\left\{ j\,|\, s^j\geq 0,\, j\notin \mathcal{B}^{\mathcal{I}}(\varepsilon) \right\}$, then randomly selects a subset $\bigg\{ \left(\sigma\right) \diamond \left\{ i\,|\, s^i< 0,\, i\notin \mathcal{B}(\varepsilon) \right\} \bigg\}$ to incorporate with the $\mathcal{B}(\varepsilon)$-containing component.\footnote{The expression $(a) \diamond \{b\}$ represents randomly sampling $a \cdot |\{b\}|$ elements from set $b$ to form a new set and $a$ represents any fractional probability that divides the cardinality of the set. Specifically, we define $(a) \diamond \varnothing=\varnothing$. For notation simplicity, we denote $\{ b \}=\{i\,|\, i\in b\}$.} As emphasized in Remark \ref{Remark: The main functions of Strong Decomposition Theorem}, this theorem operationalizes the proof strategy for the \textit{partition-induced equilibrium transition theorem}. The mutual independence of components permits isolated analysis. The discrepancy between two decomposition theorems is illustrated in Figure \ref{fig: Weak and Strong Decomposition}.

\begin{figure}[H]
	\centering
	\begin{subfigure}[b]{0.45\textwidth}
		\includegraphics[width=\textwidth]{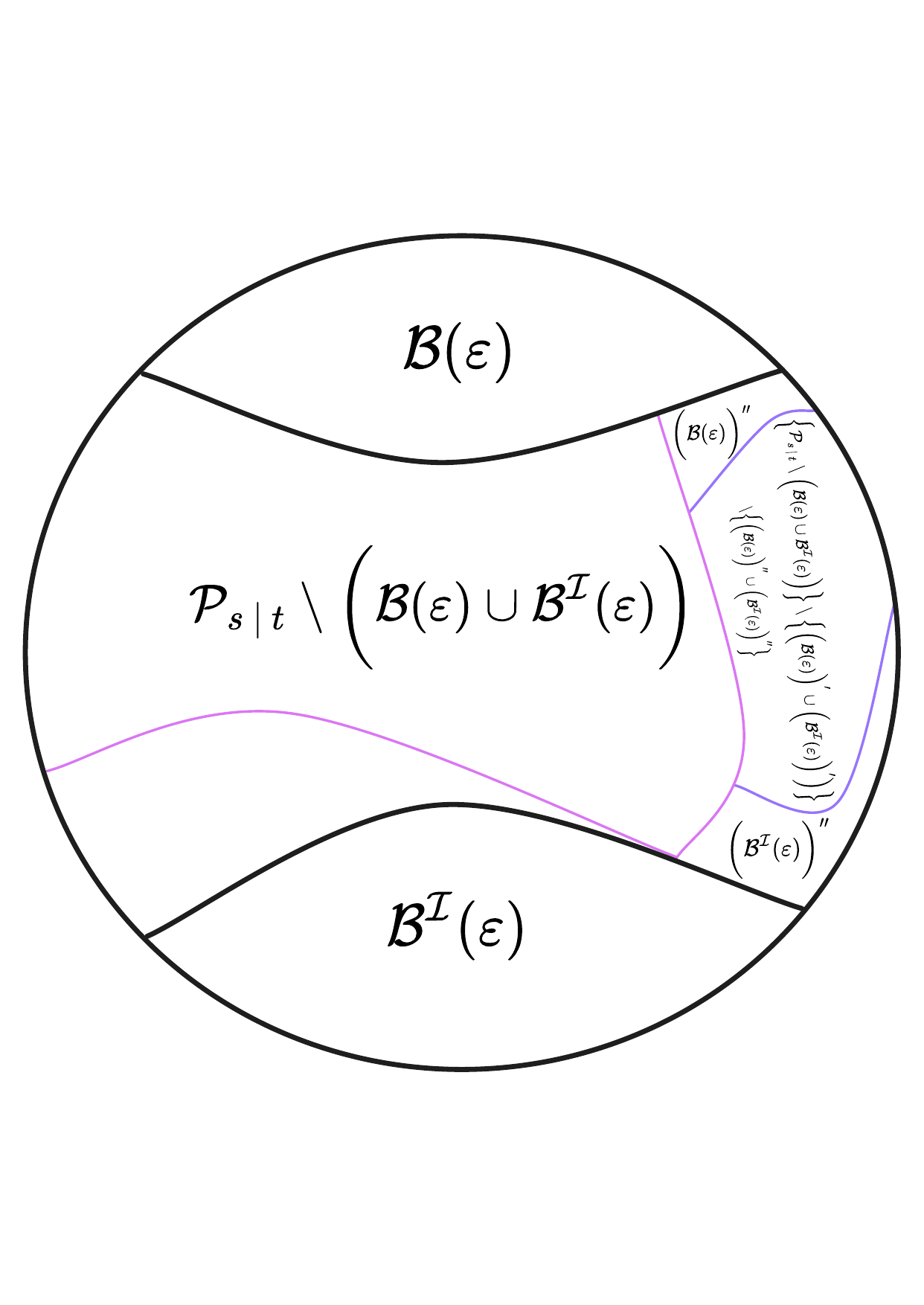}
		\caption{Weak Decomposition}
	\end{subfigure}
	\hspace{0.5cm}
	\begin{subfigure}[b]{0.44\textwidth}
		\includegraphics[width=\textwidth]{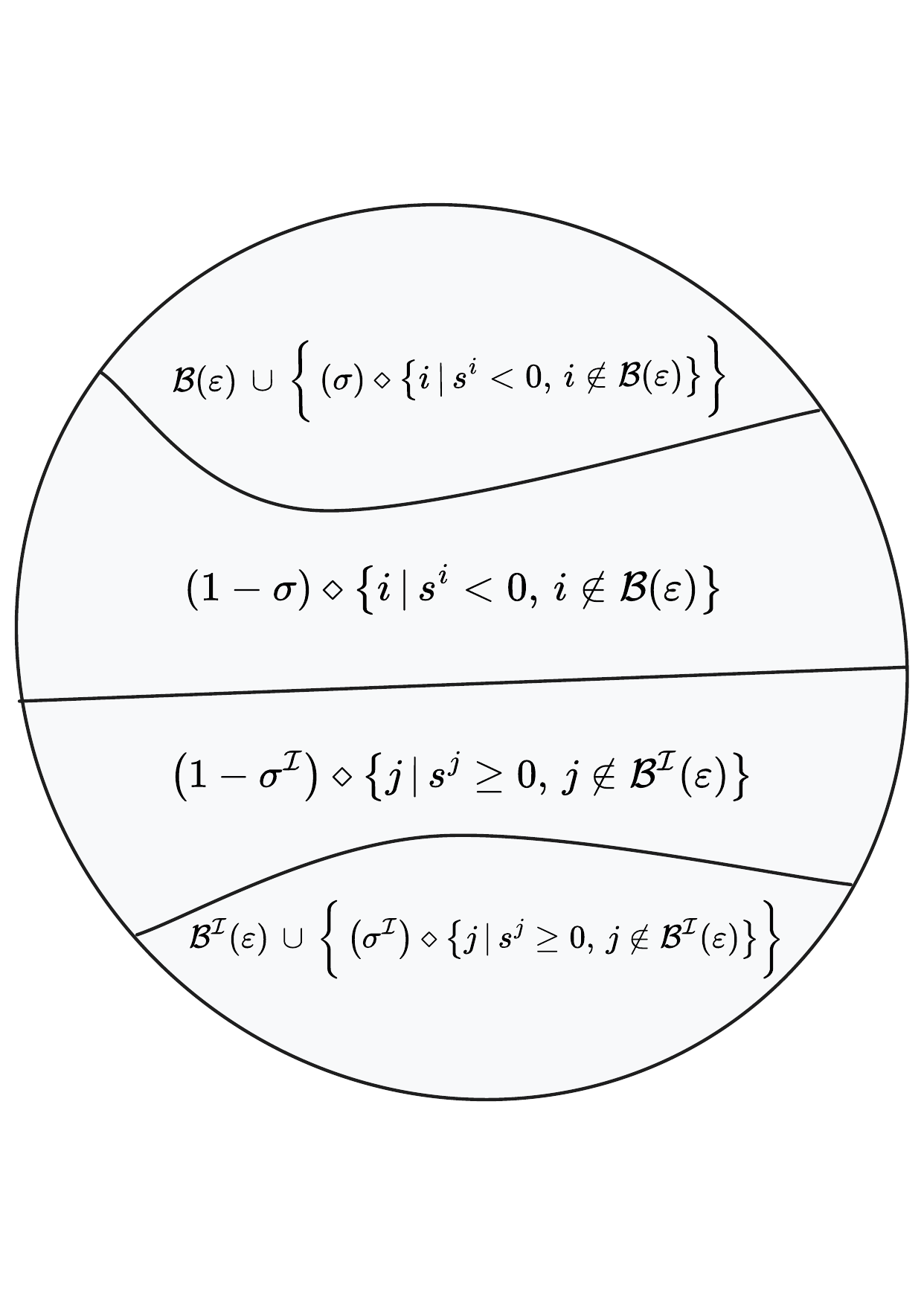}
		\caption{Strong Decomposition}
	\end{subfigure}
	\caption{Discrepancy between Weak and Strong Decomposition}
	\label{fig: Weak and Strong Decomposition}
\end{figure}

\begin{remark}\label{Remark: The main functions of Strong Decomposition Theorem}
	The main functions of Strong Decomposition Theorem
	
	\textnormal{We can randomly divide $\mathcal{P}_{s\,|\,t}$ into multiple different partitions, that is $^1\left(\mathcal{P}_{s\,|\,t}\right)\,\cup\,^2\left(\mathcal{P}_{s\,|\,t}\right)\,\cup\, \cdots\,=\,\mathcal{P}_{s\,|\,t}$. By Strong Decomposition Theorem, we can control inner states of each of these distinct partitions while remaining their independence.} 
	
	\textnormal{Furthermore, we substitute $\mathcal{B}(\varepsilon)$ with $\mathcal{B}(\varepsilon)\,\cup\,\bigg\{ \left(\sigma\right) \diamond \left\{ i\,|\, s^i< 0,\, i\notin \mathcal{B}(\varepsilon) \right\} \bigg\}$, preserving the essential feature $\mathcal{B}(\varepsilon)$. The fractional variable $\left(\sigma\right)$ helps control the number of elements in the first cluster $\mathcal{B}(\varepsilon)\,\cup\,\bigg\{ \left(\sigma\right) \diamond \left\{ i\,|\, s^i< 0,\, i\notin \mathcal{B}(\varepsilon) \right\} \bigg\}$. When $\sigma=0$, the decomposition pattern degenerates back to $\mathcal{B}(\varepsilon)$. However, as we adjust $\left(\sigma\right)$, we can continuously and incrementally (in steps of 1 unit) regulate the number of elements in the second cluster $\left(1-\sigma\right) \diamond \left\{ i\,|\, s^i< 0,\, i\notin \mathcal{B}(\varepsilon) \right\}$. At each level of cluster size, we can further control the second cluster's inner partial order relations. This is the origin of Partition-Induced Equilibrium Transition Theorem.}
\end{remark}

Partition-induced transition characterizes the phenomenon where strong decomposition of $\mathcal{P}_{s|t}$ or $\mathcal{P}_{s|t} \setminus \left( \mathcal{B}(\varepsilon) \cup \mathcal{B}^{\mathcal{I}}(\varepsilon) \right)$ permits behavioral regulation of components $\left(1-\sigma\right) \diamond \left\{ i\,|\, s^i< 0,\, i\notin \mathcal{B}(\varepsilon) \right\}$ and $\left(1-\sigma^{\mathcal{I}}\right) \diamond \left\{ j\,|\, s^j\geq 0,\, j\notin \mathcal{B}^{\mathcal{I}}(\varepsilon) \right\}$. Mirroring Proposition \ref{Propositions about Counterfactual Monopoly Space}'s ignition condition, we identify a corresponding threshold to trigger transitions while preserving the current $\varepsilon$. Figure \ref{Partition-Induced Equilibrium Transition}'s second row exemplifies this: both banks show $\tilde{x}^m < 0, \tilde{x}^n < 0$ in CMS, whereas in the CCS, bank $m$ persists credit creation incentives due to threshold hierarchy, while bank $n$'s solution depends on the ignition condition ($x^n \geq 0$ or $x^n \leq 0$). The third row reverses the second row's process, and requires significantly stronger risk mitigation since exogenous shocks now breach their leverage ratio.
\begin{figure}[H]
	\centering
	\includegraphics[width = 0.5\textwidth]{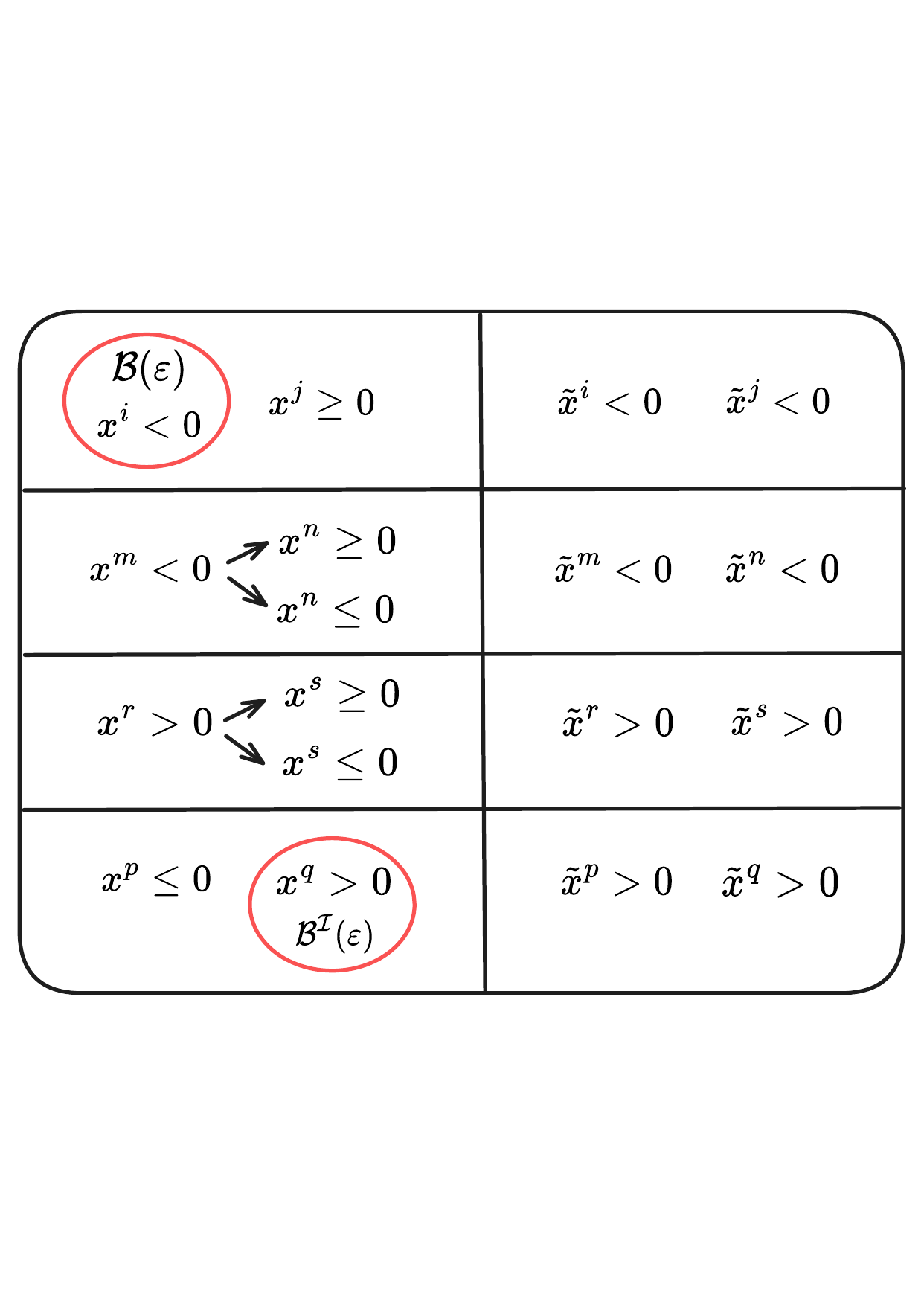}
	\caption{Partition-Induced Equilibrium Transition}
	\label{Partition-Induced Equilibrium Transition}
\end{figure}

The partition-induced equilibrium transition theorem enables elements within components to share homogeneous incentives. Figure \ref{PGS_p=1} demonstrates how the purified space aggregates banks with heterogeneous attributes into a single homophily. This configuration not only accommodates bank heterogeneity within the homophily but also implies fragmentation threshold that would trigger homophily dissolution. When a bank's attributes exceed this threshold, the cluster splits into two incentive-divergent clusters. This transition destabilizes the PGS, thereby necessitating iterative joint operations until the system gets perfect again. Section \ref{Section: Proof Outline of Theorems} details this attribute threshold and its relationship with exogenous shock $\varepsilon$.
\begin{restatable}{thm2}{PartitionInducedEquilibriumTransition}\label{Partition-induced transition}
	Partition-Induced Equilibrium Transition Theorem
	
	\textnormal{1. \textit{Strong Decomposition}:}
	
	 \textnormal{Any Generative Space can be decomposed into $\mathcal{B}(\varepsilon)\,\cup\,\bigg\{ \left(\sigma\right) \diamond \left\{ i\,|\, s^i< 0,\, i\notin \mathcal{B}(\varepsilon) \right\} \bigg\}$, $\left(1-\sigma\right) \diamond \left\{ i\,|\, s^i< 0,\, i\notin \mathcal{B}(\varepsilon) \right\}$, $\left(1-\sigma^{\mathcal{I}}\right) \diamond \left\{ j\,|\, s^j\geq 0,\, j\notin \mathcal{B}^{\mathcal{I}}(\varepsilon) \right\}$ and $\mathcal{B}^{\mathcal{I}}(\varepsilon)\,\cup\,\bigg\{ \left(\sigma^{\mathcal{I}}\right) \diamond \left\{ j\,|\, s^j\geq 0,\, j\notin \mathcal{B}^{\mathcal{I}}(\varepsilon) \right\} \bigg\}$.}
	
	\textnormal{2. \textit{Partition-Induced Transition}: }
	
	\textnormal{We can precisely control $\beta$, individual asset $A$ and individual leverage ratio $\theta$ to achieve exact Partition-Induced Transition while maintaining current $\varepsilon$.}
\end{restatable}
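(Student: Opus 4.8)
The plan is to establish the two clauses separately: the strong decomposition is a structural refinement of the weak decomposition already provided by Theorem~\ref{Decomposition and Compression Equivalence}, whereas the partition-induced transition is obtained by running the Compression Equivalence mechanism ``in reverse'' — freezing $\varepsilon$ and instead letting the exit thresholds of the boundary banks move.

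\textbf{Clause 1 (Strong Decomposition).} Start from the Weak Decomposition part of Theorem~\ref{Decomposition and Compression Equivalence}, $\mathcal{P}_{s\,|\,t}=\mathcal{B}(\varepsilon)\cup\mathcal{B}^{\mathcal{I}}(\varepsilon)\cup R$ with $R:=\mathcal{P}_{s\,|\,t}\setminus(\mathcal{B}(\varepsilon)\cup\mathcal{B}^{\mathcal{I}}(\varepsilon))$. Since $s^i$ is the deterministic function of $(\varepsilon,A_i,E_i,\bar\theta)$ in \eqref{Shock-dependent liquidation}, each bank has a definite sign of $s^i$; and since $\mathcal{B}(\varepsilon)$ contains only banks with $s^i<0$ while $\mathcal{B}^{\mathcal{I}}(\varepsilon)$ contains only banks with $s^j\geq 0$ (Definition~\ref{Maximal Bailout Cluster}), the residual splits as $R=R^-\sqcup R^+$ with $R^-=\{i:s^i<0,\ i\notin\mathcal{B}(\varepsilon)\}$ and $R^+=\{j:s^j\geq 0,\ j\notin\mathcal{B}^{\mathcal{I}}(\varepsilon)\}$. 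Apply the $\diamond$-operation to each piece, $R^-=\big[(\sigma)\diamond R^-\big]\sqcup\big[(1-\sigma)\diamond R^-\big]$ and $R^+=\big[(\sigma^{\mathcal{I}})\diamond R^+\big]\sqcup\big[(1-\sigma^{\mathcal{I}})\diamond R^+\big]$, then adjoin $(\sigma)\diamond R^-$ to $\mathcal{B}(\varepsilon)$ and $(\sigma^{\mathcal{I}})\diamond R^+$ to $\mathcal{B}^{\mathcal{I}}(\varepsilon)$; the four claimed pieces are pairwise disjoint and exhaust $\mathcal{P}_{s\,|\,t}$. That the merged first piece still ``contains $\mathcal{B}(\varepsilon)$ with every member bail-out'' follows by adjoining the members of $(\sigma)\diamond R^-$ successively: each such bank has $s^i<0$, hence $\tilde x^i<0$ by Lemma~\ref{Corresponding to Counterfactual Monopoly Space}$(a)$--$(b)$ (using $\mathbf{s}=\tilde{\mathbf{s}}$), so it is an admissible addition in the sense of Definition~\ref{Maximal Bailout Cluster} and flips to $x^i\geq 0$ in the merge, while the crowding-out monotonicity of Lemma~\ref{Lemma: Crowding-Out Effect}$(b)$ keeps the pre-existing members of $\mathcal{B}(\varepsilon)$ (weakly) more negative, hence bail-out; the mirror argument with Lemma~\ref{Lemma: Crowding-Out Effect}$(a)$ handles $\mathcal{B}^{\mathcal{I}}(\varepsilon)$. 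Mutual independence is immediate, the four pieces being separate Counterfactual Cluster Spaces with no shared fire-sale channel; iterating on $\mathcal{P}_{s\,|\,t}\setminus(\mathcal{B}(\varepsilon)\cup\mathcal{B}^{\mathcal{I}}(\varepsilon))$ exactly as in the weak case gives the statement for any Generative Space.

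\textbf{Clause 2 (Partition-Induced Transition).} The target is that, at a frozen $\varepsilon$, any decomposition state the Compression Equivalence part of Theorem~\ref{Decomposition and Compression Equivalence} would reach by varying $\varepsilon$ can instead be reached by a precise perturbation of $(\beta,A,\theta)$. Rewrite \eqref{Shock-dependent liquidation} as $s^i=\tfrac{A_i}{\bar\theta}\big[\bar\theta-\theta^i+(1-\bar\theta)\varepsilon\big]$, so $s^i$ is strictly increasing in $\varepsilon$, strictly decreasing in $\theta^i$, and multiplicatively scaled by $A_i$; feeding this through $\Phi$ and $\tilde{\mathbf{\Phi}}$ — and using the monotonicities of $\Phi^{-1}\left[\tilde s^i(\varepsilon=0)\right]$ and of $e^{-\beta\,\Phi^{-1}[\cdot]}$ in $A_i,\beta,\theta^i$ from Lemma~\ref{Corresponding to Counterfactual Monopoly Space}$(c)$--$(f)$, extended to general $\varepsilon$ by the framework of this section, together with the in-cluster spillover comparative statics of Lemma~\ref{Lemma: Crowding-Out Effect}$(c)$--$(d)$ — makes every equilibrium component $x^i$, and hence every exit threshold $\varepsilon_{x^f=0}$ (the implicit root of $x^f=0$), continuous and strictly monotone in each of $\beta,A_f,\theta^f$. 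An intermediate-value argument then slides $\varepsilon_{x^f=0}$ for a chosen boundary bank $f$ past the fixed ambient $\varepsilon$ by an explicit perturbation of $\theta^f$ (equivalently $A_f$ or $\beta$), which is exactly what moves $f$ between $\mathcal{B}(\varepsilon)$ and the residual; performing this for the successive minimal-leverage members reproduces, at the original $\varepsilon$, every $|\mathcal{B}(\varepsilon)|$-value along the Compression Equivalence staircase. The fractional weight $(\sigma)$ in $\diamond$ turns this into unit-step control of $\big|\mathcal{B}(\varepsilon)\cup\{(\sigma)\diamond R^-\}\big|$, and at each fixed size the Contagion identity $\frac{s^i-x^i}{s^j-x^j}=\frac{A_i}{A_j}$ of Proposition~\ref{Propositions about Counterfactual Monopoly Space} lets relative $A$-rescalings permute the inner partial order of $(1-\sigma)\diamond R^-$; iterating the chain refinement (Lemma~\ref{Lemma for Chain}) converts partial-order control into control of the number of regular chains. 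Collecting the three monotone dials $A_i\mapsto\varepsilon$, $\theta^i\mapsto\varepsilon$, $\beta\mapsto\varepsilon$ and checking they compose consistently yields the bijective/isomorphic correspondence depicted in Figure~\ref{Maps}.

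\textbf{Main obstacle.} The crux is the coupled comparative statics in Clause~2: perturbing $(\beta,A_f,\theta^f)$ for one boundary bank reshuffles \emph{all} equilibrium $x^i$ through the nonlinear, submodular fire-sale map, so I must show the indirect (spillover) effect never overturns the monotone direction on which the intermediate-value step relies — in particular that pushing $f$'s threshold across $\varepsilon$ does not drag some other bank across its own threshold the wrong way. This is precisely where the opposite-incentive structure of $\mathcal{B}(\varepsilon)$ versus $\mathcal{B}^{\mathcal{I}}(\varepsilon)$ (Lemma~\ref{Lemma: Crowding-Out Effect Opposite}) and parts $(c)$--$(d)$ of Lemma~\ref{Lemma: Crowding-Out Effect} carry the proof, since they pin the sign of a credit creator's and an asset seller's response to a perturbed or adjoined member. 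The secondary difficulty is the discrete--continuous bookkeeping — guaranteeing that $(\sigma)\diamond(\cdot)$ lands on integer cardinalities and that ``at each size level'' every admissible inner chain pattern is genuinely attainable — which I would organize as an induction on cluster size interleaved with the continuous attribute adjustments, tracked along the chain-refinement of Lemma~\ref{Lemma for Chain}.
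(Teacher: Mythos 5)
Your Clause 1 follows the paper's own route almost exactly: bipartition $\mathcal{P}_{s\,|\,t}$ by the sign of $s^i$, apply the weak decomposition to each half (in which one of $\mathcal{B}(\varepsilon)$, $\mathcal{B}^{\mathcal{I}}(\varepsilon)$ is empty), and transfer a $\sigma$-fraction of each residual across; the fact that the merged piece still has $\mathcal{B}(\varepsilon)$ as its maximal bailout cluster is the content of Lemma \ref{Lemma for Chain} $(c)$, which you reproduce in substance. That part is fine.

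Clause 2 contains the real gap, and it is precisely the one you flag as ``the crux'' without closing. The lemmas you lean on do not deliver the coupled comparative statics you need: Lemma \ref{Corresponding to Counterfactual Monopoly Space} $(c)$--$(f)$ concern the \emph{decoupled} Counterfactual Monopoly Space (one bank per parallel system), and Lemma \ref{Lemma: Crowding-Out Effect Opposite} together with Lemma \ref{Lemma: Crowding-Out Effect} $(c)$--$(d)$ sign the response to \emph{adding or removing members}, not to perturbing $(\beta,A_f,\theta^f)$ of an incumbent. The paper's resolution is different and is the essential step: sum the $|\mathcal{P}_{s\,|\,t}|$ clearing equations into the single scalar equation \ref{system P Sum} in $\sum_j x^j$, differentiate implicitly to obtain monotonicity of $e^{-\beta\cdot\sum_i \tilde{\Phi}^i}$ in $\beta$, $A_i$ and $\sum_i A_i$ (Lemma \ref{Lemma ForDecomposition and Compression Equivalence}, part 2), and then recover every individual $x^i$ from the cross-bank dependence identity $x^i=\frac{A_i(\theta^j-\theta^i)}{\bar{\theta}}+\frac{A_i}{A_j}x^j$ (Lemma \ref{Perfection Switch=1} $1.a)$). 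It is this reduction---one aggregate monotone quantity compared bank-by-bank against the thresholds $\frac{\theta^k}{\bar{\theta}}+\left(1-\frac{1}{\bar{\theta}}\right)\varepsilon$, ordered by the leverage hierarchy---that guarantees only the intended boundary bank crosses zero and that the bijections $\mathcal{X}_{A\rightarrow\varepsilon}$, $\mathcal{Y}_{\theta\rightarrow\varepsilon}$ of Lemma \ref{Lemma ForPartitionInduced Equilibrium Transition} are well defined; without it your intermediate-value step has no protection against the spillover you worry about. A secondary misstep: your mechanism for controlling the number of regular chains---permuting the partial order of $(1-\sigma)\diamond R^-$ via the Contagion identity---is the wrong dial. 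The partial order of liquidation volumes is deliberately decoupled from where a chain breaks (that is the point of Figure \ref{Partial order and Hierarchy}); the break points are instead controlled by tuning the assets of the leading bank of each sub-chain against the threshold equations, with a $+\delta$ margin, as in Lemma \ref{Lemma ForPartitionInduced Equilibrium Transition} and Lemma \ref{Lemma ForPartitionInduced Equilibrium Transition Snd}.
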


\subsection{Properties of Perfection Equilibrium}\label{Section: Properties of Perfection Equilibrium}
\noindent We explore the equilibrium of PGS under the following assumptions: $(i)$ shocks within the range where $\frac{\mathbb{E}\left[\theta^i\right]-\bar{\theta}}{1-\bar{\theta}}\leq \varepsilon \leq \max\limits_i\left\{\frac{\theta^i-\bar{\theta}}{1-\bar{\theta}}\right\}$; $(ii)$ the finite-risk mitigation condition \ref{finite-risk mitigation} for any $\mathcal{P}_{s\,|\,t}$ where $L^j=\frac{1}{\bar{\theta}}\cdot \left[ \bar{\theta}-\theta^j +\left(1-\bar{\theta}\right)\cdot \varepsilon \right]$. The equality of \ref{finite-risk mitigation} holds only when $\mathcal{P}_{s\,|\,t}$ gets prefect. We will interpret these assumptions in Section \ref{Section: Interpretation of Assumptions}.
\begin{align}\label{finite-risk mitigation}
	\frac{\sum\limits_{j\in\mathcal{P}_{s\,|\,t}} A_j\cdot L^j-\sum\limits_{j\in\mathcal{P}_{s\,|\,t}} x^j}{A^{\mathcal{P}_{s\,|\,t}}}\leq \sum\limits_{j\in\mathcal{P}_{s\,|\,t}}L^j
\end{align}

The first item of Lemma \ref{Lemma ForDecomposition and Compression Equivalence} demonstrates the existence and uniqueness of the equilibrium for arbitrary subspace $\mathcal{P}_{s|t}$. We present the key equilibrium properties in Appendix Lemma \ref{Perfection Switch=1}, dividing them into two aspects: fundamental characteristics of the subspace $\mathcal{P}_{s|t}$ and features of the equilibrium state. This section examines the economic implications of these properties.

We discuss the first part of Lemma \ref{Perfection Switch=1}. Results $a)$ and $b)$ generalizes the basic results of Proposition \ref{Propositions about Counterfactual Monopoly Space}. Result $c)$ is is an extension of $a)$, since we can represent the sum by utilizing a single bank's equilibrium solution. Result $d)$ demonstrates that no bank exhibits incentive to return to the earlier generative space in the new decision-making phase.\footnote{We'll link this property with knightian uncertainty in Section \ref{Section: Schelling Point and Knightian Point}} Result $e)$ constitutes a direct corollary of the partition-induced equilibrium transition theorem, indicating that the population size of banks in the primitive space can be adjusted through market liquidity parameter $\beta$ and $A_i$ modifications. Result $f)$ characterizes the joint operation's functionality and verifies that partitions conforming to this architectural scheme achieve stabilization within one operational step.

Next, we analyze Lemma \ref{Perfection Switch=1}'s second part. Results $a)$, $b)$ and $e)$ describe the properties of the first generative space and the last generated space: The primal subspace perpetually comprises the maximal bailout cluster $\mathcal{B}_0(\varepsilon)$ of the primitive generative space, with stable state $\mathcal{P}_0=\mathcal{B}_0(\varepsilon)$; last generated space invariably incorporates maximal bail-in cluster $\mathcal{B}^{\mathcal{I}}_0(\varepsilon)$ of the primitive generative space, with stable state $\mathcal{P}_{\max{s}}=\mathcal{B}^{\mathcal{I}}_0(\varepsilon)$. These results further imply that all generative subspaces achieve perfection, as each can be decomposed into three components mirroring the primitive generative space pattern via weak decomposition theorem's tripartite.

Result $c)$ delineates the feasible emergence times for full $\mathcal{B}^{\mathcal{I}}_0(\varepsilon)$ and specifies its latest possible appearance. This bounded characteristic is further generalized by result $d)$, which establishes upper and lower bounds for the PGS convergence time bounds. Furthermore, we have provided rigorous demonstration of the necessary and sufficient conditions for the equality in the proof. Results $f)$, $g)$ and $h)$ imply that obtaining a stable PGS essentially requires the application of space purification operations to all constituent subspaces.
\subsection{Homophily}\label{Section: Homophily}
\noindent We first consider a special case where the primitive generative space comprises three clusters satisfying Proposition \ref{Homogeneous and Homophily} conditions. The stable PGS manifests a stratified architecture through subspaces corresponding to these clusters. The result $d(\mathcal{P}^{\max t}) = 3$ indicates that homogeneity-hierarchy synthesis in Proposition \ref{Homogeneous and Homophily} reduces time complexity from $\mathcal{O}(n^2)$ (Lemma \ref{Perfection Switch=1}) to a constant value of 3. Furthermore, we can relax intra-cluster homogeneity conditions while maintaining the PGS's stable structure, since the Theorem \ref{Partition-induced transition} allows perturbations in the internal configurations (individual asset and leverage ratio) of clusters $\mathcal{C}_\mathbf{1}$, $\mathcal{C}_\mathbf{2}$, and $\mathcal{C}_\mathbf{3}$ while preserving the hierarchy and homophily: $d(\mathcal{P}^{\max t}) = 3$ with $\mathcal{P}_{1|\max t} = \mathcal{C}_\mathbf{1}$, $\mathcal{P}_{2|\max t} = \mathcal{C}_\mathbf{2}$, and $\mathcal{P}_{3|\max t} = \mathcal{C}_\mathbf{3}$.

When cluster $\mathcal{C}_\mathbf{1}$ conducts equity injections into $\mathcal{C}_\mathbf{2}$, and $\mathcal{C}_\mathbf{2}$ simultaneously provides equity to $\mathcal{C}_\mathbf{3}$ while sustaining internal cross-financing, Proposition \ref{Homogeneous and Homophily} becomes an optimal transport/matching problem, as explored in studies like \citet{boerma2023composite} and \citet{moldovanu2007contests}. We provide pertinent accounting standards in Online Appendix Section OA4. Additionally, it should be clarified that a stable PGS constitutes a transactional flow architecture, where short-term liquidity forms \textit{hierarchical structure} \citep{wang2025biggest} characterized by large banks capitalizing medium-sized institutions, which subsequently facilitate smaller bank funding via market-based channels such as wholesale lending \citep{chen2025trade}. Such a large-to-medium-to-small bank funding structure diverges from the canonical periphery-intermediary-periphery model. In Section \ref{Section: Core-Periphery Network}, we will interpret the prevalent \textit{core-periphery} financial structure \citep{jackson2021systemic} from a stock perspective. 
\begin{restatable}{thm}{HomogeneousAndHomophily}\label{Homogeneous and Homophily}
	Homogeneity, Hierarchy and Homophily
	
	\textnormal{There exist three distinct clusters, denoted as Cluster $\mathcal{C}_\mathbf{1}$, Cluster $\mathcal{C}_\mathbf{2}$, and Cluster $\mathcal{C}_\mathbf{3}$. Within each cluster, banks are \textit{homogeneous}, exhibiting identical leverage ratios and asset conditions. We denote the leverage ratio and assets of Cluster $\mathcal{C}_\mathbf{1}$ as $^{\mathcal{C}_\mathbf{1}}\theta$ and $^{\mathcal{C}_\mathbf{1}}A$ respectively, and apply analogous notation for the remaining clusters. These clusters satisfy $($\textit{hierarchy}$)$ under $p^s=1$:}
	
	1. $^{\mathcal{C}_\mathbf{1}}\theta\,\geq\, ^{\mathcal{C}_\mathbf{2}}\theta\,\gg\, ^{\mathcal{C}_\mathbf{3}}\theta$ and $^{\mathcal{C}_\mathbf{1}}A \,\geq       \,^{\mathcal{C}_\mathbf{2}}A$.
	
	2. $^{\mathcal{C}_\mathbf{1}}s(\varepsilon)<0$,  $^{\mathcal{C}_\mathbf{2}}s(\varepsilon)<0$ and $^{\mathcal{C}_\mathbf{3}}s(\varepsilon)>0$.
	
	3. $^{\mathcal{C}_\mathbf{1}}x(\varepsilon)<0$, $^{\mathcal{C}_\mathbf{2}}x(\varepsilon)<0$ and $^{\mathcal{C}_\mathbf{3}}x(\varepsilon)>0$ when three clusters coexist.
	
	4. $^{\mathcal{C}_\mathbf{1}}x(\varepsilon)<0$ and $^{\mathcal{C}_\mathbf{2}}x(\varepsilon)>0$ when  Cluster $\mathcal{C}_\mathbf{1}$ and Cluster $\mathcal{C}_\mathbf{2}$ coexist.
	
	\noindent \textnormal{Then it exhibits \textit{homophily}: $d\left(\mathcal{P}^{\max t}\right)=3$ with $\mathcal{P}_{1\,|\,\max t}=\mathcal{C}_\mathbf{1}$, $\mathcal{P}_{2\,|\,\max t}=\mathcal{C}_\mathbf{2}$ and $\mathcal{P}_{3\,|\,\max t}=\mathcal{C}_\mathbf{3}$.}
\end{restatable}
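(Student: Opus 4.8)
The plan is to run the perfection dynamics from the primitive generative space $\mathcal{P}_{0\,|\,0}=\mathcal{C}_\mathbf{1}\cup\mathcal{C}_\mathbf{2}\cup\mathcal{C}_\mathbf{3}$ and show that, with the switch fixed at $p^s=1$ and the standing assumptions $(i)$--$(ii)$ of Section \ref{Section: Properties of Perfection Equilibrium} in force, it stabilises after at most three joint operations $p^s\oplus\mathcal{T}_p$ with the three clusters as its subspaces. As a preliminary, I would invoke the first item of Lemma \ref{Lemma ForDecomposition and Compression Equivalence} to guarantee a unique clearing equilibrium for each subspace that arises below — $\mathcal{P}_{0\,|\,0}$, its sub-system $\mathcal{C}_\mathbf{1}\cup\mathcal{C}_\mathbf{2}$, and each single-cluster system — and I would record the elementary sign fact, which follows from the monotone fixed-point structure of \ref{Equilibrium solution under BCP}, that a homogeneous cluster standing alone has its equilibrium $x$ of the same sign as its (cluster-independent) liquidation demand $s$; combined with hypothesis 2 this makes $\mathcal{C}_\mathbf{1}$ and $\mathcal{C}_\mathbf{2}$ pure bail-out systems and $\mathcal{C}_\mathbf{3}$ a pure bail-in system in the sense of Definition \ref{Purification}.

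Round one. In $\mathcal{P}_{0\,|\,0}$, hypotheses 2 and 3 identify $\{i\mid s^i<0,\ x^i<0\}=\mathcal{C}_\mathbf{1}\cup\mathcal{C}_\mathbf{2}$ and $\{i\mid s^i\ge 0,\ x^i\ge 0\}=\mathcal{C}_\mathbf{3}$; since $\mathcal{C}_\mathbf{3}$ has $s>0$ it fails the admissibility test in Definition \ref{Maximal Bailout Cluster}, so $\mathcal{B}_0(\varepsilon)=\mathcal{C}_\mathbf{1}\cup\mathcal{C}_\mathbf{2}$, $\mathcal{B}^{\mathcal{I}}_0(\varepsilon)=\mathcal{C}_\mathbf{3}$, and the residual in the weak decomposition of Theorem \ref{Decomposition and Compression Equivalence} is empty. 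Applying $p^s\oplus\mathcal{T}_p$: by Remark \ref{The equivalent operations p^s=1} and Definition \ref{Switch}, the bail-out banks $\mathcal{C}_\mathbf{1}\cup\mathcal{C}_\mathbf{2}$ (all with $x<0$) are switch-bound and retained, while $\mathcal{C}_\mathbf{3}=\mathcal{B}^{\mathcal{I}}_0(\varepsilon)$ is subjected to partition perfection and, by Lemma \ref{Lemma: Crowding-Out Effect Opposite}$(a)$ together with the finite-risk mitigation condition \ref{finite-risk mitigation}, is in a bad state (Definition \ref{Perfection}) and exits as a unit into a new subspace. Hence $\mathcal{P}^{t=1}=\bigl[\,\mathcal{C}_\mathbf{1}\cup\mathcal{C}_\mathbf{2},\ \mathcal{C}_\mathbf{3}\,\bigr]$.

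Round two and termination. The subspace $\mathcal{C}_\mathbf{3}$ is pure bail-in, and by Lemma \ref{Lemma: Crowding-Out Effect Homophily}$(a)$ all its banks are in a good state, so it is already perfect. For $\mathcal{C}_\mathbf{1}\cup\mathcal{C}_\mathbf{2}$, hypotheses 2 and 4 give $\mathcal{B}(\varepsilon)=\mathcal{C}_\mathbf{1}$ — the maximality clause of Definition \ref{Maximal Bailout Cluster} holds because $\mathcal{C}_\mathbf{2}$ is admissible ($s<0$, $\tilde{x}<0$ in the CMS) yet adjoining it forces $x>0$ on $\mathcal{C}_\mathbf{2}$ while $\mathcal{C}_\mathbf{1}$ keeps $x<0$ — and, as no bank has $s\ge 0$, $\mathcal{B}^{\mathcal{I}}(\varepsilon)=\varnothing$ with residual $\mathcal{C}_\mathbf{2}$. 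Under $p^s\oplus\mathcal{T}_p$: $\mathcal{C}_\mathbf{1}$ is switch-bound and retained; $\mathcal{C}_\mathbf{2}$, which has $x>0$ here but $\tilde{x}<0$ in the CMS, is in a bad state and leaves, and it cannot be absorbed into the settled pure-bail-in subspace $\mathcal{C}_\mathbf{3}$ — in $\mathcal{C}_\mathbf{2}\cup\mathcal{C}_\mathbf{3}$ the leverage hierarchy of hypothesis 1 would make $\mathcal{C}_\mathbf{2}$ the bail-out part and send $\mathcal{C}_\mathbf{3}$ back out, so that configuration is not perfect — hence $\mathcal{C}_\mathbf{2}$ generates a fresh subspace. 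At $t=2$ the subspaces are $\mathcal{C}_\mathbf{1},\mathcal{C}_\mathbf{2},\mathcal{C}_\mathbf{3}$, each homogeneous, with $\mathcal{C}_\mathbf{1},\mathcal{C}_\mathbf{2}$ pure bail-out and $\mathcal{C}_\mathbf{3}$ pure bail-in; although the banks of $\mathcal{C}_\mathbf{1},\mathcal{C}_\mathbf{2}$ are not incentive-compatibly clustered, $p^s=1$ forbids their credit-creating members from transferring, so no subspace splits. Thus $\mathcal{P}^{t=3}=\mathcal{P}^{t=2}$, the dynamics terminates with $d(\mathcal{P}^{\max t})=3$, and indexing the three stable subspaces by the hierarchy of hypothesis 1 gives $\mathcal{P}_{1\,|\,\max t}=\mathcal{C}_\mathbf{1}$, $\mathcal{P}_{2\,|\,\max t}=\mathcal{C}_\mathbf{2}$, $\mathcal{P}_{3\,|\,\max t}=\mathcal{C}_\mathbf{3}$; the bound of at most three operations reproduces the constant-time claim.

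I expect the main obstacle to be twofold. First, one must check that every weak decomposition met along the way has an \emph{empty} residual, so that whole clusters — never fragments — are peeled off: this is exactly where intra-cluster homogeneity is used in tandem with the sign patterns 2--4, and it forces one to verify the maximality clauses of Definition \ref{Maximal Bailout Cluster}, not merely the sign conditions, as in the $\mathcal{B}(\varepsilon)=\mathcal{C}_\mathbf{1}$ step above. Second, one must rule out that the bail-in cluster split off in round two, namely $\mathcal{C}_\mathbf{2}$, migrates into the already-stabilised $\mathcal{C}_\mathbf{3}$-subspace rather than forming its own — the sketch above (that $\mathcal{C}_\mathbf{2}\cup\mathcal{C}_\mathbf{3}$ cannot be perfect because hypothesis 1 expels $\mathcal{C}_\mathbf{3}$) must be made rigorous using the mutual independence of subspaces and the no-turning-back property of Section \ref{Section: Properties of Perfection Equilibrium}. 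Once these two points are nailed down, the three-round count and the identification of the limit subspaces follow mechanically.
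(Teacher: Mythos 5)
Your proposal is correct and follows essentially the same route as the paper: the paper's entire ``proof'' of Proposition \ref{Homogeneous and Homophily} is the single line ``As Figure \ref{Homgeneity and hierarchy} illustrates,'' and that figure depicts exactly the three-round perfection dynamics you write out (expel $\mathcal{C}_\mathbf{3}$, then expel $\mathcal{C}_\mathbf{2}$, then verify stability of the three singleton-cluster subspaces). Your version is in fact more explicit than the paper's, since you supply the lemma citations (Lemma \ref{Lemma: Crowding-Out Effect Opposite}, Lemma \ref{Lemma For Decomposition Chains in any Generating Space}, Remark \ref{The equivalent operations p^s=1}) that justify each ``stay-or-leave'' decision.

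One small inaccuracy worth fixing: in round one you set $\mathcal{B}_0(\varepsilon)=\mathcal{C}_\mathbf{1}\cup\mathcal{C}_\mathbf{2}$ and declare the weak-decomposition residual empty, but Definition \ref{Maximal Bailout Cluster} requires $\mathcal{B}(\varepsilon)$ to have all $x<0$ \emph{as a standalone clearing system}, and hypothesis 4 says precisely that the standalone system $\mathcal{C}_\mathbf{1}\cup\mathcal{C}_\mathbf{2}$ yields $^{\mathcal{C}_\mathbf{2}}x>0$; so $\mathcal{B}_0(\varepsilon)=\mathcal{C}_\mathbf{1}$ and the residual of $\mathcal{P}_{0\,|\,0}$ is $\mathcal{C}_\mathbf{2}$, not $\varnothing$. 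This does not damage the argument — under $p^s=1$ every bank with $x<0$ in the \emph{current} system is switch-bound regardless of whether it belongs to $\mathcal{B}_0(\varepsilon)$, so $\mathcal{C}_\mathbf{1}\cup\mathcal{C}_\mathbf{2}$ is retained and only $\mathcal{C}_\mathbf{3}$ exits exactly as you conclude — but the ``empty residual at every step'' framing should be replaced by ``the exiting bad-state set is a union of whole clusters by homogeneity,'' which is the property you actually need and actually verify.
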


\subsection{Interpretation of Assumptions}\label{Section: Interpretation of Assumptions}
\noindent The condition $\frac{\mathbb{E}\left[\theta^i\right]-\bar{\theta}}{1-\bar{\theta}}\leq \varepsilon \leq \max\limits_i\left\{\frac{\theta^i-\bar{\theta}}{1-\bar{\theta}}\right\}$ implies that at least one bank but not all banks failed.

We reformulate finite-risk mitigation condition \ref{finite-risk mitigation} as \ref{finite-risk mitigation2} refer to Equation \ref{system P Sum}. The left side is the devaluation fraction. 
\begin{align}\label{finite-risk mitigation2}
	\underbrace{ 1-e^{-\beta \cdot \sum\limits_{j\in\mathcal{P}_{s\,|\,t}} x^j} }_{\text{Supply}}\leq \underbrace{ \sum\limits_{j\in\mathcal{P}_{s\,|\,t}}L^j }_{\text{Demand}}
\end{align}
\noindent We can reformulate $L^j$ as equation \ref{Meaning of Lj} which implies the risk mitigation requirement of bank $j$. 
\begin{align}\label{Meaning of Lj}
	\frac{\partial s^j}{\partial A_j}=\frac{\partial \left(A_j\cdot L^j \right)}{\partial A_j} =  L^j
\end{align}

\noindent Therefore, finite-risk mitigation condition \ref{finite-risk mitigation} implies that the current generative space cannot meet the requirements of all banks simultaneously.
\begin{restatable}{thm3}{LemmaForChangingTheSameDirec}\label{Lemma for changing in the same direction}
	A supplementary result to Lemma \ref{Perfection Switch=1} $1.c)$:
	
	$\sum\limits_{j\in \mathcal{P}_{s\,|\,t}}x^j_{s\,|\,t}$ and $\sum\limits_{j\in \mathcal{P}_{s\,|\,t}}A_j\cdot L^j$  exhibit co-directional movement.
\end{restatable}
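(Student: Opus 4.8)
The plan is to read the claim off the aggregate clearing identity for the cluster $\mathcal{P}_{s\,|\,t}$. Summing the equilibrium system \ref{Equilibrium solution under BCP} over $j\in\mathcal{P}_{s\,|\,t}$ yields Equation \ref{system P Sum}, which — as already used to pass from \ref{finite-risk mitigation} to \ref{finite-risk mitigation2} — takes the form $\sum_{j\in\mathcal{P}_{s\,|\,t}} s^j = \sum_{j\in\mathcal{P}_{s\,|\,t}} x^j + A^{\mathcal{P}_{s\,|\,t}}\bigl(1-e^{-\beta\sum_{j\in\mathcal{P}_{s\,|\,t}} x^j}\bigr)$. Combining this with the accounting identity $s^j = A_j L^j$ (Equation \ref{Shock-dependent liquidation}, restated through \ref{Meaning of Lj}) and writing $X:=\sum_{j\in\mathcal{P}_{s\,|\,t}} x^j$ and $S:=\sum_{j\in\mathcal{P}_{s\,|\,t}} A_j L^j$, the identity becomes $S=g(X)$ with $g(X):=X+A^{\mathcal{P}_{s\,|\,t}}\bigl(1-e^{-\beta X}\bigr)$. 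Hence it suffices to show $g$ is strictly increasing.

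First I would note $g'(X)=1+\beta A^{\mathcal{P}_{s\,|\,t}}e^{-\beta X}\ge 1>0$, since $A^{\mathcal{P}_{s\,|\,t}}=\sum_{j\in\mathcal{P}_{s\,|\,t}}A_j>0$ and $\beta\ge 0$; moreover $g(X)\to\pm\infty$ as $X\to\pm\infty$, so $g$ is a strictly increasing continuous bijection of $\mathbb{R}$. Next I would invoke the existence-and-uniqueness statement of Lemma \ref{Lemma ForDecomposition and Compression Equivalence} (its first item) to guarantee that, at every admissible $(\varepsilon,\{A_j\},\{\theta^j\})$, the clearing equilibrium — hence $X$, hence $S=g(X)$ — is a single well-defined number, so that ``movement'' of these quantities under a comparative static is meaningful. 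Fixing the cluster $\mathcal{P}_{s\,|\,t}$ and $\beta$ (so that $g$ is fixed) and varying $\varepsilon$ — the comparative static relevant to Lemma \ref{Perfection Switch=1} $1.c)$ — strict monotonicity of $g$ forces $\operatorname{sign}(\Delta S)=\operatorname{sign}(\Delta X)$; equivalently, $X=g^{-1}(S)$ is strictly increasing in $S$, which is exactly co-directional movement. As a sanity check, since each $L^j=\tfrac{1}{\bar{\theta}}\bigl[\bar{\theta}-\theta^j+(1-\bar{\theta})\varepsilon\bigr]$ is affine and increasing in $\varepsilon$, differentiating $S=g(X)$ in $\varepsilon$ gives $\tfrac{1-\bar{\theta}}{\bar{\theta}}A^{\mathcal{P}_{s\,|\,t}}=g'(X)\,\partial_\varepsilon X$, so $\partial_\varepsilon X>0$ as well.

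The substance of the argument is elementary; the only care needed is bookkeeping. One must confirm that within the CCS the aggregate devaluation term in \ref{system P Sum} collapses exactly to $A^{\mathcal{P}_{s\,|\,t}}\bigl(1-e^{-\beta X}\bigr)$ — precisely the identification underlying the equivalence of \ref{finite-risk mitigation} and \ref{finite-risk mitigation2} — and one must ensure the comparative static under consideration holds $A^{\mathcal{P}_{s\,|\,t}}$ and $\beta$ fixed; were $A^{\mathcal{P}_{s\,|\,t}}$ allowed to vary (say by changing cluster membership), $g$ itself would shift and co-directionality could fail. Under the standing assumptions on $\mathcal{P}_{s\,|\,t}$ this is automatic, which is why the result is stated as a supplement to Lemma \ref{Perfection Switch=1} $1.c)$ rather than in full generality. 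I expect the main (and only) obstacle to be making the form of Equation \ref{system P Sum} — and the class of perturbations over which it is invoked — fully explicit, after which the monotonicity computation closes the proof immediately.
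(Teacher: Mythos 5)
Your monotonicity computation is fine as far as it goes, but it proves a strictly weaker statement than the one the paper needs, and you have in fact put your finger on the gap yourself and then waved it away. You restrict the comparative static to perturbations that hold the cluster, $\beta$, and hence $A^{\mathcal{P}_{s\,|\,t}}=\sum_j A_j$ fixed (varying only $\varepsilon$), so that $g(X)=X+A^{\mathcal{P}_{s\,|\,t}}\bigl(1-e^{-\beta X}\bigr)$ is a fixed strictly increasing bijection and co-directionality is immediate. But the lemma is invoked downstream (in the proof of Lemma \ref{Lemma For Decomposition Chains in any Generating Space}) precisely for perturbations of $\sum_j A_j L^j$ that arise from \emph{changing cluster membership} — ``we further introduce ${^\chi}\mathcal{P}_{s\,|\,t}$ into the system, which will lead to an increase in $\sum_k A_k\cdot L^k$'' — and such perturbations move $A^{\mathcal{P}_{s\,|\,t}}$ along with $S=\sum_j A_j L^j$, which is exactly the case you excluded. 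Your own sanity check that the result would then require more care is correct; the paper does not get it ``automatically.''

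The paper's proof accordingly differentiates \ref{system P Sum} with respect to $\sum_j A_j L^j$ \emph{allowing} $A^{\mathcal{P}_{s\,|\,t}}$ to vary, which produces the extra term $\bigl(1-e^{-\beta X}\bigr)\cdot\partial A^{\mathcal{P}_{s\,|\,t}}/\partial S$ in \ref{Decom_Exp_AcdotL}, and the resulting numerator $1-\bigl(1-e^{-\beta X}\bigr)/\sum_j L^j$ in \ref{Decom_Exp_AcdotL Snd} is signed only by appealing to the finite-risk mitigation condition \ref{finite-risk mitigation2}. This is why the surrounding text says explicitly that ``the assumption of finite-risk mitigation will strengthen the result of Lemma \ref{Perfection Switch=1} $1.c)$, causing their changes to align'': the assumption is the substantive input to the lemma, and your argument never uses it — a reliable sign that you are proving a different (and easier) statement. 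To repair the proposal you would need to carry the $\partial A^{\mathcal{P}_{s\,|\,t}}/\partial S$ term through the implicit differentiation and show the resulting expression is nonnegative under \ref{finite-risk mitigation2}, i.e., reproduce the paper's computation rather than the fixed-$g$ shortcut.
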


The assumption of finite-risk mitigation will strengthen the result of Lemma \ref{Perfection Switch=1} $1.c)$, causing their changes to align. Lemma \ref{Lemma for changing in the same direction} demonstrates:
\begin{equation}\label{inequality: co-movement}
	\frac{\partial\sum\limits_{i\in\mathcal{P}_{s\,|\,t}} x^i\left(\mathbf{s}^{|\mathcal{P}_{s\,|\,t}|\times 1}\right)}{\partial \left(\sum\limits_{j\in \mathcal{P}_{s\,|\,t}}A_j\cdot L^j \right)}\geq0
\end{equation}
which implies Lemma \ref{Lemma For Decomposition Chains in any Generating Space}. This lemma shows that in any $\mathcal{P}_{s\,|\,t}$, the maximal bailout cluster $\mathcal{B}(\varepsilon)$ persists under joint operations while the maximal bail-in cluster $\mathcal{B}^{\mathcal{I}}(\varepsilon)$ departs. The finite cardinality of banks $|\mathcal{I}_{\text{Bank}}| < \infty$ ensures PGS to attain a stable state.

\begin{restatable}{thm3}{LemmaForDecompositionChainsPsEqualsOne}\label{Lemma For Decomposition Chains in any Generating Space}
	As for $p^s=1$, we have the following useful results:
	
	1. $A_i\cdot L^i<0,\,\forall i\in \mathcal{B}^{\bigstar\left({_\chi}\mathcal{P}_{s\,|\,t},\varepsilon\right)}$ and $A_j\cdot L^j\geq0,\,\forall j\in \mathcal{B}^{\bigstar\left({^\chi}\mathcal{P}_{s\,|\,t},\varepsilon\right)}$.
	
	2. Bank $j\in \mathcal{B}^{\bigstar\left({^\chi}\mathcal{P}_{s\,|\,t},\varepsilon\right)}$ will exit while bank $i\in \mathcal{B}^{\bigstar\left({_\chi}\mathcal{P}_{s\,|\,t},\varepsilon\right)}$ will stay after the execution of joint operation to $\mathcal{P}_{s\,|\,t}$ where ${^\chi}\mathcal{P}_{s\,|\,t}=\left\{ j\,|\, s^j\geq 0,\, j\in \mathcal{P}_{s\,|\,t} \right\}$ and ${_\chi}\mathcal{P}_{s\,|\,t}=\left\{ i\,|\, s^i< 0,\, i\in \mathcal{P}_{s\,|\,t} \right\}$.
\end{restatable}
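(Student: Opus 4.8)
Here is the plan. I would prove the two items in turn; Item~1 is, at bottom, an algebraic identity dressed up in cluster notation, and Item~2 is the genuine dynamic statement, which follows by combining the weak decomposition, the $p^s=1$ reading of the switch, and the crowding-out lemmas. For Item~1 the key is the identity $A_j\cdot L^j=s^j$, valid for every bank $j$: writing $\theta^j=E_j/A_j$ and substituting into $L^j=\tfrac{1}{\bar\theta}\bigl[\bar\theta-\theta^j+(1-\bar\theta)\varepsilon\bigr]$ gives $A_jL^j=\tfrac{1}{\bar\theta}\bigl[\bar\theta A_j-E_j+(1-\bar\theta)\varepsilon A_j\bigr]$, which is exactly the right-hand side of Equation~\ref{Shock-dependent liquidation}. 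By Definition~\ref{Maximal Bailout Cluster} every member $i$ of the maximal bailout cluster $\mathcal{B}^{\bigstar({_\chi}\mathcal{P}_{s\,|\,t},\varepsilon)}$ (extracted from the $s<0$ banks ${_\chi}\mathcal{P}_{s\,|\,t}$) has $s^i<0$, hence $A_iL^i<0$; symmetrically every member $j$ of the maximal bail-in cluster $\mathcal{B}^{\bigstar({^\chi}\mathcal{P}_{s\,|\,t},\varepsilon)}$ has $s^j\ge 0$, hence $A_jL^j\ge 0$. This settles Item~1; the same identity identifies these clusters with the $\mathcal{B}(\varepsilon)$ and $\mathcal{B}^{\mathcal{I}}(\varepsilon)$ furnished by the weak decomposition, and Inequality~\ref{finite-risk mitigation2} guarantees $\mathcal{B}^{\mathcal{I}}(\varepsilon)\ne\varnothing$, so Item~2 is not vacuous.

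For Item~2 I would first invoke the weak-decomposition part of Theorem~\ref{Decomposition and Compression Equivalence} to write $\mathcal{P}_{s\,|\,t}=\mathcal{B}(\varepsilon)\cup\mathcal{B}^{\mathcal{I}}(\varepsilon)\cup R$, where (Definition~\ref{Maximal Bailout Cluster}) every member of $\mathcal{B}(\varepsilon)$ has $x^i<0$ and every member of $\mathcal{B}^{\mathcal{I}}(\varepsilon)$ has $x^j\ge 0$. At $p^s=1$ the switch (Definition~\ref{Switch}), whose priority supersedes perfection, forbids exactly the $x<0$ banks from transferring, so it pins every member of $\mathcal{B}(\varepsilon)$ in $\mathcal{P}_{s\,|\,t}$ and imposes no constraint on $\mathcal{B}^{\mathcal{I}}(\varepsilon)$. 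Moreover, by Remark~\ref{The equivalent operations p^s=1}, the joint operation at $p^s=1$ reduces to individual perfection on the bailout side and partition perfection on the bail-in side, so it remains to check that each $i\in\mathcal{B}(\varepsilon)$ is individually in a good state (hence stays, consistently with the switch) and that the cluster $\mathcal{B}^{\mathcal{I}}(\varepsilon)$ is in a bad state (hence exits).

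For these two checks I would use the crowding-out effect. Since $\mathbf{s}=\tilde{\mathbf{s}}$ and $\tilde x^i$ has the same sign as $s^i$ (Lemma~\ref{Corresponding to Counterfactual Monopoly Space}(a)--(b)), the split of $\mathcal{P}_{s\,|\,t}$ into credit creators and asset sellers is the $\mathcal{P}^-\cup\mathcal{P}^+$ split of Lemma~\ref{Lemma: Crowding-Out Effect Opposite}: part~(b) then shows, componentwise on $\mathcal{B}(\varepsilon)\subseteq{_\chi}\mathcal{P}_{s\,|\,t}$, that a credit-creating bank's solution $x^i$ rises once the asset sellers leave — a good state; part~(a) shows, componentwise on $\mathcal{B}^{\mathcal{I}}(\varepsilon)\subseteq{^\chi}\mathcal{P}_{s\,|\,t}$, that an asset-selling bank's $x^j$ falls once it moves to a pool stripped of credit creators — a bad state in the sense of Definition~\ref{Perfection}(2). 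That is Item~2. The effect on the two extreme clusters can alternatively be read from the variants in Lemma~\ref{Lemma: Crowding-Out Effect}(c)--(d), whose sum-direction hypotheses are supplied by the co-movement result Lemma~\ref{Lemma for changing in the same direction} ($\sum x$ moves with $\sum A_jL^j=\sum s^j$) together with Item~1 ($\mathcal{B}(\varepsilon)$ contributes negative, $\mathcal{B}^{\mathcal{I}}(\varepsilon)$ non-negative, to that sum).

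The step I expect to be the main obstacle is making the bad state of $\mathcal{B}^{\mathcal{I}}(\varepsilon)$ genuinely \emph{strict} while controlling the mixed-sign residual $R$. One must verify that the bipartition used to evaluate good/bad state in Definition~\ref{Perfection}(2) is the one produced by the weak (or, for bookkeeping, strong) decomposition — so that the subspace $\mathcal{B}^{\mathcal{I}}(\varepsilon)$ lands in really is a pool of $s\ge 0$ banks with no credit creators to cushion the drop in $x^j$ — and that the presence of $R$ cannot overturn the sign comparisons for $\mathcal{B}(\varepsilon)$ and $\mathcal{B}^{\mathcal{I}}(\varepsilon)$. One then upgrades the weak crowding-out inequality to a strict one via the finite-risk mitigation condition \ref{finite-risk mitigation}/\ref{finite-risk mitigation2} and Inequality~\ref{inequality: co-movement}: away from perfection the generalized supply strictly undershoots the aggregate demand, so removing the members of $\mathcal{B}(\varepsilon)$ (each with $A_iL^i<0$) strictly lowers $\sum A_jL^j$, hence strictly lowers $\sum x$, hence strictly moves the equilibrium. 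Everything else is sign-chasing through $A_jL^j=s^j$, Lemma~\ref{Corresponding to Counterfactual Monopoly Space}, and the crowding-out lemmas.
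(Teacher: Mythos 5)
Your Item~1 is exactly the paper's argument: $A_jL^j$ is just $s^j$ rewritten via Equation~\ref{Shock-dependent liquidation}, and the sign follows from membership in ${_\chi}\mathcal{P}_{s\,|\,t}$ or ${^\chi}\mathcal{P}_{s\,|\,t}$. For Item~2 your overall architecture (decomposition, crowding-out, the co-movement lemma, the switch pinning the $x<0$ banks, partition perfection expelling the bad-state cluster) matches the paper's, but the tool you lead with does not apply as stated, and the obstacle you flag at the end is precisely where the paper does something you only gesture at. Lemma~\ref{Lemma: Crowding-Out Effect Opposite} presupposes a bipartition $\mathcal{P}=\mathcal{P^+}\cup\mathcal{P^-}$ in which each sub-cluster, \emph{solved as its own clearing system}, has solutions of a uniform sign. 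Splitting $\mathcal{P}_{s\,|\,t}$ by the sign of $s$ does not deliver this: ${_\chi}\mathcal{P}_{s\,|\,t}$ solved alone generally contains banks with $x\geq 0$ --- that is exactly why $\mathcal{B}^{\bigstar({_\chi}\mathcal{P}_{s\,|\,t},\varepsilon)}$ is in general a proper subset of ${_\chi}\mathcal{P}_{s\,|\,t}$ (a negative pre-regular chain decomposes into a regular chain plus a further pre-regular chain, Lemma~\ref{Lemma for Chain}) --- and Lemma~\ref{Corresponding to Counterfactual Monopoly Space} only controls the sign of the CMS solution, not of the sub-cluster solution. Note also that the statement concerns the maximal clusters of the two sub-chains, not $\mathcal{B}(\varepsilon)$ and $\mathcal{B}^{\mathcal{I}}(\varepsilon)$ of the full space; the identification you assert in passing needs the observation that the candidate sets in Definition~\ref{Maximal Bailout Cluster} coincide for ${_\chi}\mathcal{P}_{s\,|\,t}$ and $\mathcal{P}_{s\,|\,t}$.

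The paper closes the gap with a two-stage nesting that is essentially your ``alternative'' route made precise. Stage one: pass from $\mathcal{B}^{\bigstar({_\chi}\mathcal{P}_{s\,|\,t},\varepsilon)}$ to $\mathcal{B}^{\bigstar({_\chi}\mathcal{P}_{s\,|\,t},\varepsilon)}\cup\hollowstar({_\chi}\mathcal{P}_{s\,|\,t},\varepsilon)={_\chi}\mathcal{P}_{s\,|\,t}$; the newcomers' solutions are non-negative by the very maximality of the cluster, so Lemma~\ref{Lemma: Crowding-Out Effect}$(a)$--$(b)$ applies and the bailout cluster's solutions become more negative. Stage two: adjoin ${^\chi}\mathcal{P}_{s\,|\,t}$; by your Item~1 this weakly raises $\sum_k A_kL^k$, hence raises $\sum_k x^k$ by Lemma~\ref{Lemma for changing in the same direction} (this is where the finite-risk-mitigation sign enters), and Lemma~\ref{Lemma: Crowding-Out Effect}$(c)$ then drives every $s<0$ bank's solution further down --- so the mixed residual never has to be sign-controlled bank by bank, only through the aggregate. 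The symmetric nesting on the $s\geq0$ side yields $0<\tilde{\mathbf{\Phi}}\bigl(\mathbf{s}^{|\mathcal{B}^{\bigstar({^\chi}\mathcal{P}_{s\,|\,t},\varepsilon)}|\times 1}\bigr)\ll x^j$, which is the strict bad state you were worried about; exit then follows from partition perfection and retention from $p^s=1$. If you replace the appeal to Lemma~\ref{Lemma: Crowding-Out Effect Opposite} with this nesting, your proof coincides with the paper's.
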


\subsection{Chain}\label{Section: Chain}
\noindent Remark \ref{Remark: The main functions of Compression Equivalence Theorem} discusses the motivation behind our introduction of ``Chain'' (see Definition \ref{Chain}).
\begin{definition}\label{Chain}
	Chain $\bigstar\left(\mathcal{P}_{s\,|\,t},\varepsilon\right)$
	
	\textnormal{$\bigstar$ is an unique mapping result of correspondence $\lozenge:\mathcal{P}_{s\,|\,t}\rightarrow \bigstar\left(\mathcal{P}_{s\,|\,t},\varepsilon\right)$. It rearranges $\mathcal{P}_{s\,|\,t}$ in descending order of leverage ratios, with initial element $^0\bigstar\left(\mathcal{P}_{s\,|\,t},\varepsilon\right)$ and corresponding $x^{^0\bigstar\left(\mathcal{P}_{s\,|\,t},\varepsilon\right)}$.}
	
	\textnormal{$\bigstar\left(\mathcal{P}_{s\,|\,t},\varepsilon\right)$ is termed \textit{negative regular chain} $\left(\textit{positive regular chain}\right)$ if $x^{^w\bigstar\left(\mathcal{P}_{s\,|\,t},\varepsilon\right)}<0,\,\forall w\in \bigstar\left(\mathcal{P}_{s\,|\,t},\varepsilon\right)$ $\left( x^{^w\bigstar\left(\mathcal{P}_{s\,|\,t},\varepsilon\right)}\geq0,\,\forall w\in \bigstar\left(\mathcal{P}_{s\,|\,t},\varepsilon\right) \right)$ holds.}
	
	\textnormal{$\bigstar\left(\mathcal{P}_{s\,|\,t},\varepsilon\right)$ is termed \textit{negative pre-regular chain} $\left(\textit{positive pre-regular chain}\right)$ if $s^{^w\bigstar\left(\mathcal{P}_{s\,|\,t},\varepsilon\right)}<0,\,\forall w\in \bigstar\left(\mathcal{P}_{s\,|\,t},\varepsilon\right)$ $\left( s^{^w\bigstar\left(\mathcal{P}_{s\,|\,t},\varepsilon\right)}\geq0,\,\forall w\in \bigstar\left(\mathcal{P}_{s\,|\,t},\varepsilon\right) \right)$ holds.}
\end{definition}

\begin{remark}\label{Remark: The main functions of Compression Equivalence Theorem}
	The main functions of Compression Equivalence Theorem
	
	\textnormal{Compression Equivalence Theorem reveals the hierarchy among banks. That is, banks with lower leverage ratios will experience phase transitions earlier as $\varepsilon$ grows. Moreover, this observation motivates our introduction of the ``Chain'' concept, since the behavior is analogous to the chain's tail crossing the horizontal zero-axis. The decomposition of every pre-regular chain yields regular chains with exactly determined cardinalities (according to Lemma \ref{Lemma ForPartitionInduced Equilibrium Transition Snd} and Lemma \ref{Lemma ForPartitionInduced Equilibrium Transition}), and this phenomenon is topologically similar to banks' phase transition. Actually, we can develop a commutative diagram among $\varepsilon$, $A$ and $\theta$ by constructing appropriate correspondences. Figure \ref{Maps} illustrates this insight.}
\end{remark}

We denote $\hollowstar^{2}\left(\mathcal{P}_{s\,|\,t},\varepsilon\right)=\hollowstar\left(\hollowstar\left(\mathcal{P}_{s\,|\,t},\varepsilon\right),\varepsilon\right)$, and this notation applies recursively. Results $(a)$ and $(b)$ in Lemma \ref{Lemma for Chain} restate the strong decomposition theorem from a different perspective, and result $(c)$ shows that for any chain, the elements in the maximal bailout (bail-in) cluster remain unchanged after incorporating any element from the subspace $\mathcal{P}_{s\,|\,t}$.

\begin{restatable}{thm3}{LemmaForChain}\label{Lemma for Chain}
	Corresponding to chain $\bigstar\left(\mathcal{P}_{s\,|\,t},\varepsilon\right)$, we have:
	
	$(a)$ \textnormal{pre-regularity is the necessary condition for achieving regularity.}
	
	$(b)$ \textnormal{Negative (Positive) pre-regular chain $\bigstar\left(\mathcal{P}_{s\,|\,t},\varepsilon\right)$ can be decomposed into $\mathcal{B}^{\bigstar\left(\mathcal{P}_{s\,|\,t},\varepsilon\right)}\,\cup\,\hollowstar\left(\mathcal{P}_{s\,|\,t},\varepsilon\right)$, where $\mathcal{B}^{\bigstar\left(\mathcal{P}_{s\,|\,t},\varepsilon\right)}$ represents the maximal bailout (bail-in) cluster of $\bigstar\left(\mathcal{P}_{s\,|\,t},\varepsilon\right)$ and $\hollowstar\left(\mathcal{P}_{s\,|\,t},\varepsilon\right)$ is a negative (positive) pre-regular chain as well. That is, a pre-regular chain can be \textit{decomposed} into a regular chain and a different pre-regular chain.}
	
	$(c)$ \textnormal{$\mathcal{B}^{\bigstar\left(\mathcal{P}_{s\,|\,t},\varepsilon\right)}=\mathcal{B}^{\bigg\{\mathcal{B}^{\bigstar\left(\mathcal{P}_{s\,|\,t},\varepsilon\right)}\,\cup\, \left\{\left(\sigma\right) \diamond \hollowstar\left(\mathcal{P}_{s\,|\,t},\varepsilon\right)\right\}\bigg\}}$.}
\end{restatable}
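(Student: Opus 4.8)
I would treat $(a)$, $(b)$, $(c)$ in order, leaning on three facts: the BCP equilibrium identity $s^i=x^i+\sum_k\big(1-e^{-\beta\sum_j x^j\pi_{jk}}\big)M_k p_{ki}$ from system \ref{Equilibrium solution under BCP}; the observation from equation \ref{Shock-dependent liquidation} that $s^i$ is a fixed function of $(\varepsilon,A_i,E_i,\bar\theta)$ and is therefore invariant to which banks share the cluster; and the Weak Decomposition Theorem (Theorem \ref{Decomposition and Compression Equivalence}). Part $(a)$ is then a sign chase: if $\bigstar(\mathcal{P}_{s\,|\,t},\varepsilon)$ is a negative regular chain, every $x^w<0$, so $\sum_j x^j\pi_{jk}\le 0$ for each industry $k$, hence $1-e^{-\beta\sum_j x^j\pi_{jk}}\le 0$; with $M_k,p_{ki}\ge 0$ the devaluation term of each $w$ is non-positive and $s^w=x^w+(\text{non-positive})<0$, so the chain is negative pre-regular. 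The positive direction is symmetric, which gives ``pre-regularity is necessary for regularity.''

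For $(b)$, take a negative pre-regular chain $\bigstar$ and apply Weak Decomposition: $\bigstar=\mathcal{B}^{\bigstar}\cup\mathcal{B}^{\mathcal{I},\bigstar}\cup\big[\bigstar\setminus(\mathcal{B}^{\bigstar}\cup\mathcal{B}^{\mathcal{I},\bigstar})\big]$. Since every $w$ has $s^w<0$, no bank can meet the sign requirement of a bail-in member in Definition \ref{Maximal Bailout Cluster}, so $\mathcal{B}^{\mathcal{I},\bigstar}=\varnothing$ and $\bigstar=\mathcal{B}^{\bigstar}\sqcup\hollowstar$ with $\hollowstar:=\bigstar\setminus\mathcal{B}^{\bigstar}$. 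I would then check that $\hollowstar$ is again a negative pre-regular chain: the descending-leverage order is inherited by any subsequence, and deleting $\mathcal{B}^{\bigstar}$ leaves each $s^w$ untouched, so $s^w<0$ persists. Finally $\hollowstar\neq\bigstar$, i.e.\ $\mathcal{B}^{\bigstar}\neq\varnothing$: otherwise all members would have $x^w\ge 0$, and the positive half of part $(a)$ would force every $s^w\ge 0$, contradicting pre-regularity. The positive case swaps $\mathcal{B}^{\bigstar}$ and $\mathcal{B}^{\mathcal{I},\bigstar}$; iterating the peel (each step removes a nonempty regular chain) realizes the recursive $\hollowstar^{m}$ decomposition and terminates because $|\mathcal{I}_{\text{Bank}}|<\infty$.

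For $(c)$, write $\mathcal{D}:=\mathcal{B}^{\bigstar}\cup\{(\sigma)\diamond\hollowstar\}$, so $\mathcal{B}^{\bigstar}\subseteq\mathcal{D}\subseteq\bigstar$, and aim to show the maximal bailout cluster of $\mathcal{D}$ is exactly $\mathcal{B}^{\bigstar}$: $(i)$ each member of $\mathcal{B}^{\bigstar}$ still has $x<0$ in $\mathcal{D}$, $(ii)$ each adjoined bank has $x\ge 0$ in $\mathcal{D}$, $(iii)$ the maximality clause of Definition \ref{Maximal Bailout Cluster} holds in $\mathcal{D}$. The base case is the definition itself: any $w\in\hollowstar$ has $s^w<0$ and $\tilde x^w<0$ (pre-regularity together with Lemma \ref{Corresponding to Counterfactual Monopoly Space}$(a),(b)$), so adjoining it alone to $\mathcal{B}^{\bigstar}$ forces $x^w\ge 0$ while $\mathcal{B}^{\bigstar}$ stays negative. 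I would then induct on $|(\sigma)\diamond\hollowstar|$, adjoining banks in order of increasing leverage (consistent with $\mathcal{B}^{\bigstar}$ being the highest-leverage segment of $\bigstar$, by the threshold hierarchy): the finite-risk mitigation inequality \ref{finite-risk mitigation2} caps the aggregate credit creation available inside $\mathcal{D}$, so the crowding-out comparisons of Lemma \ref{Lemma: Crowding-Out Effect} push the newly adjoined, threshold-junior bank to $x\ge 0$ and leave the higher-leverage members of $\mathcal{B}^{\bigstar}$ strictly negative, matching the fact that they are already negative in the larger cluster $\bigstar$. The positive (bail-in) case is the mirror image.

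The substantive work is this induction in $(c)$: the single-bank maximality built into Definition \ref{Maximal Bailout Cluster} does not by itself preclude a cascade in which several marginal, low-leverage additions jointly drag a member of $\mathcal{B}^{\bigstar}$ across zero, or let one of the newcomers turn into a credit creator once the others are removed. Ruling this out is where the finite-risk mitigation bound \ref{finite-risk mitigation2}---the statement that the current subspace cannot satisfy all banks' demands at once---together with the threshold hierarchy, does the real job: they pin every sign flip to the lowest-leverage newcomers and none inside $\mathcal{B}^{\bigstar}$. Parts $(a)$ and $(b)$, by comparison, are essentially bookkeeping once one observes that $s^i$ is composition-invariant and that the opposite-type maximal cluster is empty for a one-signed pre-regular chain.
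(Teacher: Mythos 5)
Your parts $(a)$ and $(b)$ match the paper's proof almost exactly: $(a)$ is the same sign chase (a one-signed $\mathbf{x}$ forces the devaluation term to have that sign, hence $s^w$ inherits it), and $(b)$ is the same application of the Weak Decomposition Theorem plus the observation that the opposite-type maximal cluster is empty when all $s^w$ share a sign and that $s^i$, being a function of $(\varepsilon,A_i,E_i,\bar\theta)$ only, is composition-invariant; your extra check that $\mathcal{B}^{\bigstar\left(\mathcal{P}_{s\,|\,t},\varepsilon\right)}\neq\varnothing$ is already contained in case 1 of the paper's proof of Theorem \ref{Decomposition and Compression Equivalence}. The real divergence is $(c)$: the paper disposes of it in one line, asserting ``by definition of $\mathcal{B}^{\bigstar\left(\mathcal{P}_{s\,|\,t},\varepsilon\right)}$'' that the adjoined banks get $x\geq 0$ and the cluster members stay negative, even though Definition \ref{Maximal Bailout Cluster} only quantifies over single-bank additions. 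You correctly flag this single-versus-multiple gap and propose to close it by induction on $\left|\left(\sigma\right)\diamond\hollowstar\left(\mathcal{P}_{s\,|\,t},\varepsilon\right)\right|$ using the crowding-out effect. That is a legitimate and more careful route; the one caveat is that you import the finite-risk mitigation bound \ref{finite-risk mitigation2}, which is an assumption of Section \ref{Section: Properties of Perfection Equilibrium} and is not carried by this lemma. It is also not needed: cross-bank dependence and the hierarchy property (Lemma \ref{Perfection Switch=1} $1.a)$--$1.b)$) already force the sign pattern inside the combined system to be monotone in leverage, and since $\mathcal{B}^{\bigstar\left(\mathcal{P}_{s\,|\,t},\varepsilon\right)}$ is the top leverage segment of the chain, the COE comparison against the two bracketing systems $\mathcal{B}^{\bigstar\left(\mathcal{P}_{s\,|\,t},\varepsilon\right)}$ and the full $\bigstar\left(\mathcal{P}_{s\,|\,t},\varepsilon\right)$ pins every sign flip to the adjoined, threshold-junior banks — which is exactly how the paper argues the analogous multi-bank step in the proof of Lemma \ref{Lemma For Decomposition Chains in any Generating Space}. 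So your proof is correct and, for $(c)$, strictly more explicit than the paper's, at the cost of one superfluous hypothesis.
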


\begin{remark}\label{Remark for Partial order and Hierarchy}
	\textnormal{Lemma \ref{Lemma ForDecomposition and Compression Equivalence} $6.ii)$ reveals the essence of our model as Figure \ref{Partial order and Hierarchy} illustrates. We prove that the configuration of Figure \ref{Partial order and Hierarchy} can be achieved in Appendix Section \ref{AppendixSection:Proof of Remark for Partial order and Hierarchy}}
\end{remark}

\begin{figure}[H]
	\centering
	\includegraphics[width = 0.5\textwidth]{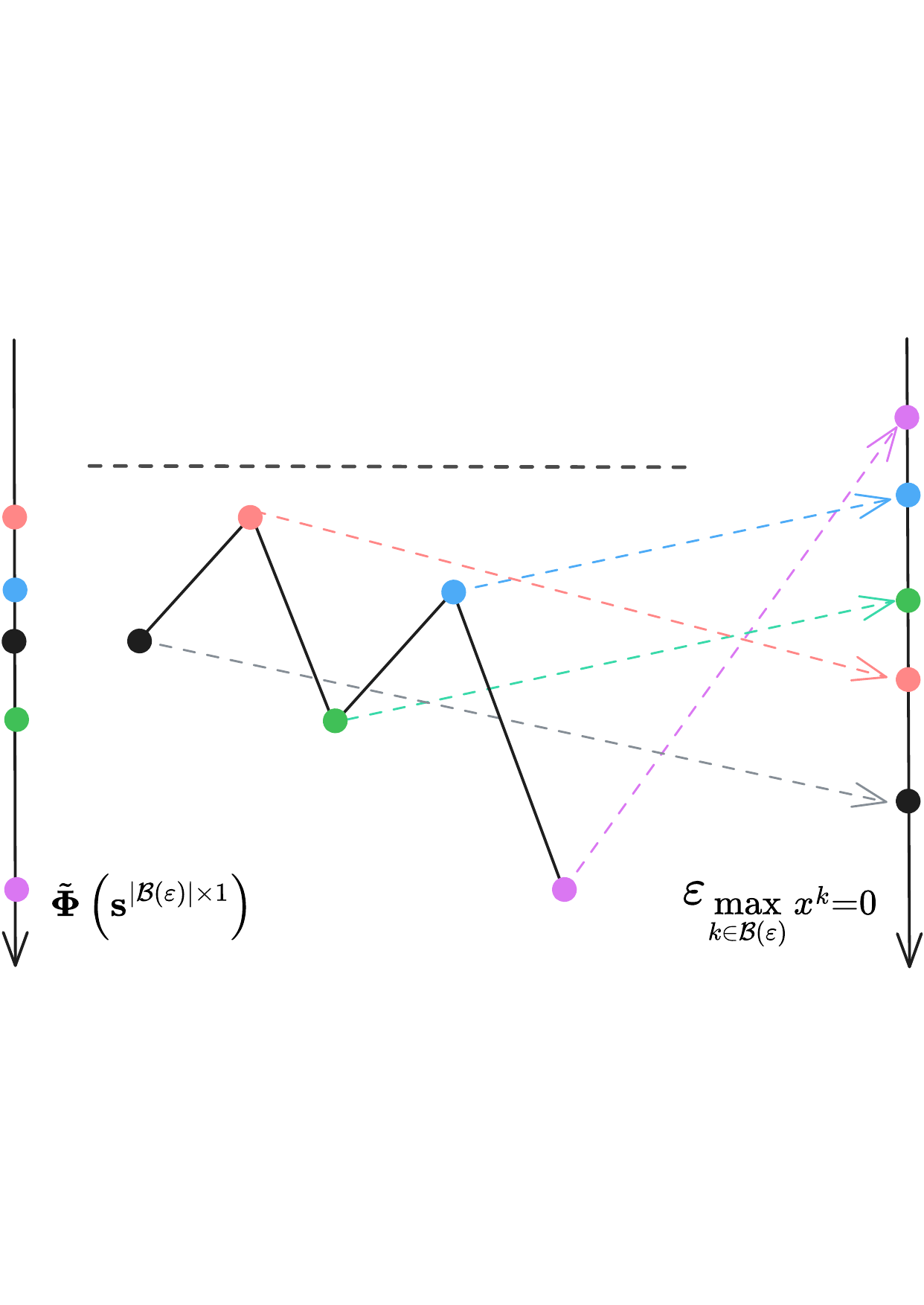}
	\caption{Partial Order and Hierarchy in Negative Pre-regular Chain}
	\label{Partial order and Hierarchy}
\end{figure}
\subsection{Proof Outline of Theorems}\label{Section: Proof Outline of Theorems}
\noindent Lemma \ref{Lemma ForDecomposition and Compression Equivalence} extends Lemma \ref{Corresponding to Counterfactual Monopoly Space}, serving as the foundational apparatus for proving the \textit{compression equivalence theorem}. This result encapsulates the dynamics whereby the minimal-leverage bank $f$ experiences homophily transition during exogenous shock amplification, with comparative statics for market liquidity, bank $f$'s assets/leverage, and peer banks' characteristics. The threshold hierarchy phenomenon receives particular emphasis—bank $f$ occupies the chain's terminal position, and notwithstanding potential asset magnitude or partial order elevation under moderate shocks (see Figure \ref{Partial order and Hierarchy}'s purple/green points), the hierarchical terminus proves most susceptible to severe environmental breaches. Figure \ref{Partial order and Hierarchy}'s left axis visualizes asset-determined credit creation partial orders under tranquility, whereas the right axis exhibits chain structure/hierarchical configuration. Additionally, we characterize bank $f$'s pre-transition trajectory, a process recursively applicable to the residual cluster ${_{n+1}}\mathcal{B}(\varepsilon) = \mathcal{B}(\varepsilon) \setminus {f}$. With escalating exogenous shocks, sequential bank breaches occur, manifesting as chain across the horizontal zero-axis.

\begin{restatable}{thm3}{LemmaForCommutativeDiagram}\label{Lemma For Commutative Diagram}
	Isomorphism
	
	\textnormal{1. $\mathcal{X}_{A\rightarrow \varepsilon}\circ \mathcal{X}_{\varepsilon\rightarrow A}\leftarrow \varepsilon=\varepsilon$ and $\mathcal{X}_{\varepsilon\rightarrow A}\circ \mathcal{X}_{A\rightarrow \varepsilon} \leftarrow A_{^w\bigstar\left(\mathcal{P}_{s\,|\,t},\varepsilon\right)}=A_{^w\bigstar\left(\mathcal{P}_{s\,|\,t},\varepsilon\right)}$.}
	
	\textnormal{2. $\mathcal{Y}_{\theta\rightarrow \varepsilon} \circ \mathcal{Y}_{\varepsilon\rightarrow  \theta}\leftarrow \varepsilon=\varepsilon$ and $\mathcal{Y}_{\varepsilon\rightarrow  \theta} \circ \mathcal{Y}_{\theta\rightarrow \varepsilon} \leftarrow  \theta^{^{w+1}\bigstar\left(\mathcal{P}_{s\,|\,t},\varepsilon\right)}=\theta^{^{w+1}\bigstar\left(\mathcal{P}_{s\,|\,t},\varepsilon\right)}$.}
	
	\textnormal{3. $\mathcal{X}_{\varepsilon\rightarrow A}\circ \mathcal{Y}_{\theta\rightarrow \varepsilon} \circ \mathcal{Y}_{\varepsilon\rightarrow  \theta} \circ \mathcal{X}_{A\rightarrow \varepsilon} \leftarrow A_{^w\bigstar\left(\mathcal{P}_{s\,|\,t},\varepsilon\right)}=A_{^w\bigstar\left(\mathcal{P}_{s\,|\,t},\varepsilon\right)}$.}
	
	\textnormal{4. $\mathcal{X}_{A\rightarrow \varepsilon}\circ \mathcal{X}_{\varepsilon\rightarrow A} \circ \mathcal{Y}_{\theta\rightarrow \varepsilon} \circ \mathcal{Y}_{\varepsilon\rightarrow  \theta}\leftarrow \varepsilon=\varepsilon$.}
\end{restatable}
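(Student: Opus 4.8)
The plan is to treat the four correspondences $\mathcal{X}_{A\to\varepsilon}$, $\mathcal{X}_{\varepsilon\to A}$, $\mathcal{Y}_{\theta\to\varepsilon}$, $\mathcal{Y}_{\varepsilon\to\theta}$ as single-valued maps between one-dimensional admissible parameter ranges, to show that the $\mathcal{X}$-pair and the $\mathcal{Y}$-pair each form a mutually inverse bijection, and then to obtain the two mixed compositions in parts $3$ and $4$ by concatenating these bijections. First I would pin down the maps explicitly. By Equation \ref{Shock-dependent liquidation} together with the threshold $\varepsilon_{\tilde s^i=0}$ and the chain ordering of Definition \ref{Chain}, the value of $\varepsilon$ at which the bank sitting at chain position $w$ of $\bigstar(\mathcal{P}_{s\,|\,t},\varepsilon)$ crosses the zero axis is a strictly monotone function of that bank's asset level $A_{^w\bigstar(\mathcal{P}_{s\,|\,t},\varepsilon)}$ with its leverage held fixed, and — separately — a strictly monotone function of the leverage $\theta^{^{w+1}\bigstar(\mathcal{P}_{s\,|\,t},\varepsilon)}$ of the successor element with assets held fixed; the index shift from $w$ to $w+1$ is forced by the fact that along the chain ordered in descending leverage it is the successor bank whose leverage governs the transition boundary (cf. Lemma \ref{Lemma for Chain}). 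Then $\mathcal{X}_{A\to\varepsilon}$ is the first function and $\mathcal{X}_{\varepsilon\to A}$ its inverse, while $\mathcal{Y}_{\theta\to\varepsilon}$ is the second and $\mathcal{Y}_{\varepsilon\to\theta}$ its inverse. Single-valuedness of the forward maps is exactly the strict monotonicity in $A_i$ and in $\theta^i$ recorded in Lemma \ref{Corresponding to Counterfactual Monopoly Space}$(c)$–$(f)$, combined with the threshold-hierarchy ordering of Proposition \ref{Propositions about Counterfactual Monopoly Space}; the existence of the inverse directions — that for a prescribed $\varepsilon$ one can solve back for the asset or the leverage that realizes a transition at precisely that $\varepsilon$ — is the content of Theorem \ref{Partition-induced transition} and its supporting Lemmas \ref{Lemma ForPartitionInduced Equilibrium Transition} and \ref{Lemma ForPartitionInduced Equilibrium Transition Snd}, which guarantee that $\beta$, $A$ and $\theta$ can be tuned to achieve an exact partition-induced transition while holding $\varepsilon$ fixed (here $\beta$ plays an auxiliary role and is kept fixed throughout the composition).

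Second I would verify parts $1$ and $2$ directly. Since $\mathcal{X}_{A\to\varepsilon}$ and $\mathcal{X}_{\varepsilon\to A}$ are merely two ways of reading the same defining relation — solve for $\varepsilon$ given $A$, or solve for $A$ given $\varepsilon$ — their composition in either order is the identity on its domain: starting from $\varepsilon$, $\mathcal{X}_{\varepsilon\to A}$ returns the unique $A_{^w\bigstar(\mathcal{P}_{s\,|\,t},\varepsilon)}$ consistent with that transition and $\mathcal{X}_{A\to\varepsilon}$ sends it back to the same $\varepsilon$; starting from $A_{^w\bigstar(\mathcal{P}_{s\,|\,t},\varepsilon)}$ the argument is symmetric. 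This yields $\mathcal{X}_{A\to\varepsilon}\circ\mathcal{X}_{\varepsilon\to A}$ applied to $\varepsilon$ equal to $\varepsilon$, and $\mathcal{X}_{\varepsilon\to A}\circ\mathcal{X}_{A\to\varepsilon}$ applied to $A_{^w\bigstar(\mathcal{P}_{s\,|\,t},\varepsilon)}$ equal to $A_{^w\bigstar(\mathcal{P}_{s\,|\,t},\varepsilon)}$. The identical argument with the leverage relation gives part $2$, with the index at $w+1$ as explained above.

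Third, parts $3$ and $4$ follow by concatenation. Applying $\mathcal{X}_{\varepsilon\to A}\circ\mathcal{Y}_{\theta\to\varepsilon}\circ\mathcal{Y}_{\varepsilon\to\theta}\circ\mathcal{X}_{A\to\varepsilon}$ to $A_{^w\bigstar(\mathcal{P}_{s\,|\,t},\varepsilon)}$, the inner pair $\mathcal{Y}_{\theta\to\varepsilon}\circ\mathcal{Y}_{\varepsilon\to\theta}$ collapses to the identity on the $\varepsilon$-range by part $2$, and the outer pair $\mathcal{X}_{\varepsilon\to A}\circ\mathcal{X}_{A\to\varepsilon}$ collapses to the identity on the $A$-range by part $1$; hence the composite is the identity. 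The only nontrivial point is range matching: the intermediate $\varepsilon$ produced by $\mathcal{X}_{A\to\varepsilon}$ must lie in the domain on which $\mathcal{Y}_{\varepsilon\to\theta}$ is defined, and the intermediate $\varepsilon$ produced by $\mathcal{Y}_{\theta\to\varepsilon}$ must lie in the domain of $\mathcal{X}_{\varepsilon\to A}$. I would check this using the standing admissibility constraints — $\varepsilon$ remaining in the window $\frac{\mathbb{E}[\theta^i]-\bar\theta}{1-\bar\theta}\le\varepsilon\le\max_i\{\frac{\theta^i-\bar\theta}{1-\bar\theta}\}$, preservation of the leverage hierarchy of the chain, and the finite-risk mitigation condition \ref{finite-risk mitigation} — and verifying that the tuning afforded by Theorem \ref{Partition-induced transition} keeps all of them intact. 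Part $4$ is the mirror statement: starting from $\varepsilon$, $\mathcal{Y}_{\varepsilon\to\theta}$ then $\mathcal{Y}_{\theta\to\varepsilon}$ return $\varepsilon$, and $\mathcal{X}_{\varepsilon\to A}$ then $\mathcal{X}_{A\to\varepsilon}$ return $\varepsilon$ again, each cancellation justified by parts $1$ and $2$.

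The main obstacle I anticipate is precisely this domain/range bookkeeping together with the $w$ versus $w+1$ index tracking: one must ensure that the bank implicitly fixed when inverting in the $A$-direction is the same chain element (or its designated neighbour) whose leverage is inverted in the $\theta$-direction, and that all four maps genuinely act on a common admissible region so that composition is meaningful. Once the maps are set up on the correct domains via the monotonicity results of Lemma \ref{Corresponding to Counterfactual Monopoly Space} and the hierarchy of Proposition \ref{Propositions about Counterfactual Monopoly Space}, the four identities are essentially tautological — each is ``solve the defining equation forward, then solve it backward'' — so the real work lies in the careful specification of those domains rather than in any calculation.
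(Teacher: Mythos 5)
Your proposal is correct and follows essentially the same route as the paper: both treat each $\mathcal{X}$- and $\mathcal{Y}$-pair as forward/backward readings of the single defining equation $e^{-\beta\cdot \tilde{x}^{^w\bigstar\left(\mathcal{P}_{s\,|\,t},\varepsilon\right)}}= \frac{\theta^{^{w+1}\bigstar\left(\mathcal{P}_{s\,|\,t},\varepsilon\right)}}{\bar{\theta}}+\left(1-\frac{1}{\bar{\theta}}\right)\cdot \varepsilon$, invoke the bijectivity established in Lemma \ref{Lemma ForPartitionInduced Equilibrium Transition} so that a feasible pair in one direction is the unique feasible pair in the other, and then obtain parts $3$ and $4$ by cancelling the inner and outer pairs. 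Your extra attention to domain matching and the $w$ versus $w{+}1$ indexing is a reasonable elaboration of what the paper leaves implicit.
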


We prove the \textit{partition-induced equilibrium transition theorem} via isomorphism construction. The core intuition is establishing bijective correspondence between individual bank assets $A_i$, leverage ratios $\theta^i$, and exogenous shock $\varepsilon$. This enables identifying equivalent $A_i$ and $\theta^i$ pairs that generate identical effects when exogenous shocks shift the chain upper-right in Figure \ref{Partial order and Hierarchy}—precisely the content of Lemma \ref{Lemma ForPartitionInduced Equilibrium Transition}. We denote the asset and leverage ratio of $^w\bigstar\left(\mathcal{P}_{s\,|\,t},\varepsilon\right)$ as $A_{^w\bigstar\left(\mathcal{P}_{s\,|\,t},\varepsilon\right)}$ and $\theta^{^{w}\bigstar\left(\mathcal{P}_{s\,|\,t},\varepsilon\right)}$, then utilize Lemma \ref{Lemma ForPartitionInduced Equilibrium Transition} to demonstrate variable isomorphism (Lemma \ref{Lemma For Commutative Diagram}), with Figure \ref{Maps} illustrating our proof scheme.
\begin{figure}[H]
	\centering
	\scalebox{0.9}{
		\begin{tikzpicture}[
			node distance=1.8cm and 1.8cm,  
			every node/.style={scale=1.2},
			>={Stealth[scale=1.2]},
			myloop/.style={looseness=8, min distance=4mm}  
			]
			
			\node (top) {$\bigg|  \mathcal{B}^{\bigstar\left(\mathcal{P}_{s\,|\,t},\varepsilon\right)} \bigg|$};
			\node (left) [below left=of top] {$\theta$};
			\node (right) [below right=of top] {$A$};
			\node (bottom) [below=3.0cm of $(left)!0.5!(right)$] {$\varepsilon$}; 
			
			\draw[->,black] (left) -- node[left] {$\mathcal{B}(\varepsilon)\circ\lozenge$} (top);
			\draw[->,black] (right) -- node[right] {$\mathcal{B}(\varepsilon)\circ\lozenge$} (top);
			\draw[->,black] (bottom) -- ++(0,0.5) -- node[right=0.1cm, yshift=0.6cm] {$\mathcal{B}(\varepsilon)\circ\lozenge$} (top); 
			
			\draw[->,black,myloop] (left) to[out=150,in=210] (left);
			\draw[->,black,myloop] (right) to[out=30,in=-30] (right);
			
			\draw[->,red] (right) to[out=-100,in=10] node[below right=-0.1cm] {$\mathcal{X}_{A\rightarrow \varepsilon}$} (bottom);
			\draw[->,blue] (bottom) to[out=80,in=190] node[above left=0.2cm and -0.4cm] {$\mathcal{X}_{\varepsilon\rightarrow A}$} (right); 
			
			\draw[->,purple!80!black] (left) to[out=-80,in=170] node[below left=-0.1cm] {$\mathcal{Y}_{\theta\rightarrow \varepsilon}$} (bottom);
			\draw[->,orange] (bottom) to[out=100,in=-10] node[above right=0.2cm and -0.4cm] {$\mathcal{Y}_{\varepsilon\rightarrow \theta}$} (left); 
			
		\end{tikzpicture}
	}
	\caption{Maps}
	\label{Maps}
\end{figure}

Building upon the strong decomposition theorem, we constrain the scope for proving the \textit{partition-induced equilibrium transition theorem} to the two central layers depicted in the right panel of Figure \ref{fig: Weak and Strong Decomposition}. Lemma \ref{Lemma ForPartitionInduced Equilibrium Transition Snd} demonstrates controllability over banking states across all layers. Thus, the lemma possesses general applicability to any partition of subspace $\mathcal{P}_{s\,|\,t}$, thereby completing the proof of Theorem \ref{Partition-induced transition}.
\begin{restatable}{thm3}{LemmaForPartitionInducedEquilibriumTransitionSnd}\label{Lemma ForPartitionInduced Equilibrium Transition Snd}
	\textnormal{We randomly select $m$ banks $\mathcal{B}^{\hollowstar}(m)=\mathcal{B}^{\hollowstar^i\left(\mathcal{P}_{s\,|\,t},\varepsilon\right)}\cup \cdots \cup\mathcal{B}^{\hollowstar^j\left(\mathcal{P}_{s\,|\,t},\varepsilon\right)}$ with $\bigg|\mathcal{B}^{\hollowstar}(m)\bigg|=m$ from Lemma \ref{Lemma ForPartitionInduced Equilibrium Transition} to form a new chain $\bigstar\left(\mathcal{B}^{\hollowstar}(m),\varepsilon\right)$. We can precisely control the number of either $\bigg|\mathcal{B}^{\bigstar\left(\mathcal{B}^{\hollowstar}(m),\varepsilon\right)}\bigg|$ or $\bigg|\hollowstar\left(\mathcal{B}^{\hollowstar}(m),\varepsilon\right)\bigg|$.}
\end{restatable}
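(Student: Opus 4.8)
The plan is to combine the structural decomposition of Lemma \ref{Lemma for Chain} with the attribute‑tuning machinery of Lemma \ref{Lemma ForPartitionInduced Equilibrium Transition} and the isomorphism of Lemma \ref{Lemma For Commutative Diagram}. The clean entry point is the observation that, by Equation \ref{Shock-dependent liquidation}, the liquidation notional $s^i$ depends only on $\varepsilon$, $A_i$, $E_i$, $\bar\theta$ and never on which cluster bank $i$ currently inhabits. Each $\mathcal{B}^{\hollowstar^k(\mathcal{P}_{s\,|\,t},\varepsilon)}$ is, by Lemma \ref{Lemma for Chain}$(b)$, a (regular, hence pre‑regular) sub‑chain of the negative pre‑regular chain it was carved from, so every one of its members carries $s^i<0$. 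Consequently the $m$ selected banks all have $s^i<0$, and the chain $\bigstar(\mathcal{B}^{\hollowstar}(m),\varepsilon)$ they form — reordered by descending leverage — is again a negative pre‑regular chain. Lemma \ref{Lemma for Chain}$(b)$ then decomposes it as $\mathcal{B}^{\bigstar(\mathcal{B}^{\hollowstar}(m),\varepsilon)}\cup\hollowstar(\mathcal{B}^{\hollowstar}(m),\varepsilon)$, with the two pieces partitioning the $m$ banks; so $\bigl|\mathcal{B}^{\bigstar(\mathcal{B}^{\hollowstar}(m),\varepsilon)}\bigr|+\bigl|\hollowstar(\mathcal{B}^{\hollowstar}(m),\varepsilon)\bigr|=m$, and it suffices to control one cardinality.

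Next I would identify $\bigl|\mathcal{B}^{\bigstar(\mathcal{B}^{\hollowstar}(m),\varepsilon)}\bigr|$ with the number of high‑leverage ``head'' banks of $\bigstar(\mathcal{B}^{\hollowstar}(m),\varepsilon)$ that still satisfy $x<0$ after the $m$ banks are cleared jointly; by the threshold‑hierarchy and Compression‑Equivalence content of Theorem \ref{Decomposition and Compression Equivalence}, the lowest‑leverage tail banks are the ones first pushed across the $x=0$ axis, so raising $\varepsilon$ would peel the tail off one bank at a time. Since $\varepsilon$ must be held fixed, this is exactly where Lemma \ref{Lemma ForPartitionInduced Equilibrium Transition} and the commuting maps $\mathcal{X}_{\varepsilon\to A},\mathcal{Y}_{\varepsilon\to\theta}$ of Lemma \ref{Lemma For Commutative Diagram} enter: any $\varepsilon$‑shift that would move the chain's zero‑crossing can be reproduced, at the original $\varepsilon$, by an equivalent adjustment of the relevant boundary bank's $A_i$ and $\theta^i$ (and, if needed, of $\beta$ as in Proposition \ref{Propositions about Counterfactual Monopoly Space2}). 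Applying this to the boundary bank and then recursively to each successive boundary bank realizes $\bigl|\mathcal{B}^{\bigstar(\mathcal{B}^{\hollowstar}(m),\varepsilon)}\bigr|=k$ for every integer $k$ in its admissible range — the lower end being at least $1$, since the top of a negative pre‑regular chain necessarily retains $x<0$, and the upper end being $m$ — hence also every admissible value of $\bigl|\hollowstar(\mathcal{B}^{\hollowstar}(m),\varepsilon)\bigr|$.

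The main obstacle is making ``precisely'' rigorous: the control must advance in unit steps, skipping no value, and without collateral reshuffling of the other banks. Two points need care. First, no two banks may cross the $x=0$ axis simultaneously under a generic attribute perturbation; this is where the linear‑independence (``stochastic selection'') assumption on the reduced disposal matrix $\mathring{\mathbf{\Pi}}$ is used, so that the critical thresholds $\varepsilon_{x^i=0}$ are pairwise distinct and $\bigl|\mathcal{B}^{\bigstar(\mathcal{B}^{\hollowstar}(m),\varepsilon)}\bigr|$ changes by exactly one as the swept attribute passes a boundary value. Second, when the boundary bank is retuned, the head banks must keep $x<0$ and the already‑flipped tail banks must keep $x\geq0$; this follows from the mutual independence of the strong‑decomposition components (Remark \ref{Remark: The main functions of Strong Decomposition Theorem}) together with the monotone comparative statics of Lemma \ref{Corresponding to Counterfactual Monopoly Space} and the threshold hierarchy, which preclude perturbing a lower‑hierarchy bank from reversing the state of a strictly higher‑hierarchy one. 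Assembling these pieces, the argument is a finite induction on the number of boundary crossings, each step being an instance of Lemma \ref{Lemma ForPartitionInduced Equilibrium Transition}.
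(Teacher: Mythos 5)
Your proposal is correct in substance and rests on the same pillars as the paper's proof — the strong decomposition into a head segment and a tail segment, the threshold hierarchy forcing the zero-crossing to occur at the low-leverage end, and the attribute-to-shock bijections of Lemma \ref{Lemma ForPartitionInduced Equilibrium Transition} to relocate that crossing while $\varepsilon$ is held fixed — but the architecture differs. The paper argues in one shot: for a target $k$ it splits the chain into $\bigstar_k$ and $\bigstar_{m-k}$, uses the equality solutions of Lemma \ref{Lemma ForPartitionInduced Equilibrium Transition}~3 to place the assets $A^{^0\bigstar_k},\dots,A^{^{k-2}\bigstar_k}$ exactly on the internal crowd-out boundaries, and then adds a margin $\delta$ so that each of the first $k$ banks sits strictly inside the bailout condition while the $k$-th bank's aggregate impact strictly exceeds the crowd-out threshold of $^{0}\bigstar_{m-k}$; this pins $\mathcal{B}^{\bigstar\left(\mathcal{B}^{\hollowstar}(m),\varepsilon\right)}=\bigstar_k$ with no intermediate states to track. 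Your version instead sweeps: you peel off one tail bank at a time and must separately argue that each perturbation changes the count by exactly one and does not disturb the already-settled banks. That inductive step is salvageable (the hierarchy result in Lemma \ref{Perfection Switch=1}~1.$b)$ forces flips to occur in leverage order, and the cross-bank dependence $x^i=\frac{A_i(\theta^j-\theta^i)}{\bar\theta}+\frac{A_i}{A_j}x^j$ shows two banks cross zero simultaneously only if $\theta^i=\theta^j$), but note that the genericity you need comes from the strict heterogeneity of the pairs $(A_i,\theta^i)$ assumed in the proof of Theorem \ref{Decomposition and Compression Equivalence}, not from the linear independence of $\mathring{\mathbf{\Pi}}$, which concerns investment proportions across industries and plays no role in the single-column system \ref{system P}. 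The paper's direct construction buys exactness without any genericity or unit-increment argument; your sweep buys a cleaner picture of the comparative statics (which bank flips next and why) at the cost of that extra bookkeeping.
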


We only rigorously demonstrate lemmas for negative pre-regular chain in our proof. However, according to Lemma \ref{Lemma for Chain} $(c)$, the proof is actually quite similar for the positive pre-regular chain. That is we only need to prove that $\mathcal{B}^{\mathcal{I}}$ possesses properties analogous to those in Lemma \ref{Lemma ForDecomposition and Compression Equivalence}, which is exactly the content of Lemma \ref{Lemma ForPartitionInduced Equilibrium Transition Third} and Lemma \ref{Lemma ForPartitionInduced Equilibrium Transition Fourth}.
\section{Micromotives, Mechanism and Macrobehavior}\label{Section: Micromotives, Mechanism and Macrobehavior}
\noindent This section interprets selected elements of \citet{schelling1978micromotives}, identifying two subspace $\mathcal{P}_{s|t}$ micromotives: the \textit{free-riding} mechanism manifests as macro-level complementarity within the maximal bail-in cluster $\mathcal{B}^{\mathcal{I}}(\varepsilon)$; while COE preserves substitution in $\mathcal{B}(\varepsilon)$. We also define the \textit{Schelling point} to characterize an agent's expectations of peer behavior and \textit{Knightian point} to capture personally optimal choices.
\subsection{Schelling Point and Knightian Point}\label{Section: Schelling Point and Knightian Point}
\noindent When making decisions, individuals expect others to act in their best interest, while simultaneously optimizing their own choices based others' choices. We characterize the former feasible action set as the \textit{Schelling point}; the latter strategy set constitutes the \textit{Knightian point} (see Definition \ref{Knightian Uncertainty}). As Schelling (1978, p. 23) describes:
\begin{quote}
	``[...] The free market may not do much, or anything, to distribute opportunities and resources among people the way you or I might like them distributed, and it may not lead people to like the activities we wish they liked or to want to consume the things we wish they wanted to consume ... Still,
	within those serious limitations, it does remarkably well in coordinating or harmonizing or integrating the efforts of myriads of self-serving individuals and organizations.''
\end{quote}
\begin{definition}\label{Knightian Uncertainty} 
	Bail-in Banks' Best Response $p^{\triangle}$ under Knightian Uncertainty
	
	\textnormal{We suppress the perfection operation $\mathcal{T}_p$. For bank $i$ who in a bad state with $x^i>0$, its best response under Knightian Uncertainty condition $p^s$ is $p^{\triangle}$, i.e., it exits the current generative space with probability $p^{\triangle}$.}
\end{definition}

\begin{figure}[H]
	\centering
	\includegraphics[width = 0.88\textwidth]{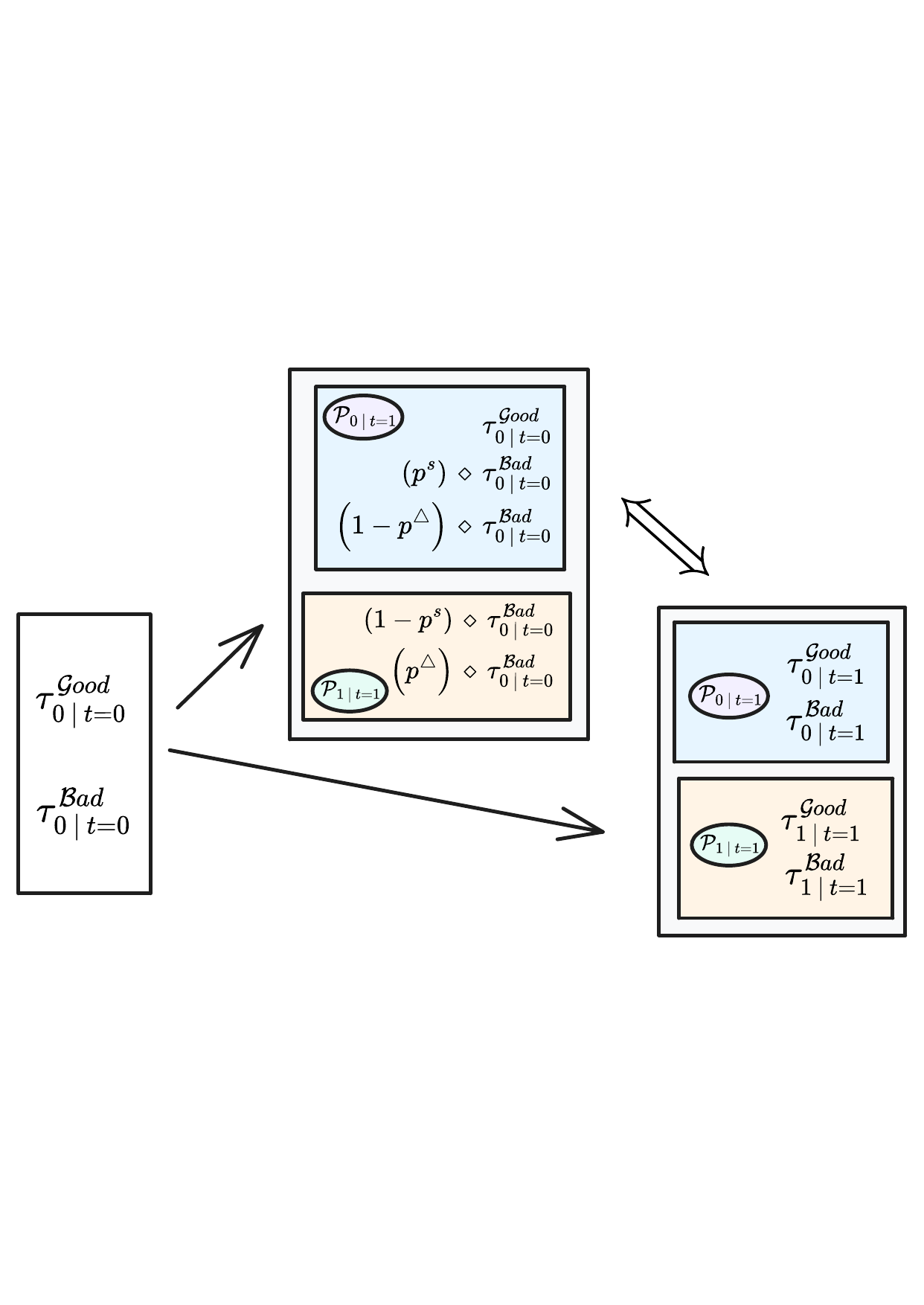}
	\caption{Perfection with $p^{\triangle}$ under Knightian Uncertainty}
	\label{Best Response under Knightian Uncertainty}
\end{figure}

In this section, we analyze the micromotives of the maximal bail-in cluster from dual perspectives. Figure \ref{Best Response under Knightian Uncertainty} shows the perfection mechanism under \textit{Knightian uncertainty}. Proposition \ref{Schelling Point and Knightian Point} formalizes our intuition. For $\mathcal{B}^{\mathcal{I}}(\varepsilon)$, the presence of the COE (Lemma \ref{Lemma: Crowding-Out Effect}) motivates avoidance of subspace sharing with $\mathcal{B}(\varepsilon)$ (Result $(a)$). Meanwhile, result $(b)$ establishes the dual perspective: when $\mathcal{B}(\varepsilon)$ operates without policy restrictions, $\mathcal{B}^{\mathcal{I}}(\varepsilon)$ necessarily abandons the current subspace to avoid contact with $\mathcal{B}(\varepsilon)$. This behavior aligns with the \textit{No-turning-back} property in Lemma \ref{Perfection Switch=1}, with both patterns caused by the COE.
\begin{restatable}{thm}{SchellingPointandKnightianPoint}\label{Schelling Point and Knightian Point} For maximal bail-in cluster $\mathcal{B}^{\mathcal{I}}(\varepsilon)$ and maximal bailout cluster $\mathcal{B}(\varepsilon)$, we have:
	
	$(a)$ \textnormal{The \textit{Schelling point} of $\mathcal{B}(\varepsilon)$ is $p^s=1$.}
	
	$(b)$ \textnormal{The \textit{Knightian point} of $\mathcal{B}^{\mathcal{I}}(\varepsilon)$ is $p^{\triangle}=1$.}
\end{restatable}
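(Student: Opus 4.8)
The plan is to establish each part as a direct consequence of the crowding-out effect (Lemma \ref{Lemma: Crowding-Out Effect Opposite} and Lemma \ref{Lemma: Crowding-Out Effect}) combined with the opposing incentives of $\mathcal{B}(\varepsilon)$ and $\mathcal{B}^{\mathcal{I}}(\varepsilon)$ within any subspace $\mathcal{P}_{s\,|\,t}$. For part $(a)$, I would argue as follows. The Schelling point of $\mathcal{B}(\varepsilon)$ is, by Definition \ref{Knightian Uncertainty} and the surrounding discussion, the action profile $\mathcal{B}(\varepsilon)$ would like to see imposed on the system so that every bank acts ``in its best interest.'' Since by Lemma \ref{Lemma: Crowding-Out Effect Opposite}$(b)$ the banks in $\mathcal{B}^{\mathcal{I}}(\varepsilon)$ strictly prefer to leave whenever they share a subspace with $\mathcal{B}(\varepsilon)$ — the COE makes $\mathcal{B}(\varepsilon)$'s presence beneficial to them only insofar as it induces their departure — and since by Lemma \ref{Lemma: Crowding-Out Effect Homophily}$(a)$ the members of $\mathcal{B}(\varepsilon)$ are internally incentive-compatible as a homophily, the configuration most favorable to $\mathcal{B}(\varepsilon)$ is precisely the one in which the credit-creating banks are held fixed and the asset-selling banks are allowed to sort away. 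That is exactly the switch value $p^s=1$ (Definition \ref{Switch}, item 1). I would also invoke Remark \ref{The equivalent operations p^s=1} to note that under $p^s=1$ the partition-perfection operation restricted to $\mathcal{B}(\varepsilon)$ reduces to individual perfection, so the ``stay'' decision of each bailout bank coincides with its individually optimal response — confirming that $p^s=1$ is indeed a fixed point of the bailout cluster's expectations about peers.

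For part $(b)$, I would suppress $\mathcal{T}_p$ as Definition \ref{Knightian Uncertainty} stipulates, and ask: for a bail-in bank $i$ (with $x^i>0$, hence $i\in\mathcal{B}^{\mathcal{I}}(\varepsilon)$) facing Knightian uncertainty about whether $\mathcal{B}(\varepsilon)$ is constrained, what exit probability $p^{\triangle}$ maximizes its utility? The key computation is to compare bank $i$'s equilibrium outcome $\tilde{\mathbf{\Phi}}^i$ when it stays in a subspace still containing $\mathcal{B}(\varepsilon)$ versus when it departs. By Lemma \ref{Lemma: Crowding-Out Effect Opposite}$(b)$, staying with $\mathcal{B}(\varepsilon)$ yields $\tilde{\mathbf{\Phi}}(\mathbf{s}^{|\mathcal{P}^-|\times 1}) - [\tilde{\mathbf{\Phi}}(\mathbf{s}^{|\mathcal{P}|\times 1})]^{|\mathcal{P}^-|\times 1}>0$, i.e., the bail-in bank's isolated (or post-departure) value strictly exceeds its value inside the joint cluster; and by Lemma \ref{Lemma: Crowding-Out Effect}$(b)$ this strict preference is preserved and indeed reinforced as further bail-in banks accumulate in the departed cluster. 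Hence departure strictly dominates staying for every realization of the uncertainty in which $\mathcal{B}(\varepsilon)$ might be unconstrained, and it weakly dominates when $\mathcal{B}(\varepsilon)$ is constrained (since in that case $\mathcal{B}(\varepsilon)$ stays put and $\mathcal{B}^{\mathcal{I}}(\varepsilon)$'s departure still separates it from the crowding-out source). Therefore the best response is to exit with probability one, $p^{\triangle}=1$, which I would also tie to the No-turning-back property in Lemma \ref{Perfection Switch=1}: once separated, a bail-in bank never re-enters, so the dominant strategy is the degenerate ``always leave.''

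The main obstacle I anticipate is not the sign inequalities — those are immediate from the cited lemmas — but making the optimization in part $(b)$ genuinely well-posed under Knightian uncertainty rather than ordinary risk. Since $p^s$ is a Lebesgue-measurable probabilistic variable (Definition \ref{Switch}) and the bank does not know the governing value, I need to argue that ``exit with probability $1$'' is a dominant action in the robust (worst-case, or all-priors) sense, which requires checking that departure is weakly preferred \emph{for every} admissible realization of $p^s$ and strictly preferred for a non-null set of them. This hinges on verifying two regimes: (i) when $p^s$ is such that $\mathcal{B}(\varepsilon)$ remains, the COE inequality of Lemma \ref{Lemma: Crowding-Out Effect Opposite}$(b)$ applies directly; and (ii) when $p^s$ would otherwise allow $\mathcal{B}(\varepsilon)$ to move, one must check that $\mathcal{B}^{\mathcal{I}}(\varepsilon)$'s unilateral departure still does not lower its utility — here the finite-risk mitigation assumption \ref{finite-risk mitigation} (equivalently \ref{finite-risk mitigation2}) is what guarantees that the generalized supply cannot simultaneously satisfy all banks, so separating from $\mathcal{B}(\varepsilon)$ cannot backfire. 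Stitching these two regimes together to conclude $p^{\triangle}=1$ as a dominant (not merely Bayes-optimal) strategy, and pointing out the duality with part $(a)$ — that $p^s=1$ is precisely what makes $\mathcal{B}^{\mathcal{I}}(\varepsilon)$'s Knightian point coincide with its Schelling-type expectation — is the delicate step, and I would devote most of the written proof to it.
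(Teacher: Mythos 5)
Your proposal is correct and takes essentially the same route as the paper, whose entire proof is a one-line appeal to the crowding-out effect (Lemma \ref{Lemma: Crowding-Out Effect}$(a)$): bail-in banks are forced to liquidate more whenever credit-creating banks share their subspace, so they expect $\mathcal{B}(\varepsilon)$ to be pinned by $p^s=1$ and their own best response under uncertainty is to exit with probability one. One slip to fix in part $(b)$: the inequality that makes departure dominant for a bail-in bank is the one for $\mathcal{P}^+$ (Lemma \ref{Lemma: Crowding-Out Effect Opposite}$(a)$ or Lemma \ref{Lemma: Crowding-Out Effect}$(a)$), namely that its isolated liquidation is \emph{smaller} than its joint liquidation, whereas the display you quote is the $\mathcal{P}^-$ inequality and describes the bailout cluster's preference, not the bail-in bank's.
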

\subsection{Emergence of Complementarity}
\noindent We further explore the micromotives underlying maximal bail-in cluster $\mathcal{B}^{\mathcal{I}}(\varepsilon)$ formation. Let $[\mathcal{B}^{\mathcal{I}}(\varepsilon)]=\mathcal{B}^{\mathcal{I}}(\varepsilon)\cup \{j\}$ denote the augmented cluster through bank $j$ incorporation, which constitutes an extended maximal bail-in cluster.\footnote{$\mathbbm{1}_{\tilde{\mathbf{\Phi}}^j\left(\mathbf{s}^{|[\mathcal{B}^{\mathcal{I}}(\varepsilon)]|\times 1}\right)\geq 0}$ is an indicator function corresponding to whether bank $j$ is the new entrant of $\mathcal{B}^{\mathcal{I}}(\varepsilon)$ or not.} Proposition \ref{Emergence of Complementarity} shows the cluster's micromotivational structure via \textit{complementarity emergence}. This phenomenon indicates micro-level free-riding dynamics within $[\mathcal{B}^{\mathcal{I}}(\varepsilon)]$, manifesting as macro-homophily patterns. Each new entrant that maintains the cluster's intrinsic characteristics reduces liquidation burdens on incumbent members, thereby enabling greater asset retention.
\begin{restatable}{thm}{EmergenceOfComplementarity}\label{Emergence of Complementarity}
The system $[\mathcal{B}^{\mathcal{I}}(\varepsilon)]$ exhibits complementarity:
	
\begin{equation}\label{inequality: Emergence of Complementarity}
	\frac{\partial\left[\tilde{\mathbf{\Phi}}^i\left(\mathbf{s}^{|\mathcal{B}^{\mathcal{I}}(\varepsilon)|\times 1}\right)-\tilde{\mathbf{\Phi}}^i\left(\mathbf{s}^{|[\mathcal{B}^{\mathcal{I}}(\varepsilon)]|\times 1}\right)\right]}{\partial \left[\mathbbm{1}_{\tilde{\mathbf{\Phi}}^j\left(\mathbf{s}^{|[\mathcal{B}^{\mathcal{I}}(\varepsilon)]|\times 1}\right)\geq 0}\right]}\geq 0,\,\forall i\in\mathcal{B}^{\mathcal{I}}(\varepsilon)
\end{equation}
\textnormal{and the inequality \ref{inequality: Emergence of Complementarity2} holds when $\mathbbm{1}_{\tilde{\mathbf{\Phi}}^j\left(\mathbf{s}^{|[\mathcal{B}^{\mathcal{I}}(\varepsilon)]|\times 1}\right)\geq 0}>0$.}
\begin{equation}\label{inequality: Emergence of Complementarity2}
	\frac{\partial\left[\tilde{\mathbf{\Phi}}^i\left(\mathbf{s}^{|\mathcal{B}^{\mathcal{I}}(\varepsilon)|\times 1}\right)-\tilde{\mathbf{\Phi}}^i\left(\mathbf{s}^{|[\mathcal{B}^{\mathcal{I}}(\varepsilon)]|\times 1}\right)\right]}{\partial \left[\tilde{\mathbf{\Phi}}^y\left(\mathbf{s}^{|\mathcal{B}^{\mathcal{I}}(\varepsilon)|\times 1}\right)-\tilde{\mathbf{\Phi}}^y\left(\mathbf{s}^{|[\mathcal{B}^{\mathcal{I}}(\varepsilon)]|\times 1}\right)\right]}\geq 0,\,\forall i,y\in\mathcal{B}^{\mathcal{I}}(\varepsilon)
\end{equation}
\end{restatable}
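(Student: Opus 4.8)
The statement is a comparative-statics claim on the CCS clearing equilibrium, and I would read $\partial/\partial[\mathbbm{1}_{\tilde{\mathbf{\Phi}}^j\ge0}]$ (shorthand for the indicator in the proposition) as the finite difference between the regime in which the adjoined bank $j$ ends up bail-in ($\mathbbm{1}=1$, that is $\tilde{\mathbf{\Phi}}^j\ge0$ in the enlarged cluster) and the regime in which it switches to bailout ($\mathbbm{1}=0$). Recall from equation system \ref{Equilibrium solution under BCP} that for any cluster $\mathcal{P}$ the equilibrium solves $s^i=x^i+\sum_k(1-e^{-\beta g_k(\mathbf{x})})M_kp_{ki}$ for $i\in\mathcal{P}$, with $g_k(\mathbf{x})=\sum_{i'\in\mathcal{P}}x^{i'}\pi_{i'k}$ and $s^i$ pinned down by $(\varepsilon,A_i,E_i,\bar\theta)$ alone; the right-hand side is weakly increasing in every $x^{i'}$, so the solution map $T_i(\mathbf{x})=s^i-\sum_k(1-e^{-\beta g_k(\mathbf{x})})M_kp_{ki}$ is weakly \emph{decreasing} in every coordinate, and by Lemma \ref{Corresponding to Counterfactual Monopoly Space} together with the framework of Section \ref{Characterization of Invariants, Perfection and Homophily} its fixed point $\tilde{\mathbf{\Phi}}(\mathbf{s}^{|\mathcal{P}|\times1})$ is unique. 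Write $\mathbf{x}^0:=\tilde{\mathbf{\Phi}}(\mathbf{s}^{|\mathcal{B}^{\mathcal{I}}(\varepsilon)|\times1})$ and $(\mathbf{x}^\ast,x^{\ast j}):=\tilde{\mathbf{\Phi}}(\mathbf{s}^{|[\mathcal{B}^{\mathcal{I}}(\varepsilon)]|\times1})$; adjoining $j$ alters, for each incumbent $i$, only $g_k$ by the additive term $x^{\ast j}\pi_{jk}$, so the enlarged system restricted to the incumbents is $T$ shifted \emph{down} when $x^{\ast j}\ge0$ and \emph{up} when $x^{\ast j}<0$.

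For inequality \ref{inequality: Emergence of Complementarity} I split on the indicator. When $\mathbbm{1}=0$ the set $\mathcal{B}^{\mathcal{I}}(\varepsilon)$ is a legitimate $\mathcal{P}^+$ (all its members satisfy $\tilde{\mathbf{\Phi}}^i\ge0$ in their own equilibrium, by the definition of the maximal bail-in cluster), so Lemma \ref{Lemma: Crowding-Out Effect}(a) with added bank $j$ gives $\tilde{\mathbf{\Phi}}(\mathbf{s}^{|\mathcal{B}^{\mathcal{I}}(\varepsilon)|\times1})-[\tilde{\mathbf{\Phi}}(\mathbf{s}^{|[\mathcal{B}^{\mathcal{I}}(\varepsilon)]|\times1})]^{|\mathcal{B}^{\mathcal{I}}(\varepsilon)|\times1}<0$, so the bracketed quantity in \ref{inequality: Emergence of Complementarity} is nonpositive. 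When $\mathbbm{1}=1$, that is $x^{\ast j}\ge0$, the shift is downward, so substituting the enlarged fixed point into $T$ yields $\mathbf{x}^\ast\le T(\mathbf{x}^\ast)$; since $T$ is a decreasing self-map with the unique fixed point $\mathbf{x}^0$ and is a contraction once $\beta\le\bar\beta(\varepsilon)$ (Proposition \ref{Solutions in FCP}(a)), iterating $T$ from $\mathbf{x}^\ast$ produces the two-sided sandwich $\mathbf{x}^\ast\le T^2(\mathbf{x}^\ast)\le\cdots\uparrow\mathbf{x}^0$ (with $T^{2n+1}(\mathbf{x}^\ast)\downarrow\mathbf{x}^0$), hence $\mathbf{x}^\ast\le\mathbf{x}^0$ and the bracketed quantity is nonnegative. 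Thus that quantity is $\ge0$ in the $\mathbbm{1}=1$ regime and $\le0$ in the $\mathbbm{1}=0$ regime, so its finite difference across the two regimes is nonnegative, which is \ref{inequality: Emergence of Complementarity}.

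For inequality \ref{inequality: Emergence of Complementarity2} I fix the indicator at $1$ and introduce a one-parameter deformation $a\ge0$ that increases the magnitude of $j$'s fire sale, steered through $j$'s own clearing equation (monotone in $s^j$, hence controllable via, say, $j$'s post-shock equity), so that $x^{\ast j}$ tracks $a$. Raising $a$ raises every $g_k$, lowers every $T_i$, and by the same decreasing-map comparison lowers every component $x^{\ast i}(a)$ of the enlarged equilibrium simultaneously; since $x^{0,i}$ is independent of $a$, each benefit $\Delta_i:=\tilde{\mathbf{\Phi}}^i(\mathbf{s}^{|\mathcal{B}^{\mathcal{I}}(\varepsilon)|\times1})-\tilde{\mathbf{\Phi}}^i(\mathbf{s}^{|[\mathcal{B}^{\mathcal{I}}(\varepsilon)]|\times1})=x^{0,i}-x^{\ast i}(a)$ is nondecreasing in $a$ for every $i\in\mathcal{B}^{\mathcal{I}}(\varepsilon)$, and taking the ratio of the two co-monotone responses gives $\partial\Delta_i/\partial\Delta_y\ge0$ for all $i,y$, which is \ref{inequality: Emergence of Complementarity2}. (The weaker reading of \ref{inequality: Emergence of Complementarity2}, that the vector $(\Delta_i)_{i\in\mathcal{B}^{\mathcal{I}}(\varepsilon)}$ has all components of the same sign, is already contained in \ref{inequality: Emergence of Complementarity}.)

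The delicate point is the monotone comparative statics for the \emph{joint} fixed point in the $\mathbbm{1}=1$ case: appending a bank with $\tilde{\mathbf{\Phi}}^j\ge0$ to a cluster $\mathcal{P}^+$ is not among the four crowding-out configurations of Lemma \ref{Lemma: Crowding-Out Effect}, and the endogeneity of $x^{\ast j}$ forbids treating it as an exogenous parameter; the decreasing-map fixed-point comparison circumvents this, but it genuinely relies on the contractivity of $T$ (the oscillating iteration $T^{2n}(\mathbf{x}^\ast)\uparrow\mathbf{x}^0$, $T^{2n+1}(\mathbf{x}^\ast)\downarrow\mathbf{x}^0$), so I would keep $\beta\le\bar\beta(\varepsilon)$ --- or the standing equilibrium-uniqueness hypotheses of Section \ref{Characterization of Invariants, Perfection and Homophily} --- in force, and I would check that the deformation used in \ref{inequality: Emergence of Complementarity2} never leaves the $\mathbbm{1}=1$ region.
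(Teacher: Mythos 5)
Your proposal reaches the right conclusion and shares the paper's basic skeleton — read the derivative in \ref{inequality: Emergence of Complementarity} as a finite difference across the two indicator regimes and settle each regime by a crowding-out comparison — but it substitutes independent arguments for the two lemmas the paper actually cites. The paper disposes of the $\mathbbm{1}=1$ regime by invoking Appendix Lemma \ref{SubLemma: Crowding-Out Effect}~$(a)$ (proved there by a one-line contradiction: if the incumbents' liquidations did not fall, the entrant's equation with $\tilde{\mathbf{\Phi}}^{j}\geq 0$ could not clear), and disposes of \ref{inequality: Emergence of Complementarity2} by citing Lemma \ref{Lemma: Crowding-Out Effect Homophily}, i.e.\ by the observation that all incumbents' benefits $\Delta_i$ carry the same sign so any two co-move. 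You instead prove the $\mathbbm{1}=1$ case directly via a decreasing fixed-point map, and strengthen \ref{inequality: Emergence of Complementarity2} to a genuine co-monotonicity along a one-parameter deformation of $x^{\ast j}$; the latter is more informative than what the paper establishes, and your closing remark correctly identifies that the paper's weaker reading already follows from \ref{inequality: Emergence of Complementarity}.

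One step deserves tightening. For a componentwise decreasing map $T$ in more than one dimension, $\mathbf{x}^{\ast}\leq T(\mathbf{x}^{\ast})$ does \emph{not} by itself yield the componentwise ordering $\mathbf{x}^{\ast}\leq \mathbf{x}^{0}$ of fixed points: from $\mathbf{x}^{\ast}\leq T(\mathbf{x}^{\ast})$ one gets $T^{2}(\mathbf{x}^{\ast})\leq T(\mathbf{x}^{\ast})$, but the comparison between $\mathbf{x}^{\ast}$ and $T^{2}(\mathbf{x}^{\ast})$ that starts your sandwich is not forced by antitonicity, and the contraction property gives convergence, not order. The argument does close in this model because every coordinate of $T$ depends on $\mathbf{x}$ only through the scalar aggregate $\sum_{j}x^{j}$ (this is exactly the reduction the paper performs in Equation \ref{system P Sum}), so the comparison is effectively one-dimensional, where $x^{\ast}\leq T(x^{\ast})$ does imply $x^{\ast}\leq x^{0}$ for a decreasing $T$. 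You should either route the argument explicitly through the aggregate equation, or restrict to the single-column clearing structure the paper uses throughout Section \ref{Characterization of Invariants, Perfection and Homophily}; as written, the multi-industry system of \ref{Equilibrium solution under BCP} would leave the sandwich unjustified.
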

\subsection{Persistence of Substitution}
\noindent In maximal bailout cluster $\mathcal{B}(\varepsilon)$, we can observe similar macroscopic properties. Let $[\mathcal{B}(\varepsilon)]=\mathcal{B}(\varepsilon)\cup \{j\}$ denote the augmented cluster through bank $j$ incorporation, which constitutes an extended maximal bail-out cluster. Proposition \ref{Persistence of Substitution} shows the micro-level \textit{substitution persistence}.

\begin{restatable}{thm}{PersistenceOfSubstitution}\label{Persistence of Substitution}
	The system $[\mathcal{B}(\varepsilon)]$ exhibits substitution:
	
\begin{equation}\label{inequality: Persistence of Substitution}
	\frac{\partial\left[\tilde{\mathbf{\Phi}}^i\left(\mathbf{s}^{|\mathcal{B}(\varepsilon)|\times 1}\right)-\tilde{\mathbf{\Phi}}^i\left(\mathbf{s}^{|[\mathcal{B}(\varepsilon)]|\times 1}\right)\right]}{\partial \left[\mathbbm{1}_{\tilde{\mathbf{\Phi}}^j\left(\mathbf{s}^{|[\mathcal{B}(\varepsilon)]|\times 1}\right)\leq 0}\right]}\leq 0,\,\forall i\in\mathcal{B}(\varepsilon)
\end{equation}
\textnormal{and the inequality \ref{inequality: Emergence of Complementarity2} also holds for any two elements in $\mathcal{B}(\varepsilon)$ when $\mathbbm{1}_{\tilde{\mathbf{\Phi}}^j\left(\mathbf{s}^{|[\mathcal{B}(\varepsilon)]|\times 1}\right)\leq 0}>0$.}
\end{restatable}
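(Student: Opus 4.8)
The plan is to mirror, with reversed signs, the proof architecture that establishes Proposition \ref{Emergence of Complementarity}, exploiting the structural duality between the maximal bailout cluster $\mathcal{B}(\varepsilon)$ and the maximal bail-in cluster $\mathcal{B}^{\mathcal{I}}(\varepsilon)$ that is already codified in Lemma \ref{Lemma: Crowding-Out Effect Opposite} and Lemma \ref{Lemma: Crowding-Out Effect Homophily}. First I would fix a subspace $\mathcal{P}_{s\,|\,t}$ and a bank $j\notin\mathcal{B}(\varepsilon)$ with $s^j<0$ and $\tilde{x}^j\leq 0$, so that by Definition \ref{Maximal Bailout Cluster} the augmented system $[\mathcal{B}(\varepsilon)]=\mathcal{B}(\varepsilon)\cup\{j\}$ is well-posed and the indicator $\mathbbm{1}_{\tilde{\mathbf{\Phi}}^j(\mathbf{s}^{|[\mathcal{B}(\varepsilon)]|\times 1})\leq 0}$ records exactly whether $j$ is absorbed as a genuine new bailout member. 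The key observation is that when $j$ joins, the extra (generically negative, i.e. credit-creating) actual volume $x^j$ that $j$ contributes enters every incumbent's clearing equation \ref{Equilibrium solution under BCP} only through the aggregate devaluation term $\sum_k\bigl(1-e^{-\beta\sum_i x^i\pi_{ik}}\bigr)M_k p_{ki}$; since $x^j<0$ pushes $\sum_i x^i\pi_{ik}$ downward for the industries $j$ touches, the devaluation offset shrinks, so each incumbent $i\in\mathcal{B}(\varepsilon)$ must itself create \emph{less} credit (its $\tilde{\mathbf{\Phi}}^i$ moves up toward zero). This is precisely the crowding-out mechanism of Lemma \ref{Lemma: Crowding-Out Effect}$(a)$ applied with $\mathcal{P^I}=\{j\}$, and it yields $\tilde{\mathbf{\Phi}}^i(\mathbf{s}^{|\mathcal{B}(\varepsilon)|\times 1})-\tilde{\mathbf{\Phi}}^i(\mathbf{s}^{|[\mathcal{B}(\varepsilon)]|\times 1})\leq 0$ for all $i\in\mathcal{B}(\varepsilon)$, hence inequality \ref{inequality: Persistence of Substitution}.

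For the second claim—that inequality \ref{inequality: Emergence of Complementarity2} holds for any two elements of $\mathcal{B}(\varepsilon)$ once $j$ is genuinely absorbed—I would use the contagion/bijection structure established in Proposition \ref{Propositions about Counterfactual Monopoly Space} (Contagion) and carried over to cluster level by Theorem \ref{Decomposition and Compression Equivalence}: within a pure bailout cluster the pairwise liquidation responses are proportional (driven by the common shared-industry devaluation factor), so the incremental quantities $\tilde{\mathbf{\Phi}}^i(\mathbf{s}^{|\mathcal{B}(\varepsilon)|\times 1})-\tilde{\mathbf{\Phi}}^i(\mathbf{s}^{|[\mathcal{B}(\varepsilon)]|\times 1})$ for distinct $i,y$ move together with a nonnegative ratio. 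Concretely, I would differentiate the two incumbents' clearing identities with respect to the perturbation carried by $j$'s entry, solve the resulting linear system (which is governed by an M-matrix under the liquidity threshold $\beta\leq\bar\beta(\varepsilon)$ from Proposition \ref{Solutions in FCP}, or unconditionally in the CMS-based construction via Lemma \ref{Corresponding to Counterfactual Monopoly Space}), and read off that both partials share the sign of the common driving term—giving the claimed nonnegative ratio.

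The main obstacle is the degenerate case in which $j$'s investment profile overlaps very little (or not at all) with the industries in which the incumbents of $\mathcal{B}(\varepsilon)$ are active: then $j$'s entry produces no crowding-out pressure and the increments are zero, so the inequalities hold only weakly and the ratio in \ref{inequality: Emergence of Complementarity2} is a $0/0$ indeterminate. I would handle this exactly as the companion proof of Proposition \ref{Emergence of Complementarity} presumably does—by invoking the linear-independence assumption on the rows of the reduced disposal matrix $\mathring{\mathbf{\Pi}}$ (stated in Section \ref{Section: Model}) to guarantee a common industry of positive exposure whenever $j$ is actually absorbed, i.e. whenever $\mathbbm{1}_{\tilde{\mathbf{\Phi}}^j(\mathbf{s}^{|[\mathcal{B}(\varepsilon)]|\times 1})\leq 0}>0$, which is precisely the hypothesis under which \ref{inequality: Emergence of Complementarity2} is asserted—and by interpreting the equalities in the non-overlapping case as the boundary of the weak inequality, consistent with the ``$\leq$'' (rather than ``$<$'') in the statement. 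A secondary technical point is to verify that the sign of the perturbation $x^j$ is indeed nonpositive on the relevant shock range; this follows from Lemma \ref{Corresponding to Counterfactual Monopoly Space}$(b)$ together with $\varepsilon<\varepsilon_{\tilde{s}^j=0}$, which is implied by $j\notin\mathcal{B}^{\mathcal{I}}(\varepsilon)$ in the regime where $[\mathcal{B}(\varepsilon)]$ remains a pure bailout space.
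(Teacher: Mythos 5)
Your proposal is correct and rests on the same mechanism the paper uses: the paper's own proof is a two-line citation, deriving inequality \ref{inequality: Persistence of Substitution} as a direct corollary of the appendix variant of the crowding-out effect (Lemma \ref{SubLemma: Crowding-Out Effect}$(b)$: adding a credit-creating entrant to $\mathcal{P}^-$ forces incumbents' solutions upward) and the co-movement claim from Lemma \ref{Lemma: Crowding-Out Effect Homophily}. Your first-part argument unpacks exactly that fixed-point monotonicity through the common devaluation term, so it is the same proof in expanded form; note only that the correct reference is Lemma \ref{SubLemma: Crowding-Out Effect}$(b)$ (entrant into $\mathcal{P}^-$), not Lemma \ref{Lemma: Crowding-Out Effect}$(a)$, which concerns adding a credit creator to $\mathcal{P}^+$. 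For the second inequality your route is genuinely more explicit than the paper's: rather than citing Lemma \ref{Lemma: Crowding-Out Effect Homophily}, you invoke the cross-bank dependence $x^i=\frac{A_i(\theta^y-\theta^i)}{\bar{\theta}}+\frac{A_i}{A_y}x^y$, which holds within both the original and the augmented clearing system with the same constant, so the increments satisfy $\Delta^i=\frac{A_i}{A_y}\Delta^y$ and the ratio in \ref{inequality: Emergence of Complementarity2} equals $A_i/A_y>0$ directly; this is cleaner and arguably more rigorous than the paper's citation, at the cost of relying on the single-aggregate-industry form of the clearing system (Lemma \ref{Perfection Switch=1} $1.a)$), which is the setting the paper itself works in. Your worry about non-overlapping industry exposures is moot in that setting, and the extra verification of the sign of $x^j$ is unnecessary since the indicator hypothesis $\tilde{\mathbf{\Phi}}^j(\mathbf{s}^{|[\mathcal{B}(\varepsilon)]|\times 1})\leq 0$ already encodes it.
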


Proposition \ref{Emergence of Complementarity} and Proposition \ref{Persistence of Substitution} correspond to the content in Schelling (1978, p. 14). The difference is that, although our utility function is submodular, we can still identify a cluster exhibiting characteristics of complementarity. This cluster can independently form a homophily, which depends not only on the mechanism context but also on the micromotives of the banks within the cluster.
\begin{quote}
	``[...] People are responding to an environment that consists of other people responding to their environment, which consists of people responding to an environment of people's responses. Sometimes the dynamics are sequential: if your lights induce me to turn mine on, mine may induce somebody else but not you. Sometimes the dynamics are reciprocal: hearing your car horn, I honk mine, thus encouraging you to honk more insistently.''
\end{quote}
\subsection{Discussion of Utility Functions}
\noindent The phenomena characterized in Proposition \ref{Emergence of Complementarity} and Proposition \ref{Persistence of Substitution} manifest when the submodular utility function $u$ takes values in $\mathbb{R}$, which constitutes an extension of the research developed by \citet{topkis1978minimizing,topkis1979equilibrium}. For supermodular utility specifications, the inequality directions in \ref{inequality: Emergence of Complementarity} and \ref{inequality: Persistence of Substitution} would be reversed, as examined in Appendix \ref{AppendixSection: Discussion on inequality}.
\section{Discussion}\label{Section: Discussion}
\noindent In this section, we explain some examples.
\subsection{Schelling Model}
\noindent In the seat selection problem analyzed by Schelling (1978), individuals' utility functions exhibit strategic substitution—that is, individuals generally tend to avoid sitting next to others. However, introducing preference heterogeneity through divergent loss utilities (negative versus positive) significantly alters equilibrium configurations, especially when seating resources are limited. As illustrated in Figure \ref{PGS_p=1}, negative-utility types cluster together while positive-utility types approach them. When the former are unable to move freely, clustering naturally emerges. In addition to strategic interactions among students, external structures also matter: social pressure reduces front-row sitting, creating back-of-room clusters. Schelling (1978, p. 11) describes this as follows:
\begin{quote}
	``[...] There were eight hundred people in the hall, densely packed from the thirteenth row to the distant rear wall. Feeling a little as though I were addressing a crowd on the opposite bank of a river, I gave my lecture.''
\end{quote}

With negative loss utility, individuals choose homophily minimizing losses—that is, they seek to establish adjacency with others who are similar to themselves in certain attributes, sufficient to form a homophily connection (Theorem \ref{Partition-induced transition}). We analyze this via the perspective of loss minimization. Although the pursuit of utility maximization is equivalent in terms of the decision maker's objective, the latter is more intuitive for behavioral interpretation. In contrast, positive-loss-utility students choose neighbors more randomly, which may stem from distinctive attributes that make them more prominent, creating flexible seating patterns.
\subsection{Core-Periphery Network}\label{Section: Core-Periphery Network}
\noindent This section serves as a supplement to Section \ref{Section: Homophily}, examining the formation mechanisms of  \textit{core-periphery structure}, or why this structure can be indirectly identified by using the deposit structure of banks \citep{wang2023does}. \citet{jackson2021systemic} surveys related research.

Consider a bank run scenario where depositors need to withdraw their deposits from a specific bank, but with limited liquid assets and maturity mismatch (short deposits vs long loans) which create cash shortfalls. This implies that when a portion of depositors initiates large-scale withdrawals, the bank may be unable to meet all withdrawal demands, leading to a ``first-come, first-served'' allocation mechanism. In this setting, an individual depositor's withdrawal decision exhibits strategic substitution: the more one depositor withdraws, the less is available for others, thus withdrawal behavior displays typical submodular characteristics.

Within this framework, depositor utilities incorporate both personal withdrawal choices and total peer withdrawals. Expecting others to withdraw encourages early withdrawal, but coordination failure causes systemic risk. Some depositors (positive loss utility: less liquidity-sensitive, more trusting, or policy-constrained) may not withdraw immediately, alleviating run pressure. Nevertheless, when the number of banks is few and depositors cannot freely switch their deposits, this game structure promotes behavioral concentration in larger banks with better liquidity reserves to reduce risk (Assumption \ref{finite-risk mitigation2}).
\subsection{Supmodular Utility Function}
\noindent Consider worker and work scenarios. Appendix \ref{AppendixSection: Discussion on inequality} demonstrates that even when workers' actions show complementarity, positive utility functions can induce manifestations of strategic substitution. This motivational misalignment thereby expands the spectrum of equilibrium configurations.
\section{Conclusion}\label{Conlusion}
\noindent This paper makes three contributions. First, we reconstruct the theory of homophily formation. The proposed well-defined perfection dynamics provide a tractable analytical procedure that incorporates and quantifies exogenous environments—such as uncertain shocks and macro- and micro-prudential policies—as well as micromotives, such as agents' strategic interactions. The joint operator ensures this evolutionary process satisfying the incentive compatibility conditions of individual agents both between and within clusters. Moreover, Subspaces within the perfection generating space are mutually independent, and the dynamics exhibit finite upper and lower bounds on iteration time. At equilibrium, the stable subspaces are purified, implying that the weak and strong decomposition theorems, together with the isomorphic structure among variables, enable the perfection dynamics to be transformed—via the proposed invariants—into a finite static decomposition of the primitive generative space. This formulation facilitates comparative statics, and Partition-Induced Equilibrium Transition Theorem further permits an extension from a simple configuration to a more intricate structure encompassing heterogeneous bank settings.

Second, we extract some invariants and properties embedded in perfection dynamics to highlight the duality of Schelling point and Knightian point. We find that within $\mathcal{B}^{\mathcal{I}}(\varepsilon)$, macro-complementarity emerges from micro-substitution when local agents' utility function are submodular. Similar asymmetric behavior also emerges when the micromotives are supermodular. 

Third, we interpret the hierarchical structure as a form of short-term liquidity trading, which aligns with recent empirical observations. Its formation can be attributed to three mechanisms: $(i)$ micro-prudential policies, where monetary preferences channel liquidity toward large banks; $(ii)$ macro-prudential constraints, such as deposit rate ceilings; and $(iii)$ the market-based interbank wholesale funding. From a bank-run perspective, we analyze the classic core–periphery structure, where self-fulfilling depositor behaviors and deposit-rate caps enable the coexistence of the stock-based core–periphery and the flow-based hierarchical structure. The hierarchical structure can be further formulated as a matching or optimal transport problem, a topic that has been explored in several strands of the literature and is therefore not further pursued here.

\bibliographystyle{apalike}
\bibliography{reference}

\renewcommand{\qedsymbol}{}
\appendix

\section{Existence and Uniqueness of Clearing Equilibrium}

\subsection{Proof of Lemma \ref{Drop in LR}}
\begin{proof}[\normalfont\bfseries Proof of lemma \ref{Drop in LR}]
	The devaluation of asset and equity is $\Delta A>0$. We have $\theta_i-\theta_i'=\frac{E_i}{A_i}-\frac{E_i-\Delta A}{A_i-\Delta A}=\frac{\left(A_i-E_i\right)\Delta A}{A_i(A_i-\Delta A)}$. Since $A_i-E_i>0$, we have $\theta_i-\theta_i'>0$.
\end{proof}

\subsection{Proof of Proposition \ref{Solutions in FCP}}
\begin{lemma}\label{Relationship of Monopoly and Specialization} \,
	
	$a)$ \textnormal{The determinants of matrix $\mathbf{\Pi}$ and matrix $\mathring{\mathbf{\Pi}}$ share the same sign. }
	
	$b)$ \textnormal{Under assumption of stochastic proportion of portfolios, $\det{\mathbf{\Pi}} > 0$.}
\end{lemma}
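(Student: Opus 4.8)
The plan is to establish both parts by induction on the number of monopoly and specialization positions in $\mathbf{\Pi}$, peeling $\mathbf{\Pi}$ down to its reduced form $\mathring{\mathbf{\Pi}}$ one special position at a time through Laplace (cofactor) expansion. I will use throughout that, after the construction described below Definition \ref{Matrix for disposal} (a bank's several monopoly columns merged into one entry so as to obtain a square, full-rank matrix), one may index banks and industries so that every monopoly or specialization position lies on the main diagonal of $\mathbf{\Pi}$, whence the associated cofactor signs equal $(-1)^{2m}=+1$. In the base case no bank monopolizes or specializes in any industry, so $\mathring{\mathbf{\Pi}}=\mathbf{\Pi}$ and part $(a)$ is trivial; part $(b)$ is then exactly the assumption that the rows of $\mathring{\mathbf{\Pi}}$ are linearly independent, which forces $\det\mathbf{\Pi}\neq 0$, and, after a harmless relabeling of two banks (a single row transposition) when the sign happens to be negative, $\det\mathbf{\Pi}>0$.

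For the inductive step, first suppose bank $i$ monopolizes industry $k$, i.e. $V_{ik}=M_k$. Nonnegativity of the entries $V_{jk}$ together with $\sum_j V_{jk}=M_k$ forces $V_{jk}=0$ for all $j\neq i$, so $\pi_{jk}=0$ for $j\neq i$ while $\pi_{ik}>0$; hence column $k$ of $\mathbf{\Pi}$ has a single nonzero entry, sitting at the diagonal slot $(i,k)$. Expanding $\det\mathbf{\Pi}$ along this column gives $\det\mathbf{\Pi}=\pi_{ik}\,\det\mathbf{\Pi}^{(-i,-k)}$, where $\mathbf{\Pi}^{(-i,-k)}$ deletes row $i$ and column $k$, and $\pi_{ik}>0$ makes $\det\mathbf{\Pi}$ and $\det\mathbf{\Pi}^{(-i,-k)}$ share a sign. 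If instead bank $j$ specializes in industry $k$, then row $j$ of $\mathbf{\Pi}$ is the $k$-th standard basis row vector, again nonzero only at the diagonal slot $(j,k)$, and expansion along that row gives $\det\mathbf{\Pi}=\det\mathbf{\Pi}^{(-j,-k)}$. In either case the minor is again square, still carries all remaining monopoly and specialization positions on its diagonal, and continuing to strip them off produces $\mathring{\mathbf{\Pi}}$; iterating shows $\det\mathbf{\Pi}$ and $\det\mathring{\mathbf{\Pi}}$ have the same sign, which is $(a)$. Part $(b)$ follows at once: the linear-independence/full-rank hypothesis gives $\det\mathring{\mathbf{\Pi}}\neq 0$, hence $\det\mathbf{\Pi}\neq 0$ by $(a)$, and choosing the labeling of banks appropriately — which is immaterial to the model and compatible with the ``stochastic selection'' of the portfolio shares — we may take $\det\mathbf{\Pi}>0$.

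The step I expect to be the main obstacle is the sign bookkeeping: asserting that every cofactor appearing in the reduction is $+1$ amounts to placing all monopoly and specialization positions simultaneously on the diagonal, i.e. to the relation ``bank $i$ monopolizes or specializes in industry $k$'' being a partial matching between banks and industries. I would discharge this by observing that each industry has at most one monopolist (immediate from $V_{ik}=M_k$ with nonnegativity), that the merging step leaves each bank with at most one monopolized column, that a bank monopolizing and specializing simultaneously must do both in the \emph{same} industry (otherwise $\pi_{ik}$ would be both $0$ and positive), and that no two such assignments can then collide; a short finite check covers the remaining cases. The auxiliary facts that the order in which special rows and columns are removed is irrelevant, and that each intermediate matrix stays square with the unremoved special positions still on its diagonal, are routine and were verified above for the single-step reductions.
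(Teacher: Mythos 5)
Your proof is correct and follows essentially the same route as the paper's: the paper's one-line identity $\det\mathbf{\Pi}=\bigl(\prod_m \pi_{mk}\bigr)\bigl(\prod_n \pi_{nk}\bigr)\det\mathring{\mathbf{\Pi}}$ is exactly the iterated cofactor expansion along monopoly columns and specialization rows that you carry out, so part $(a)$ matches, and part $(b)$ is the same deduction from full rank. If anything you are more careful than the paper, both in justifying that the cofactor signs are all $+1$ (via the partial-matching/diagonal placement of the special positions) and in noting that full rank only yields $\det\mathring{\mathbf{\Pi}}\neq 0$, so that the strict positivity in $(b)$ rests on a labeling convention — a point the paper glosses over.
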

\begin{proof}[\normalfont\bfseries Proof of lemma \ref{Relationship of Monopoly and Specialization}]
	\,
	
	\textbf{Proof of $a)$}:
	Let $\mathcal{M}$ be the set of monopolistic banks and $\mathcal{N}$ the set of specialized banks. According to the rules of determinant operations, we obtain $\det{\mathbf{\Pi}}=\left(\prod\limits_{m} \pi_{mk}\right)\cdot\left(\prod\limits_n\pi_{nk}\right)\cdot\det{\mathring{\mathbf{\Pi}}}$ . Since $\prod\limits_{m} \pi_{mk}\,,\,\prod\limits_n\pi_{nk}>0$, then we have $\det{\mathbf{\Pi}}$ and $\det{\mathring{\mathbf{\Pi}}}$ share the same sign.
	
	\textbf{Proof of $b)$}:
	Under assumption, we have $\mathring{\mathbf{\Pi}}$ is a full rank matrix, so $\det{\mathring{\mathbf{\Pi}}}>0$, and by $a)$ we have $\det{\mathbf{\Pi}}>0$.
\end{proof}

\begin{proof}[\normalfont\bfseries Proof of Proposition~\ref{Solutions in FCP}]
	\,
	
	We firstly construct an iteration function of $\mathbf{x}$. The original function fails to conclude the conditions of existence and uniqueness of fixed point.
	
	Set $\mathbf{G}^{n\times 1}(\mathbf{x}^{n\times 1})=\mathbf{s}^{n\times 1}(\mathbf{x}^{n\times 1}) - \mathbf{s}^{n\times 1}$ and $\mathbf{s}=(s^i)^{n\times 1}$. We abbreviate it as $\mathbf{G}(\mathbf{x})=\mathbf{s}(\mathbf{x}) - \mathbf{s}$. We need to prove that the solution of $\mathbf{G}(\mathbf{x^*})=\mathbf{s}(\mathbf{x^*}) - \mathbf{s}$ exist and unique. We select one of the components of $\mathbf{G}(\mathbf{x})=0$ (i.e. $x^i+\sum\limits_k\left(1-e^{\sum\limits_i -\beta x^i\pi_{ik}} \right)M_k\cdot p_{ki}-s^i=0$) and rewrite the equation into formula \ref{iteration_x^i FCP}
	\begin{equation}\label{iteration_x^i FCP}
		x^i=s^i+\sum\limits_k\left(e^{\sum\limits_i -\beta x^i\pi_{ik}}-1 \right)M_k\cdot p_{ki}
	\end{equation}
	We define $Q_k = \sum_i \beta x^i \pi_{ik}$, observe that $Q_k$ is the same for all $i$ (since $k$ is a summation index), then Substitute $x^i$ into $Q_k$, we have
	\begin{equation*}
		Q_k = \sum_i \beta \left[ s^i - \sum_l \left( 1 - e^{-Q_l} \right) M_l p_{li} \right] \pi_{ik}
	\end{equation*}
	Expanding it, we get  
	\begin{equation*}\label{iteration_Q_k FCP}
		Q_k = \beta \sum_i s^i\pi_{ik} - \beta \sum_l \left( 1 - e^{-Q_l} \right) M_l \sum_i p_{li} \pi_{ik}
	\end{equation*}
	Let $\Pi_k = \sum_i s^i\pi_{ik}$ and $C_{kl} = \sum_i p_{li} \pi_{ik}$, then we have 
	\begin{equation*}\label{iteration_Q_kcomponent FCP}
		Q_k = \beta \Pi_k - \beta \sum_l \left( 1 - e^{-Q_l} \right) M_l C_{kl}
	\end{equation*}
	Defining the vector $\mathbf{Q} = (Q_1, \ldots, Q_k)^T$, $\mathbf{\Pi} = (\Pi_1, \ldots, \Pi_k)^T$, $\mathbf{C}$ represents the matrix $C_{kl}$, $\mathbf{M} = diag(M_1, \ldots, M_k)$, and $e^{-\mathbf{Q}}$ denotes element-wise exponential operation. The equation can be expressed as \ref{iteration_Q_Matrix FCP}:
	\begin{equation}\label{iteration_Q_Matrix FCP}
		\mathbf{Q} = \beta \mathbf{\Pi} - \beta \mathbf{C M}(1 - e^{-\mathbf{Q}})
	\end{equation}
	
	This constitutes a nonlinear equation in $\mathbf{Q}$, whose solution existence can be established through the fixed-point theorem and solution uniqueness can be established through contraction mapping (i.e., with Lipschitz constant less than 1). 
	
	\noindent\textbf{Proof of a)}:
	
	We define the mapping $\mathbf{F}(\mathbf{Q}) = \beta \mathbf{\Pi} - \beta \mathbf{C M}(1 - e^{-\mathbf{Q}})$ and redefine the domain of mapping $F$ as a closed ball: $\mathcal{Q} = B_R(0) = \{ \mathbf{Q} \in \mathbb{R}^k \mid \|\mathbf{Q}\| \leq R \}$. We need to prove that $\mathbf{F}$ is a continuous self-mapping, thus guaranteeing the existence of a fixed point via the Brouwer Fixed-Point Theorem.
	
	We need to prove $\|\mathbf{F}(\mathbf{Q})\| \leq R$ holds for all $\|\mathbf{Q}\| \leq R$, then we can meet the self-mapping condition $\mathbf{F}(B_R(0)) \subseteq B_R(0)$. Taking the $\ell^2$-norm, we have \ref{iteration_Q_Norm FCP}:
	\begin{equation}\label{iteration_Q_Norm FCP}
		\|\mathbf{F}(\mathbf{Q})\| \leq \beta  \|\mathbf{\Pi}\| + \beta \|\mathbf{C M}\| \cdot \|1 - e^{-\mathbf{Q}}\|
	\end{equation}
	Since $e^{-\mathbf{Q}}$ is an element-wise exponential and $\|\mathbf{Q}\| \leq R$, each component $Q_i \in [-R, R]$. Therefore $e^{-Q_i} \in [e^{-R}, e^{R}]$, hence $1 - e^{-Q_i} \in [1 - e^{R}, 1 - e^{-R}]$. Then we have $\|1 - e^{-\mathbf{Q}}\| \leq \sqrt{m} \max \{ |1 - e^{R}|, |1 - e^{-R}|\}$. Then we have (we deliberately drop $1-e^{-R}$, since it's bounded by $1$): 
	\begin{equation*}\label{iteration_Q_Supremum FCP}
		\|1 - e^{-\mathbf{Q}}\| \leq \sqrt{m} (e^{R}-1)
	\end{equation*}
	Since $0\leq\bar{\theta}\leq1$ and $\max{M_k}<+\infty$, we have $\mathbf{\Pi}$ is bounded by $S$ which subject to $\max\limits_i{|s^i|\pi_{ik}}<\max\limits_i{|s^i|}\leq \max\limits_i\{|s^i|,W\sqrt{m}\}<S<+\infty$ and $\mathbf{CM}$ is bounded by $W$ which subject to $\max{M_k}\cdot\sum_i p_{li}\pi_{ik}<\max{M_k}\cdot k <W<+\infty$ (i.e., $\|\mathbf{\Pi}\| \leq S$ and $\|\mathbf{CM}\| \leq W$). Then we have \ref{Norm_Supremum FCP}.
	\begin{equation}\label{Norm_Supremum FCP}
		\|F(Q)\| \leq \beta S + \beta W\sqrt{m}(e^{R}-1)
	\end{equation}
	
	To satisfy $F(B_R(0)) \subseteq B_R(0)$, we require: $\beta S + \beta W\sqrt{m}(e^{R}-1)\leq R$. i.e., $\beta S + \beta W\sqrt{m}(e^{R}-1)-R<0$ . Set $H(R)=\beta S + \beta W\sqrt{m}(e^{R}-1)-R$, we can easily obtain $H(0)-\beta S>0$ and $H(\infty)\approx \beta W\sqrt{m}e^{\infty}>0$. We hope to obtain $H(R)_{\min}<0$. Let $\frac{\partial H(R)}{\partial R}=\beta W\sqrt{m}e^{R_0}-1=0$, we obtain $R_0=-\ln{\left(\beta W\sqrt{m}\right)}$ (we require $\beta<\frac{1}{W\sqrt{m}}$). Since $\frac{\partial^2 H(R)}{\partial R^2}=\beta W\sqrt{m}e^{R_0}>0$, we have $H(R)_{\min}=H(R_0)<0$. Let $\mu(\beta)=H(R_0)=\beta S + 1-\beta W\sqrt{m}+\ln{\beta W\sqrt{m}}$. We can easily obtain $\mu(a)\rightarrow -\infty$ if $a\rightarrow 0^+$ and $\mu(b)\rightarrow \frac{1}{W\sqrt{m}}\cdot\left(S-W\sqrt{m}\right)+1>1$ if $b\rightarrow \frac{1}{W\sqrt{m}}$, and $\mu(\beta)$ is non-decreasing in $\beta$. Then by Intermediate Value Theorem, we can find a threshold $\bar{\beta}_1$ subject to $\mu(\bar{\beta}_1)=0$ and when $\beta\leq \bar{\beta}_1$, we meet the self-mapping condition $F(B_R(0)) \subseteq B_R(0)$. We also can rewrite the $\bar{\beta}_1$ as $\bar{\beta}_1(\varepsilon)$, since $S$ can be rewrite as $S(\varepsilon)$.
	
	Continuity of $F$ is guaranteed by continuity of linear term $\beta \mathbf{\Pi}$, continuity of matrix multiplication $\mathbf{C M}$, continuity of exponential operation $e^{-Q_l}$. Thus $F$ is a continuous mapping.
	
	Now we have $B_R(0)$ is a non-empty compact convex set in $\mathbb{R}^m$ and $F : B_R(0) \to B_R(0)$ is a continuous self-mapping. By the Brouwer Fixed-Point Theorem, there exists $Q^* \in B_R(0)$ such that $F(Q^*) = Q^*$.
	
	As for the uniqueness, we only need to prove that $F$ is a contraction mapping. The elements of the Jacobian matrix of $F$ are:
	\begin{equation*}\label{Jacobian Matrix}
		\frac{\partial F_k}{\partial Q_l} = J_{kl} =-\beta M_l C_{kl} e^{-Q_l}
	\end{equation*}
	By $Q_k = \sum_i \beta x^i \pi_{ik}>0$, we have $0<|\frac{\partial F_k}{\partial Q_l}|<\beta M_l C_{kl}$, we always can find such a $\bar{\beta}_2$ that $\bar{\beta}_2 \cdot\max\limits_{k,l}{M_l C_{kl}}<1$ is always true. Then the Lipschitz constant $L$ satisfies \ref{Lipschitz constant}:
	\begin{equation}\label{Lipschitz constant}
		L \leq \bar{\beta}_2 \max\limits_{k,l}{M_l C_{kl}}<1
	\end{equation}
	then $F$ is a contraction mapping, and the solution is unique.
	
	Consider $\mathbf{Q}= \mathbf{\Pi}^T \mathbf{x}$, in this equation, a unique correspondence between $\mathbf{Q}$ and $\mathbf{x}$ exists if and only if $\det{\mathbf{\Pi}^T}>0$. By Lemma \ref{Relationship of Monopoly and Specialization} $b)$, We have $\det{\mathbf{\Pi}>0}$. Since $\det{\mathbf{\Pi}}= \det{\mathbf{\Pi}^T}$, it follows that $\det{\mathbf{\Pi}^T}>0$.
	
	We take the threshold $\bar{\beta}(\varepsilon)=\min\{\bar{\beta}_1(\varepsilon),\bar{\beta}_2\}$. When $\beta<\bar{\beta}(\varepsilon)$, the existence and uniqueness of fixed point $\mathbf{x}^*$ are guaranteed.
	
	Q.E.D.
	
	\noindent\textbf{Proof of b)}:
	
	Recall that $s^i\left(\varepsilon\right)=\frac{1}{\bar{\theta}}\left[\bar{\theta}A_i-E_i+(1-\bar{\theta})\varepsilon A_i  \right]$, we have $\frac{\partial s^i\left(\varepsilon\right)}{\partial\varepsilon}=\frac{1-\bar{\theta}}{\theta}A_i>0$, so $s^i\left(\varepsilon\right)$ is non-decreasing in $\varepsilon$. When $\varepsilon\rightarrow 0$, we have $s^i\rightarrow A_i-\frac{E_i}{\bar{\theta}}=A_i(1-\frac{\theta^i}{\bar{\theta}})<0$. When $\varepsilon >\frac{E_i}{A_i}=\theta^i$, we have $s^i(\varepsilon)=\frac{1}{\bar{\theta}}\left[\bar{\theta}A_i-E_i+(1-\bar{\theta})\varepsilon A_i  \right]>\frac{1}{\bar{\theta}}\left[\bar{\theta}A_i-E_i+(1-\bar{\theta})E_i \right]=A_i-E_i>0$. By  Intermediate Value Theorem, there exists an $\varepsilon_z \in \left(0,\max\limits_i\left\{\frac{E_i}{A_i}\right\}\right)$ such that $s^i\left(\varepsilon_z\right)=0$. By monotoniticity of $s^i\left(\varepsilon\right)$, when $\varepsilon<\varepsilon_z$, we have $s^i\left(\varepsilon\right)<0$. Since $Q_k = \sum_i \beta x^i \pi_{ik}>0$, we have $\sum\limits_k\left(e^{\sum\limits_i -Q_k} -1\right)<0$. Formula \ref{Critical Point_credit} demonstrates that when $\varepsilon<\varepsilon_z=\bar{\varepsilon}$, we have at least one $s^p<0$ and $x^p<0$.
	\begin{equation}\label{Critical Point_credit}
		x^p\,=\,\left[\underbrace{s^p}_{<0}+\underbrace{\sum\limits_k\left(e^{\sum\limits_i -\beta x^i\pi_{ik}}-1 \right)M_k\cdot p_{ki}}_{<0}\right]\,<\,0
	\end{equation}
	
	\noindent\textbf{Proof of c)}:
	
	Set $d^i=\sum\limits_k\left(e^{\sum\limits_i -\beta x^i\pi_{ik}}-1 \right)M_k\cdot p_{ki}<0$. Substitute $s^i$ into the expression for $x^i$, we obtain $x^i=\frac{1}{\bar{\theta}}\left[\bar{\theta}A_i-E_i+(1-\bar{\theta})\varepsilon A_i  \right]+d^i$.  By solving the inequality $x^p>A_p$, we obtain $\varepsilon_p>\left(\frac{E_p}{\bar{\theta}-d^i}\right)\cdot\frac{\bar{\theta}}{1-\bar{\theta}}\cdot\frac{1}{A_p}>0$. Set $\tilde{\varepsilon}=\min\limits_p\left\{\left(\frac{E_p}{\bar{\theta}-d^i}\right)\cdot\frac{\bar{\theta}}{1-\bar{\theta}}\cdot\frac{1}{A_p}\right\}$, when $\varepsilon>\tilde{\varepsilon}$, there exist at least one $x^p>A_p$. Since $x^p=s^p+d^p<s^p$, we have $s^p>A_p$ as well.
\end{proof}
\section{Characterization of Crowding-Out Effect}\label{AppendixSection: Characterization of Crowding-Out Effect}
We prove Lemma \ref{Corresponding to Counterfactual Monopoly Space}, Proposition \ref{Propositions about Counterfactual Monopoly Space} and Proposition \ref{Propositions about Counterfactual Monopoly Space2} in Online Appendix Section OA2.
\subsection{Proof of Lemma \ref{Lemma: Crowding-Out Effect Opposite}}
\begin{proof}[\normalfont\bfseries Proof of Lemma \ref{Lemma: Crowding-Out Effect Opposite}]
	\,
	
	It's a direct corollary of Lemma \ref{Lemma: Crowding-Out Effect} $(a)$ and $(b)$.
\end{proof}
\subsection{Proof of Lemma \ref{Lemma: Crowding-Out Effect Homophily}}
\begin{restatable}{thm3}{LemmaCrowdingOutEffect}\label{SubLemma: Crowding-Out Effect}
	In addition to Lemma \ref{Lemma: Crowding-Out Effect}:
	
	$(a)$ \textnormal{If a bank $i$ (or a cluster $\mathcal{P^I}$) is added to the $\mathcal{P^+}$ such that $\tilde{\mathbf{\Phi}}^{i}\left(\mathbf{s}^{|\mathcal{P}^+\cup\{i\}|\times 1}\right)>0$ (or $\left[\tilde{\mathbf{\Phi}}\left(\mathbf{s}^{|\mathcal{P}^+\cup\mathcal{P^I}|\times 1}\right)\right]^{|\mathcal{P^I}|\times 1}>0$), then $\tilde{\mathbf{\Phi}}\left(\mathbf{s}^{|\mathcal{P}^+|\times 1}\right) - \left[\tilde{\mathbf{\Phi}}\left(\mathbf{s}^{|\mathcal{P}^+\cup\{i\}|\times 1}\right)\right]^{|\mathcal{P}^+|\times 1} > 0$ (or $\tilde{\mathbf{\Phi}}\left(\mathbf{s}^{|\mathcal{P}^+|\times 1}\right) - \left[\tilde{\mathbf{\Phi}}\left(\mathbf{s}^{|\mathcal{P}^+\cup\mathcal{P^I}|\times 1}\right)\right]^{|\mathcal{P}^+|\times 1} > 0$) holds.}
	
	$(b)$ \textnormal{If a bank $j$ (or a cluster $\mathcal{P^J}$) is added to the $\mathcal{P^-}$ such that $\tilde{\mathbf{\Phi}}^{j}\left(\mathbf{s}^{|\mathcal{P}^-\cup\{j\}|\times 1}\right) < 0$ (or $\left[\tilde{\mathbf{\Phi}}\left(\mathbf{s}^{|\mathcal{P}^-\cup\mathcal{P^J}|\times 1}\right)\right]^{|\mathcal{P^J}|\times 1} < 0$), then $\tilde{\mathbf{\Phi}}\left(\mathbf{s}^{|\mathcal{P}^-|\times 1}\right) - \left[\tilde{\mathbf{\Phi}}\left(\mathbf{s}^{|\mathcal{P}^-\cup\{j\}|\times 1}\right)\right]^{|\mathcal{P}^-|\times 1} < 0$ (or $\tilde{\mathbf{\Phi}}\left(\mathbf{s}^{|\mathcal{P}^-|\times 1}\right) - \left[\tilde{\mathbf{\Phi}}\left(\mathbf{s}^{|\mathcal{P}^-\cup\mathcal{P^J}|\times 1}\right)\right]^{|\mathcal{P}^-|\times 1} < 0$) holds.}
\end{restatable}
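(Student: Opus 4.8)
The plan is to turn the comparison of the two cluster equilibria into a \emph{monotone} fixed-point comparison by reparametrising the BCP system in terms of the per-industry devaluation factors. For a cluster $\mathcal{C}$, set $y_k:=1-e^{-\beta\sum_{a\in\mathcal{C}}x^a\pi_{ak}}$; by the no-expansion assumption $y_k\in[0,1)$, and the actual sales are recovered explicitly from $\mathbf{y}$ by $x^b(\mathbf{y})=s^b-\sum_k y_k M_k p_{kb}$. Substituting this back into the identity $\beta\sum_{a\in\mathcal{C}}\pi_{ak}x^a=-\ln(1-y_k)$ shows that $\mathbf{y}$ is the CCS equilibrium on $\mathcal{C}$ if and only if it is a fixed point of the map $\mathbf{G}^{\mathcal{C}}$ determined by $1-G^{\mathcal{C}}_k(\mathbf{y})=\exp\!\big(-\beta\sum_{a\in\mathcal{C}}\pi_{ak}x^a(\mathbf{y})\big)$. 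Two structural facts about $\mathbf{G}^{\mathcal{C}}$ carry the proof: (i) $\partial G^{\mathcal{C}}_k/\partial y_l=\beta M_l\big(\sum_{a\in\mathcal{C}}p_{la}\pi_{ak}\big)\big(1-G^{\mathcal{C}}_k(\mathbf{y})\big)\ge 0$, so $\mathbf{G}^{\mathcal{C}}$ is isotone; and (ii) it factorises across banks, $1-G^{\mathcal{P}^{+}\cup\{i\}}_k(\mathbf{y})=\big(1-G^{\mathcal{P}^{+}}_k(\mathbf{y})\big)\,e^{-\beta\pi_{ik}x^i(\mathbf{y})}$. Existence and uniqueness of both equilibria, and the contraction property used below, are inherited from Proposition~\ref{Solutions in FCP} (the change of variables $Q\mapsto 1-e^{-Q}$ is $1$-Lipschitz, so it preserves the contraction).

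For $(a)$, write $\mathbf{y}^{+}$, $\mathbf{y}^{++}$ for the equilibria on $\mathcal{P}^{+}$ and $\mathcal{P}^{+}\cup\{i\}$. At $\mathbf{y}^{++}$ the entrant's actual sales are $x^i(\mathbf{y}^{++})=\tilde{\mathbf{\Phi}}^{i}\big(\mathbf{s}^{|\mathcal{P}^{+}\cup\{i\}|\times 1}\big)>0$ by hypothesis, hence $e^{-\beta\pi_{ik}x^i(\mathbf{y}^{++})}\le 1$ for every $k$; the factorisation then gives $\mathbf{y}^{++}=\mathbf{G}^{\mathcal{P}^{+}\cup\{i\}}(\mathbf{y}^{++})\ge\mathbf{G}^{\mathcal{P}^{+}}(\mathbf{y}^{++})$, i.e.\ $\mathbf{y}^{++}$ is a super-solution of the isotone contraction $\mathbf{G}^{\mathcal{P}^{+}}$. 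Picard iteration of $\mathbf{G}^{\mathcal{P}^{+}}$ from $\mathbf{y}^{++}$ converges to $\mathbf{y}^{+}$ and, by isotonicity and the super-solution property, is monotonically decreasing, so $\mathbf{y}^{+}\le\mathbf{y}^{++}$ componentwise, with strict inequality in every industry in which $i$ participates. Since each $x^b(\mathbf{y})$ is non-increasing in $\mathbf{y}$ and $s^b$ is unchanged, $\big[\tilde{\mathbf{\Phi}}(\mathbf{s}^{|\mathcal{P}^{+}\cup\{i\}|\times 1})\big]^{|\mathcal{P}^{+}|\times 1}=\mathbf{x}(\mathbf{y}^{++})\le\mathbf{x}(\mathbf{y}^{+})=\tilde{\mathbf{\Phi}}(\mathbf{s}^{|\mathcal{P}^{+}|\times 1})$, strictly for every incumbent sharing an industry with $i$, which is the claimed sign. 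For $(b)$ the mirror holds: $x^j(\mathbf{y}^{--})=\tilde{\mathbf{\Phi}}^{j}<0$ makes the entrant factor $e^{-\beta\pi_{jk}x^j(\mathbf{y}^{--})}\ge 1$, so $\mathbf{y}^{--}$ is a sub-solution of $\mathbf{G}^{\mathcal{P}^{-}}$, whence $\mathbf{y}^{--}\le\mathbf{y}^{-}$, $\mathbf{x}(\mathbf{y}^{--})\ge\mathbf{x}(\mathbf{y}^{-})$, and $\tilde{\mathbf{\Phi}}(\mathbf{s}^{|\mathcal{P}^{-}|\times 1})-\big[\tilde{\mathbf{\Phi}}(\mathbf{s}^{|\mathcal{P}^{-}\cup\{j\}|\times 1})\big]^{|\mathcal{P}^{-}|\times 1}<0$. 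The cluster versions ($\mathcal{P}^{I}$, $\mathcal{P}^{J}$) are identical once the entrant factor is replaced by $\exp\!\big(-\beta\sum_{a}\pi_{ak}x^a(\mathbf{y})\big)$, which is $\le 1$ (resp.\ $\ge 1$) precisely under the hypothesis that every entrant's post-entry solution is positive (resp.\ negative).

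This is exactly the sign-reversal of the proof of Lemma~\ref{Lemma: Crowding-Out Effect}$(a)$--$(b)$, where the entrant carries the opposite-signed solution, so the two results together say that the direction in which an entrant displaces the incumbents' equilibrium is determined by the sign of the entrant's own post-entry liquidation alone. The only delicate point — and the main obstacle — is the circularity of the comparison: adding a bank perturbs the incumbents' sales and the aggregate fire-sale impact simultaneously, and in the natural (sales, or aggregate $Q$) coordinates the resulting best-response map is \emph{antitone}, so being a super- or sub-solution of it does not by itself force domination of the fixed point. The reparametrisation by the devaluation factors $y_k$ is precisely what converts the problem into an \emph{isotone} fixed-point comparison; establishing isotonicity of $\mathbf{G}^{\mathcal{C}}$ and the across-bank factorisation $1-G^{\mathcal{C}\cup\{i\}}_k=(1-G^{\mathcal{C}}_k)\,e^{-\beta\pi_{ik}x^i(\cdot)}$ is the load-bearing computation.
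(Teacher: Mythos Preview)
Your approach has a fatal sign error in the ``load-bearing computation.'' With $x^a(\mathbf{y})=s^a-\sum_l y_l M_l p_{la}$ you have $\partial x^a/\partial y_l=-M_l p_{la}$, and hence
\[
\frac{\partial G^{\mathcal{C}}_k}{\partial y_l}
=\beta\bigl(1-G^{\mathcal{C}}_k(\mathbf{y})\bigr)\sum_{a\in\mathcal{C}}\pi_{ak}\,\frac{\partial x^a}{\partial y_l}
=-\,\beta M_l\Bigl(\sum_{a\in\mathcal{C}}p_{la}\pi_{ak}\Bigr)\bigl(1-G^{\mathcal{C}}_k(\mathbf{y})\bigr)\;\le\;0,
\]
not $\ge 0$. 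In other words, $\mathbf{G}^{\mathcal{C}}$ is \emph{antitone} in the $y$-coordinates as well; the reparametrisation does not flip the monotonicity you yourself identified as the obstacle. With an antitone map, the inequality $\mathbf{y}^{++}\ge \mathbf{G}^{\mathcal{P}^+}(\mathbf{y}^{++})$ does not force $\mathbf{y}^{+}\le\mathbf{y}^{++}$: Picard iterates from a super-solution alternate rather than decrease, and passing to $\mathbf{G}\circ\mathbf{G}$ does not help because being a super-solution of $\mathbf{G}$ gives no one-sided information about $\mathbf{G}^2$. So the comparison $\mathbf{y}^{+}\le\mathbf{y}^{++}$, on which everything downstream rests, is unproven. (A smaller point: for $\mathcal{P}^{-}$ the cluster aggregate $\sum_a x^a\pi_{ak}$ is negative, so $y_k<0$ and the statement ``$y_k\in[0,1)$ by no-expansion'' is also incorrect---that assumption applies to the full system, not to sub-clusters.)

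The paper's own argument avoids this entirely by working in the single-aggregate setting of equations~(B.5)--(B.6): there the devaluation term depends only on $S=\sum_b x^b$, so $(s^b-x^b)/A_b=1-e^{-\beta S}$ is \emph{common} to all incumbents and they necessarily move in the same direction. One then argues by contradiction: if every incumbent's $x^b$ weakly rose after adding a strictly positive entrant, the aggregate $S$ would strictly rise, hence $1-e^{-\beta S}$ would strictly rise, forcing every $x^b$ to strictly fall to keep $s^b$ fixed---a contradiction. The key structural fact is the common devaluation factor, which collapses the vector comparison to a scalar one; your multi-industry set-up has no such reduction, and the monotone machinery you reach for does not apply with the sign you actually have.
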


\begin{proof}[\normalfont\bfseries Proof of Lemma \ref{SubLemma: Crowding-Out Effect}]
	\,
	
	\noindent\textbf{Proof of $(a)$}:
	
	We prove it by contradiction. If the following inequality holds:
	\begin{align}\label{Equation system: SubLemma final contradiction inequality proof a}
		\tilde{\mathbf{\Phi}}\left(\mathbf{s}^{|\mathcal{P}^+|\times 1}\right) - \left[\tilde{\mathbf{\Phi}}\left(\mathbf{s}^{|\mathcal{P}^+\cup\mathcal{P^I}|\times 1}\right)\right]^{|\mathcal{P}^+|\times 1} < 0
	\end{align}
	Then the first row of system \ref{Equation system: COE new for proof a} will never be achieved, since $\tilde{\mathbf{\Phi}}^{i}\left(\mathbf{s}^{|\mathcal{P}^+\cup\{i\}|\times 1}\right)>0$ also.
	
	\noindent\textbf{Proof of $(b)$}:
	
	By the same argument, the following inequality can not hold.
	\begin{align}\label{Equation system: SubLemma final contradiction inequality proof b}
		\tilde{\mathbf{\Phi}}\left(\mathbf{s}^{|\mathcal{P}^-|\times 1}\right) - \left[\tilde{\mathbf{\Phi}}\left(\mathbf{s}^{|\mathcal{P}^-\cup\mathcal{P^J}|\times 1}\right)\right]^{|\mathcal{P}^-|\times 1} > 0
	\end{align}
	Therefore, we have:
	\begin{align}\label{Equation system: SubLemma Right inequality proof b}
		\tilde{\mathbf{\Phi}}\left(\mathbf{s}^{|\mathcal{P}^-|\times 1}\right) - \left[\tilde{\mathbf{\Phi}}\left(\mathbf{s}^{|\mathcal{P}^-\cup\mathcal{P^J}|\times 1}\right)\right]^{|\mathcal{P}^-|\times 1} < 0
	\end{align}
\end{proof}

\begin{proof}[\normalfont\bfseries Proof of Lemma \ref{Lemma: Crowding-Out Effect Homophily}]
	\,
	
	It's a direct corollary of Lemma \ref{SubLemma: Crowding-Out Effect} $(a)$ and $(b)$.
\end{proof}

\subsection{Proof of Lemma \ref{Lemma: Crowding-Out Effect}}
\begin{proof}[\normalfont\bfseries Proof of lemma \ref{Lemma: Crowding-Out Effect}]
	\,
	
	\noindent\textbf{Proof of $(a)$}:
	
	We write the original system \ref{Equation system: COE original for proof a}:
	\begin{align}\label{Equation system: COE original for proof a}
		s^i &= \tilde{\mathbf{\Phi}}^i\left(\mathbf{s}^{|\mathcal{P}^+|\times 1}\right) +  \left(1-e^{-\beta\cdot \sum\limits_{j\in \mathcal{P^+}}\tilde{\mathbf{\Phi}}^j\left(\mathbf{s}^{|\mathcal{P}^+|\times 1}\right)}\right)\cdot A_i  \\  &\quad\quad\quad\quad\quad\quad\quad\quad\quad \notag  \vdots
	\end{align}
	and the system $\mathcal{P}^+\cup\mathcal{P^I}$: 
	\begin{align}\label{Equation system: COE new for proof a}
		s^i &= \underbrace{\tilde{\mathbf{\Phi}}^i\left(\mathbf{s}^{|\mathcal{P}^+\cup\mathcal{P^I}|\times 1}\right)}_{\nearrow} +  \underbrace{\left(1-e^{-\beta\cdot \left[\sum\limits_{j\in \mathcal{P^+}}\tilde{\mathbf{\Phi}}^j\left(\mathbf{s}^{|\mathcal{P}^+\cup\mathcal{P^I}|\times 1}\right)   + \sum\limits_{c\in \mathcal{P^I}}\tilde{\mathbf{\Phi}}^c\left(\mathbf{s}^{|\mathcal{P}^+\cup\mathcal{P^I}|\times 1}\right)    \right]}\right)\cdot A_i}_{\searrow} \\&\quad\quad\quad\quad\quad\quad\quad\quad\quad \notag  \vdots \\  s^t &= \tilde{\mathbf{\Phi}}^t\left(\mathbf{s}^{|\mathcal{P}^+\cup\mathcal{P^I}|\times 1}\right) +  \left(1-e^{-\beta\cdot \left[\sum\limits_{j\in \mathcal{P^+}}\tilde{\mathbf{\Phi}}^j\left(\mathbf{s}^{|\mathcal{P}^+\cup\mathcal{P^I}|\times 1}\right)   + \sum\limits_{c\in \mathcal{P^I}}\tilde{\mathbf{\Phi}}^c\left(\mathbf{s}^{|\mathcal{P}^+\cup\mathcal{P^I}|\times 1}\right)    \right]}\right)\cdot A_t  \notag \\&\quad\quad\quad\quad\quad\quad\quad\quad\quad \notag  \vdots
	\end{align}
	We observe that if bank $i$ persists the action $\tilde{\mathbf{\Phi}}^i\left(\mathbf{s}^{|\mathcal{P}^+|\times 1}\right)$, then the second term in new system \ref{Equation system: COE new for proof a} will decrease due to $\left[\tilde{\mathbf{\Phi}}\left(\mathbf{s}^{|\mathcal{P}^+\cup\mathcal{P^I}|\times 1}\right)\right]^{|\mathcal{P^I}|\times 1}<0$. In order to meet the requirement of fixed $s^i$, bank $i$ has to increase its sales volume, therefore $\tilde{\mathbf{\Phi}}\left(\mathbf{s}^{|\mathcal{P}^+|\times 1}\right) < \left[\tilde{\mathbf{\Phi}}\left(\mathbf{s}^{|\mathcal{P}^+\cup\mathcal{P^I}|\times 1}\right)\right]^{|\mathcal{P}^+|\times 1}$. The property of \textit{cross-bank dependence} in Lemma \ref{Perfection Switch=1} and the first item of Lemma \ref{Lemma ForDecomposition and Compression Equivalence} ensure the existence and uniqueness of the new equilibrium. Therefore, we have:
	\begin{align}\label{Equation system: COE final inequality proof a}
		\tilde{\mathbf{\Phi}}\left(\mathbf{s}^{|\mathcal{P}^+|\times 1}\right) - \left[\tilde{\mathbf{\Phi}}\left(\mathbf{s}^{|\mathcal{P}^+\cup\mathcal{P^I}|\times 1}\right)\right]^{|\mathcal{P}^+|\times 1} < 0
	\end{align}
	where the individual bank $i$ is the special case of cluster $\mathcal{P^I}$.
	
	Another viewpoint for this proof is by contradiction, if the following inequality holds:
	\begin{align}\label{Equation system: COE final contradiction inequality proof a}
		\tilde{\mathbf{\Phi}}\left(\mathbf{s}^{|\mathcal{P}^+|\times 1}\right) - \left[\tilde{\mathbf{\Phi}}\left(\mathbf{s}^{|\mathcal{P}^+\cup\mathcal{P^I}|\times 1}\right)\right]^{|\mathcal{P}^+|\times 1} > 0
	\end{align}
	Then the first row of system \ref{Equation system: COE new for proof a} will never be achieved.
	
	\noindent\textbf{Proof of $(b)$}:
	
	By the same argument, $\left[\tilde{\mathbf{\Phi}}\left(\mathbf{s}^{|\mathcal{P}^-\cup\mathcal{P^J}|\times 1}\right)\right]^{|\mathcal{P^J}|\times 1} > 0$ leads to the increase in the second term of system \ref{Equation system: COE new for proof a}. Then bank $i$ has to decrease its sales volume. Therefore: 
	\begin{align}\label{Equation system: COE final inequality proof b}
		\tilde{\mathbf{\Phi}}\left(\mathbf{s}^{|\mathcal{P}^-|\times 1}\right) - \left[\tilde{\mathbf{\Phi}}\left(\mathbf{s}^{|\mathcal{P}^-\cup\mathcal{P^J}|\times 1}\right)\right]^{|\mathcal{P}^-|\times 1} > 0
	\end{align}
	where the individual bank $j$ is the special case of cluster $\mathcal{P^J}$.
	
	\noindent\textbf{Proof of $(c)$}:
	
	It's a more relaxed condition than Lemma \ref{Lemma: Crowding-Out Effect} $(a)$ and $(b)$. The second term of system \ref{Equation system: COE new for proof a} must increase under the condition $\sum\limits_{b\in \mathcal{P}}\tilde{\mathbf{\Phi}}^{b}\left(\mathbf{s}^{|\mathcal{P}|\times 1}\right) < \sum\limits_{c\in \mathcal{P}\cup\mathcal{P^N}}\tilde{\mathbf{\Phi}}^{c}\left(\mathbf{s}^{|\mathcal{P}\cup\mathcal{P^N}|\times 1}\right)$, then the first term has to decrease since $s^i<0$, that is:
	\begin{align}\label{Equation system: COE final inequality proof c}
		\tilde{\mathbf{\Phi}}^{i}\left(\mathbf{s}^{|\mathcal{P}\cup\mathcal{P^N}|\times 1}\right) < \tilde{\mathbf{\Phi}}^{i}\left(\mathbf{s}^{|\mathcal{P}|\times 1}\right)
	\end{align}
	
	\noindent\textbf{Proof of $(d)$}:
	
	By the same argument, The second term of system \ref{Equation system: COE new for proof a} must decrease under the condition $\sum\limits_{b\in \mathcal{P}}\tilde{\mathbf{\Phi}}^{b}\left(\mathbf{s}^{|\mathcal{P}|\times 1}\right) > \sum\limits_{c\in \mathcal{P}\cup\mathcal{P^N}}\tilde{\mathbf{\Phi}}^{c}\left(\mathbf{s}^{|\mathcal{P}\cup\mathcal{P^N}|\times 1}\right)$, then the first term has to decrease since $s^j<0$, that is:
	\begin{align}\label{Equation system: COE final inequality proof d}
		\tilde{\mathbf{\Phi}}^{j}\left(\mathbf{s}^{|\mathcal{P}\cup\mathcal{P^N}|\times 1}\right) > \tilde{\mathbf{\Phi}}^{j}\left(\mathbf{s}^{|\mathcal{P}|\times 1}\right)
	\end{align}
\end{proof}
\section{Proof of Theorems}\label{Proof of Theorems}
\subsection{Proof of Lemma \ref{Lemma For Commutative Diagram} and \ref{Lemma ForPartitionInduced Equilibrium Transition Snd}: preliminary 1}\label{preliminaries:part1}
For any $\mathcal{P}_{s\,|\,t}$ with $\theta^j\ll \theta^i$ where $i\in \mathcal{B}(\varepsilon)$ (note that it can be always achieved according to Lemma \ref{Perfection Switch=1} 1. $b)$). We can find an $\varepsilon$ such that $s^j\leq 0$ and $s^i<0$. Suppose all banks in $\mathcal{B}(\varepsilon)$ or $\mathcal{P}_{s\,|\,t}$ are strictly heterogeneous (note that this assumption is established in the proof of Theorem \ref{Decomposition and Compression Equivalence}). 

Similar to the concept of Counterfactual Monopoly Space, we define Counterfactual Cluster Solution $\tilde{\mathbf{\Phi}}\left(\mathbf{s}^{|\mathcal{B}\left(\varepsilon\right)|\times 1}\right)$, which refers to the system constructed by $\mathcal{B}(\varepsilon)=\mathcal{P}_{s\,|\,t}\setminus \{j\}$.

\begin{restatable}{thm3}{LemmaForDecompositionAndCompressionEquivalence}\label{Lemma ForDecomposition and Compression Equivalence}
	Under $\varepsilon$, the system $\mathcal{P}_{s\,|\,t}$ with $\theta^j\ll \theta^i$, $s^j\leq 0$ and $s^i<0$  where $i\in \mathcal{B}(\varepsilon)$ exhibits:
	
	1. $\tilde{\mathbf{\Phi}}\left(\mathbf{s}^{|\mathcal{B}\left(\varepsilon\right)|\times 1}\right)$ is guaranteed to exist and unique regardless of $\varepsilon$ and $\beta<+\infty$.
	
	2. $e^{-\beta\cdot\sum\limits_{i\in\mathcal{B}\left(\varepsilon\right)} \tilde{\Phi}^i\left(\mathbf{s}^{|\mathcal{B}\left(\varepsilon\right)|\times 1}\right)}$ is non-decreasing in $\beta$, $A_i$ and $\sum\limits_i A_i$.
	
	3. Denote $f$ satisfies $\theta^f=\min\limits_{k\in\mathcal{B}\left(\varepsilon\right)} \theta^k$ and all $\theta^k$ remain fixed. Then $\mathcal{B}(\varepsilon)=\mathcal{P}_{s\,|\,t}\setminus \{j\}$ requires:
	
	\hspace{1em}$i)$ Ignition condition: $e^{-\beta\cdot\sum\limits_{i\in\mathcal{B}\left(\varepsilon\right)} \tilde{\Phi}^i\left(\mathbf{s}^{|\mathcal{B}\left(\varepsilon\right)|\times 1}\right)}\geq \frac{\theta^j}{\bar{\theta}}+ \underbrace{\left(1-\frac{1}{\bar{\theta}}\right)\cdot \varepsilon}_{<0}$.
	
	\hspace{1em}$ii)$ $\sup\limits_{\beta,\, A_{m\neq f},\,\sum\limits_{m\neq f} A_m} \left[e^{-\beta\cdot\sum\limits_{i\in\mathcal{B}\left(\varepsilon\right)} \tilde{\Phi}^i\left(\mathbf{s}^{|\mathcal{B}\left(\varepsilon\right)|\times 1}\right)}\right]=\min\limits_{k\in\mathcal{B}\left(\varepsilon\right)}\frac{\theta^k}{\bar{\theta}}+\left(1-\frac{1}{\bar{\theta}}\right)\cdot \varepsilon$.
	
	\hspace{1em}$iii)$ $\beta$ satisfies $\beta_{x^j(\varepsilon)=0}<\beta<\beta_{x^f(\varepsilon)<0}$.
	
	\hspace{1em}$iv)$ The tuple $\mathbf{A}_{m\neq f}^{|\mathcal{B}\left(\varepsilon\right)-1|\times 1}$ or $\sum\limits_{m\neq f} A_m$ is bounded.
	
	4. Increasing only $A_f$ into $+\infty$ doesn't reverse the conclusion of $\mathcal{B}(\varepsilon)=\mathcal{P}_{s\,|\,t}\setminus \{j\}$ in 3.
	
	5. Threshold $\beta_{x^j(\varepsilon)=0}$ is non-increasing in $\varepsilon$, $\max \beta_{x^j(\varepsilon)=0}=\beta_{x^j(\varepsilon=0)=0}$. $\beta_{x^j(\varepsilon)=0}$ is non-decreasing in $\theta^j$.
	
	6. Under $\max \beta_{x^j(\varepsilon)=0}<\beta<\beta_{x^f(\varepsilon)<0}$, we have:
	
	\hspace{1em}$i)$ \hspace{0.5em}Back to $\varepsilon=0$, the signs of $\tilde{\Phi}^i\left(\mathbf{s}^{|\mathcal{B}\left(\varepsilon\right)|\times 1}\right)$ and $x^j$ remain unchanged.
	
	\hspace{1em}$ii)$ \hspace{0.2em}As $\varepsilon$ grows, partial order among $\tilde{\Phi}^i\left(\mathbf{s}^{|\mathcal{B}\left(\varepsilon\right)|\times 1}\right)$ doesn't preserve, but Hierarchy remains.
	
	\hspace{1em}$iii)$ \hspace{-0.2em}As $\varepsilon$ grows, There exist a threshold $\varepsilon_{x^f(\varepsilon)=0}=\varepsilon_{\max\limits_{k\in\mathcal{B}(\varepsilon)} x^k =0 }$ such that $\max\limits_{k\in\mathcal{B}(\varepsilon)} x^k =0$ or $x^f(\varepsilon_{\max\limits_{k\in\mathcal{B}(\varepsilon)} x^k =0 })$=0 where $f$ satisfies $\theta^f=\min\limits_k \theta^k$.
	
	\hspace{1em}$iv)$ \hspace{0.2em}$\mathcal{B}(\varepsilon)\setminus \{f\}$ is a Maximal Bailout Cluster when $\varepsilon=\varepsilon_{\max\limits_{k\in\mathcal{B}(\varepsilon)} x^k =0 }$.
\end{restatable}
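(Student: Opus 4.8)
The plan is to transport the two--bank analysis of Propositions~\ref{Propositions about Counterfactual Monopoly Space}--\ref{Propositions about Counterfactual Monopoly Space2} and the CMS comparative statics of Lemma~\ref{Corresponding to Counterfactual Monopoly Space} to the case where ``bank $i$'' is replaced by the whole cluster $\mathcal{B}(\varepsilon)=\mathcal{P}_{s\,|\,t}\setminus\{j\}$ while ``bank $j$'' keeps its role. The device is a scalar reduction: writing the clearing system of $\mathcal{B}(\varepsilon)$ in the aggregate form used in the proof of Lemma~\ref{Lemma: Crowding-Out Effect}, every $i\in\mathcal{B}(\varepsilon)$ obeys $s^i=\tilde{\Phi}^i+(1-e^{-\beta X})A_i$ with $X:=\sum_{k\in\mathcal{B}(\varepsilon)}\tilde{\Phi}^k$, so summation gives $g(X):=X+(1-e^{-\beta X})A^{\mathcal{B}(\varepsilon)}=\sum_{k\in\mathcal{B}(\varepsilon)}s^k$, and since $g'(X)=1+\beta A^{\mathcal{B}(\varepsilon)}e^{-\beta X}>0$, $g$ is a strictly increasing bijection of $\mathbb{R}$; hence $X$ — and then each $\tilde{\Phi}^i=s^i-(1-e^{-\beta X})A_i$ — exists and is unique for every $\beta<+\infty$, which is item~1 (the $\mathcal{B}(\varepsilon)$--analogue of Lemma~\ref{Corresponding to Counterfactual Monopoly Space}$(a)$; the multi--industry case is handled by the fixed--point argument of Proposition~\ref{Solutions in FCP}$(a)$ on the aggregate industry exposures). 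Two identities will be used throughout: with $\gamma:=e^{-\beta X}$ and $L^k=\tfrac1{\bar\theta}[\bar\theta-\theta^k+(1-\bar\theta)\varepsilon]$, one has $s^k=A_kL^k$ and $\tilde{\Phi}^k=A_k\big[\gamma-\tfrac{\theta^k-(1-\bar\theta)\varepsilon}{\bar\theta}\big]$, and $\sum_{k\in\mathcal{B}(\varepsilon)}s^k$ is independent of $\beta$ and affine increasing in each $A_k$.

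For item~2 I would differentiate $g(X)=c$ implicitly. Since $\mathcal{B}(\varepsilon)$ is a bailout cluster, $c<0$ forces $X<0$, i.e. $\gamma>1$; then $\partial_\beta X=-A^{\mathcal{B}(\varepsilon)}Xe^{-\beta X}/g'(X)>0$, and passing to $w:=-\beta X$ one checks $\partial_\beta w=(w/\beta)/(1+\beta A^{\mathcal{B}(\varepsilon)}e^{w})>0$, so $\gamma=e^{w}$ is non-decreasing in $\beta$. Varying $A_i$ (which shifts $A^{\mathcal{B}(\varepsilon)}$ by a unit and $c$ by $L^i$) gives $\partial_{A_i}w\propto(1-e^{w})-L^i=\tfrac{\theta^i-(1-\bar\theta)\varepsilon}{\bar\theta}-\gamma$, which is positive precisely because $\tilde{\Phi}^i<0$ for $i\in\mathcal{B}(\varepsilon)$ is, by the second identity, equivalent to $\gamma<\tfrac{\theta^i-(1-\bar\theta)\varepsilon}{\bar\theta}$; monotonicity in $\sum_iA_i$ follows the same way.

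Items~3 and~4 form the ignition block. From ``$\tilde{\Phi}^k<0\Leftrightarrow\gamma<\tfrac{\theta^k-(1-\bar\theta)\varepsilon}{\bar\theta}$'', the requirement that $\mathcal{B}(\varepsilon)=\mathcal{P}_{s\,|\,t}\setminus\{j\}$ be the maximal bailout cluster (Definition~\ref{Maximal Bailout Cluster}) splits into (A)~$\gamma<\min_{k\in\mathcal{B}(\varepsilon)}\tfrac{\theta^k-(1-\bar\theta)\varepsilon}{\bar\theta}$ and (B)~$x^j\ge 0$ when $j$ is re-inserted into $\mathcal{P}_{s\,|\,t}$. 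Part~(A), combined with item~2, yields $ii)$ (the supremum of $\gamma$ over $\beta$ and the admissible $A$'s equals this minimum, approached but not attained, the minimal--leverage bank $f$ being the binding one) and forces the upper $\beta$--bound $\beta_{x^f(\varepsilon)<0}$ and the boundedness of $\mathbf{A}_{m\neq f}$ in $iii)$--$iv)$. For~(B): comparing $g$ for $\mathcal{B}(\varepsilon)$ with $g_{\mathrm{full}}$ for $\mathcal{P}_{s\,|\,t}$ at the point $X^{\dagger}$ with $e^{-\beta X^{\dagger}}=\tfrac{\theta^j-(1-\bar\theta)\varepsilon}{\bar\theta}$ shows that the ignition inequality $i)$, namely $\gamma\ge\tfrac{\theta^j}{\bar\theta}+(1-\tfrac1{\bar\theta})\varepsilon$ — the $\varepsilon$--shifted version of the condition $e^{-\beta\Phi^{-1}[\tilde{s}^i(\varepsilon=0)]}\ge\theta^j/\bar\theta$ of Proposition~\ref{Propositions about Counterfactual Monopoly Space}$(a)$, in line with the COE of Lemma~\ref{Lemma: Crowding-Out Effect} — gives $g_{\mathrm{full}}(X^{\dagger})\ge c_{\mathrm{full}}$, hence $e^{-\beta X_{\mathrm{full}}}\ge\tfrac{\theta^j-(1-\bar\theta)\varepsilon}{\bar\theta}$ and $x^j\ge 0$; this forces $\beta>\beta_{x^j(\varepsilon)=0}$, since below that threshold $x^j<0$ and $j$ re-enters the bailout cluster. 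Item~4 is the limiting statement: letting $A_f\to+\infty$ with $\theta^f$ fixed and dividing $g(X)=c$ by $A_f$ forces $\gamma\to\tfrac{\theta^f-(1-\bar\theta)\varepsilon}{\bar\theta}$ from below, so $x^f<0$ persists and, $f$ being of minimal leverage, so does $x^m<0$ for every other $m\in\mathcal{B}(\varepsilon)$; this is the cluster form of $\sup_{A_i}e^{-\beta\Phi^{-1}[\tilde{s}^i(\varepsilon=0)\mid\theta^i]}=\theta^i/\bar\theta$ in Lemma~\ref{Corresponding to Counterfactual Monopoly Space}$(f)$.

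For item~5 I would differentiate the defining relation $x^j(\beta,\varepsilon)=0$ for $\beta_{x^j(\varepsilon)=0}$ inside $\mathcal{P}_{s\,|\,t}$, as in Proposition~\ref{Propositions about Counterfactual Monopoly Space2}$(2)$: $\partial_\varepsilon s^j>0$ and the co-movement of $X$ with $\varepsilon$ push $\beta_{x^j(\varepsilon)=0}$ down, so its maximum over $\varepsilon$ sits at $\varepsilon=0$, and the sign of its $\theta^j$--derivative gives monotonicity in $\theta^j$. Finally, for item~6, under $\beta_{x^j(\varepsilon=0)=0}<\beta<\beta_{x^f(\varepsilon)<0}$ the relation $c(\varepsilon)=c(0)+\tfrac{(1-\bar\theta)\varepsilon}{\bar\theta}A^{\mathcal{B}(\varepsilon)}$ makes $X(\varepsilon)$ increasing and $\gamma(\varepsilon)$ decreasing in $\varepsilon$, and then $\tilde{\Phi}^k=A_k[\gamma(\varepsilon)-\tfrac{\theta^k-(1-\bar\theta)\varepsilon}{\bar\theta}]$ delivers the four conclusions: at $\varepsilon=0$ all signs of $\tilde{\Phi}^i$ and of $x^j$ agree with those at the given $\varepsilon$ (irreversibility, cf. Proposition~\ref{Propositions about Counterfactual Monopoly Space2}$(4)$), which is $i)$; the difference $\tilde{\Phi}^k-\tilde{\Phi}^l$ carries the $\varepsilon$--dependent factor $\big[\tfrac{(1-\bar\theta)\varepsilon}{\bar\theta}-(1-\gamma(\varepsilon))\big](A_k-A_l)$, so the partial order need not persist, yet each crossing $x^k=0$ occurs when $\gamma(\varepsilon)=\tfrac{\theta^k-(1-\bar\theta)\varepsilon}{\bar\theta}$, a family of curves ordered by $\theta^k$, so the hierarchy $\varepsilon_{x^k=0}$ is monotone in $\theta^k$, which is $ii)$; the first crossing is that of $f$ with $\theta^f=\min_k\theta^k$, whence $\varepsilon_{x^f=0}=\varepsilon_{\max_k x^k=0}$, which is $iii)$; and at that $\varepsilon$, $x^f=0$ with $x^f>0$ just above while all other $x^m<0$, so by Definition~\ref{Maximal Bailout Cluster} $\mathcal{B}(\varepsilon)\setminus\{f\}$ is a maximal bailout cluster, which is $iv)$. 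The step I expect to be the main obstacle is item~6$(ii)$--$(iii)$: showing the threshold hierarchy survives aggregation, i.e. that among all members of the moving set $\mathcal{B}(\varepsilon)$ it is exactly $f$ that crosses first. This needs the precise sign of $\gamma'(\varepsilon)+\tfrac{1-\bar\theta}{\bar\theta}$, so that $x^k/A_k$ is monotone in $\varepsilon$ in the right direction, together with the observation that although $x^k$ itself depends on $A_k$, the normalized quantity $x^k/A_k=\gamma(\varepsilon)-\tfrac{\theta^k-(1-\bar\theta)\varepsilon}{\bar\theta}$ is ordered purely by $\theta^k$; the extremal $\sup/\inf$ identities in item~3$(ii)$ and item~4 are the other place requiring care.
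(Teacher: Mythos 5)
Your proposal follows essentially the same route as the paper's proof: reduce the cluster system to the scalar aggregate equation $X+(1-e^{-\beta X})A^{\mathcal{B}(\varepsilon)}=\sum_k s^k$ for existence/uniqueness, obtain items 2--4 by implicit differentiation and the limits $A_f\to+\infty$, characterize all sign changes via $e^{-\beta X}=\tfrac{\theta^k}{\bar\theta}+(1-\tfrac1{\bar\theta})\varepsilon$, and derive item 6 from the normalized identity $x^k/A_k=\gamma(\varepsilon)-\tfrac{\theta^k-(1-\bar\theta)\varepsilon}{\bar\theta}$, which is exactly the paper's cross-bank-dependence/hierarchy argument. The two places you flag as delicate are indeed where the paper does extra work: for item 5 it substitutes the threshold relation into the summed system to show $\partial_\varepsilon\sum_i\tilde{\Phi}^i=0$ along the threshold manifold (yielding $\partial_\varepsilon\beta_{x^j(\varepsilon)=0}<0$ in closed form), and for item 6 it imports the monotonicity of $\mathbf{x}$ in $\varepsilon$ from the online appendix rather than proving the sign of $\gamma'(\varepsilon)+\tfrac{1-\bar\theta}{\bar\theta}$ directly.
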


\begin{proof}[\normalfont\bfseries Proof of Lemma \ref{Lemma ForDecomposition and Compression Equivalence}]
	\,
	
	\noindent\textbf{Proof of $1$}: 
	
	Recall that $s^i=\frac{1}{\bar{\theta}}\left(\bar{\theta}A_i-E_i+(1-\bar{\theta})\cdot \varepsilon\cdot A_i  \right)$ and $s^i=x^i+\left(1-e^{-\beta\cdot \sum\limits_{j\in \mathcal{P}_{s\,|\,t} }x^j}\right)\cdot A_i$. Then the system $\mathcal{P}_{s\,|\,t}$ exhibits \ref{system P} where $L^i=\frac{1}{\bar{\theta}}\cdot \left[ \bar{\theta}-\theta^i +\left(1-\bar{\theta}\right)\cdot \varepsilon \right]$: 
	\begin{align}\label{system P}
		A_i\cdot L^i=x^i+&\left(1-e^{-\beta\cdot \sum\limits_{j\in \mathcal{P}_{s\,|\,t}}x^j}\right)\cdot A_i \notag  \\  &\vdots \quad\quad\quad\quad\quad\quad\quad    \\   A_m\cdot L^m=x^m+&\left(1-e^{-\beta\cdot \sum\limits_{j\in \mathcal{P}_{s\,|\,t}}x^j}\right)\cdot A_m \notag
	\end{align}
	\noindent By adding the $\big|\mathcal{P}_{s\,|\,t}\big|$ equations together, we obtain equation \ref{system P Sum} where $A^{\mathcal{P}_{s\,|\,t}}=\sum\limits_{j\in \mathcal{P}_{s\,|\,t}}A_j$. Refer to Lemma OA1.1 $c)$ and Intermediate Value Theorem, there exists an unique solution $\sum\limits_{j\in \mathcal{P}_{s\,|\,t}} x^j$ of equation \ref{system P Sum}.
	\begin{align}\label{system P Sum}
		\sum\limits_{j\in \mathcal{P}_{s\,|\,t}}A_j\cdot L^j=\sum\limits_{j\in \mathcal{P}_{s\,|\,t}} x^j+\left(1-e^{-\beta\cdot \sum\limits_{j\in \mathcal{P}_{s\,|\,t} }x^j}\right)\cdot A^{\mathcal{P}_{s\,|\,t}}
	\end{align}
	\noindent Refer to \ref{Perfection Switch=1} $1.a)$, we can represent $\sum\limits_{j\in \mathcal{P}_{s\,|\,t}} x^j$ by any $x^j$ where $j\in \mathcal{P}_{s\,|\,t}$. That is, any $x^j$ where $j\in \mathcal{P}_{s\,|\,t}$ is unique. 
	
	The proof process does not involve specific $\varepsilon$ and $\beta<+\infty$. Moreover $\mathcal{B}(\varepsilon)=\mathcal{P}_{s\,|\,t}\setminus \{j\}$ and $\mathcal{P}_{s\,|\,t}$ share the same system structure (i.e., only one column). Therefore, $\tilde{\mathbf{\Phi}}\left(\mathbf{s}^{|\mathcal{B}\left(\varepsilon\right)|\times 1}\right)$ is guaranteed to exist and unique regardless of $\varepsilon$ and $\beta<+\infty$.
	
	\noindent\textbf{Proof of $2$}: 
	
	We firstly prove that $e^{-\beta\cdot\sum\limits_{i\in\mathcal{B}\left(\varepsilon\right)} \tilde{\Phi}^i\left(\mathbf{s}^{|\mathcal{B}\left(\varepsilon\right)|\times 1}\right)}$ is non-decreasing in $A_i$. Taking the derivative with respect to $A_i$ on both sides of \ref{system P Sum} yields \ref{Decom_Exp_Ai}.
	\begin{align}\label{Decom_Exp_Ai}
		L^i = \frac{\partial \sum\limits_{i\in\mathcal{B}\left(\varepsilon\right)} \tilde{\Phi}^i\left(\mathbf{s}^{|\mathcal{B}\left(\varepsilon\right)|\times 1}\right)}{\partial A_i} +1 - e^{-\beta\cdot\sum\limits_{i\in\mathcal{B}\left(\varepsilon\right)} \tilde{\Phi}^i\left(\mathbf{s}^{|\mathcal{B}\left(\varepsilon\right)|\times 1}\right)} + A^{\mathcal{B}\left(\varepsilon\right)}\cdot\beta\cdot e^{-\beta\cdot\sum\limits_{i\in\mathcal{B}\left(\varepsilon\right)} \tilde{\Phi}^i\left(\mathbf{s}^{|\mathcal{B}\left(\varepsilon\right)|\times 1}\right)} \cdot \frac{\partial\sum\limits_{i\in\mathcal{B}\left(\varepsilon\right)} \tilde{\Phi}^i\left(\mathbf{s}^{|\mathcal{B}\left(\varepsilon\right)|\times 1}\right)}{\partial A_i} 
	\end{align}
	\noindent By solving it, we obtain \ref{Decom_Exp_Ai Snd}.
	\begin{align}\label{Decom_Exp_Ai Snd}
		\frac{\partial\sum\limits_{i\in\mathcal{B}\left(\varepsilon\right)} \tilde{\Phi}^i\left(\mathbf{s}^{|\mathcal{B}\left(\varepsilon\right)|\times 1}\right)}{\partial A_i} &=\frac{L^i-1+e^{-\beta\cdot\sum\limits_{i\in\mathcal{B}\left(\varepsilon\right)} \tilde{\Phi}^i\left(\mathbf{s}^{|\mathcal{B}\left(\varepsilon\right)|\times 1}\right)}}{1+A^{\mathcal{B}\left(\varepsilon\right)}\cdot\beta\cdot e^{-\beta\cdot\sum\limits_{i\in\mathcal{B}\left(\varepsilon\right)} \tilde{\Phi}^i\left(\mathbf{s}^{|\mathcal{B}\left(\varepsilon\right)|\times 1}\right)}} \notag  \\   \Longleftrightarrow  \frac{\partial\sum\limits_{i\in\mathcal{B}\left(\varepsilon\right)} \tilde{\Phi}^i\left(\mathbf{s}^{|\mathcal{B}\left(\varepsilon\right)|\times 1}\right)}{\partial A_i} &=\frac{\frac{\tilde{\Phi}^i\left(\mathbf{s}^{|\mathcal{B}\left(\varepsilon\right)|\times 1}\right)}{A_i}+\left(1-e^{-\beta\cdot\sum\limits_{i\in\mathcal{B}\left(\varepsilon\right)} \tilde{\Phi}^i\left(\mathbf{s}^{|\mathcal{B}\left(\varepsilon\right)|\times 1}\right)}\right)-1+e^{-\beta\cdot\sum\limits_{i\in\mathcal{B}\left(\varepsilon\right)} \tilde{\Phi}^i\left(\mathbf{s}^{|\mathcal{B}\left(\varepsilon\right)|\times 1}\right)}}{1+A^{\mathcal{B}\left(\varepsilon\right)}\cdot\beta\cdot e^{-\beta\cdot\sum\limits_{i\in\mathcal{B}\left(\varepsilon\right)} \notag  \tilde{\Phi}^i\left(\mathbf{s}^{|\mathcal{B}\left(\varepsilon\right)|\times 1}\right)}} \\   \Longleftrightarrow  \frac{\partial\sum\limits_{i\in\mathcal{B}\left(\varepsilon\right)} \tilde{\Phi}^i\left(\mathbf{s}^{|\mathcal{B}\left(\varepsilon\right)|\times 1}\right)}{\partial A_i}  &=\frac{\tilde{\Phi}^i\left(\mathbf{s}^{|\mathcal{B}\left(\varepsilon\right)|\times 1}\right)}{A_i\cdot \left[1+A^{\mathcal{B}\left(\varepsilon\right)}\cdot\beta\cdot e^{-\beta\cdot\sum\limits_{i\in\mathcal{B}\left(\varepsilon\right)} \tilde{\Phi}^i\left(\mathbf{s}^{|\mathcal{B}\left(\varepsilon\right)|\times 1}\right)} \right]}<0 
	\end{align}
	\noindent Taking the derivative with respect to $A_i$ on $e^{-\beta\cdot\sum\limits_{i\in\mathcal{B}\left(\varepsilon\right)} \tilde{\Phi}^i\left(\mathbf{s}^{|\mathcal{B}\left(\varepsilon\right)|\times 1}\right)}$ yields \ref{Decom_Exp_Ai Third}:
	\begin{align}\label{Decom_Exp_Ai Third}
		\frac{\partial e^{-\beta\cdot\sum\limits_{i\in\mathcal{B}\left(\varepsilon\right)} \tilde{\Phi}^i\left(\mathbf{s}^{|\mathcal{B}\left(\varepsilon\right)|\times 1}\right)}}{\partial A_i}=-\beta\cdot e^{-\beta\cdot\sum\limits_{i\in\mathcal{B}\left(\varepsilon\right)} \tilde{\Phi}^i\left(\mathbf{s}^{|\mathcal{B}\left(\varepsilon\right)|\times 1}\right)}\cdot \frac{\partial \sum\limits_{i\in\mathcal{B}\left(\varepsilon\right)} \tilde{\Phi}^i\left(\mathbf{s}^{|\mathcal{B}\left(\varepsilon\right)|\times 1}\right)}{\partial A_i}>0
	\end{align}
	\noindent Therefore, $e^{-\beta\cdot\sum\limits_{i\in\mathcal{B}\left(\varepsilon\right)} \tilde{\Phi}^i\left(\mathbf{s}^{|\mathcal{B}\left(\varepsilon\right)|\times 1}\right)}$ is non-decreasing in $A_i$.
	
	Then we prove that $e^{-\beta\cdot\sum\limits_{i\in\mathcal{B}\left(\varepsilon\right)} \tilde{\Phi}^i\left(\mathbf{s}^{|\mathcal{B}\left(\varepsilon\right)|\times 1}\right)}$ is non-decreasing in $A^{\mathcal{B}\left(\varepsilon\right)}=\sum\limits_{i\in\mathcal{B}\left(\varepsilon\right)} A_i$. Taking the derivative with respect to $A$ on both sides of \ref{system P Sum} yields \ref{Decom_Exp_A} (Note that $\frac{\partial A_i}{\partial \sum\limits_{i\in\mathcal{B}\left(\varepsilon\right)} A_i}=\frac{1}{\partial\frac{\sum\limits_{i\in\mathcal{B}\left(\varepsilon\right)} A_i}{\partial A_i}}=1$).
	\begin{align}\label{Decom_Exp_A}
		\sum\limits_{i\in\mathcal{B}\left(\varepsilon\right)}L^i &= \frac{\partial\sum\limits_{i\in\mathcal{B}\left(\varepsilon\right)} \tilde{\Phi}^i\left(\mathbf{s}^{|\mathcal{B}\left(\varepsilon\right)|\times 1}\right)}{\partial A^{\mathcal{B}\left(\varepsilon\right)}} +1 - e^{-\beta\cdot\sum\limits_{i\in\mathcal{B}\left(\varepsilon\right)} \tilde{\Phi}^i\left(\mathbf{s}^{|\mathcal{B}\left(\varepsilon\right)|\times 1}\right)} \notag \\ &\quad + A^{\mathcal{B}\left(\varepsilon\right)}\cdot\beta\cdot e^{-\beta\cdot\sum\limits_{i\in\mathcal{B}\left(\varepsilon\right)} \tilde{\Phi}^i\left(\mathbf{s}^{|\mathcal{B}\left(\varepsilon\right)|\times 1}\right)} \cdot \frac{\partial\sum\limits_{i\in\mathcal{B}\left(\varepsilon\right)} \tilde{\Phi}^i\left(\mathbf{s}^{|\mathcal{B}\left(\varepsilon\right)|\times 1}\right)}{\partial A^{\mathcal{B}\left(\varepsilon\right)}} 
	\end{align}
	\noindent By solving it, we obtain \ref{Decom_Exp_A Snd}.
	\begin{align}\label{Decom_Exp_A Snd}
		\frac{\partial\sum\limits_{i\in\mathcal{B}\left(\varepsilon\right)} \tilde{\Phi}^i\left(\mathbf{s}^{|\mathcal{B}\left(\varepsilon\right)|\times 1}\right)}{\partial A^{\mathcal{B}\left(\varepsilon\right)}} &=\frac{\sum\limits_{i\in\mathcal{B}\left(\varepsilon\right)}L^i-1+e^{-\beta\cdot\sum\limits_{i\in\mathcal{B}\left(\varepsilon\right)} \tilde{\Phi}^i\left(\mathbf{s}^{|\mathcal{B}\left(\varepsilon\right)|\times 1}\right)}}{1+A^{\mathcal{B}\left(\varepsilon\right)}\cdot\beta\cdot e^{-\beta\cdot\sum\limits_{i\in\mathcal{B}\left(\varepsilon\right)} \tilde{\Phi}^i\left(\mathbf{s}^{|\mathcal{B}\left(\varepsilon\right)|\times 1}\right)}} \notag  \\   \Longleftrightarrow  \frac{\partial\sum\limits_{i\in\mathcal{B}\left(\varepsilon\right)} \tilde{\Phi}^i\left(\mathbf{s}^{|\mathcal{B}\left(\varepsilon\right)|\times 1}\right)}{\partial A^{\mathcal{B}\left(\varepsilon\right)}} &=\frac{\sum\limits_{i\in\mathcal{B}\left(\varepsilon\right)}\frac{\tilde{\Phi}^i\left(\mathbf{s}^{|\mathcal{B}\left(\varepsilon\right)|\times 1}\right)}{A_i}+\sum\limits_{i\in\mathcal{B}\left(\varepsilon\right)}\left(1-e^{-\beta\cdot\sum\limits_{i\in\mathcal{B}\left(\varepsilon\right)} \tilde{\Phi}^i\left(\mathbf{s}^{|\mathcal{B}\left(\varepsilon\right)|\times 1}\right)}\right)-1+e^{-\beta\cdot\sum\limits_{i\in\mathcal{B}\left(\varepsilon\right)} \tilde{\Phi}^i\left(\mathbf{s}^{|\mathcal{B}\left(\varepsilon\right)|\times 1}\right)}}{1+A^{\mathcal{B}\left(\varepsilon\right)}\cdot\beta\cdot e^{-\beta\cdot\sum\limits_{i\in\mathcal{B}\left(\varepsilon\right)} \tilde{\Phi}^i\left(\mathbf{s}^{|\mathcal{B}\left(\varepsilon\right)|\times 1}\right)}} \notag \\   \Longleftrightarrow  \frac{\partial\sum\limits_{i\in\mathcal{B}\left(\varepsilon\right)} \tilde{\Phi}^i\left(\mathbf{s}^{|\mathcal{B}\left(\varepsilon\right)|\times 1}\right)}{\partial A^{\mathcal{B}\left(\varepsilon\right)}} &=\frac{\sum\limits_{i\in\mathcal{B}\left(\varepsilon\right)}\frac{\tilde{\Phi}^i\left(\mathbf{s}^{|\mathcal{B}\left(\varepsilon\right)|\times 1}\right)}{A_i} + \left(\big|\mathcal{B}\left(\varepsilon\right)\big|-1\right)\cdot \left(1 - e^{-\beta\cdot\sum\limits_{i\in\mathcal{B}\left(\varepsilon\right)} \tilde{\Phi}^i\left(\mathbf{s}^{|\mathcal{B}\left(\varepsilon\right)|\times 1}\right)}\right)}{1+A^{\mathcal{B}\left(\varepsilon\right)}\cdot\beta\cdot e^{-\beta\cdot\sum\limits_{i\in\mathcal{B}\left(\varepsilon\right)} \tilde{\Phi}^i\left(\mathbf{s}^{|\mathcal{B}\left(\varepsilon\right)|\times 1}\right)}}<0
	\end{align}
	\noindent Taking the derivative with respect to $A^{\mathcal{B}\left(\varepsilon\right)}$ on $e^{-\beta\cdot\sum\limits_{i\in\mathcal{B}\left(\varepsilon\right)} \tilde{\Phi}^i\left(\mathbf{s}^{|\mathcal{B}\left(\varepsilon\right)|\times 1}\right)}$ yields \ref{Decom_Exp_A Third}:
	\begin{align}\label{Decom_Exp_A Third}
		\frac{\partial e^{-\beta\cdot\sum\limits_{i\in\mathcal{B}\left(\varepsilon\right)} \tilde{\Phi}^i\left(\mathbf{s}^{|\mathcal{B}\left(\varepsilon\right)|\times 1}\right)}}{\partial A^{\mathcal{B}\left(\varepsilon\right)}}=-\beta\cdot e^{-\beta\cdot\sum\limits_{i\in\mathcal{B}\left(\varepsilon\right)} \tilde{\Phi}^i\left(\mathbf{s}^{|\mathcal{B}\left(\varepsilon\right)|\times 1}\right)}\cdot \frac{\partial\sum\limits_{i\in\mathcal{B}\left(\varepsilon\right)} \tilde{\Phi}^i\left(\mathbf{s}^{|\mathcal{B}\left(\varepsilon\right)|\times 1}\right)}{\partial A^{\mathcal{B}\left(\varepsilon\right)}}>0
	\end{align}
	\noindent Therefore, $e^{-\beta\cdot\sum\limits_{i\in\mathcal{B}\left(\varepsilon\right)} \tilde{\Phi}^i\left(\mathbf{s}^{|\mathcal{B}\left(\varepsilon\right)|\times 1}\right)}$ is non-decreasing in $A^{\mathcal{B}\left(\varepsilon\right)}$.
	
	Then we prove that $e^{-\beta\cdot\sum\limits_{i\in\mathcal{B}\left(\varepsilon\right)} \tilde{\Phi}^i\left(\mathbf{s}^{|\mathcal{B}\left(\varepsilon\right)|\times 1}\right)}$ is non-decreasing in $\beta$. Taking the derivative with respect to $\beta$ on both sides of \ref{system P Sum} yields \ref{Decom_Exp_Beta}.
	\begin{align}\label{Decom_Exp_Beta}
		0 = \frac{\partial\sum\limits_{i\in\mathcal{B}\left(\varepsilon\right)} \tilde{\Phi}^i\left(\mathbf{s}^{|\mathcal{B}\left(\varepsilon\right)|\times 1}\right)}{\partial \beta}  +  A^{\mathcal{B}\left(\varepsilon\right)}\cdot e^{-\beta\cdot\sum\limits_{i\in\mathcal{B}\left(\varepsilon\right)} \tilde{\Phi}^i\left(\mathbf{s}^{|\mathcal{B}\left(\varepsilon\right)|\times 1}\right)} \cdot \left(  \sum\limits_{i\in\mathcal{B}\left(\varepsilon\right)} \tilde{\Phi}^i\left(\mathbf{s}^{|\mathcal{B}\left(\varepsilon\right)|\times 1}\right) +\beta\cdot \frac{\partial\sum\limits_{i\in\mathcal{B}\left(\varepsilon\right)} \tilde{\Phi}^i\left(\mathbf{s}^{|\mathcal{B}\left(\varepsilon\right)|\times 1}\right)}{\partial \beta}  \right)
	\end{align}
	\noindent By solving it, we obtain \ref{Decom_Exp_Beta Snd}
	\begin{align}\label{Decom_Exp_Beta Snd}
		\frac{\partial\sum\limits_{i\in\mathcal{B}\left(\varepsilon\right)} \tilde{\Phi}^i\left(\mathbf{s}^{|\mathcal{B}\left(\varepsilon\right)|\times 1}\right)}{\partial \beta}=-\frac{A^{\mathcal{B}\left(\varepsilon\right)}\cdot e^{-\beta\cdot\sum\limits_{i\in\mathcal{B}\left(\varepsilon\right)} \tilde{\Phi}^i\left(\mathbf{s}^{|\mathcal{B}\left(\varepsilon\right)|\times 1}\right)} \cdot \sum\limits_{i\in\mathcal{B}\left(\varepsilon\right)} \tilde{\Phi}^i\left(\mathbf{s}^{|\mathcal{B}\left(\varepsilon\right)|\times 1}\right)}{1+A^{\mathcal{B}\left(\varepsilon\right)}\cdot \beta \cdot e^{-\beta\cdot\sum\limits_{i\in\mathcal{B}\left(\varepsilon\right)} \tilde{\Phi}^i\left(\mathbf{s}^{|\mathcal{B}\left(\varepsilon\right)|\times 1}\right)}}>0
	\end{align}
	\noindent and \ref{Decom_Exp_Beta Third}.
	\begin{align}\label{Decom_Exp_Beta Third}
		&\quad -e^{-\beta\cdot\sum\limits_{i\in\mathcal{B}\left(\varepsilon\right)} \tilde{\Phi}^i\left(\mathbf{s}^{|\mathcal{B}\left(\varepsilon\right)|\times 1}\right)} \cdot \left(  \sum\limits_{i\in\mathcal{B}\left(\varepsilon\right)} \tilde{\Phi}^i\left(\mathbf{s}^{|\mathcal{B}\left(\varepsilon\right)|\times 1}\right) +\beta\cdot \frac{\partial\sum\limits_{i\in\mathcal{B}\left(\varepsilon\right)} \tilde{\Phi}^i\left(\mathbf{s}^{|\mathcal{B}\left(\varepsilon\right)|\times 1}\right)}{\partial \beta}  \right)    \\ &= \frac{1}{A^{\mathcal{B}\left(\varepsilon\right)}} \cdot \frac{\partial\sum\limits_{i\in\mathcal{B}\left(\varepsilon\right)} \tilde{\Phi}^i\left(\mathbf{s}^{|\mathcal{B}\left(\varepsilon\right)|\times 1}\right)}{\partial \beta}>0 \notag
	\end{align}
	Taking the derivative with respect to $\beta$ on $e^{-\beta\cdot\sum\limits_{i\in\mathcal{B}\left(\varepsilon\right)} \tilde{\Phi}^i\left(\mathbf{s}^{|\mathcal{B}\left(\varepsilon\right)|\times 1}\right)}$ yields \ref{Decom_Exp_Beta Fourth}:
	\begin{align}\label{Decom_Exp_Beta Fourth}
		&\quad\frac{\partial e^{-\beta\cdot\sum\limits_{i\in\mathcal{B}\left(\varepsilon\right)} \tilde{\Phi}^i\left(\mathbf{s}^{|\mathcal{B}\left(\varepsilon\right)|\times 1}\right)}}{\partial \beta} \notag \\  &=-e^{-\beta\cdot\sum\limits_{i\in\mathcal{B}\left(\varepsilon\right)} \tilde{\Phi}^i\left(\mathbf{s}^{|\mathcal{B}\left(\varepsilon\right)|\times 1}\right)} \cdot \left(  \sum\limits_{i\in\mathcal{B}\left(\varepsilon\right)} \tilde{\Phi}^i\left(\mathbf{s}^{|\mathcal{B}\left(\varepsilon\right)|\times 1}\right) +\beta\cdot \frac{\partial\sum\limits_{i\in\mathcal{B}\left(\varepsilon\right)} \tilde{\Phi}^i\left(\mathbf{s}^{|\mathcal{B}\left(\varepsilon\right)|\times 1}\right)}{\partial \beta}  \right)>0 
	\end{align}
	\noindent Therefore, $e^{-\beta\cdot\sum\limits_{i\in\mathcal{B}\left(\varepsilon\right)} \tilde{\Phi}^i\left(\mathbf{s}^{|\mathcal{B}\left(\varepsilon\right)|\times 1}\right)}$ is non-decreasing in $\beta$.
	
	\noindent\textbf{Proof of $3$}: 
	
	\textbf{Proof of $i)$}: $\mathcal{B}(\varepsilon)=\mathcal{P}_{s\,|\,t}\setminus \{j\}$ implies that $x^j\geq 0$ in system $\mathcal{P}_{s\,|\,t}$. By the same argument in Proposition \ref{Propositions about Counterfactual Monopoly Space2}, we obtain that $e^{-\beta\cdot\sum\limits_{i\in\mathcal{B}\left(\varepsilon\right)} \tilde{\Phi}^i\left(\mathbf{s}^{|\mathcal{B}\left(\varepsilon\right)|\times 1}\right)}=\frac{\theta^j}{\bar{\theta}}+ \left(1-\frac{1}{\bar{\theta}}\right)\cdot \varepsilon$ is a feasible equation when $x^j=0$ in system $\mathcal{P}_{s\,|\,t}$. As $\beta$ or $A^{\mathcal{B}\left(\varepsilon\right)}$ increase, $e^{-\beta\cdot\sum\limits_{i\in\mathcal{B}\left(\varepsilon\right)} \tilde{\Phi}^i\left(\mathbf{s}^{|\mathcal{B}\left(\varepsilon\right)|\times 1}\right)}$ increases. Suppose $x^j\leq 0$, we have $\hat{s}^j<s^j$ which contradicts to the equality condition, that is, $\hat{x}^j>0$ as $\beta$ or $A^{\mathcal{B}\left(\varepsilon\right)}$ increase.
	
	\textbf{Proof of $ii)$}: Refer to Proposition \ref{Propositions about Counterfactual Monopoly Space} $a).1$ Threshold Hierarchy, bank $f$ will be the most likely one to leave $\mathcal{B}(\varepsilon)$. Within $\mathcal{B}(\varepsilon)=\mathcal{P}_{s\,|\,t}\setminus \{j\}$, we need to ensure that $f$ does not change sign when $\beta$ or $A^{\mathcal{B}\left(\varepsilon\right)}$ increase, which requires $e^{-\beta\cdot\sum\limits_{i\in\mathcal{B}\left(\varepsilon\right)} \tilde{\Phi}^i\left(\mathbf{s}^{|\mathcal{B}\left(\varepsilon\right)|\times 1}\right)} < \frac{\theta^f}{\bar{\theta}}+\left(1-\frac{1}{\bar{\theta}}\right)\cdot \varepsilon$. That is $\sup\limits_{\beta,\, A_{m\neq f},\,\sum\limits_{m\neq f} A_m} \left[e^{-\beta\cdot\sum\limits_{i\in\mathcal{B}\left(\varepsilon\right)} \tilde{\Phi}^i\left(\mathbf{s}^{|\mathcal{B}\left(\varepsilon\right)|\times 1}\right)}\right]=\min\limits_{k\in\mathcal{B}\left(\varepsilon\right)}\frac{\theta^k}{\bar{\theta}}+\left(1-\frac{1}{\bar{\theta}}\right)\cdot \varepsilon=\frac{\theta^f}{\bar{\theta}}+\left(1-\frac{1}{\bar{\theta}}\right)\cdot \varepsilon$.
	
	\textbf{Proof of $iii)$}: We further transfer the equation system \ref{system P} into \ref{system Exp}.
	\begin{align}\label{system Exp}
		e^{-\beta\cdot\sum\limits_{i\in\mathcal{B}\left(\varepsilon\right)} \tilde{\Phi}^i\left(\mathbf{s}^{|\mathcal{B}\left(\varepsilon\right)|\times 1}\right)}=\frac{\theta^m}{\bar{\theta}}&+\left(1-\frac{1}{\bar{\theta}}\right)\cdot \varepsilon+\frac{\tilde{\Phi}^j\left(\mathbf{s}^{|\mathcal{B}\left(\varepsilon\right)|\times 1}\right)}{A_m}   \notag  \\  &\vdots \quad\quad\quad\quad\quad\quad\quad    \\   e^{-\beta\cdot\sum\limits_{i\in\mathcal{B}\left(\varepsilon\right)} \tilde{\Phi}^i\left(\mathbf{s}^{|\mathcal{B}\left(\varepsilon\right)|\times 1}\right)}=\frac{\theta^f}{\bar{\theta}}&+\left(1-\frac{1}{\bar{\theta}}\right)\cdot \varepsilon+\frac{\tilde{\Phi}^f\left(\mathbf{s}^{|\mathcal{B}\left(\varepsilon\right)|\times 1}\right)}{A_f} \notag 
	\end{align}
	\noindent By the same argument in the proof of \ref{Corresponding to Counterfactual Monopoly Space} $e)$, we obtain that $\sup\limits_{\beta} e^{-\beta\cdot\sum\limits_{i\in\mathcal{B}\left(\varepsilon\right)} \tilde{\Phi}^i\left(\mathbf{s}^{|\mathcal{B}\left(\varepsilon\right)|\times 1}\right)} = \max\limits_{i\in\mathcal{B}\left(\varepsilon\right)} \frac{\theta^i}{\bar{\theta}}+\left(1-\frac{1}{\bar{\theta}}\right)\cdot \varepsilon$. That is $\sup\limits_{\beta} e^{-\beta\cdot\sum\limits_{i\in\mathcal{B}\left(\varepsilon\right)} \tilde{\Phi}^i\left(\mathbf{s}^{|\mathcal{B}\left(\varepsilon\right)|\times 1}\right)}>\frac{\theta^f}{\bar{\theta}}+\left(1-\frac{1}{\bar{\theta}}\right)\cdot \varepsilon$. By Intermediate Value Theorem, there exists a threshold $\beta_{x^f(\varepsilon)<0}$ such that $\tilde{\Phi}^f\left(\mathbf{s}^{|\mathcal{B}\left(\varepsilon\right)|\times 1}\right)<0$ when $\beta<\beta_{x^f(\varepsilon)<0}$. By the same argument in the proof of Proposition \ref{Propositions about Counterfactual Monopoly Space2} $a)$, there exists another threshold $\beta_{x^j(\varepsilon)=0}$ such that $x^j(\varepsilon)>0$ when $\beta>\beta_{x^j(\varepsilon)=0}$.
	
	\textbf{Proof of $iv)$}: We set $A_m\rightarrow +\infty$ in the first equation of \ref{system Exp} and obtain that $\sup\limits_{A_m} e^{-\beta\cdot\sum\limits_{i\in\mathcal{B}\left(\varepsilon\right)} \tilde{\Phi}^i\left(\mathbf{s}^{|\mathcal{B}\left(\varepsilon\right)|\times 1}\right)} = \frac{\theta^m}{\bar{\theta}}+\left(1-\frac{1}{\bar{\theta}}\right)\cdot \varepsilon$. That is $\sup\limits_{A_m} e^{-\beta\cdot\sum\limits_{i\in\mathcal{B}\left(\varepsilon\right)} \tilde{\Phi}^i\left(\mathbf{s}^{|\mathcal{B}\left(\varepsilon\right)|\times 1}\right)}>\frac{\theta^f}{\bar{\theta}}+\left(1-\frac{1}{\bar{\theta}}\right)\cdot \varepsilon$ which contradicts $ii)$. The same conclusion holds for any bank $m\neq j$. Therefore the tuple $\mathbf{A}_{m\neq f}^{|\mathcal{B}\left(\varepsilon\right)-1|\times 1}$ or $\sum\limits_{m\neq f} A_m$ is bounded.
	
	\noindent\textbf{Proof of $4$}: 
	
	We set $A_f\rightarrow +\infty$ in the first equation of \ref{system Exp} and obtain that $\sup\limits_{A_f} e^{-\beta\cdot\sum\limits_{i\in\mathcal{B}\left(\varepsilon\right)} \tilde{\Phi}^i\left(\mathbf{s}^{|\mathcal{B}\left(\varepsilon\right)|\times 1}\right)} = \frac{\theta^f}{\bar{\theta}}+\left(1-\frac{1}{\bar{\theta}}\right)\cdot \varepsilon$. That is $e^{-\beta\cdot\sum\limits_{i\in\mathcal{B}\left(\varepsilon\right)} \tilde{\Phi}^i\left(\mathbf{s}^{|\mathcal{B}\left(\varepsilon\right)|\times 1}\right)}<\frac{\theta^f}{\bar{\theta}}+\left(1-\frac{1}{\bar{\theta}}\right)\cdot \varepsilon$ which satisfies the ignition condition $3.i)$. Meanwhile, $e^{-\beta\cdot\sum\limits_{i\in\mathcal{B}\left(\varepsilon\right)} \tilde{\Phi}^i\left(\mathbf{s}^{|\mathcal{B}\left(\varepsilon\right)|\times 1}\right)}<\frac{\theta^m}{\bar{\theta}}+\left(1-\frac{1}{\bar{\theta}}\right)\cdot \varepsilon,\,\forall m\neq f$. Therefore $\mathcal{B}(\varepsilon)=\mathcal{P}_{s\,|\,t}\setminus \{j\}$ holds under $A_f\rightarrow +\infty$.
	
	\noindent\textbf{Proof of $5$}: 
	
	By multiplying $A_m$ on both sides of $m$th equation and adding the $\big|\mathcal{B}(\varepsilon)\big|$ equations together, we obtain equation \ref{system EXP Sum} where $A^{\mathcal{B}(\varepsilon)}=\sum\limits_{m\in \mathcal{B}(\varepsilon)}A_m$ and $E^{\mathcal{B}(\varepsilon)}=\sum\limits_{m\in \mathcal{B}(\varepsilon)}E_m$. 
	\begin{align}\label{system EXP Sum}
		A^{\mathcal{B}(\varepsilon)}\cdot e^{-\beta\cdot\sum\limits_{i\in\mathcal{B}\left(\varepsilon\right)} \tilde{\Phi}^i\left(\mathbf{s}^{|\mathcal{B}\left(\varepsilon\right)|\times 1}\right)} -\sum\limits_{i\in\mathcal{B}\left(\varepsilon\right)} \tilde{\Phi}^i\left(\mathbf{s}^{|\mathcal{B}\left(\varepsilon\right)|\times 1}\right) = \frac{E^{\mathcal{B}(\varepsilon)}}{\bar{\theta}}  + \left(1-\frac{1}{\bar{\theta}}\right)\cdot \varepsilon \cdot A^{\mathcal{B}(\varepsilon)}
	\end{align}
	\noindent The ignition condition $3.i)$ and threshold $\beta_{x^j(\varepsilon)=0}$ implies \ref{Equation threshold Beta}.
	\begin{align}\label{Equation threshold Beta}
		e^{-\beta_{x^j(\varepsilon)=0}\cdot\sum\limits_{i\in\mathcal{B}\left(\varepsilon\right)} \tilde{\Phi}^i\left(\mathbf{s}^{|\mathcal{B}\left(\varepsilon\right)|\times 1}\right)} = \frac{\theta^j}{\bar{\theta}}+\left(1-\frac{1}{\bar{\theta}}\right)\cdot \varepsilon
	\end{align}
	\noindent We Substitute \ref{Equation threshold Beta} into \ref{system EXP Sum}, we obtain \ref{Equation threshold Beta2}.
	\begin{align}\label{Equation threshold Beta2}
		A^{\mathcal{B}(\varepsilon)}\cdot \left[\frac{\theta^j}{\bar{\theta}}+\left(1-\frac{1}{\bar{\theta}}\right)\cdot \varepsilon \right] -\sum\limits_{i\in\mathcal{B}\left(\varepsilon\right)} \tilde{\Phi}^i\left(\mathbf{s}^{|\mathcal{B}\left(\varepsilon\right)|\times 1}\right) = \frac{E^{\mathcal{B}(\varepsilon)}}{\bar{\theta}}  + \left(1-\frac{1}{\bar{\theta}}\right)\cdot \varepsilon \cdot A^{\mathcal{B}(\varepsilon)}
	\end{align}
	\noindent Taking the derivative with respect to $\varepsilon$ on both sides of \ref{Equation threshold Beta2} yields \ref{Decom_Sum X_ThresholdBeta}.
	\begin{align}\label{Decom_Sum X_ThresholdBeta}
		&\left(1-\frac{1}{\bar{\theta}}\right) \cdot A^{\mathcal{B}(\varepsilon)} - \frac{\partial \sum\limits_{i\in\mathcal{B}\left(\varepsilon\right)} \tilde{\Phi}^i\left(\mathbf{s}^{|\mathcal{B}\left(\varepsilon\right)|\times 1}\right)}{ \partial \varepsilon} = \left(1-\frac{1}{\bar{\theta}}\right) \cdot A^{\mathcal{B}(\varepsilon)}  \notag \\  &\quad\quad\Longleftrightarrow    \frac{\partial \sum\limits_{i\in\mathcal{B}\left(\varepsilon\right)} \tilde{\Phi}^i\left(\mathbf{s}^{|\mathcal{B}\left(\varepsilon\right)|\times 1}\right)}{ \partial \varepsilon}=0 
	\end{align}
	\noindent Taking the derivative with respect to $\varepsilon$ on both sides of \ref{system EXP Sum} yields \ref{Decom_Beta_vs_Epsilon}.
	\begin{align}\label{Decom_Beta_vs_Epsilon}
		&\left(1-\frac{1}{\bar{\theta}}\right) \cdot A^{\mathcal{B}(\varepsilon)} = -\frac{\partial \sum\limits_{i\in\mathcal{B}\left(\varepsilon\right)} \tilde{\Phi}^i\left(\mathbf{s}^{|\mathcal{B}\left(\varepsilon\right)|\times 1}\right)}{ \partial \varepsilon} \\  &- A^{\mathcal{B}(\varepsilon)} \cdot e^{-\beta_{x^j(\varepsilon)=0} \cdot \sum\limits_{i\in\mathcal{B}\left(\varepsilon\right)} \tilde{\Phi}^i\left(\mathbf{s}^{|\mathcal{B}\left(\varepsilon\right)|\times 1}\right)}   \cdot \left( \frac{\partial \beta_{x^j(\varepsilon)=0}}{ \partial \varepsilon} \cdot \sum\limits_{i\in\mathcal{B}\left(\varepsilon\right)} \tilde{\Phi}^i\left(\mathbf{s}^{|\mathcal{B}\left(\varepsilon\right)|\times 1}\right)  + \frac{\partial \sum\limits_{i\in\mathcal{B}\left(\varepsilon\right)} \tilde{\Phi}^i\left(\mathbf{s}^{|\mathcal{B}\left(\varepsilon\right)|\times 1}\right)}{ \partial \varepsilon}\cdot \beta_{x^j(\varepsilon)=0}  \right)  \notag
	\end{align}
	\noindent We Substitute \ref{Decom_Sum X_ThresholdBeta} into \ref{Decom_Beta_vs_Epsilon}, we obtain \ref{Decom_Beta_vs_Epsilon2}.
	\begin{align}\label{Decom_Beta_vs_Epsilon2}
		\frac{\partial \beta_{x^j(\varepsilon)=0}}{ \partial \varepsilon} = \frac{\left(1-\bar{\theta}\right)\cdot \sum\limits_{i\in\mathcal{B}\left(\varepsilon\right)} \tilde{\Phi}^i\left(\mathbf{s}^{|\mathcal{B}\left(\varepsilon\right)|\times 1}\right)}{\bar{\theta}\cdot e^{-\beta_{x^j(\varepsilon)=0}\cdot\sum\limits_{i\in\mathcal{B}\left(\varepsilon\right)} \tilde{\Phi}^i\left(\mathbf{s}^{|\mathcal{B}\left(\varepsilon\right)|\times 1}\right)}}<0
	\end{align}
	\noindent Therefore, threshold $\beta_{x^j(\varepsilon)=0}$ is non-increasing in $\varepsilon$ and $\max \beta_{x^j(\varepsilon)=0}=\beta_{x^j(\varepsilon=0)=0}$.
	
	Since $e^{-\beta_{x^j(\varepsilon)=0}\cdot\sum\limits_{i\in\mathcal{B}\left(\varepsilon\right)} \tilde{\Phi}^i\left(\mathbf{s}^{|\mathcal{B}\left(\varepsilon\right)|\times 1}\right)}$ is non-decreasing in $\beta_{x^j(\varepsilon)=0}$ according to $2$. Once $\theta^j$ increases, we need to balance the equation \ref{Equation threshold Beta}, which implies that we need to increase $\beta_{x^j(\varepsilon)=0}$. Therefore $\beta_{x^j(\varepsilon)=0}$ is non-decreasing in $\theta^j$.
	
	\noindent\textbf{Proof of $6$}: 
	
	\textbf{Proof of $i)$}: $\max \beta_{x^j(\varepsilon)=0}<\beta$ implies that $\beta_{x^j(\varepsilon)=0}<\beta_{x^j(\varepsilon=0)=0}<\beta$. That is $x^j(\varepsilon)>0$. Refer to equation OA3.7, we obtain that $\mathbf{x}$ is non-decreasing in $\varepsilon$. That is $\tilde{\Phi}^i\left(\mathbf{s}^{|\mathcal{B}\left(\varepsilon\right)|\times 1}\right)<0,\forall i \in \mathcal{B}\left(\varepsilon\right)$ holds when $\varepsilon=0$. Moreover, $\beta_{x^j(\varepsilon=0)=0}<\beta$ implies that $x^j(\varepsilon=0)>0$. Therefore, the signs of $\tilde{\Phi}^i\left(\mathbf{s}^{|\mathcal{B}\left(\varepsilon\right)|\times 1}\right)$ and $x^j$ remain unchanged if $\varepsilon$ turns back to $\varepsilon=0$.
	
	\textbf{Proof of $ii)$}: We set $A_f\rightarrow +\infty$ in the last equation of \ref{system Exp} and obtain that $\tilde{\Phi}^f\left(\mathbf{s}^{|\mathcal{B}\left(\varepsilon\right)|\times 1}\right)$ can be arbitrarily larger than any other $\tilde{\Phi}^m\left(\mathbf{s}^{|\mathcal{B}\left(\varepsilon\right)|\times 1}\right),\,\forall m\in \mathcal{B}\left(\varepsilon\right),\,m\neq f$ while maintaining Hierarchy according to $4$ (i.e., $\sup\limits_{A_f}e^{-\beta\cdot\sum\limits_{i\in\mathcal{B}\left(\varepsilon\right)} \tilde{\Phi}^i\left(\mathbf{s}^{|\mathcal{B}\left(\varepsilon\right)|\times 1}\right)}=\frac{\theta^f}{\bar{\theta}}+\left(1-\frac{1}{\bar{\theta}}\right)\cdot \varepsilon<\frac{\theta^m}{\bar{\theta}}+\left(1-\frac{1}{\bar{\theta}}\right)\cdot \varepsilon,\,\forall m\neq f$).
	
	\textbf{Proof of $iii)$}: We firstly prove that such a threshold $\varepsilon_{\max\limits_{k\in\mathcal{B}(\varepsilon)} x^k =0 }$ always exists as $\varepsilon$ grows. Suppose not, choose $\varepsilon \geq \max\limits_{i\in \mathcal{B}(\varepsilon)} \frac{\theta^i-\bar{\theta}}{1-\bar{\theta}}$, that is, $s^i\geq0,\,\forall i\in \mathcal{B}(\varepsilon)$. if $\tilde{\Phi}^i\left(\mathbf{s}^{|\mathcal{B}\left(\varepsilon\right)|\times 1}\right)<0,\,\forall i\in \mathcal{B}(\varepsilon)$ still holds, we have $\hat{s}^i<0,\,\forall i\in \mathcal{B}(\varepsilon)$ which contradicts to $s^i\geq0,\,\forall i\in \mathcal{B}(\varepsilon)$. Therefore, at least one $x^i\geq 0,\, i\in \mathcal{B}(\varepsilon)$. By Intermediate Value Theorem, threshold $\varepsilon_{\max\limits_{k\in\mathcal{B}(\varepsilon)} x^k =0 }$ always exists as $\varepsilon$ grows. Then we prove that $\varepsilon_{x^f(\varepsilon)=0}=\varepsilon_{\max\limits_{k\in\mathcal{B}(\varepsilon)} x^k =0 }$. According to Lemma \ref{Perfection Switch=1} $1.b)$, the state $x^m\geq0,\, m\in \mathcal{B}(\varepsilon),\,m\neq f$ while $x^f\leq0$ cannot hold. According to Lemma \ref{Perfection Switch=1} $1.a)$, $x^m<0,\, m\in \mathcal{B}(\varepsilon),\,m\neq f$ while $x^f\leq0$. Therefore bank $f$ will be the first bank to leave $\mathcal{B}(\varepsilon)=\mathcal{P}_{s\,|\,t}\setminus \{j\}$, that is $\varepsilon_{x^f(\varepsilon)=0}=\varepsilon_{\max\limits_{k\in\mathcal{B}(\varepsilon)} x^k =0 }$.
	
	\textbf{Proof of $iv)$}: According to \ref{Perfection Switch=1} $1.a)$, $x^m<0,\, m\in \mathcal{B}(\varepsilon),\,m\neq f$ while $x^f=0$. We obtain that $x^f=0$ when combine the system $\mathcal{B}(\varepsilon)\setminus \{f\}$ with $\{f\}$. We obtain that $x^j>0$ when combine the system $\mathcal{B}(\varepsilon)\setminus \{f\}$ with $\{j\}$ (i.e.,  $e^{-\beta\cdot\sum\limits_{i\in\mathcal{B}(\varepsilon)\setminus \{f\}} \tilde{\Phi}^i\left(\mathbf{s}^{|\mathcal{B}(\varepsilon)\setminus \{f\}|\times 1}\right)}>\frac{\theta^f}{\bar{\theta}}+\left(1-\frac{1}{\bar{\theta}}\right)\cdot \varepsilon>\frac{\theta^j}{\bar{\theta}}+\left(1-\frac{1}{\bar{\theta}}\right)\cdot \varepsilon$). Combining $\mathcal{B}(\varepsilon)\setminus \{f\}$ and $\{f\}\cup \{j\}$ leads to $x^j>0$ and $x^f>0$. The first one $x^j>0$ can be demonstrated by the same argument in $6.i)$. The second one $x^f>0$ can be demonstrated by the same argument for Crowding-out Effect. Therefore, the system $\mathcal{B}(\varepsilon)\setminus \{f\}$ meets the requirements of Maximal Bail-out Cluster, that is, $\mathcal{B}(\varepsilon)\setminus \{f\}$ is a Maximal Bailout Cluster when $\varepsilon=\varepsilon_{\max\limits_{k\in\mathcal{B}(\varepsilon)} x^k =0 }$.
	
	A more intuitive proof is that $e^{-\beta\cdot\sum\limits_{i\in\mathcal{B}\left(\varepsilon\right)} \tilde{\Phi}^i\left(\mathbf{s}^{|\mathcal{B}\left(\varepsilon\right)|\times 1}\right)}=\frac{\theta^f}{\bar{\theta}}+\left(1-\frac{1}{\bar{\theta}}\right)\cdot \varepsilon$ when $\varepsilon=\varepsilon_{\max\limits_{k\in\mathcal{B}(\varepsilon)} x^k =0 }$. That is $e^{-\beta\cdot\sum\limits_{i\in\mathcal{B}\left(\varepsilon\right)} \tilde{\Phi}^i\left(\mathbf{s}^{|\mathcal{B}\left(\varepsilon\right)|\times 1}\right)}<\frac{\theta^m}{\bar{\theta}}+\left(1-\frac{1}{\bar{\theta}}\right)\cdot \varepsilon,\,\forall m\neq f$, and the $\tilde{\Phi}^m\left(\mathbf{s}^{|\mathcal{B}\left(\varepsilon\right)|\times 1}\right)<0,\,m\in\mathcal{B}(\varepsilon),\,\forall m\neq f$ even though we drop $\{f\}$ whose $\tilde{\Phi}^f\left(\mathbf{s}^{|\mathcal{B}\left(\varepsilon\right)|\times 1}\right)=0$. 
\end{proof}

\subsection{Proof of Lemma \ref{Lemma For Commutative Diagram} and \ref{Lemma ForPartitionInduced Equilibrium Transition Snd}: preliminary 2}\label{preliminaries:part2}
\begin{restatable}{thm3}{LemmaForPartitionInducedEquilibriumTransition}\label{Lemma ForPartitionInduced Equilibrium Transition}
	Under $\varepsilon< \min\limits_{i\in \mathcal{P}_{s\,|\,t}} \frac{\theta^i-\bar{\theta}}{1-\bar{\theta}}$ with $\beta$ satisfies $e^{-\beta\cdot \tilde{x}^{^w\bigstar\left(\mathcal{P}_{s\,|\,t},\varepsilon\right)}}< \frac{\theta^{^{w+1}\bigstar\left(\mathcal{P}_{s\,|\,t},\varepsilon\right)}}{\bar{\theta}}+\left(1-\frac{1}{\bar{\theta}}\right)\cdot \varepsilon,\,\forall w\in \bigstar\left(\mathcal{P}_{s\,|\,t},\varepsilon\right)$ the system $\mathcal{P}_{s\,|\,t}$ exhibits:
	
	1. We can control each of the sets $\mathcal{B}^{\bigstar\left(\mathcal{P}_{s\,|\,t},\varepsilon\right)},\,\mathcal{B}^{\hollowstar\left(\mathcal{P}_{s\,|\,t},\varepsilon\right)},\, \mathcal{B}^{\hollowstar^{2}\left(\mathcal{P}_{s\,|\,t},\varepsilon\right)},\cdots,\,\mathcal{B}^{\hollowstar^{\big|\bigstar\left(\mathcal{P}_{s\,|\,t},\varepsilon\right)\big|-1}\left(\mathcal{P}_{s\,|\,t},\varepsilon\right)}$ contains only one bank by regulating the individual $A$ and $\theta$. 
	
	2. The bijective correspondence $\mathcal{X}_{A\rightarrow \varepsilon}: A_{^w\bigstar\left(\mathcal{P}_{s\,|\,t},\varepsilon\right)} \rightarrow \varepsilon_{x^{^{w+1}\bigstar\left(\mathcal{P}_{s\,|\,t},\varepsilon\right)}=0}$ $\left(R\rightarrow R\right)$ holds, where $\varepsilon_{x^{^{w+1}\bigstar\left(\mathcal{P}_{s\,|\,t},\varepsilon\right)}=0}$ satisfies $e^{-\beta\cdot \tilde{x}^{^w\bigstar\left(\mathcal{P}_{s\,|\,t},\,\varepsilon_{x^{^{w+1}\bigstar\left(\mathcal{P}_{s\,|\,t},\varepsilon\right)}=0}\right)}}= \frac{\theta^{^{w+1}\bigstar\left(\mathcal{P}_{s\,|\,t},\,\varepsilon_{x^{^{w+1}\bigstar\left(\mathcal{P}_{s\,|\,t},\varepsilon\right)}=0}\right)}}{\bar{\theta}}+\left(1-\frac{1}{\bar{\theta}}\right)\cdot \varepsilon,\,\forall w\in \bigstar\left(\mathcal{P}_{s\,|\,t},\varepsilon\right)$.
	
	3. The bijective correspondence $\mathcal{X}_{\varepsilon\rightarrow A}:\varepsilon \rightarrow A^{^w\bigstar\left(\mathcal{P}_{s\,|\,t},\varepsilon\right)}$ $\left(R\rightarrow R\right)$ is guaranteed to hold, where $A^{^w\bigstar\left(\mathcal{P}_{s\,|\,t},\varepsilon\right)}$ satisfies $e^{-\beta\cdot \tilde{x}^{^w\bigstar\left(\mathcal{P}_{s\,|\,t},\varepsilon\right)}}= \frac{\theta^{^{w+1}\bigstar\left(\mathcal{P}_{s\,|\,t},\varepsilon\right)}}{\bar{\theta}}+\left(1-\frac{1}{\bar{\theta}}\right)\cdot \varepsilon,\,\forall w\in \bigstar\left(\mathcal{P}_{s\,|\,t},\varepsilon\right)$.
	
	4. The bijective correspondence $\mathcal{Y}_{\theta\rightarrow \varepsilon}:\theta^{^{w+1}\bigstar\left(\mathcal{P}_{s\,|\,t},\varepsilon\right)}\rightarrow \varepsilon_{x^{^{w+1}\bigstar\left(\mathcal{P}_{s\,|\,t},\varepsilon\right)}=0}$ $\left(R\rightarrow R\right)$ holds, where $\varepsilon_{x^{^{w+1}\bigstar\left(\mathcal{P}_{s\,|\,t},\varepsilon\right)}=0}$ satisfies $e^{-\beta\cdot \tilde{x}^{^w\bigstar\left(\mathcal{P}_{s\,|\,t},\,\varepsilon_{x^{^{w+1}\bigstar\left(\mathcal{P}_{s\,|\,t},\varepsilon\right)}=0}\right)}}= \frac{\theta^{^{w+1}\bigstar\left(\mathcal{P}_{s\,|\,t},\,\varepsilon_{x^{^{w+1}\bigstar\left(\mathcal{P}_{s\,|\,t},\varepsilon\right)}=0}\right)}}{\bar{\theta}}+\left(1-\frac{1}{\bar{\theta}}\right)\cdot \varepsilon,\,\forall w\in \bigstar\left(\mathcal{P}_{s\,|\,t},\varepsilon\right)$.
	
	5. The bijective correspondence $\mathcal{Y}_{\varepsilon\rightarrow \theta}:\varepsilon \rightarrow \theta_{^{w+1}\bigstar\left(\mathcal{P}_{s\,|\,t},\varepsilon\right)}$ $\left(R\rightarrow R\right)$ is guaranteed to hold, where $\theta_{^{w+1}\bigstar\left(\mathcal{P}_{s\,|\,t},\varepsilon\right)}$ satisfies $e^{-\beta\cdot \tilde{x}^{^w\bigstar\left(\mathcal{P}_{s\,|\,t},\varepsilon\right)}}= \frac{\theta_{^{w+1}\bigstar\left(\mathcal{P}_{s\,|\,t},\varepsilon\right)}}{\bar{\theta}}+\left(1-\frac{1}{\bar{\theta}}\right)\cdot \varepsilon,\,\forall w\in \bigstar\left(\mathcal{P}_{s\,|\,t},\varepsilon\right)$.
\end{restatable}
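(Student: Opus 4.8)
The plan is to treat the five claims as a single package built on the chain decomposition of Lemma \ref{Lemma for Chain} together with the comparative statics of Lemma \ref{Lemma ForDecomposition and Compression Equivalence}. Abbreviate $^w\bigstar$ for $^w\bigstar\left(\mathcal{P}_{s\,|\,t},\varepsilon\right)$. The standing hypothesis $e^{-\beta\cdot\tilde{x}^{^w\bigstar}}<\tfrac{\theta^{^{w+1}\bigstar}}{\bar{\theta}}+\bigl(1-\tfrac{1}{\bar{\theta}}\bigr)\varepsilon$ for all $w$ plays the role of the ignition condition of Lemma \ref{Lemma ForDecomposition and Compression Equivalence}$.3.i)$: it guarantees that $\bigstar\left(\mathcal{P}_{s\,|\,t},\varepsilon\right)$ is a genuine negative pre-regular chain whose successive reductions $\hollowstar^{k}\left(\mathcal{P}_{s\,|\,t},\varepsilon\right)$ stay pre-regular, and Lemma \ref{Lemma ForDecomposition and Compression Equivalence}$.1$ supplies existence and uniqueness of every counterfactual cluster solution $\tilde{\mathbf{\Phi}}\bigl(\mathbf{s}^{|\,\cdot\,|\times 1}\bigr)$ that appears, irrespective of $\beta<+\infty$. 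So all the objects below are well defined, and it remains to establish monotonicity, then surjectivity.

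For claim 1, I would iterate Lemma \ref{Lemma for Chain}$(b)$: a negative pre-regular chain splits as $\mathcal{B}^{\bigstar}\cup\hollowstar$, then $\hollowstar$ splits as $\mathcal{B}^{\hollowstar}\cup\hollowstar^{2}$, and so on, terminating after finitely many steps since $|\mathcal{P}_{s\,|\,t}|<\infty$. To force each $\mathcal{B}^{\hollowstar^{k}}$ to be a singleton I use the tuning permitted by Lemma \ref{Lemma ForDecomposition and Compression Equivalence}$.3$–$.4$: membership of a maximal bailout cluster is governed by comparing $e^{-\beta\sum_i\tilde{\Phi}^i}$ with $\tfrac{\theta^{f}}{\bar\theta}+(1-\tfrac1{\bar\theta})\varepsilon$ at the minimal-leverage element $f$, and raising only $A_f$ cannot reverse the decomposition. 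Working up the chain from its tail, I make the leverage ratios strictly decreasing and spread the assets so that at the given $\varepsilon$ exactly one bank sits at each critical level; the strict threshold hierarchy then peels the chain one bank at a time, giving $|\mathcal{B}^{\hollowstar^{k}}|=1$ for every $k$.

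Claims 2–5 are the four bijections. For the forward maps $\mathcal{X}_{A\to\varepsilon}$ and $\mathcal{Y}_{\theta\to\varepsilon}$, fix all data except the varied parameter. By Lemma \ref{Lemma ForDecomposition and Compression Equivalence}$.2$, $e^{-\beta\sum_i\tilde{\Phi}^i\left(\mathbf{s}^{|\mathcal{B}(\varepsilon)|\times 1}\right)}$ is monotone in $A_{^w\bigstar}$, while the right side of $e^{-\beta\tilde{x}^{^w\bigstar}}=\tfrac{\theta^{^{w+1}\bigstar}}{\bar\theta}+(1-\tfrac1{\bar\theta})\varepsilon$ pins $\varepsilon$ monotonically; combined with the monotonicity of $\tilde{x}$ in $\varepsilon$ used in the proof of Lemma \ref{Lemma ForDecomposition and Compression Equivalence}$.6.i)$, the solution $\varepsilon_{x^{^{w+1}\bigstar}=0}$ of $x^{^{w+1}\bigstar}=0$ is a strictly monotone, hence injective, function of $A_{^w\bigstar}$; evaluating the limits $A_{^w\bigstar}\to0$ and $A_{^w\bigstar}\to\infty$ via the boundary behaviour of $\Phi^{-1}$ in Lemma \ref{Corresponding to Counterfactual Monopoly Space}$(c)$–$(f)$ shows the image exhausts $\mathbb{R}$, so $\mathcal{X}_{A\to\varepsilon}$ is a bijection $\mathbb{R}\to\mathbb{R}$. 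The map $\mathcal{Y}_{\theta\to\varepsilon}$ is the same argument with $\theta^{^{w+1}\bigstar}$ in place of $A_{^w\bigstar}$, now using that $\beta_{x^j(\varepsilon)=0}$ is non-decreasing in $\theta^j$ (Lemma \ref{Lemma ForDecomposition and Compression Equivalence}$.5$). For the inverse maps $\mathcal{X}_{\varepsilon\to A}$ and $\mathcal{Y}_{\varepsilon\to\theta}$: given $\varepsilon$ in the admissible range, $\tfrac{\theta^{^{w+1}\bigstar}}{\bar\theta}+(1-\tfrac1{\bar\theta})\varepsilon$ lies strictly above $e^{-\beta\tilde{x}^{^w\bigstar}}$, so by the Intermediate Value Theorem and the same strict monotonicity there is a unique $A^{^w\bigstar}$ (resp.\ $\theta_{^{w+1}\bigstar}$) realising the defining equation, and monotonicity again gives $\mathbb{R}\to\mathbb{R}$. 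That these forward and inverse maps are mutually inverse on the appropriate identifications is exactly the content of Lemma \ref{Lemma For Commutative Diagram}; for $\theta$-perturbations one restricts to the sub-interval on which the leverage ordering $\theta^{^w\bigstar}\ge\theta^{^{w+1}\bigstar}\ge\cdots$ defining the chain is preserved, whereas $A$-perturbations preserve it automatically.

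I expect the main obstacle to be the surjectivity and limit bookkeeping in claims 2–5: one must track how $\tilde{x}^{^w\bigstar}$ — which depends on the \emph{entire} reduced chain through the coupled aggregate equation \ref{system P Sum} — moves when a single attribute is varied, and certify that the induced critical shock $\varepsilon_{x^{^{w+1}\bigstar}=0}$ sweeps all of $\mathbb{R}$ rather than a proper subinterval. The direction of monotonicity is obscured by the sign of $1-\tfrac1{\bar\theta}<0$ and by the sign of the crowding-out difference $\tilde{\mathbf{\Phi}}-\mathbf{\Phi}^{-1}$, so those must be handled carefully; and the singleton-peeling of claim 1 has to be carried out in a way that is simultaneously compatible with all four bijection constructions, not one at a time.
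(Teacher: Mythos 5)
Your treatment of claim 1 is essentially the paper's: the paper also peels the chain one bank at a time by choosing the assets $A_{{^0}\bigstar\left(\mathcal{P}_{s\,|\,t},\varepsilon\right)},\,A_{{^1}\bigstar\left(\mathcal{P}_{s\,|\,t},\varepsilon\right)},\ldots$ so that, at each level, $\sup_{A}\bigl[e^{-\beta\cdot\tilde{\Phi}}\bigr]=\tfrac{\theta}{\bar{\theta}}+\bigl(1-\tfrac{1}{\bar{\theta}}\bigr)\varepsilon$ strictly dominates the corresponding threshold of every remaining bank (Lemma \ref{Lemma ForDecomposition and Compression Equivalence} $3.i)$--$3.ii)$), each inequality depending only on the head of the current subchain. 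No issue there.

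For claims 2--5 you take a genuinely different, and in one respect unfinished, route. You propose to establish each correspondence by strict monotonicity plus an Intermediate Value/limit argument and you correctly flag that certifying surjectivity onto all of $\mathbb{R}$ — tracking how $\tilde{x}^{^w\bigstar\left(\mathcal{P}_{s\,|\,t},\varepsilon\right)}$ moves through the coupled aggregate equation when one attribute varies — is the hard part. The paper never does this. Its proof of claim 2 is a pure well-definedness argument: it identifies $\varepsilon_{x^{^{w+1}\bigstar}=0}$ with the critical shock $\varepsilon_{\max_{f\in{_{n+1}}\mathcal{B}(\varepsilon)}x^f=0}$ supplied by the Compression Equivalence Theorem, whose existence and uniqueness are already known. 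Claim 3 is read off from the singleton construction of claim 1 (the unique minimal assets solving the critical equations \ref{Correspondence_Varespilon to A}). Claim 4 is obtained by viewing $\mathcal{X}_{A\rightarrow\varepsilon}$ as a map on pairs $\bigl(A_{^w\bigstar},\theta^{^{w+1}\bigstar}\bigr)$ and taking the dual slice in $\theta$. Claim 5 is an explicit closed-form construction, $\theta^{^{w+1}\bigstar}=\bigl[e^{-\beta\tilde{x}^{^w\bigstar}}-\bigl(1-\tfrac{1}{\bar{\theta}}\bigr)\varepsilon\bigr]\bar{\theta}$, followed by a feasibility check that the constructed leverage still satisfies the standing constraint $\varepsilon<\min_{i}\tfrac{\theta^i-\bar{\theta}}{1-\bar{\theta}}$ (inequality \ref{Correspondence_Varespilon to Theta3}). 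Your proposal omits this feasibility verification, and it is not optional: without it the constructed $\theta$ could violate the hypothesis under which the whole lemma operates. So the concrete gaps are (i) the surjectivity bookkeeping you yourself identify, which your route requires but which can be avoided entirely by following the paper's existence-and-uniqueness reductions, and (ii) the missing check that the $\theta$ produced by $\mathcal{Y}_{\varepsilon\rightarrow\theta}$ remains admissible. If you adopt the paper's reductions for 2--4 and add the explicit construction plus feasibility check for 5, the argument closes.
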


\begin{proof}[\normalfont\bfseries Proof of Lemma \ref{Lemma ForPartitionInduced Equilibrium Transition}]
	\,
	
	\noindent\textbf{Proof of $1$}: It's equivalent to prove that $\mathcal{B}^{\bigstar\left(\mathcal{P}_{s\,|\,t},\varepsilon\right)}={^0}\bigstar\left(\mathcal{P}_{s\,|\,t},\varepsilon\right),\,\mathcal{B}^{\hollowstar\left(\mathcal{P}_{s\,|\,t},\varepsilon\right)}={^1}\bigstar\left(\mathcal{P}_{s\,|\,t},\varepsilon\right),\,\cdots$ can be controlled by setting appropriate individual $A$ and $\theta$. Since all $\theta$ haved been controlled by heterogeneity, we will focus solely on $A$ in the subsequent analysis.
	
	In the system $\bigstar\left(\mathcal{P}_{s\,|\,t},\varepsilon\right)$, we have \ref{Chain-0 MaximalBailout} refer to Lemma \ref{Lemma ForDecomposition and Compression Equivalence} $3.i)$ and $3.ii)$.
	\begin{align}\label{Chain-0 MaximalBailout}
		\sup\limits_{A_{{^0}\bigstar\left(\mathcal{P}_{s\,|\,t},\varepsilon\right)}} \left[e^{-\beta\cdot \tilde{\Phi}^{{^0}\bigstar\left(\mathcal{P}_{s\,|\,t},\varepsilon\right)}}\right]=\frac{\theta^{{^0}\bigstar\left(\mathcal{P}_{s\,|\,t},\varepsilon\right)}}{\bar{\theta}}+\left(1-\frac{1}{\bar{\theta}}\right)\cdot \varepsilon\gg\max\limits_{k\in \hollowstar\left(\mathcal{P}_{s\,|\,t},\varepsilon\right)}\frac{\theta^k}{\bar{\theta}}+\left(1-\frac{1}{\bar{\theta}}\right)\cdot \varepsilon
	\end{align}
	\noindent Furthermore, we can obtain similar expression \ref{Chain-1 MaximalBailout} in the system $\hollowstar\left(\mathcal{P}_{s\,|\,t},\varepsilon\right)$.
	\begin{align}\label{Chain-1 MaximalBailout}
		\sup\limits_{A_{{^0}\hollowstar\left(\mathcal{P}_{s\,|\,t},\varepsilon\right)}} \left[e^{-\beta\cdot \tilde{\Phi}^{{^0}\hollowstar\left(\mathcal{P}_{s\,|\,t},\varepsilon\right)}}\right]=\frac{\theta^{{^0}\hollowstar\left(\mathcal{P}_{s\,|\,t},\varepsilon\right)}}{\bar{\theta}}+\left(1-\frac{1}{\bar{\theta}}\right)\cdot \varepsilon\gg\max\limits_{k\in \hollowstar^{1}\left(\mathcal{P}_{s\,|\,t},\varepsilon\right)}\frac{\theta^k}{\bar{\theta}}+\left(1-\frac{1}{\bar{\theta}}\right)\cdot \varepsilon
	\end{align}
	\noindent These expressions can be recursively formulated until the last system $\hollowstar^{\big|\bigstar\left(\mathcal{P}_{s\,|\,t},\varepsilon\right)\big|-1}\left(\mathcal{P}_{s\,|\,t},\varepsilon\right)$ which contains only ${^{\big|\bigstar\left(\mathcal{P}_{s\,|\,t},\varepsilon\right)\big|-1}}\bigstar\left(\mathcal{P}_{s\,|\,t},\varepsilon\right)$.
	
	Since each inequality we mentioned above is only determined by the individual asset and leverage ratio of the first bank of the chain. Therefore we can ensure that inequalities \ref{Chain-0 MaximalBailout}, \ref{Chain-1 MaximalBailout}, $\cdots$ all hold by precisely controlling $A_{{^0}\bigstar\left(\mathcal{P}_{s\,|\,t},\varepsilon\right)},\,A_{{^1}\bigstar\left(\mathcal{P}_{s\,|\,t},\varepsilon\right)},\,\cdots,\,A_{{^{\big|\bigstar\left(\mathcal{P}_{s\,|\,t},\varepsilon\right)\big|-1}}\bigstar\left(\mathcal{P}_{s\,|\,t},\varepsilon\right)}$. That is, each of the sets $\mathcal{B}^{\bigstar\left(\mathcal{P}_{s\,|\,t},\varepsilon\right)},\,\mathcal{B}^{\hollowstar\left(\mathcal{P}_{s\,|\,t},\varepsilon\right)},\, \mathcal{B}^{\hollowstar^{2}\left(\mathcal{P}_{s\,|\,t},\varepsilon\right)},\cdots,\,\mathcal{B}^{\hollowstar^{\big|\bigstar\left(\mathcal{P}_{s\,|\,t},\varepsilon\right)\big|-1}\left(\mathcal{P}_{s\,|\,t},\varepsilon\right)}$ contains only one bank. 
	
	\noindent\textbf{Proof of $2$}: Refer to Compression Equivalence Theorem \ref{Decomposition and Compression Equivalence}, we obtain that $\varepsilon_{x^{^{w+1}\bigstar\left(\mathcal{P}_{s\,|\,t},\varepsilon\right)}=0}$ is equivalent to $\varepsilon_{\max\limits_{f\in _{n+1}\mathcal{B}\left(\varepsilon\right)} x^f=0}$. Therefore, $\varepsilon_{x^{^{w+1}\bigstar\left(\mathcal{P}_{s\,|\,t},\varepsilon\right)}=0}$ exists and is unique for the corresponding $A_{^w\bigstar\left(\mathcal{P}_{s\,|\,t},\varepsilon\right)}$.
	
	\noindent\textbf{Proof of $3$}: Refer to Lemma \ref{Lemma ForPartitionInduced Equilibrium Transition} $1$, there exist the unique minimum elements in \ref{Unique minimum A}
	\begin{align}\label{Unique minimum A}
		\bigg\{A_{{^0}\bigstar\left(\mathcal{P}_{s\,|\,t},\varepsilon\right)},\,A_{{^1}\bigstar\left(\mathcal{P}_{s\,|\,t},\varepsilon\right)},\,\cdots,\,A_{{^{\big|\bigstar\left(\mathcal{P}_{s\,|\,t},\varepsilon\right)\big|-1}}\bigstar\left(\mathcal{P}_{s\,|\,t},\varepsilon\right)}\bigg\}
	\end{align}
	which precisely satisfy equations \ref{Correspondence_Varespilon to A}.
	\begin{align}\label{Correspondence_Varespilon to A}
		e^{-\beta\cdot \tilde{\Phi}^{{^0}\bigstar\left(\mathcal{P}_{s\,|\,t},\varepsilon\right)}}&=\max\limits_{k\in \hollowstar\left(\mathcal{P}_{s\,|\,t},\varepsilon\right)}\frac{\theta^k}{\bar{\theta}}+\left(1-\frac{1}{\bar{\theta}}\right)\cdot \varepsilon \notag  \\ e^{-\beta\cdot \tilde{\Phi}^{{^0}\hollowstar\left(\mathcal{P}_{s\,|\,t},\varepsilon\right)}}&=\max\limits_{k\in \hollowstar^{1}\left(\mathcal{P}_{s\,|\,t},\varepsilon\right)}\frac{\theta^k}{\bar{\theta}}+\left(1-\frac{1}{\bar{\theta}}\right)\cdot \varepsilon  \\   &\vdots \notag
	\end{align}
	\noindent Therefore, the bijective correspondence $\mathcal{X}_{\varepsilon\rightarrow A}:\varepsilon \rightarrow A^{^w\bigstar\left(\mathcal{P}_{s\,|\,t},\varepsilon\right)}$ $\left(R\rightarrow R\right)$ is guaranteed to hold.
	
	\noindent\textbf{Proof of $4$}: $\varepsilon_{x^{^{w+1}\bigstar\left(\mathcal{P}_{s\,|\,t},\varepsilon\right)}=0}$ is the same in Lemma \ref{Lemma ForPartitionInduced Equilibrium Transition} $2$, since the system in Lemma \ref{Lemma ForPartitionInduced Equilibrium Transition} $2$ share the same parameters in Lemma \ref{Lemma ForPartitionInduced Equilibrium Transition} $4$(i.e., they have the same $A_{^w\bigstar\left(\mathcal{P}_{s\,|\,t},\varepsilon\right)}$ and $\theta^{^{w}\bigstar\left(\mathcal{P}_{s\,|\,t},\varepsilon\right)},\,\forall w\in \bigstar\left(\mathcal{P}_{s\,|\,t},\varepsilon\right)$).
	
	The bijective correspondence $\mathcal{X}_{A\rightarrow \varepsilon}: A_{^w\bigstar\left(\mathcal{P}_{s\,|\,t},\varepsilon\right)} \rightarrow \varepsilon_{x^{^{w+1}\bigstar\left(\mathcal{P}_{s\,|\,t},\varepsilon\right)}=0}$ $\left(R\rightarrow R\right)$ is actually implies the bijective set to set correspondence $\mathcal{X}_{A\rightarrow \varepsilon}: \left(A_{^w\bigstar\left(\mathcal{P}_{s\,|\,t},\varepsilon\right)}, \theta^{^{w+1}\bigstar\left(\mathcal{P}_{s\,|\,t},\varepsilon\right)} \right) \rightarrow \varepsilon_{x^{^{w+1}\bigstar\left(\mathcal{P}_{s\,|\,t},\varepsilon\right)}=0}$$\left(\mathbb{R}^2\rightarrow R\right)$. Therefore, the bijective correspondence $\mathcal{Y}_{\theta\rightarrow \varepsilon}:\theta^{^{w+1}\bigstar\left(\mathcal{P}_{s\,|\,t},\varepsilon\right)}\rightarrow \varepsilon_{x^{^{w+1}\bigstar\left(\mathcal{P}_{s\,|\,t},\varepsilon\right)}=0}$ $\left(R\rightarrow R\right)$ is the dual expression of $\mathcal{X}_{A\rightarrow \varepsilon}$.
	
	\noindent\textbf{Proof of $5$}: We can gradually decrease $\theta^{^{w+1}\bigstar\left(\mathcal{P}_{s\,|\,t},\varepsilon\right)}$ in $e^{-\beta\cdot \tilde{x}^{^w\bigstar\left(\mathcal{P}_{s\,|\,t},\varepsilon\right)}}< \frac{\theta^{^{w+1}\bigstar\left(\mathcal{P}_{s\,|\,t},\varepsilon\right)}}{\bar{\theta}}+\left(1-\frac{1}{\bar{\theta}}\right)\cdot \varepsilon,\,\forall w\in \bigstar\left(\mathcal{P}_{s\,|\,t},\varepsilon\right)$ to satisfy \ref{Correspondence_Varespilon to Theta1}.
	\begin{align}\label{Correspondence_Varespilon to Theta1}
		e^{-\beta\cdot \tilde{x}^{^w\bigstar\left(\mathcal{P}_{s\,|\,t},\varepsilon\right)}}= \frac{\theta^{^{w+1}\bigstar\left(\mathcal{P}_{s\,|\,t},\varepsilon\right)}}{\bar{\theta}}+\left(1-\frac{1}{\bar{\theta}}\right)\cdot \varepsilon,\,\forall w\in \bigstar\left(\mathcal{P}_{s\,|\,t},\varepsilon\right)
	\end{align}
	Definitely it can be achieved. We firstly transfer \ref{Correspondence_Varespilon to Theta1} into \ref{Correspondence_Varespilon to Theta2}.
	\begin{align}\label{Correspondence_Varespilon to Theta2}
		\theta^{^{w+1}\bigstar\left(\mathcal{P}_{s\,|\,t},\varepsilon\right)}= \bigg[e^{-\beta\cdot \tilde{x}^{^w\bigstar\left(\mathcal{P}_{s\,|\,t},\varepsilon\right)}}-\left(1-\frac{1}{\bar{\theta}}\right)\cdot \varepsilon\bigg]\cdot \bar{\theta},\,\forall w\in \bigstar\left(\mathcal{P}_{s\,|\,t},\varepsilon\right)
	\end{align}
	Then we only need to check whether $\theta^{^{w+1}\bigstar\left(\mathcal{P}_{s\,|\,t},\varepsilon\right)}$ satisfies $\varepsilon< \min\limits_{i\in \mathcal{P}_{s\,|\,t}} \frac{\theta^i-\bar{\theta}}{1-\bar{\theta}}$ or not. That is \ref{Correspondence_Varespilon to Theta3}. 
	\begin{align}\label{Correspondence_Varespilon to Theta3}
		\frac{\theta^{^{w+1}\bigstar\left(\mathcal{P}_{s\,|\,t},\varepsilon\right)}-\bar{\theta}}{1-\bar{\theta}}&=\frac{\bigg[e^{-\beta\cdot \tilde{x}^{^w\bigstar\left(\mathcal{P}_{s\,|\,t},\varepsilon\right)}}-\left(1-\frac{1}{\bar{\theta}}\right)\cdot \varepsilon\bigg]\cdot \bar{\theta}-\bar{\theta}}{1-\bar{\theta}}\notag \\ &> \frac{\bigg[e^{-\beta\cdot \tilde{x}^{^w\bigstar\left(\mathcal{P}_{s\,|\,t},\varepsilon\right)}}-1\bigg]\cdot \bar{\theta}}{{1-\bar{\theta}}} +\varepsilon \\ &> 0+\varepsilon =  \varepsilon  \notag
	\end{align}
	
	Therefore, there exist such $\theta^{^{w+1}\bigstar\left(\mathcal{P}_{s\,|\,t},\varepsilon\right)},\,\forall w\in \bigstar\left(\mathcal{P}_{s\,|\,t},\varepsilon\right)$ to satisfy the condition we mentioned above. That is, the bijective correspondence $\mathcal{Y}_{\varepsilon\rightarrow \theta}:\varepsilon \rightarrow \theta_{^{w+1}\bigstar\left(\mathcal{P}_{s\,|\,t},\varepsilon\right)}$ $\left(R\rightarrow R\right)$ is guaranteed to hold.
\end{proof}

\subsection{Proof of Lemma \ref{Lemma for Chain}}
\begin{proof}[\normalfont\bfseries Proof of Lemma \ref{Lemma for Chain}]
	\,
	
	\noindent\textbf{Proof of $a)$}: 
	
	For any negative regular chain, it will make $\hat{s}^{^w\bigstar\left(\mathcal{P}_{s\,|\,t},\varepsilon\right)}<0$ hold. For any positive regular chain, it will make $\hat{s}^{^w\bigstar\left(\mathcal{P}_{s\,|\,t},\varepsilon\right)}>0$ hold. Therefore, pre-regularity is the necessary condition for achieving regularity.
	
	\noindent\textbf{Proof of $b)$}: 
	
	Refer to Theorem \ref{Decomposition and Compression Equivalence}, we obtain that $\mathcal{B}^{\mathcal{I}}(\varepsilon)=\varnothing$ $\bigg(\mathcal{B}(\varepsilon)=\varnothing\bigg)$ in any negative (positive) pre-regular chain $\bigstar\left(\mathcal{P}_{s\,|\,t},\varepsilon\right)$. Again by Theorem \ref{Decomposition and Compression Equivalence}, We can decompose negative (positive) pre-regular chain $\bigstar\left(\mathcal{P}_{s\,|\,t},\varepsilon\right)$ into a Maximal Bailout (Bail-in) Cluster $\mathcal{B}^{\bigstar\left(\mathcal{P}_{s\,|\,t},\varepsilon\right)}$ and a subset of $\bigstar\left(\mathcal{P}_{s\,|\,t},\varepsilon\right)$ (i.e., $\bigstar\left(\mathcal{P}_{s\,|\,t},\varepsilon\right)\setminus \mathcal{B}^{\bigstar\left(\mathcal{P}_{s\,|\,t},\varepsilon\right)}$) . Obviously $\hollowstar\left(\mathcal{P}_{s\,|\,t},\varepsilon\right)=\bigstar\left(\mathcal{P}_{s\,|\,t},\varepsilon\right)\setminus \mathcal{B}^{\bigstar\left(\mathcal{P}_{s\,|\,t},\varepsilon\right)}$ is also a negative (positive) pre-regular chain. Therefore, negative (positive) pre-regular chain $\bigstar\left(\mathcal{P}_{s\,|\,t},\varepsilon\right)$ can be decomposed into a regular chain $\mathcal{B}^{\bigstar\left(\mathcal{P}_{s\,|\,t},\varepsilon\right)}$ and a different pre-regular chain $\hollowstar\left(\mathcal{P}_{s\,|\,t},\varepsilon\right)$. 
	
	\noindent\textbf{Proof of $c)$}: 
	
	By definition of $\mathcal{B}^{\bigstar\left(\mathcal{P}_{s\,|\,t},\varepsilon\right)}$, for banks in system $\bigg\{\mathcal{B}^{\bigstar\left(\mathcal{P}_{s\,|\,t},\varepsilon\right)}\,\cup\, \left\{\left(\sigma\right) \diamond \hollowstar\left(\mathcal{P}_{s\,|\,t},\varepsilon\right)\right\}\bigg\}$, we have $x^{^m\mathcal{B}^{\bigstar\left(\mathcal{P}_{s\,|\,t},\varepsilon\right)}}<0,\,\forall m\in \mathcal{B}^{\bigstar\left(\mathcal{P}_{s\,|\,t},\varepsilon\right)}$ and $x^{^n\mathcal{B}^{\bigstar\left(\mathcal{P}_{s\,|\,t},\varepsilon\right)}}>0,\,\forall n\in \left\{\left(\sigma\right) \diamond \hollowstar\left(\mathcal{P}_{s\,|\,t},\varepsilon\right)\right\}$. Therefore, $\mathcal{B}^{\bigstar\left(\mathcal{P}_{s\,|\,t},\varepsilon\right)}=\mathcal{B}^{\bigg\{\mathcal{B}^{\bigstar\left(\mathcal{P}_{s\,|\,t},\varepsilon\right)}\,\cup\, \left\{\left(\sigma\right) \diamond \hollowstar\left(\mathcal{P}_{s\,|\,t},\varepsilon\right)\right\}\bigg\}}$.
\end{proof}

\subsection{Proof of Lemma \ref{Lemma For Commutative Diagram}}
\begin{proof}[\normalfont\bfseries Proof of Lemma \ref{Lemma For Commutative Diagram}]
	\,
	
	\noindent\textbf{Proof of $1$}: We input $\varepsilon$ into $\mathcal{X}_{\varepsilon\rightarrow A}$ and obtain $A^{^w\bigstar\left(\mathcal{P}_{s\,|\,t},\varepsilon\right)}$ which satisfies equation \ref{X_input A}. Observe that $\left(A^{^w\bigstar\left(\mathcal{P}_{s\,|\,t},\varepsilon\right)},\varepsilon \right)$ is a feasible pair satisfies equation \ref{X_input A}. Therefore, inputting $A^{^w\bigstar\left(\mathcal{P}_{s\,|\,t},\varepsilon\right)}$ into $\mathcal{X}_{A\rightarrow \varepsilon}$ leads to $\varepsilon$. That is $\mathcal{X}_{A\rightarrow \varepsilon}\circ \mathcal{X}_{\varepsilon\rightarrow A}\leftarrow \varepsilon=\varepsilon$.
	\begin{align}\label{X_input A}
		e^{-\beta\cdot \tilde{x}^{^w\bigstar\left(\mathcal{P}_{s\,|\,t},\varepsilon\right)}}= \frac{\theta^{^{w+1}\bigstar\left(\mathcal{P}_{s\,|\,t},\varepsilon\right)}}{\bar{\theta}}+\left(1-\frac{1}{\bar{\theta}}\right)\cdot \varepsilon,\,\forall w\in \bigstar\left(\mathcal{P}_{s\,|\,t},\varepsilon\right)
	\end{align}
	We input $A_{^w\bigstar\left(\mathcal{P}_{s\,|\,t},\varepsilon\right)}$ into $\mathcal{X}_{A\rightarrow \varepsilon}$ and obtain $\varepsilon_{x^{^{w+1}\bigstar\left(\mathcal{P}_{s\,|\,t},\varepsilon\right)}=0}$ which satisfies equation \ref{X_input Epsilon}. Observe that $\left(A_{^w\bigstar\left(\mathcal{P}_{s\,|\,t},\varepsilon\right)},\varepsilon_{x^{^{w+1}\bigstar\left(\mathcal{P}_{s\,|\,t},\varepsilon\right)}=0} \right)$ is a feasible pair satisfies equation \ref{X_input Epsilon}. Therefore, inputting $\varepsilon_{x^{^{w+1}\bigstar\left(\mathcal{P}_{s\,|\,t},\varepsilon\right)}=0}$ into $\mathcal{X}_{\varepsilon\rightarrow A}$ leads to $A_{^w\bigstar\left(\mathcal{P}_{s\,|\,t},\varepsilon\right)}$. That is $\mathcal{X}_{\varepsilon\rightarrow A}\circ \mathcal{X}_{A\rightarrow \varepsilon} \leftarrow A_{^w\bigstar\left(\mathcal{P}_{s\,|\,t},\varepsilon\right)}=A_{^w\bigstar\left(\mathcal{P}_{s\,|\,t},\varepsilon\right)}$.
	\begin{align}\label{X_input Epsilon}
		e^{-\beta\cdot \tilde{x}^{^w\bigstar\left(\mathcal{P}_{s\,|\,t},\,\varepsilon_{x^{^{w+1}\bigstar\left(\mathcal{P}_{s\,|\,t},\varepsilon\right)}=0}\right)}}= \frac{\theta^{^{w+1}\bigstar\left(\mathcal{P}_{s\,|\,t},\,\varepsilon_{x^{^{w+1}\bigstar\left(\mathcal{P}_{s\,|\,t},\varepsilon\right)}=0}\right)}}{\bar{\theta}}+\left(1-\frac{1}{\bar{\theta}}\right)\cdot \varepsilon,\,\forall w\in \bigstar\left(\mathcal{P}_{s\,|\,t},\varepsilon\right)
	\end{align}
	
	\noindent\textbf{Proof of $2$}:We input $\varepsilon$ into $\mathcal{Y}_{\varepsilon\rightarrow  \theta}$ and obtain $\theta_{^{w+1}\bigstar\left(\mathcal{P}_{s\,|\,t},\varepsilon\right)}$ which satisfies equation \ref{Y_input Theta}. Observe that $\left(\theta_{^{w+1}\bigstar\left(\mathcal{P}_{s\,|\,t},\varepsilon\right)},\varepsilon \right)$ is a feasible pair satisfies equation \ref{Y_input Theta}. Therefore, inputting $\theta_{^{w+1}\bigstar\left(\mathcal{P}_{s\,|\,t},\varepsilon\right)}$ into $\mathcal{Y}_{\theta\rightarrow \varepsilon}$ leads to $\varepsilon$. That is $\mathcal{Y}_{\theta\rightarrow \varepsilon} \circ \mathcal{Y}_{\varepsilon\rightarrow  \theta}\leftarrow \varepsilon=\varepsilon$.
	\begin{align}\label{Y_input Theta}
		e^{-\beta\cdot \tilde{x}^{^w\bigstar\left(\mathcal{P}_{s\,|\,t},\varepsilon\right)}}= \frac{\theta_{^{w+1}\bigstar\left(\mathcal{P}_{s\,|\,t},\varepsilon\right)}}{\bar{\theta}}+\left(1-\frac{1}{\bar{\theta}}\right)\cdot \varepsilon,\,\forall w\in \bigstar\left(\mathcal{P}_{s\,|\,t},\varepsilon\right)
	\end{align}
	We input $\theta^{^{w+1}\bigstar\left(\mathcal{P}_{s\,|\,t},\varepsilon\right)}$ into $\mathcal{Y}_{\theta\rightarrow \varepsilon}$ and obtain $\varepsilon_{x^{^{w+1}\bigstar\left(\mathcal{P}_{s\,|\,t},\varepsilon\right)}=0}$ which satisfies equation \ref{Y_input Epsilon}. Observe that $\left(\theta^{^{w+1}\bigstar\left(\mathcal{P}_{s\,|\,t},\varepsilon\right)},\varepsilon_{x^{^{w+1}\bigstar\left(\mathcal{P}_{s\,|\,t},\varepsilon\right)}=0} \right)$ is a feasible pair satisfies equation \ref{Y_input Epsilon}. Therefore, inputting $\varepsilon_{x^{^{w+1}\bigstar\left(\mathcal{P}_{s\,|\,t},\varepsilon\right)}=0}$ into $\mathcal{Y}_{\varepsilon\rightarrow  \theta}$ leads to $\theta^{^{w+1}\bigstar\left(\mathcal{P}_{s\,|\,t},\varepsilon\right)}$. That is $\mathcal{Y}_{\varepsilon\rightarrow  \theta} \circ \mathcal{Y}_{\theta\rightarrow \varepsilon} \leftarrow  \theta^{^{w}\bigstar\left(\mathcal{P}_{s\,|\,t},\varepsilon\right)}=\theta^{^{w}\bigstar\left(\mathcal{P}_{s\,|\,t},\varepsilon\right)}$.
	\begin{align}\label{Y_input Epsilon}
		e^{-\beta\cdot \tilde{x}^{^w\bigstar\left(\mathcal{P}_{s\,|\,t},\,\varepsilon_{x^{^{w+1}\bigstar\left(\mathcal{P}_{s\,|\,t},\varepsilon\right)}=0}\right)}}= \frac{\theta^{^{w+1}\bigstar\left(\mathcal{P}_{s\,|\,t},\,\varepsilon_{x^{^{w+1}\bigstar\left(\mathcal{P}_{s\,|\,t},\varepsilon\right)}=0}\right)}}{\bar{\theta}}+\left(1-\frac{1}{\bar{\theta}}\right)\cdot \varepsilon,\,\forall w\in \bigstar\left(\mathcal{P}_{s\,|\,t},\varepsilon\right)
	\end{align}
	
	\noindent\textbf{Proof of $3$}: Refer to \ref{Circle_input A}, we obtain that $\mathcal{X}_{\varepsilon\rightarrow A}\circ \mathcal{Y}_{\theta\rightarrow \varepsilon} \circ \mathcal{Y}_{\varepsilon\rightarrow  \theta} \circ \mathcal{X}_{A\rightarrow \varepsilon} \leftarrow A_{^w\bigstar\left(\mathcal{P}_{s\,|\,t},\varepsilon\right)}=A_{^w\bigstar\left(\mathcal{P}_{s\,|\,t},\varepsilon\right)}$.
	\begin{align}\label{Circle_input A}
		&\quad\,\,\quad\mathcal{X}_{\varepsilon\rightarrow A}\circ \mathcal{Y}_{\theta\rightarrow \varepsilon} \circ \mathcal{Y}_{\varepsilon\rightarrow  \theta} \circ \mathcal{X}_{A\rightarrow \varepsilon} \leftarrow A_{^w\bigstar\left(\mathcal{P}_{s\,|\,t},\varepsilon\right)} \notag  \\ &\Longleftrightarrow  \mathcal{X}_{\varepsilon\rightarrow A}\circ \mathcal{Y}_{\theta\rightarrow \varepsilon} \circ \mathcal{Y}_{\varepsilon\rightarrow  \theta}  \leftarrow \varepsilon_{x^{^{w+1}\bigstar\left(\mathcal{P}_{s\,|\,t},\varepsilon\right)}=0}   \notag \\ &\Longleftrightarrow  \mathcal{X}_{\varepsilon\rightarrow A}\circ \mathcal{Y}_{\theta\rightarrow \varepsilon}   \leftarrow \theta^{^{w+1}\bigstar\left(\mathcal{P}_{s\,|\,t},\varepsilon\right)}   \\ &\Longleftrightarrow  \mathcal{X}_{\varepsilon\rightarrow A}  \leftarrow \varepsilon_{x^{^{w+1}\bigstar\left(\mathcal{P}_{s\,|\,t},\varepsilon\right)}=0}\notag  \\ &\Longleftrightarrow  A_{^w\bigstar\left(\mathcal{P}_{s\,|\,t},\varepsilon\right)}  \notag
	\end{align}
	
	\noindent\textbf{Proof of $4$}: Refer to \ref{Circle_input Epsilon}, we obtain that $\mathcal{X}_{A\rightarrow \varepsilon}\circ \mathcal{X}_{\varepsilon\rightarrow A} \circ \mathcal{Y}_{\theta\rightarrow \varepsilon} \circ \mathcal{Y}_{\varepsilon\rightarrow  \theta}\leftarrow \varepsilon=\varepsilon$.
	\begin{align}\label{Circle_input Epsilon}
		&\quad\,\,\quad   \mathcal{X}_{A\rightarrow \varepsilon}\circ \mathcal{X}_{\varepsilon\rightarrow A} \circ \mathcal{Y}_{\theta\rightarrow \varepsilon} \circ \mathcal{Y}_{\varepsilon\rightarrow  \theta}\leftarrow \varepsilon  \notag  \\ &\Longleftrightarrow  \mathcal{X}_{A\rightarrow \varepsilon}\circ \mathcal{X}_{\varepsilon\rightarrow A} \circ \mathcal{Y}_{\theta\rightarrow \varepsilon} \leftarrow \theta_{^{w+1}\bigstar\left(\mathcal{P}_{s\,|\,t},\varepsilon\right)}     \notag \\ &\Longleftrightarrow  \mathcal{X}_{A\rightarrow \varepsilon}\circ \mathcal{X}_{\varepsilon\rightarrow A}  \leftarrow \varepsilon    \\ &\Longleftrightarrow \mathcal{X}_{A\rightarrow \varepsilon}  \leftarrow A^{^w\bigstar\left(\mathcal{P}_{s\,|\,t},\varepsilon\right)}    \notag  \\ &\Longleftrightarrow  \varepsilon   \notag 
	\end{align}
\end{proof}

\subsection{Proof of Lemma \ref{Lemma ForPartitionInduced Equilibrium Transition Snd}}
\begin{proof}[\normalfont\bfseries Proof of Lemma \ref{Lemma ForPartitionInduced Equilibrium Transition Snd}]
	\,
	
	Refer to Strong Decomposition Theorem \ref{Partition-induced transition}, we can further divide $\bigstar\left(\mathcal{B}^{\hollowstar}(m),\varepsilon\right)$ into two parts. We denote $\bigstar_k=\bigg\{ {^0}\bigstar\left(\mathcal{B}^{\hollowstar}(m),\varepsilon\right),\,\cdots,\, {^{k-1}}\bigstar\left(\mathcal{B}^{\hollowstar}(m),\varepsilon\right)  \bigg\}$ which represents the first $k$ elements of set $\bigstar\left(\mathcal{B}^{\hollowstar}(m),\varepsilon\right)$ and $\bigstar_{m-k}$ represents the left part.
	
	For the system $\bigstar_k$, We firstly prove that it can be shaped to satisfy the requirements in \ref{Lemma ForDecomposition and Compression Equivalence} $3$ where $\theta^j=\theta^{{^0}\bigstar_{m-k}}$ by setting appropriate individual $A$ and $\theta$. We apply Lemma \ref{Lemma ForPartitionInduced Equilibrium Transition} to prove it. 
	
	We can find such a tuple $\bigg( A^{^{0}\bigstar_k},\,\cdots,\, A^{^{k-2}\bigstar_k} \bigg)$ which satisfies $e^{-\beta\cdot \tilde{x}^{^w\bigstar_k}}= \frac{\theta^{{^{w+1}}\bigstar_k} }{\bar{\theta}}+\left(1-\frac{1}{\bar{\theta}}\right)\cdot \varepsilon,\,\forall w\in \bigstar_k\setminus {^{k-1}}\bigstar_k$ refer to Lemma \ref{Lemma ForPartitionInduced Equilibrium Transition} $3$. We can further find a tuple $\bigg( A^{^{0}\bigstar_k}+\delta,\,\cdots,\, A^{^{k-2}\bigstar_k} +\delta\bigg)$ which satisfies \ref{Threshold A plus delta}. Then we ensure that $\bigstar_k\subset\mathcal{B}^{\bigstar\left(\mathcal{B}^{\hollowstar}(m),\varepsilon\right)}$.
	\begin{align}\label{Threshold A plus delta}
		\bigg( A^{^{0}\bigstar_k},\,\cdots,\, A^{^{k-2}\bigstar_k} \bigg) \ll \bigg( A^{^{0}\bigstar_k}+\delta,\,\cdots,\, A^{^{k-2}\bigstar_k} +\delta\bigg)
	\end{align}
	\noindent We apply Lemma \ref{Lemma ForPartitionInduced Equilibrium Transition} $3$ again to find such a $A^{^{k-1}\bigstar_k}$ which satisfies $e^{-\beta\cdot \tilde{x}^{^{k-1}\bigstar_k}}= \frac{ \theta^{{^0}\bigstar_{m-k}} }{\bar{\theta}}+\left(1-\frac{1}{\bar{\theta}}\right)\cdot \varepsilon$. Then we set $A_{^{k-1}\bigstar_k}=A^{^{k-1}\bigstar_k}+\delta$ which satisfies \ref{Threshold biggest A plus delta}.
	\begin{align}\label{Threshold biggest A plus delta}
		e^{-\beta\cdot \tilde{x}^{^{k-1}\bigstar_k}}&\gg \frac{ \theta^{{^w}\bigstar_{m-k}} }{\bar{\theta}}+\left(1-\frac{1}{\bar{\theta}}\right)\cdot \varepsilon,\,\forall w\in \bigstar_{m-k} \\ \Longleftrightarrow  e^{-\beta\cdot \tilde{x}^{^{k-1}\bigstar_k}}&\gg \frac{ \theta^{{^0}\bigstar_{m-k}} }{\bar{\theta}}+\left(1-\frac{1}{\bar{\theta}}\right)\cdot \varepsilon \notag
	\end{align}
	\noindent We conclude that $\bigstar_{m-k}\nsubseteq \mathcal{B}^{\bigstar\left(\mathcal{B}^{\hollowstar}(m),\varepsilon\right)}$ by inequalities \ref{Threshold A plus delta} and \ref{Threshold biggest A plus delta}. That is $\bigstar_k=\mathcal{B}^{\bigstar\left(\mathcal{B}^{\hollowstar}(m),\varepsilon\right)}$.
	
	Therefore we precisely control the number of $\bigg|\mathcal{B}^{\bigstar\left(\mathcal{B}^{\hollowstar}(m),\varepsilon\right)}\bigg|$, specifically $k$, and $m-k$ for $\bigg|\hollowstar\left(\mathcal{B}^{\hollowstar}(m),\varepsilon\right)\bigg|$.
\end{proof}

\subsection{Supplementary Results of Section \ref{preliminaries:part1} and \ref{preliminaries:part2}}
\begin{restatable}{thm3}{LemmaForPartitionInducedEquilibriumTransitionThird}\label{Lemma ForPartitionInduced Equilibrium Transition Third}
	Under $\varepsilon$, the system $\mathcal{P}_{s\,|\,t}$ with $\theta^j\gg \theta^i$, $s^j\geq 0$ and $s^i>0$  where $i\in \mathcal{B}^{\mathcal{I}}(\varepsilon)$ exhibits:
	
	1. $e^{-\beta\cdot\sum\limits_{i\in\mathcal{B}^{\mathcal{I}}(\varepsilon)} \tilde{\Phi}^i\left(\mathbf{s}^{|\mathcal{B}^{\mathcal{I}}(\varepsilon)|\times 1}\right)}$ is non-inreasing in $\beta$, $A_i$ and $\sum\limits_i A_i$.
	
	2. Denote $f$ satisfies $\theta^f=\max\limits_{k\in\mathcal{B}^{\mathcal{I}}(\varepsilon)} \theta^k$ and all $\theta^k$ remain fixed. Then $\mathcal{B}^{\mathcal{I}}(\varepsilon)=\mathcal{P}_{s\,|\,t}\setminus \{j\}$ requires:
	
	\hspace{1em}$i)$ Ignition condition: $e^{-\beta\cdot\sum\limits_{i\in\mathcal{B}^{\mathcal{I}}(\varepsilon)} \tilde{\Phi}^i\left(\mathbf{s}^{|\mathcal{B}^{\mathcal{I}}(\varepsilon)|\times 1}\right)} \leq \frac{\theta^j}{\bar{\theta}}+ \underbrace{\left(1-\frac{1}{\bar{\theta}}\right)\cdot \varepsilon}_{<0}$.
	
	\hspace{1em}$ii)$ $\inf\limits_{\beta,\, A_{m\neq f},\,\sum\limits_{m\neq f} A_m} \left[e^{-\beta\cdot\sum\limits_{i\in\mathcal{B}^{\mathcal{I}}(\varepsilon)} \tilde{\Phi}^i\left(\mathbf{s}^{|\mathcal{B}^{\mathcal{I}}(\varepsilon)|\times 1}\right)}\right]=\max\limits_{k\in\mathcal{B}^{\mathcal{I}}(\varepsilon)}\frac{\theta^k}{\bar{\theta}}+\left(1-\frac{1}{\bar{\theta}}\right)\cdot \varepsilon$.
	
	\hspace{1em}$iii)$ $\beta$ satisfies $\beta_{x^j(\varepsilon)=0}<\beta<\beta_{x^f(\varepsilon)<0}$.
	
	\hspace{1em}$iv)$ The tuple $\mathbf{A}_{m\neq f}^{|\mathcal{B}^{\mathcal{I}}(\varepsilon)-1|\times 1}$ or $\sum\limits_{m\neq f} A_m$ is bounded.
	
	3. Increasing only $A_f$ into $+\infty$ doesn't reverse the conclusion of $\mathcal{B}^{\mathcal{I}}(\varepsilon)=\mathcal{P}_{s\,|\,t}\setminus \{j\}$ in 3.
	
	4. Threshold $\beta_{x^j(\varepsilon)=0}$ is non-decreasing in $\varepsilon$, $\min \beta_{x^j(\varepsilon)=0}=\beta_{x^j(\varepsilon=\varepsilon_0)=0}$. $\beta_{x^j(\varepsilon)=0}$ is non-increasing in $\theta^j$.
	
	5. Under $\min \beta_{x^j(\varepsilon)=0}<\beta<\beta_{x^f(\varepsilon)<0}$, we have:
	
	$i)$ as $\varepsilon$ grows, the signs of $\tilde{\Phi}^i\left(\mathbf{s}^{|\mathcal{B}^{\mathcal{I}}(\varepsilon)|\times 1}\right)$ and $x^j$ remain unchanged.
	
	$ii)$ As $\varepsilon$ decreases, partial order among $\tilde{\Phi}^i\left(\mathbf{s}^{|\mathcal{B}^{\mathcal{I}}(\varepsilon)|\times 1}\right)$ doesn't preserve, but Hierarchy remains.
	
	$iii)$ As $\varepsilon$ decreases, There exist a threshold $\varepsilon_{x^f(\varepsilon)=0}=\varepsilon_{\min\limits_{k\in\mathcal{B}^{\mathcal{I}}(\varepsilon)} x^k =0 }$ such that $\min\limits_{k\in\mathcal{B}^{\mathcal{I}}(\varepsilon)} x^k =0$ or $x^f(\varepsilon_{\min\limits_{k\in\mathcal{B}^{\mathcal{I}}(\varepsilon)} x^k =0 })$=0 where $f$ satisfies $\theta^f=\max\limits_k \theta^k$.
	
	$iv)$ $\mathcal{B}(\varepsilon)\setminus \{f\}$ is a Maximal Bail-in Cluster when $\varepsilon=\varepsilon_{\min\limits_{k\in\mathcal{B}^{\mathcal{I}}(\varepsilon)} x^k =0 }$.
\end{restatable}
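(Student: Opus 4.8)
The plan is to prove Lemma~\ref{Lemma ForPartitionInduced Equilibrium Transition Third} by mirroring the argument for Lemma~\ref{Lemma ForDecomposition and Compression Equivalence} under a systematic reversal of signs and of the direction in which $\varepsilon$ moves, the legitimacy of which is supplied by Lemma~\ref{Lemma for Chain}$(c)$ (the bail-in chain has the same decomposition structure as the bailout chain once roles are interchanged). First I would note that the existence and uniqueness of $\tilde{\mathbf{\Phi}}\left(\mathbf{s}^{|\mathcal{B}^{\mathcal{I}}(\varepsilon)|\times 1}\right)$ is already covered: the reduction of a one-column system to the scalar identity \ref{system P Sum} and the monotonicity of its right-hand side do not use any sign of $x^j$, so the fixed point persists for arbitrary $\varepsilon$ and $\beta<+\infty$. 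Throughout, the operative sign change is that for $\mathcal{B}^{\mathcal{I}}(\varepsilon)$ one has $\tilde{\Phi}^i\left(\mathbf{s}^{|\mathcal{B}^{\mathcal{I}}(\varepsilon)|\times 1}\right)>0$ (asset sales) and the extremal bank $f$ is now the one with $\theta^f=\max_{k\in\mathcal{B}^{\mathcal{I}}(\varepsilon)}\theta^k$, because a bank with higher leverage needs to sell least and is the first to leave the bail-in cluster as the crisis abates.

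For item~1, I would differentiate the scalar identity \ref{system P Sum} in $A_i$, in $A^{\mathcal{B}^{\mathcal{I}}(\varepsilon)}$, and in $\beta$ exactly as in \ref{Decom_Exp_Ai Snd}--\ref{Decom_Exp_Beta Fourth}; with $\tilde{\Phi}^i>0$ every inequality there flips, and since the denominators $1+A^{\mathcal{B}^{\mathcal{I}}(\varepsilon)}\cdot\beta\cdot e^{-\beta\cdot\sum\tilde{\Phi}^i}$ stay positive there are no spurious cases, giving that $e^{-\beta\cdot\sum_{i\in\mathcal{B}^{\mathcal{I}}(\varepsilon)}\tilde{\Phi}^i\left(\mathbf{s}^{|\mathcal{B}^{\mathcal{I}}(\varepsilon)|\times 1}\right)}$ is non-increasing in $\beta$, in $A_i$, and in $\sum_i A_i$. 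For item~2, rewrite the per-bank relation as $e^{-\beta\cdot\sum\tilde{\Phi}^i}=\frac{\theta^m}{\bar{\theta}}+\left(1-\frac{1}{\bar{\theta}}\right)\varepsilon+\frac{\tilde{\Phi}^m}{A_m}$, as in \ref{system Exp}. The condition $\mathcal{B}^{\mathcal{I}}(\varepsilon)=\mathcal{P}_{s\,|\,t}\setminus\{j\}$ forces $x^j<0$ in $\mathcal{P}_{s\,|\,t}$, so $e^{-\beta\cdot\sum\tilde{\Phi}^i}=\frac{\theta^j}{\bar{\theta}}+\left(1-\frac{1}{\bar{\theta}}\right)\varepsilon$ is the boundary case $x^j=0$; since the left side is non-increasing in $\beta$ and in $A^{\mathcal{B}^{\mathcal{I}}(\varepsilon)}$ (item~1), raising either yields the ignition inequality $(i)$. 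Keeping $f$ inside $\mathcal{B}^{\mathcal{I}}(\varepsilon)$ requires $x^f\geq 0$, i.e.\ $e^{-\beta\cdot\sum\tilde{\Phi}^i}\geq\frac{\theta^f}{\bar{\theta}}+\left(1-\frac{1}{\bar{\theta}}\right)\varepsilon$, which together with item~1 gives the $\inf$/$\max$ pairing of $(ii)$; the existence of the two thresholds $\beta_{x^j(\varepsilon)=0}<\beta<\beta_{x^f(\varepsilon)<0}$ in $(iii)$ and the boundedness of $\mathbf{A}^{|\mathcal{B}^{\mathcal{I}}(\varepsilon)-1|\times 1}_{m\neq f}$ in $(iv)$ follow from the Intermediate Value Theorem applied to \ref{system Exp} exactly as in Lemma~\ref{Lemma ForDecomposition and Compression Equivalence}.3, using the sup attained by $\sup_\beta e^{-\beta\cdot\sum\tilde{\Phi}^i}=\max_{i}\frac{\theta^i}{\bar{\theta}}+\left(1-\frac{1}{\bar{\theta}}\right)\varepsilon$. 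Item~3 is the same contradiction argument: sending $A_f\to+\infty$ in the $f$-th line of \ref{system Exp} makes $\tilde{\Phi}^f/A_f\to 0$, so $e^{-\beta\cdot\sum\tilde{\Phi}^i}\to\frac{\theta^f}{\bar{\theta}}+\left(1-\frac{1}{\bar{\theta}}\right)\varepsilon$ from above, which is still compatible with $(i)$--$(ii)$, so the conclusion $\mathcal{B}^{\mathcal{I}}(\varepsilon)=\mathcal{P}_{s\,|\,t}\setminus\{j\}$ is preserved. For item~4, I would reproduce \ref{system EXP Sum}--\ref{Decom_Beta_vs_Epsilon2}: substituting the threshold identity \ref{Equation threshold Beta}-analog into \ref{Equation threshold Beta2}-analog and differentiating in $\varepsilon$ gives $\partial_\varepsilon\sum\tilde{\Phi}^i=0$ and then $\partial_\varepsilon\beta_{x^j(\varepsilon)=0}=\frac{(1-\bar{\theta})\sum\tilde{\Phi}^i}{\bar{\theta}e^{-\beta_{x^j}\sum\tilde{\Phi}^i}}>0$ because now $\sum\tilde{\Phi}^i>0$; hence $\beta_{x^j(\varepsilon)=0}$ is non-decreasing in $\varepsilon$ and its minimum is attained at the lower endpoint $\varepsilon_0=\frac{\mathbb{E}[\theta^i]-\bar{\theta}}{1-\bar{\theta}}$ of the admissible window, and $\beta_{x^j(\varepsilon)=0}$ is non-increasing in $\theta^j$ by item~1 applied to \ref{Equation threshold Beta}-analog.

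For item~5, under $\min\beta_{x^j(\varepsilon)=0}<\beta<\beta_{x^f(\varepsilon)<0}$: part $(i)$ uses monotonicity of $\mathbf{x}$ in $\varepsilon$ (Online Appendix equation OA3.7) together with $\beta_{x^j(\varepsilon)=0}\leq\beta$ on the whole window to conclude the signs of $\tilde{\Phi}^i\left(\mathbf{s}^{|\mathcal{B}^{\mathcal{I}}(\varepsilon)|\times 1}\right)$ and of $x^j$ are invariant as $\varepsilon$ grows; part $(ii)$ sends $A_f\to+\infty$ in the last line of \ref{system Exp}-analog to make $\tilde{\Phi}^f$ arbitrarily large relative to the others, which breaks the partial order among the $\tilde{\Phi}^i$ but preserves hierarchy because $\sup_{A_f}e^{-\beta\cdot\sum\tilde{\Phi}^i}=\frac{\theta^f}{\bar{\theta}}+\left(1-\frac{1}{\bar{\theta}}\right)\varepsilon<\frac{\theta^m}{\bar{\theta}}+\left(1-\frac{1}{\bar{\theta}}\right)\varepsilon$ for $m\neq f$; part $(iii)$ shows $\varepsilon_{\min_k x^k=0}$ exists by contradiction---choosing $\varepsilon\leq\min_i\frac{\theta^i-\bar{\theta}}{1-\bar{\theta}}$ forces $s^i\leq 0$ for all $i\in\mathcal{B}^{\mathcal{I}}(\varepsilon)$, so if all $\tilde{\Phi}^i>0$ persisted then $\hat{s}^i<0$, a contradiction---and then the Intermediate Value Theorem plus Lemma~\ref{Perfection Switch=1}.1.a)--1.b) (only the top-leverage bank $f$ can cross first, and the state $x^m\geq 0$ for $m\neq f$ with $x^f\leq 0$ is impossible) identify $\varepsilon_{x^f(\varepsilon)=0}=\varepsilon_{\min_k x^k=0}$; part $(iv)$ then follows exactly as in Lemma~\ref{Lemma ForDecomposition and Compression Equivalence}.6.iv), combining $\mathcal{B}^{\mathcal{I}}(\varepsilon)\setminus\{f\}$ first with $\{f\}$ (giving $x^f=0$) and then with $\{j\}$ (giving $x^j<0$ by the item~$(i)$-type argument), and invoking the crowding-out argument to conclude $x^f<0$ when $\{f\}\cup\{j\}$ is reattached, so that $\mathcal{B}^{\mathcal{I}}(\varepsilon)\setminus\{f\}$ meets the definition of a maximal bail-in cluster.

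\textbf{Anticipated main obstacle.} The substantive calculations are all sign-reflections of ones already done, so the real difficulty is bookkeeping: keeping straight that for $\mathcal{B}^{\mathcal{I}}(\varepsilon)$ the limiting regime is $\varepsilon$ \emph{decreasing}, the extremal bank has \emph{maximal} leverage, $\sup$ becomes $\inf$ and $\min$ becomes $\max$, and that each monotonicity step which in the original proof rested on $\sum\tilde{\Phi}^i<0$ still closes with the single flip to $\sum\tilde{\Phi}^i>0$ without introducing a case where a denominator changes sign. A secondary point that needs a line of verification is that the lower endpoint $\varepsilon_0$ at which $\beta_{x^j(\varepsilon)=0}$ is minimized genuinely lies in the admissible shock window of assumption~$(i)$ in Section~\ref{Section: Properties of Perfection Equilibrium}, so that the threshold condition $\min\beta_{x^j(\varepsilon)=0}<\beta$ in item~5 is non-vacuous.
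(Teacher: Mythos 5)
Your proposal is correct and follows exactly the paper's intended route: the paper's own proof of this lemma is the single sentence ``reverse the demonstrations in the proof of Lemma~\ref{Lemma ForDecomposition and Compression Equivalence}, i.e., $\tilde{\Phi}^i(\mathbf{s}^{|\mathcal{B}^{\mathcal{I}}(\varepsilon)|\times 1})>0$,'' and your item-by-item sign-reflection (sup $\to$ inf, min $\to$ max, $\varepsilon$ increasing $\to$ decreasing, extremal bank at maximal rather than minimal leverage) is precisely that argument carried out in detail. If anything, your write-up is more explicit than the paper's, and your bookkeeping of where each inequality flips is sound.
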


\begin{proof}[\normalfont\bfseries Proof of Lemma \ref{Lemma ForPartitionInduced Equilibrium Transition Third}]
	\,
	
	We only need to reverse the demonstrations in the proof of Lemma \ref{Lemma ForDecomposition and Compression Equivalence}, i.e., $\tilde{\Phi}^i\left(\mathbf{s}^{|\mathcal{B}^{\mathcal{I}}(\varepsilon)|\times 1}\right)>0,\,\forall i\in \mathcal{B}^{\mathcal{I}}(\varepsilon)$.
\end{proof}

\begin{restatable}{thm3}{LemmaForPartitionInducedEquilibriumTransitionThird}\label{Lemma ForPartitionInduced Equilibrium Transition Fourth}
	Under $\varepsilon> \max\limits_{i\in \mathcal{P}_{s\,|\,t}} \frac{\theta^i-\bar{\theta}}{1-\bar{\theta}}$ with $\beta$ satisfies $e^{-\beta\cdot \tilde{x}^{^{w+1}\bigstar\left(\mathcal{P}_{s\,|\,t},\varepsilon\right)}}> \frac{\theta^{^{w}\bigstar\left(\mathcal{P}_{s\,|\,t},\varepsilon\right)}}{\bar{\theta}}+\left(1-\frac{1}{\bar{\theta}}\right)\cdot \varepsilon,\,\forall w\in \bigstar\left(\mathcal{P}_{s\,|\,t},\varepsilon\right)$ the system $\mathcal{P}_{s\,|\,t}$ exhibits:
	
	1. We can control each of the sets $\mathcal{B}^{\hollowstar^{q}\left(\mathcal{P}_{s\,|\,t},\varepsilon\right)}$ contains $\big| \hollowstar^{q}\left(\mathcal{P}_{s\,|\,t},\varepsilon\right) \big|-1$  banks by regulating the individual $A$ and $\theta$ where $q\in \bigg[ 0, \bigg|\bigstar\left(\mathcal{P}_{s\,|\,t},\varepsilon\right)\big|-1 \bigg],\,q\in \mathbb{N}$ and $\hollowstar^{0}\left(\mathcal{P}_{s\,|\,t},\varepsilon\right)=\bigstar\left(\mathcal{P}_{s\,|\,t},\varepsilon\right)$.
	
	2. The bijective correspondence $\mathcal{X}_{A\rightarrow \varepsilon}: A_{^{w+1}\bigstar\left(\mathcal{P}_{s\,|\,t},\varepsilon\right)} \rightarrow \varepsilon_{x^{^{w}\bigstar\left(\mathcal{P}_{s\,|\,t},\varepsilon\right)}=0}$ $\left(R\rightarrow R\right)$ holds, where $\varepsilon_{x^{^{w}\bigstar\left(\mathcal{P}_{s\,|\,t},\varepsilon\right)}=0}$ satisfies $e^{-\beta\cdot \tilde{x}^{^{w+1}\bigstar\left(\mathcal{P}_{s\,|\,t},\,\varepsilon_{x^{^{w}\bigstar\left(\mathcal{P}_{s\,|\,t},\varepsilon\right)}=0}\right)}}= \frac{\theta^{^{w}\bigstar\left(\mathcal{P}_{s\,|\,t},\,\varepsilon_{x^{^{w}\bigstar\left(\mathcal{P}_{s\,|\,t},\varepsilon\right)}=0}\right)}}{\bar{\theta}}+\left(1-\frac{1}{\bar{\theta}}\right)\cdot \varepsilon,\,\forall w\in \bigstar\left(\mathcal{P}_{s\,|\,t},\varepsilon\right)$.
	
	3. The bijective correspondence $\mathcal{X}_{\varepsilon\rightarrow A}:\varepsilon \rightarrow A^{^{w+1}\bigstar\left(\mathcal{P}_{s\,|\,t},\varepsilon\right)}$ $\left(R\rightarrow R\right)$ is guaranteed to hold, where $A^{^{w+1}\bigstar\left(\mathcal{P}_{s\,|\,t},\varepsilon\right)}$ satisfies $e^{-\beta\cdot \tilde{x}^{^{w+1}\bigstar\left(\mathcal{P}_{s\,|\,t},\varepsilon\right)}}= \frac{\theta^{^{w}\bigstar\left(\mathcal{P}_{s\,|\,t},\varepsilon\right)}}{\bar{\theta}}+\left(1-\frac{1}{\bar{\theta}}\right)\cdot \varepsilon,\,\forall w\in \bigstar\left(\mathcal{P}_{s\,|\,t},\varepsilon\right)$.
	
	4. The bijective correspondence $\mathcal{Y}_{\theta\rightarrow \varepsilon}:\theta^{^{w}\bigstar\left(\mathcal{P}_{s\,|\,t},\varepsilon\right)}\rightarrow \varepsilon_{x^{^{w}\bigstar\left(\mathcal{P}_{s\,|\,t},\varepsilon\right)}=0}$ $\left(R\rightarrow R\right)$ holds, where $\varepsilon_{x^{^{w}\bigstar\left(\mathcal{P}_{s\,|\,t},\varepsilon\right)}=0}$ satisfies $e^{-\beta\cdot \tilde{x}^{^{w+1}\bigstar\left(\mathcal{P}_{s\,|\,t},\,\varepsilon_{x^{^{w+1}\bigstar\left(\mathcal{P}_{s\,|\,t},\varepsilon\right)}=0}\right)}}= \frac{\theta^{^{w}\bigstar\left(\mathcal{P}_{s\,|\,t},\,\varepsilon_{x^{^{w}\bigstar\left(\mathcal{P}_{s\,|\,t},\varepsilon\right)}=0}\right)}}{\bar{\theta}}+\left(1-\frac{1}{\bar{\theta}}\right)\cdot \varepsilon,\,\forall w\in \bigstar\left(\mathcal{P}_{s\,|\,t},\varepsilon\right)$.
	
	5. The bijective correspondence $\mathcal{Y}_{\varepsilon\rightarrow \theta}:\varepsilon \rightarrow \theta_{^{w}\bigstar\left(\mathcal{P}_{s\,|\,t},\varepsilon\right)}$ $\left(R\rightarrow R\right)$ is guaranteed to hold, where $\theta_{^{w}\bigstar\left(\mathcal{P}_{s\,|\,t},\varepsilon\right)}$ satisfies $e^{-\beta\cdot \tilde{x}^{^{w+1}\bigstar\left(\mathcal{P}_{s\,|\,t},\varepsilon\right)}}= \frac{\theta_{^{w}\bigstar\left(\mathcal{P}_{s\,|\,t},\varepsilon\right)}}{\bar{\theta}}+\left(1-\frac{1}{\bar{\theta}}\right)\cdot \varepsilon,\,\forall w\in \bigstar\left(\mathcal{P}_{s\,|\,t},\varepsilon\right)$.
	
\end{restatable}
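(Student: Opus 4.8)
\noindent\textbf{Proof strategy for Lemma \ref{Lemma ForPartitionInduced Equilibrium Transition Fourth}.} The plan is to run the proof of Lemma \ref{Lemma ForPartitionInduced Equilibrium Transition} line for line, reversing every inequality and exchanging the triple ``maximal bailout cluster / minimal leverage / tail of the chain'' with ``maximal bail-in cluster / maximal leverage / head of the chain''. The regime $\varepsilon>\max_{i\in\mathcal{P}_{s\,|\,t}}\frac{\theta^i-\bar{\theta}}{1-\bar{\theta}}$ forces $s^i\geq 0$ for every bank, so the counterfactual solutions that appear satisfy $\tilde{\Phi}^i\bigl(\mathbf{s}^{|\cdot|\times 1}\bigr)\geq 0$, and Lemma \ref{Lemma ForPartitionInduced Equilibrium Transition Third} now occupies the slot filled by Lemma \ref{Lemma ForDecomposition and Compression Equivalence} in the bailout argument: it supplies the monotonicity of $e^{-\beta\cdot\sum\tilde{\Phi}^i}$ in $\beta$, $A_i$ and $\sum_iA_i$ (non-increasing here), the ignition condition together with the thresholds $\beta_{x^j(\varepsilon)=0}$ and $\beta_{x^f(\varepsilon)<0}$, and the persistence of the leverage hierarchy as $\varepsilon$ varies. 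Because the chain $\bigstar(\mathcal{P}_{s\,|\,t},\varepsilon)$ is still listed in descending order of leverage, the first bank to leave the bail-in cluster as $\varepsilon$ decreases is the head $^0\bigstar(\mathcal{P}_{s\,|\,t},\varepsilon)$ rather than the tail; this is exactly what produces the index shift $w\mapsto w+1$ in items 2--5 relative to Lemma \ref{Lemma ForPartitionInduced Equilibrium Transition}.

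For item 1 I would argue recursively, as in the proof of Lemma \ref{Lemma ForPartitionInduced Equilibrium Transition}, part 1: applying parts 2--4 of Lemma \ref{Lemma ForPartitionInduced Equilibrium Transition Third} inside each iterated sub-chain $\hollowstar^{q}(\mathcal{P}_{s\,|\,t},\varepsilon)$ (with $\hollowstar^{0}(\mathcal{P}_{s\,|\,t},\varepsilon)=\bigstar(\mathcal{P}_{s\,|\,t},\varepsilon)$) yields the bail-in mirror of inequalities \ref{Chain-0 MaximalBailout} and \ref{Chain-1 MaximalBailout}, namely an infimum identity relating $e^{-\beta\cdot\sum\tilde{\Phi}}$ to the leverage threshold of one distinguished bank of the sub-chain, an identity that depends only on that bank's asset and leverage. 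Choosing every asset except the head's sufficiently large then forces every element of $\hollowstar^{q}(\mathcal{P}_{s\,|\,t},\varepsilon)$ other than its head into the maximal bail-in cluster, i.e.\ $\bigl|\mathcal{B}^{\hollowstar^{q}(\mathcal{P}_{s\,|\,t},\varepsilon)}\bigr|=\bigl|\hollowstar^{q}(\mathcal{P}_{s\,|\,t},\varepsilon)\bigr|-1$; iterating over $q$ exhausts the chain, and the mutual independence of the layers (strong decomposition, Theorem \ref{Partition-induced transition}, with Lemma \ref{Lemma for Chain}, part $(b)$) makes the successive choices non-interfering.

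For items 2--5, the bail-in form of the Compression Equivalence Theorem \ref{Decomposition and Compression Equivalence} (read off from Lemma \ref{Lemma ForPartitionInduced Equilibrium Transition Third}, part 5$(iii)$) identifies $\varepsilon_{x^{^{w}\bigstar(\mathcal{P}_{s\,|\,t},\varepsilon)}=0}$ with the shock at which the head of the relevant sub-chain re-enters credit creation, which gives its existence and uniqueness. The equality $e^{-\beta\cdot\tilde{x}^{^{w+1}\bigstar(\mathcal{P}_{s\,|\,t},\varepsilon)}}=\frac{\theta^{^{w}\bigstar(\mathcal{P}_{s\,|\,t},\varepsilon)}}{\bar{\theta}}+\bigl(1-\tfrac{1}{\bar{\theta}}\bigr)\varepsilon$ is then the defining relation, and the Intermediate Value Theorem together with the monotonicities of Lemma \ref{Lemma ForPartitionInduced Equilibrium Transition Third} turns it into a strictly monotone dependence of $\varepsilon_{x^{^{w}\bigstar(\mathcal{P}_{s\,|\,t},\varepsilon)}=0}$ on $A_{^{w+1}\bigstar(\mathcal{P}_{s\,|\,t},\varepsilon)}$ ($\mathcal{X}_{A\rightarrow\varepsilon}$) and a strictly monotone map $\varepsilon\mapsto A^{^{w+1}\bigstar(\mathcal{P}_{s\,|\,t},\varepsilon)}$ ($\mathcal{X}_{\varepsilon\rightarrow A}$); reading the same equality with $\theta^{^{w}\bigstar(\mathcal{P}_{s\,|\,t},\varepsilon)}$ as the free variable delivers the dual pair $\mathcal{Y}_{\theta\rightarrow\varepsilon}$, $\mathcal{Y}_{\varepsilon\rightarrow\theta}$. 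The one extra check is the mirror of inequality \ref{Correspondence_Varespilon to Theta3}: the $\theta^{^{w}\bigstar(\mathcal{P}_{s\,|\,t},\varepsilon)}$ recovered by solving the defining relation must remain compatible with $\varepsilon>\max_i\frac{\theta^i-\bar{\theta}}{1-\bar{\theta}}$, and since $\tilde{x}^{^{w+1}\bigstar(\mathcal{P}_{s\,|\,t},\varepsilon)}\geq 0$ makes the exponent non-positive, so that $e^{-\beta\cdot\tilde{x}^{^{w+1}\bigstar(\mathcal{P}_{s\,|\,t},\varepsilon)}}\leq 1$, the one-line estimate of \ref{Correspondence_Varespilon to Theta3} carries over with its inequality reversed.

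I do not expect a genuine obstacle: by Lemma \ref{Lemma for Chain}, part $(c)$, the bail-in cluster $\mathcal{B}^{\mathcal{I}}$ inherits the structural properties established for $\mathcal{B}$, so nothing beyond the estimates already assembled in Lemma \ref{Lemma ForPartitionInduced Equilibrium Transition Third} is needed. The one place that demands care — and where a sign error is most likely to creep in — is the systematic reversal of inequality directions together with the orientation bookkeeping of the chain: tracking which extremal-leverage bank has the flipping liquidation state, and hence which index, $w$ or $w+1$, labels each object in items 2--5.
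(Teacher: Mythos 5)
Your proposal is correct and coincides with the paper's approach: the paper's own proof of this lemma consists of the single instruction to repeat the argument of Lemma \ref{Lemma ForPartitionInduced Equilibrium Transition} with the roles of bailout and bail-in reversed, which is exactly the systematic mirroring you describe. You in fact supply more detail than the paper does — correctly identifying Lemma \ref{Lemma ForPartitionInduced Equilibrium Transition Third} as the substitute for Lemma \ref{Lemma ForDecomposition and Compression Equivalence}, the head-of-chain/maximal-leverage bank as the one whose state flips (hence the index shift $w\mapsto w+1$), and the reversed analogue of inequality \ref{Correspondence_Varespilon to Theta3} as the one compatibility check — so no gap remains.
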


\begin{proof}[\normalfont\bfseries Proof of Lemma \ref{Lemma ForPartitionInduced Equilibrium Transition Fourth}]
	\,
	
	Just follow the same argument in the proof of Lemma \ref{Lemma ForPartitionInduced Equilibrium Transition}.
\end{proof}

\subsection{Proof of Weak Decomposition and Compression Equivalence Theorem}
\begin{proof}[\normalfont\bfseries Proof of Theorem \ref{Decomposition and Compression Equivalence}]
	\,
	
	\textbf{Proof of Weak Decomposition Theorem}
	
	We prove that it holds for any $\varepsilon$ by classification. Recall that $s^i=\frac{1}{\bar{\theta}}\left(\bar{\theta}A_i-E_i+(1-\bar{\theta})\cdot \varepsilon\cdot A_i  \right)$,  we obtain that $s^i\leq0$ if $\varepsilon\leq \frac{\theta^i-\bar{\theta}}{1-\bar{\theta}}$. We randomly choose a $\mathcal{P}_{s\,|\,t}\neq \varnothing$ in the following analysis. all banks in $\mathcal{P}_{s\,|\,t}$ are strictly heterogeneous, that is, any two banks cannot have the same assets and leverage ratio. Such an assumption is reasonable because, according to Lemma \ref{Perfection Switch=1} 1. $a)$,  homogeneous banks would yield homogeneous solutions, leading to redundant analysis. 
	
	1. $\varepsilon< \min\limits_{i\in \mathcal{P}_{s\,|\,t}} \frac{\theta^i-\bar{\theta}}{1-\bar{\theta}}$
	
	That is $s^i<0$, $\forall i\in \mathcal{P}_{s\,|\,t}$. By definition of $\mathcal{B}^{\mathcal{I}}(\varepsilon)$, we obtain that $\mathcal{B}^{\mathcal{I}}(\varepsilon)=\varnothing$. 
	
	Then we prove that $\mathcal{B}(\varepsilon)\neq \varnothing$. Suppose $\mathcal{B}(\varepsilon)= \varnothing$, choose any $s^i<0$, we can obtain $\varnothing \cup \{i\}$ by combining them, that is, $x^i<0$ which contradicts to the definition of $\mathcal{B}(\varepsilon)$. Therefore, $\varnothing\neq \mathcal{B}(\varepsilon)$ or $\mathcal{B}(\varepsilon) \neq \varnothing$. Then $\mathcal{P}_{s\,|\,t}$ can be decomposed into $\mathcal{B}(\varepsilon)\cup \left(\mathcal{P}_{s\,|\,t}\setminus \mathcal{B}(\varepsilon)\right) \cup \varnothing$.
	
	2. $\varepsilon= \min\limits_{i\in \mathcal{P}_{s\,|\,t}} \frac{\theta^i-\bar{\theta}}{1-\bar{\theta}}$
	
	That is, at least one (note that it's can be written as ``only one'', since banks are totally heterogeneous) bank $k$ satisfies that $s^k=0$. By the same argument in 1, we obtain that $\mathcal{B}(\varepsilon)\neq \varnothing$. 
	
	We argue that $\mathcal{B}^{\mathcal{I}}(\varepsilon)=\{k\}$. We choose any element with $s^j>0$ from $\mathcal{P}_{s\,|\,t}\setminus \bigg\{ \mathcal{B}(\varepsilon) \cup \{k\}\bigg\}$ and combine it with $\{k\}$. Obviously new system $\{k\}\cup \varnothing$ meets the requirement of $\mathcal{B}^{\mathcal{I}}(\varepsilon)$. Therefore, $\mathcal{B}^{\mathcal{I}}(\varepsilon)=\{k\}$. Then $\mathcal{P}_{s\,|\,t}$ can be decomposed into $\mathcal{B}(\varepsilon)\cup  \bigg\{\mathcal{P}_{s\,|\,t}\setminus \left\{ \mathcal{B}(\varepsilon) \cup \{k\}\right\}\bigg\} \cup \{k\}$.
	
	3. $\varepsilon \in \left(\min\limits_{i\in \mathcal{P}_{s\,|\,t}} \frac{\theta^i-\bar{\theta}}{1-\bar{\theta}}, \max\limits_{i\in \mathcal{P}_{s\,|\,t}} \frac{\theta^i-\bar{\theta}}{1-\bar{\theta}}\right)$
	
	That is at least one $s^j>0$ and at least one $s^i<0$.
	
	By the same argument in 1, we obtain that $\mathcal{B}(\varepsilon)\neq \varnothing$. We argue that $\mathcal{B}^{\mathcal{I}}(\varepsilon)\neq \varnothing$. Suppose $\mathcal{B}^{\mathcal{I}}(\varepsilon)= \varnothing$, we combine $\mathcal{B}^{\mathcal{I}}(\varepsilon)$ and $s^j$ into $\mathcal{B}^{\mathcal{I}}(\varepsilon) \cup \{j\}$, then we obtain that $x^j>0$ which contradicts to the definition of $\mathcal{B}^{\mathcal{I}}(\varepsilon)$. Therefore, $\varnothing\neq \mathcal{B}^{\mathcal{I}}(\varepsilon)$ or $\mathcal{B}^{\mathcal{I}}(\varepsilon) \neq \varnothing$. Then $\mathcal{P}_{s\,|\,t}$ can be decomposed into $\mathcal{B}(\varepsilon)\cup\mathcal{B}^{\mathcal{I}}(\varepsilon)\cup \bigg[\mathcal{P}_{s\,|\,t}\setminus\bigg(\mathcal{B}(\varepsilon)\cup \mathcal{B}^{\mathcal{I}}(\varepsilon)\bigg)\bigg]$.
	
	4. $\varepsilon \geq \max\limits_{i\in \mathcal{P}_{s\,|\,t}} \frac{\theta^i-\bar{\theta}}{1-\bar{\theta}}$
	
	That is all $s^j\geq0$. By definition of $\mathcal{B}(\varepsilon)$, we obtain that $\mathcal{B}(\varepsilon)= \varnothing$. By the same argument in 3, we obtain that $\mathcal{B}^{\mathcal{I}}(\varepsilon)\neq \varnothing$. Then $\mathcal{P}_{s\,|\,t}$ can be decomposed into $\varnothing\cup \left(\mathcal{P}_{s\,|\,t}\setminus \mathcal{B}^{\mathcal{I}}(\varepsilon)\right)\cup \mathcal{B}^{\mathcal{I}}(\varepsilon)$.
	
	Through the above demonstration, we have completed the proof of Weak Decomposition Theorem. Furthermore, for set $\mathcal{P}_{s\,|\,t}\setminus\bigg(\mathcal{B}(\varepsilon)\cup \mathcal{B}^{\mathcal{I}}(\varepsilon)\bigg)$, we can apply Weak Decomposition Theorem again. We prove that this process will converge. In the proof of Weak Decomposition Theorem, we observe that each decomposition requires $\mathcal{B}(\varepsilon)\subset \mathcal{P}_{s\,|\,t}$, $\mathcal{B}^{\mathcal{I}}(\varepsilon)\subset \mathcal{P}_{s\,|\,t}$ and the cardinality $\bigg| \mathcal{B}(\varepsilon) \cup \mathcal{B}^{\mathcal{I}}(\varepsilon) \bigg|\neq 0$. That is the composition process will proceed at most $\bigg|\mathcal{P}_{s\,|\,t} \bigg|<+\infty$ times. Suppose this process not converges, then we obtain that $\bigg|\mathcal{P}_{s\,|\,t} \bigg|\rightarrow +\infty$ which leads to a contradiction. Therefore, $\mathcal{P}_{s\,|\,t}\setminus\bigg(\mathcal{B}(\varepsilon)\cup \mathcal{B}^{\mathcal{I}}(\varepsilon)\bigg)$ can be further decomposed until $\mathcal{B}(\varepsilon)$ and $\mathcal{B}^{\mathcal{I}}(\varepsilon)$ remain unchanged or this process will converge.
	
	\textbf{Proof of Compression Equivalence Theorem}
	
	We firstly prove it holds in the first interval $\varepsilon\in\left[ \varepsilon_{\max\limits_{i\in _{-1}\mathcal{B}\left(\varepsilon\right)} x^i=0}, \varepsilon_{\max\limits_{f\in _{0}\mathcal{B}\left(\varepsilon\right)} x^f=0} \right]$. According to Lemma \ref{Lemma ForDecomposition and Compression Equivalence} $6.i)$, we obtain that $x^j>0$ and $\tilde{\Phi}^i\left(\mathbf{s}^{|{_0}\mathcal{B}\left(\varepsilon\right)|\times 1}\right)<0,\forall i \,\in{_0}\mathcal{B}\left(\varepsilon\right)$ under $\varepsilon\in\left[ \varepsilon_{\max\limits_{i\in _{-1}\mathcal{B}\left(\varepsilon\right)} x^i=0}, \varepsilon_{\max\limits_{f\in _{0}\mathcal{B}\left(\varepsilon\right)} x^f=0} \right)$ and $\beta\geq\beta_{x^j(\varepsilon=0)=0}=\max \beta_{x^j(\varepsilon)=0}$. When $\varepsilon=\varepsilon_{\max\limits_{f\in _{0}\mathcal{B}\left(\varepsilon\right)} x^f=0}$, we obtain that $\tilde{\Phi}^f\left(\mathbf{s}^{|{_0}\mathcal{B}\left(\varepsilon\right)|\times 1}\right)=0$. Refer to Lemma \ref{Lemma ForDecomposition and Compression Equivalence} $6.iv)$, we obtain that  $_1\mathcal{B}\left(\varepsilon\right) = {_0}\mathcal{B}\left(\varepsilon\right)\setminus \{f\}$ is Maximal Bailout Cluster. Refer to Lemma \ref{Lemma ForDecomposition and Compression Equivalence} $6.iii)$, we obtain that there exist a threshold $\varepsilon_{x^f(\varepsilon)=0}=\varepsilon_{\max\limits_{k\in{_1}\mathcal{B}(\varepsilon)} x^k =0 }$ such that $\max\limits_{k\in{_1}\mathcal{B}(\varepsilon)} x^k =0$ or $x^f(\varepsilon_{\max\limits_{k\in_1\mathcal{B}(\varepsilon)} x^k =0 })$=0 where $f$ satisfies $\theta^f=\min\limits_{k\in{_1}\mathcal{B}(\varepsilon)} \theta^k$ as $\varepsilon$ grows. In the second interval, we obtain that $\tilde{\Phi}^i\left(\mathbf{s}^{|{_1}\mathcal{B}\left(\varepsilon\right)|\times 1}\right)<0,\forall i \,\in{_1}\mathcal{B}\left(\varepsilon\right)$ under $\varepsilon\in\left[ \varepsilon_{\max\limits_{i\in _{0}\mathcal{B}\left(\varepsilon\right)} x^i=0}, \varepsilon_{\max\limits_{f\in _{1}\mathcal{B}\left(\varepsilon\right)} x^f=0} \right)$ by Lemma \ref{Lemma ForDecomposition and Compression Equivalence} $6.i)$. Then we can recursively apply Lemma \ref{Lemma ForDecomposition and Compression Equivalence} $6.iv)$, $6.iii)$ and $6.i)$ until $_{n}\mathcal{B}\left(\varepsilon\right)=_{n+1}\mathcal{B}\left(\varepsilon\right)$. Moreover, by the same argument for Crowding-out Effect, we have $\mathbf{\Phi}\left(\mathbf{s}^{| _{n+1}\mathcal{B}\left(\varepsilon\right) |\times 1}(\varepsilon)\right)  \leq \tilde{\mathbf{\Phi}}\left(\mathbf{s}^{| _{n+1}\mathcal{B}\left(\varepsilon\right) |\times 1}(\varepsilon)\right)$. Therefore, we have $\mathbf{\Phi}\left(\mathbf{s}^{| _{n+1}\mathcal{B}\left(\varepsilon\right) |\times 1}(\varepsilon)\right)  \leq \tilde{\mathbf{\Phi}}\left(\mathbf{s}^{| _{n+1}\mathcal{B}\left(\varepsilon\right) |\times 1}(\varepsilon)\right)\leq \mathbf{0}$ under $\varepsilon\in\left[ \varepsilon_{\max\limits_{i\in _{n}\mathcal{B}\left(\varepsilon\right)} x^i=0}, \varepsilon_{\max\limits_{f\in _{n+1}\mathcal{B}\left(\varepsilon\right)} x^f=0} \right]$ and ignition condition $\beta\geq\beta_{x^j(\varepsilon=0)=0}=\max \beta_{x^j(\varepsilon)=0}$, where $n\in \left\{ q\,|\, q\geq -1, q\in \mathbb{Z} \right\}$ and $\varepsilon_{\max\limits_{i\in _{-1}\mathcal{B}\left(\varepsilon\right)} x^i}=0$.
\end{proof}

\subsection{Proof of Strong Decomposition and Partition-Induced Equilibrium Transition Theorem}
\begin{proof}[\normalfont\bfseries Proof of Theorem \ref{Partition-induced transition}]
	\,
	
	\textbf{Proof of Strong Decomposition Theorem}
	
	According to Theorem \ref{Decomposition and Compression Equivalence}, system $\mathcal{P}_{s\,|\,t}\setminus\bigg(\mathcal{B}(\varepsilon)\cup \mathcal{B}^{\mathcal{I}}(\varepsilon)\bigg)$ can be further decomposed. Unlike the previously mentioned trichotomy, we adopt a bipartition here. That is, $\mathcal{P}_{s\,|\,t}\setminus\bigg(\mathcal{B}(\varepsilon)\cup \mathcal{B}^{\mathcal{I}}(\varepsilon)\bigg)=\bigg\{ j\,|\, s^j\geq 0, j\notin \mathcal{B}^{\mathcal{I}}(\varepsilon),j\in \mathcal{P}_{s\,|\,t} \bigg\}\,\cup\,\bigg\{ i\,|\, s^i< 0, i\notin \mathcal{B}(\varepsilon),i\in \mathcal{P}_{s\,|\,t} \bigg\}$, we firstly prove that such a bipartition can be achieved. 
	
	We first implement a basic bipartition $\mathcal{P}_{s\,|\,t}=\bigg\{ j\,|\, s^j\geq 0, j\in \mathcal{P}_{s\,|\,t} \bigg\}\,\cup\,\bigg\{ i\,|\, s^i< 0, i\in \mathcal{P}_{s\,|\,t} \bigg\}$. Next, we apply Theorem \ref{Decomposition and Compression Equivalence} to ${_{s\geq0}}\mathcal{P}_{s\,|\,t}=\bigg\{ j\,|\, s^j\geq 0, j\in \mathcal{P}_{s\,|\,t} \bigg\}$. Obviously $\mathcal{B}({_{s\geq0}}\mathcal{P}_{s\,|\,t},\varepsilon)=\varnothing$ in ${_{s\geq0}}\mathcal{P}_{s\,|\,t}$. Therefore, we can apply bipartition for ${_{s\geq0}}\mathcal{P}_{s\,|\,t}$, that is, $\bigg\{ j\,|\, s^j\geq 0, j\in \mathcal{P}_{s\,|\,t} \bigg\}=\mathcal{B}^{\mathcal{I}}(\varepsilon)\cup \bigg\{ j\,|\, s^j\geq 0, j\notin \mathcal{B}^{\mathcal{I}}(\varepsilon),j\in \mathcal{P}_{s\,|\,t} \bigg\}$. The same argument holds for ${_{s<0}}\mathcal{P}_{s\,|\,t}=\bigg\{ i\,|\, s^i< 0, i\in \mathcal{P}_{s\,|\,t} \bigg\}$, that is, $\bigg\{ i\,|\, s^i< 0, i\in \mathcal{P}_{s\,|\,t} \bigg\}=\mathcal{B}(\varepsilon) \cup \bigg\{ i\,|\, s^i< 0, i\notin \mathcal{B}(\varepsilon),i\in \mathcal{P}_{s\,|\,t} \bigg\}$. 
	
	Therefore, we can divide $\mathcal{P}_{s\,|\,t}$ into four distinct parts in just one step, that is \ref{Basic Four Parts of system P}.
	\begin{align}\label{Basic Four Parts of system P}
		\mathcal{P}_{s\,|\,t}=\mathcal{B}(\varepsilon)\cup \mathcal{B}^{\mathcal{I}}(\varepsilon)\cup \bigg\{ j\,|\, s^j\geq 0, j\notin \mathcal{B}^{\mathcal{I}}(\varepsilon),j\in \mathcal{P}_{s\,|\,t} \bigg\}\,\cup\,\bigg\{ i\,|\, s^i< 0, i\notin \mathcal{B}(\varepsilon),i\in \mathcal{P}_{s\,|\,t} \bigg\}
	\end{align}
	\noindent Furthermore, we randomly transfer some elements from $\bigg\{ i\,|\, s^i< 0, i\notin \mathcal{B}(\varepsilon),i\in \mathcal{P}_{s\,|\,t} \bigg\}$ to $\mathcal{B}(\varepsilon)$, forming $\mathcal{B}(\varepsilon)\,\cup\,\bigg\{ \left(\sigma\right) \diamond \left\{ i\,|\, s^i< 0,\, i\notin \mathcal{B}(\varepsilon) \right\} \bigg\}$, while the remaining part constitutes $\left(1-\sigma\right) \diamond \left\{ i\,|\, s^i< 0,\, i\notin \mathcal{B}(\varepsilon) \right\}$ (Note that we omit $i\in \mathcal{P}_{s\,|\,t}$ here to simplify algebra). The same operation is applied for $\mathcal{B}^{\mathcal{I}}(\varepsilon)$ and $\bigg\{ j\,|\, s^j\geq 0, j\notin \mathcal{B}^{\mathcal{I}}(\varepsilon),j\in \mathcal{P}_{s\,|\,t} \bigg\}$. Therefore, we can reformulate \ref{Basic Four Parts of system P} as \ref{Strong Four Parts of system P}.
	\begin{align}\label{Strong Four Parts of system P}
		\mathcal{P}_{s\,|\,t}&= \mathcal{B}(\varepsilon)\,\cup\,\bigg\{ \left(\sigma\right) \diamond \left\{ i\,|\, s^i< 0,\, i\notin \mathcal{B}(\varepsilon) \right\} \bigg\} \notag \\    &\quad\, \cup \left(1-\sigma\right) \diamond \left\{ i\,|\, s^i< 0,\, i\notin \mathcal{B}(\varepsilon) \right\}  \notag  \\ &\quad\, \cup \left(1-\sigma^{\mathcal{I}}\right) \diamond \left\{ j\,|\, s^j\geq 0,\, j\notin \mathcal{B}^{\mathcal{I}}(\varepsilon) \right\}      \\ &\quad\, \cup \,\,\mathcal{B}^{\mathcal{I}}(\varepsilon)\,\cup\,\bigg\{ \left(\sigma^{\mathcal{I}}\right) \diamond \left\{ j\,|\, s^j\geq 0,\, j\notin \mathcal{B}^{\mathcal{I}}(\varepsilon) \right\} \bigg\} \notag
	\end{align}
	
	\textbf{Proof of Partition-Induced Transition Theorem}
	
	For any $\mathcal{P}_{s\,|\,t}$, we can roughly divide it into different parts ${^0}\mathcal{P}_{s\,|\,t}\cup\cdots \cup {^\chi}\mathcal{P}_{s\,|\,t}$. Then by Strong Decomposition Theorem, we can further divide any ${^\chi}\mathcal{P}_{s\,|\,t}$ into four parts. We don't care about the set $\mathcal{B}(\varepsilon)\,\cup\,\bigg\{ \left(\sigma\right) \diamond \left\{ i\,|\, s^i< 0,\, i\notin \mathcal{B}(\varepsilon) \right\} \bigg\}$ and set $\mathcal{B}^{\mathcal{I}}(\varepsilon)\,\cup\,\bigg\{ \left(\sigma^{\mathcal{I}}\right) \diamond \left\{ j\,|\, s^j\geq 0,\, j\notin \mathcal{B}^{\mathcal{I}}(\varepsilon) \right\} \bigg\}$, since the elements in one of them are forming all either positive or negative solutions by the definition of $\mathcal{B}(\varepsilon)$ and $\mathcal{B}^{\mathcal{I}}$ (or refer to Lemma \ref{Lemma for Chain} $(c)$). The left two parts are all pre-regular chains.We denote $\left(1-\sigma\right) \diamond \left\{ i\,|\, s^i< 0,\, i\notin \mathcal{B}(\varepsilon),\,i\in {^\chi}\mathcal{P}_{s\,|\,t} \right\}$ as $\bigstar\left({^\chi}\mathcal{P}_{s\,|\,t},\varepsilon\right)$. That is, we now turn to study the case $\bigstar\left({^0}\mathcal{P}_{s\,|\,t},\varepsilon\right)\cup\cdots \cup  \bigstar\left({^\chi}\mathcal{P}_{s\,|\,t},\varepsilon\right)$ which is a negative pre-regular chain. Then we can apply Lemma \ref{Lemma ForPartitionInduced Equilibrium Transition} and Lemma \ref{Lemma ForPartitionInduced Equilibrium Transition Snd} to precisely control $\bigg|\mathcal{B}^{\bigstar\left({^0}\mathcal{P}_{s\,|\,t},\varepsilon\right)}\bigg|,\,\cdots,\, \bigg|\mathcal{B}^{\bigstar\left({^\chi}\mathcal{P}_{s\,|\,t},\varepsilon\right)}\bigg|$. This process requires the identical $\varepsilon$. The similar analysis holds for the set $\left(1-\sigma^{\mathcal{I}}\right) \diamond \left\{ j\,|\, s^j\geq 0,\, j\notin \mathcal{B}^{\mathcal{I}}(\varepsilon) \right\}$ which is a positive pre-regular chain.
	
	Therefore, We can precisely achieve exact Partition-Induced Transition while maintaining current $\varepsilon$ by controlling $\beta$, individual asset $A$ and individual leverage ratio $\theta$.
\end{proof}

\section{Perfection Equilibrium Properties}\label{AppendixSection: Perfection Equilibrium Properties}
\subsection{Example of Perfection Generating Space}
See Figure \ref{PGS_p=1_Snd}. Bank $F$ may not exit $\mathcal{P}_0$ at $t=1$ if it doesn't meet the requirement of partition perfection, and this can be achieved according to Theorem \ref{Partition-induced transition}. But it will eventually leave $\mathcal{P}_0$, since it will become the new maximal bail-in cluster $\mathcal{B}^{\mathcal{I}}(\varepsilon)$ or sooner or later (Lemma \ref{Lemma For Decomposition Chains in any Generating Space}).

\begin{figure}[!htbp]
	\centering
	\includegraphics[width = 1\textwidth]{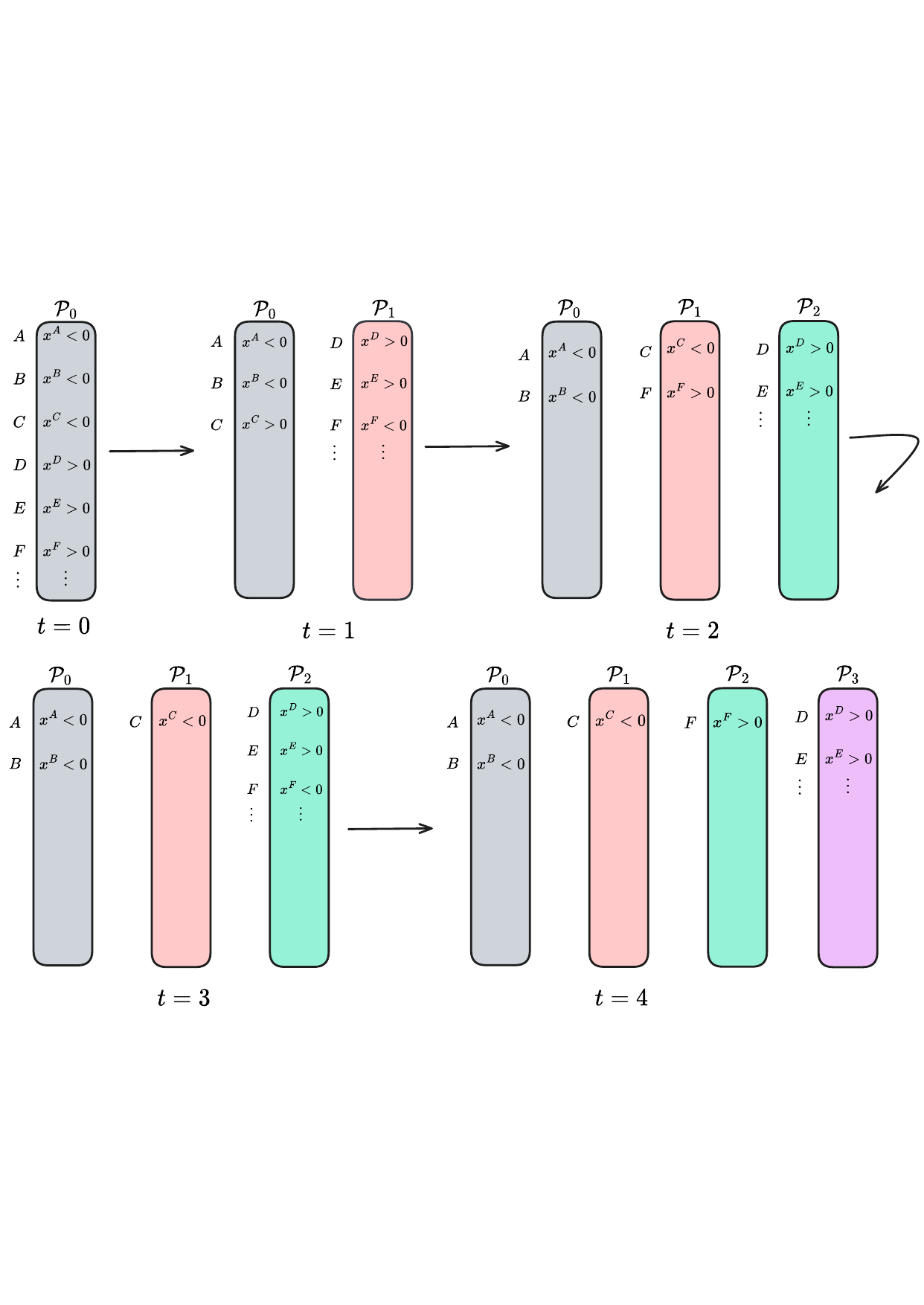}
	\caption{Perfection Generating Space: $p^s=1$, Second Version}
	\label{PGS_p=1_Snd}
\end{figure}

\subsection{Proof of Remark \ref{The equivalent operations p^s=1}}\label{AppendixSection: The equivalent operations}
\begin{proof}[\normalfont\bfseries Proof of Remark \ref{The equivalent operations p^s=1}]
	\,
	
	When $\sum\limits_{j\in \mathcal{P}_{s\,|\,t}}x^j_{s\,|\,t}\geq0$, bank $i\in {_\chi}\mathcal{P}_{s\,|\,t}=\left\{ i\,|\, s^i< 0,\, i\in \mathcal{P}_{s\,|\,t} \right\}$ will meet $x^i<0$ and be constraint to stay in $\mathcal{P}_{s\,|\,t}$ by operation $p^s=1$. Bank $j\in {^\chi}\mathcal{P}_{s\,|\,t}=\left\{ j\,|\, s^j\geq 0,\, j\in \mathcal{P}_{s\,|\,t} \right\}$ will behave under operation Partitions Perfection. 
	
	When $\sum\limits_{j\in \mathcal{P}_{s\,|\,t}}x^j_{s\,|\,t}<0$, bank $j\in {^\chi}\mathcal{P}_{s\,|\,t}=\left\{ j\,|\, s^j\geq 0,\, j\in \mathcal{P}_{s\,|\,t} \right\}$ will behave under Partitions Perfection operation, specifically, all of them will leave $\mathcal{P}_{s\,|\,t}$.  Bank $i\in {_\chi}\mathcal{P}_{s\,|\,t}=\left\{ i\,|\, s^i< 0,\, i\in \mathcal{P}_{s\,|\,t} \right\}$ whose $x^i<0$ will stay regardless of Partitions Perfection. While those banks with $x^i>0$ will leave $\mathcal{P}_{s\,|\,t}$ under Partitions Perfection, or Individual Perfection, since bank $i$ with $x^i>0$ is now on a personal bad state. Therefore, bank $i\in {_\chi}\mathcal{P}_{s\,|\,t}=\left\{ i\,|\, s^i< 0,\, i\in \mathcal{P}_{s\,|\,t} \right\}$ experience operation $p^s=1$ with Individual Perfection, while bank $j\in {^\chi}\mathcal{P}_{s\,|\,t}=\left\{ j\,|\, s^j\geq 0,\, j\in \mathcal{P}_{s\,|\,t} \right\}$ experience Partitions Perfection.
\end{proof}

\subsection{Proof of Lemma \ref{Lemma for changing in the same direction}}
\begin{proof}[\normalfont\bfseries Proof of Lemma \ref{Lemma For Decomposition Chains in any Generating Space}]
	\,
	
	Taking the derivative with respect to $\sum\limits_{j\in \mathcal{P}_{s\,|\,t}}A_j\cdot L^j$ on both sides of \ref{system P Sum} yields \ref{Decom_Exp_AcdotL}.
	
	\begin{align}\label{Decom_Exp_AcdotL}
		1 &= \frac{\partial\sum\limits_{i\in\mathcal{P}_{s\,|\,t}} x^i\left(\mathbf{s}^{|\mathcal{P}_{s\,|\,t}|\times 1}\right)}{\partial \left(\sum\limits_{j\in \mathcal{P}_{s\,|\,t}}A_j\cdot L^j \right)} + \left(1 - e^{-\beta\cdot\sum\limits_{i\in\mathcal{P}_{s\,|\,t}} x^i\left(\mathbf{s}^{|\mathcal{P}_{s\,|\,t}|\times 1}\right)}\right) \cdot \frac{1}{\sum\limits_{j\in \mathcal{P}_{s\,|\,t}}L^j} \notag \\ &\quad + A^{\mathcal{P}_{s\,|\,t}}\cdot\beta\cdot e^{-\beta\cdot\sum\limits_{i\in\mathcal{P}_{s\,|\,t}} x^i\left(\mathbf{s}^{|\mathcal{P}_{s\,|\,t}|\times 1}\right)} \cdot \frac{\partial\sum\limits_{i\in\mathcal{P}_{s\,|\,t}} x^i\left(\mathbf{s}^{|\mathcal{P}_{s\,|\,t}|\times 1}\right)}{\partial \left(\sum\limits_{j\in \mathcal{P}_{s\,|\,t}}A_j\cdot L^j \right)} 
	\end{align}
	\noindent By solving it, we obtain equation \ref{Decom_Exp_AcdotL Snd} which holds under the finite-risk mitigation condition \ref{finite-risk mitigation} or \ref{finite-risk mitigation2}.
	\begin{align}\label{Decom_Exp_AcdotL Snd}
		\frac{\partial\sum\limits_{i\in\mathcal{P}_{s\,|\,t}} x^i\left(\mathbf{s}^{|\mathcal{P}_{s\,|\,t}|\times 1}\right)}{\partial \left(\sum\limits_{j\in \mathcal{P}_{s\,|\,t}}A_j\cdot L^j \right)} &=\frac{1- \left(1 - e^{-\beta\cdot\sum\limits_{i\in\mathcal{P}_{s\,|\,t}} x^i\left(\mathbf{s}^{|\mathcal{P}_{s\,|\,t}|\times 1}\right)}\right) \cdot \frac{1}{\sum\limits_{j\in \mathcal{P}_{s\,|\,t}}L^j}}{1+A^{\mathcal{P}_{s\,|\,t}}\cdot\beta\cdot e^{-\beta\cdot\sum\limits_{i\in\mathcal{P}_{s\,|\,t}} x^i\left(\mathbf{s}^{|\mathcal{P}_{s\,|\,t}|\times 1}\right)}} \geq0
	\end{align}
\end{proof}

\subsection{Proof of Lemma \ref{Lemma For Decomposition Chains in any Generating Space}}
\begin{proof}[\normalfont\bfseries Proof of Lemma \ref{Lemma For Decomposition Chains in any Generating Space}]
	\,
	
	\noindent\textbf{Proof of $1$}: When $s^i=A_i\cdot L^i< 0,\,\forall i\in \mathcal{B}^{\bigstar\left({_\chi}\mathcal{P}_{s\,|\,t},\varepsilon\right)}$, we obtain that $L^i<0,\,\forall i\in \mathcal{B}^{\bigstar\left({_\chi}\mathcal{P}_{s\,|\,t},\varepsilon\right)}$. When $s^j=A_j\cdot L^j\geq 0,\,\forall j\in \mathcal{B}^{\bigstar\left({^\chi}\mathcal{P}_{s\,|\,t},\varepsilon\right)}$, we obtain that $L^j\geq0,\,\forall j\in \mathcal{B}^{\bigstar\left({^\chi}\mathcal{P}_{s\,|\,t},\varepsilon\right)}$. 
	
	\noindent\textbf{Proof of $2$}: 
	We prove it by Strong Decomposition Theorem \ref{Partition-induced transition}. We divide $\mathcal{P}_{s\,|\,t}$ into $\mathcal{B}^{\bigstar\left({_\chi}\mathcal{P}_{s\,|\,t},\varepsilon\right)}\cup \hollowstar\left({_\chi}\mathcal{P}_{s\,|\,t},\varepsilon\right)\cup {^\chi}\mathcal{P}_{s\,|\,t}$ to demonstrate that bank $i\in \mathcal{B}^{\bigstar\left({_\chi}\mathcal{P}_{s\,|\,t},\varepsilon\right)}$ will stay after the application of joint operation.
	
	We argue that combining $\mathcal{B}^{\bigstar\left({_\chi}\mathcal{P}_{s\,|\,t},\varepsilon\right)}$ and $\hollowstar\left({_\chi}\mathcal{P}_{s\,|\,t},\varepsilon\right)$ leads to a solution which is strictly less than Counterfactual Cluster Solution of equation system \ref{Comparison between MaxBailoutCluster and normal}. The definition of $\mathcal{B}^{\bigstar\left({_\chi}\mathcal{P}_{s\,|\,t},\varepsilon\right)}$ ensures that the newcomers in $\hollowstar\left({_\chi}\mathcal{P}_{s\,|\,t},\varepsilon\right)$ possess a positive solution. Then, according to the Crowding-out Effect, it follows that $x^i\ll \tilde{\Phi}^i\left(\mathbf{s}^{|\mathcal{B}^{\bigstar\left({_\chi}\mathcal{P}_{s\,|\,t},\varepsilon\right)}|\times 1}\right)<0,\,\forall i\in \mathcal{B}^{\bigstar\left({_\chi}\mathcal{P}_{s\,|\,t},\varepsilon\right)}$, i.e., the solution of the new system $\mathcal{B}^{\bigstar\left({_\chi}\mathcal{P}_{s\,|\,t},\varepsilon\right)}\cup \hollowstar\left({_\chi}\mathcal{P}_{s\,|\,t},\varepsilon\right)$ is strictly less than that of system $\mathcal{B}^{\bigstar\left({_\chi}\mathcal{P}_{s\,|\,t},\varepsilon\right)}$.
	\begin{align}\label{Comparison between MaxBailoutCluster and normal}
		A_i\cdot L^i=\tilde{\Phi}^i\left(\mathbf{s}^{|\mathcal{B}^{\bigstar\left({_\chi}\mathcal{P}_{s\,|\,t},\varepsilon\right)}|\times 1}\right)+&\left(1-e^{-\beta\cdot \sum\limits_{p\in \mathcal{B}^{\bigstar\left({_\chi}\mathcal{P}_{s\,|\,t},\varepsilon\right)}}\tilde{\Phi}^p\left(\mathbf{s}^{|\mathcal{B}^{\bigstar\left({_\chi}\mathcal{P}_{s\,|\,t},\varepsilon\right)}|\times 1}\right)}\right)\cdot A_i \notag  \\  &\vdots \quad\quad\quad\quad\quad\quad\quad    \\   A_m\cdot L^m=\tilde{\Phi}^m\left(\mathbf{s}^{|\mathcal{B}^{\bigstar\left({_\chi}\mathcal{P}_{s\,|\,t},\varepsilon\right)}|\times 1}\right)+&\left(1-e^{-\beta\cdot \sum\limits_{p\in \mathcal{B}^{\bigstar\left({_\chi}\mathcal{P}_{s\,|\,t},\varepsilon\right)}}\tilde{\Phi}^p\left(\mathbf{s}^{|\mathcal{B}^{\bigstar\left({_\chi}\mathcal{P}_{s\,|\,t},\varepsilon\right)}|\times 1}\right)}\right)\cdot A_m \notag 
	\end{align}
	\noindent We further introduce ${^\chi}\mathcal{P}_{s\,|\,t}$ into the system which will lead to an increase in $\sum\limits_{k}A_k\cdot L^k$ (or it's unchanged if $A_j\cdot L^j=0,\,\forall j\in {^\chi}\mathcal{P}_{s\,|\,t}$). According to Lemma \ref{Perfection Switch=1} $1.c)$ and Lemma \ref{Lemma for changing in the same direction}, we have $\sum\limits_{n\in \mathcal{B}^{\bigstar\left({_\chi}\mathcal{P}_{s\,|\,t},\varepsilon\right)}\cup \hollowstar\left({_\chi}\mathcal{P}_{s\,|\,t},\varepsilon\right)}x^n\leq \sum\limits_{q\in \mathcal{P}_{s\,|\,t}}x^q_{s\,|\,t}$. Again by applying Crowding-out Effect, we conclude that $\left(x^i\right)'\leq x^i<0,\,\forall i\in \mathcal{B}^{\bigstar\left({_\chi}\mathcal{P}_{s\,|\,t},\varepsilon\right)}$, i.e., the solution of the new system $\mathcal{P}_{s\,|\,t}$ is less than that of system $\mathcal{B}^{\bigstar\left({_\chi}\mathcal{P}_{s\,|\,t},\varepsilon\right)}\cup \hollowstar\left({_\chi}\mathcal{P}_{s\,|\,t},\varepsilon\right)$.
	
	Therefore, all banks in $\mathcal{B}^{\bigstar\left({_\chi}\mathcal{P}_{s\,|\,t},\varepsilon\right)}$ possess a negative solution which is constrained to remain.
	
	Then we divide $\mathcal{P}_{s\,|\,t}$ into $\mathcal{B}^{\bigstar\left({^\chi}\mathcal{P}_{s\,|\,t},\varepsilon\right)}\cup \hollowstar\left({^\chi}\mathcal{P}_{s\,|\,t},\varepsilon\right)\cup {_\chi}\mathcal{P}_{s\,|\,t}$ to demonstrate that bank $j\in \mathcal{B}^{\bigstar\left({^\chi}\mathcal{P}_{s\,|\,t},\varepsilon\right)}$ will leave after the application of joint operation.
	
	We argue that combining $\mathcal{B}^{\bigstar\left({^\chi}\mathcal{P}_{s\,|\,t},\varepsilon\right)}$ and $\hollowstar\left({^\chi}\mathcal{P}_{s\,|\,t},\varepsilon\right)$ leads to a solution which is strictly greater than Counterfactual Cluster Solution of equation system \ref{Comparison between MaxBailoutCluster and normal}. The definition of $\mathcal{B}^{\bigstar\left({^\chi}\mathcal{P}_{s\,|\,t},\varepsilon\right)}$ ensures that the newcomers in $\hollowstar\left({^\chi}\mathcal{P}_{s\,|\,t},\varepsilon\right)$ possess a negative solution. Then, according to the Crowding-out Effect, it follows that $0< \tilde{\Phi}^j\left(\mathbf{s}^{|\mathcal{B}^{\bigstar\left({_\chi}\mathcal{P}_{s\,|\,t},\varepsilon\right)}|\times 1}\right)\ll x^j,\,\forall j\in \mathcal{B}^{\bigstar\left({^\chi}\mathcal{P}_{s\,|\,t},\varepsilon\right)}$, i.e., the solution of the new system $\mathcal{B}^{\bigstar\left({^\chi}\mathcal{P}_{s\,|\,t},\varepsilon\right)}\cup \hollowstar\left({^\chi}\mathcal{P}_{s\,|\,t},\varepsilon\right)$ is strictly greater than that of system $\mathcal{B}^{\bigstar\left({^\chi}\mathcal{P}_{s\,|\,t},\varepsilon\right)}$. 
	
	We further introduce ${_\chi}\mathcal{P}_{s\,|\,t}$ into the system which will lead to an decrease in $\sum\limits_{k}A_k\cdot L^k$. According to Lemma \ref{Perfection Switch=1} $1.c)$ and Lemma \ref{Lemma for changing in the same direction}, we have $\sum\limits_{q\in \mathcal{P}_{s\,|\,t}}x^q_{s\,|\,t}\ll \sum\limits_{n\in \mathcal{B}^{\bigstar\left({^\chi}\mathcal{P}_{s\,|\,t},\varepsilon\right)}\cup \hollowstar\left({^\chi}\mathcal{P}_{s\,|\,t},\varepsilon\right)}x^n$. Again by applying Crowding-out Effect, we conclude that $0<x^j\ll \left(x^j\right)',\,\forall i\in \mathcal{B}^{\bigstar\left({_\chi}\mathcal{P}_{s\,|\,t},\varepsilon\right)}$, i.e., the solution of the new system $\mathcal{P}_{s\,|\,t}$ is strictly greater than that of system $\mathcal{B}^{\bigstar\left({^\chi}\mathcal{P}_{s\,|\,t},\varepsilon\right)}\cup \hollowstar\left({^\chi}\mathcal{P}_{s\,|\,t},\varepsilon\right)$. That is,each bank $j$ in $\mathcal{B}^{\bigstar\left({^\chi}\mathcal{P}_{s\,|\,t},\varepsilon\right)}$ possesses a solution of system $\mathcal{P}_{s\,|\,t}$ which is greater than Counterfactual Cluster Solution.
	
	Therefore, bank $j\in \mathcal{B}^{\bigstar\left({^\chi}\mathcal{P}_{s\,|\,t},\varepsilon\right)}$ will exit after the execution of joint operation.
\end{proof}

\subsection{Proof of Remark \ref{Remark for Partial order and Hierarchy}}\label{AppendixSection:Proof of Remark for Partial order and Hierarchy}
\begin{proof}[\normalfont\bfseries Proof of Remark \ref{Remark for Partial order and Hierarchy}]
	\,
	
	We prove that configurations such as Figure \ref{Partial order and Hierarchy} illustrated can be achieved.
	
	By the same argument in the proof of Lemma \ref{Corresponding to Counterfactual Monopoly Space} $f)$, As $A_f\rightarrow +\infty$, $x^f$ is bouned by a negative value $\bar{x}^f$, that is $\frac{x^f}{A_f}\rightarrow 0$.
	
	Refer to Lemma \ref{Perfection Switch=1} $1.a)$, any other bank $i\in \mathcal{B}\left(\varepsilon\right)$ meets equation \ref{Relations for Partial order and Hierarchy}. Therefore, all banks other than bank $f$ can freely choose their partial orders.
	\begin{align}\label{Relations for Partial order and Hierarchy}
		&x^i = \frac{A_i\cdot \left(\theta^f-\theta^i\right)}{\bar{\theta}} +\frac{A_i}{A_f} \cdot x^f ,\,\forall i\in \mathcal{B}\left(\varepsilon\right) \notag \\  \Longleftrightarrow \quad&x^i = \frac{A_i\cdot \left(\theta^f-\theta^i\right)}{\bar{\theta}} +0^{-},\,\forall i\in \mathcal{B}\left(\varepsilon\right)   \\  \Longleftrightarrow \quad&x^i \rightarrow \frac{A_i\cdot \left(\theta^f-\theta^i\right)}{\bar{\theta}},\,\forall i\in \mathcal{B}\left(\varepsilon\right) \notag
	\end{align}
	\noindent Moreover, we can solve $x^f$ by recursively apply Lemma \ref{Perfection Switch=1} $1.a)$, that is exactly what equation \ref{Relations for Partial order and Hierarchy2} depicts.
	\begin{align}\label{Relations for Partial order and Hierarchy2}
		&\quad\quad\quad\quad\quad\quad\quad\,\, e^{-\beta\cdot\sum\limits_{i\in\mathcal{P}_{s\,|\,t}\left(\varepsilon\right)} x^i\left(\mathbf{s}^{|\mathcal{P}_{s\,|\,t}|\times 1}\right)} = \frac{\theta^f}{\bar{\theta}} +\left(1-\frac{1}{\bar{\theta}}\right)\cdot \varepsilon  +\frac{x^f}{A_f}    \notag \\  \Longleftrightarrow \quad&  -\beta\cdot \left[ x^f+\sum\limits_{i\in\mathcal{P}_{s\,|\,t},\,i\neq f} x^i\left(\mathbf{s}^{|\mathcal{P}_{s\,|\,t}|\times 1}\right) \right] =  \frac{\theta^f}{\bar{\theta}} +\left(1-\frac{1}{\bar{\theta}}\right)\cdot \varepsilon +0^- \notag \\  \Longleftrightarrow \quad& -\beta\cdot \left[ \left(1+  \frac{A^{\mathcal{P}_{s\,|\,t}\setminus \{f\}}}{A_f}  \right)\cdot x^f +    \sum\limits_{i\in\mathcal{P}_{s\,|\,t},\,i\neq f} \frac{A_i\cdot \left(\theta^f-\theta^i\right)}{\bar{\theta}} \right] \rightarrow \frac{\theta^f}{\bar{\theta}} +\left(1-\frac{1}{\bar{\theta}}\right)\cdot \varepsilon  \\&     \notag \\  \Longleftrightarrow \quad&\quad\quad\quad\quad\quad\quad\quad\,\,  x^f \rightarrow  -\frac{\frac{\frac{\theta^f}{\bar{\theta}} +\left(1-\frac{1}{\bar{\theta}}\right)\cdot \varepsilon}{\beta}+ \sum\limits_{i\in\mathcal{P}_{s\,|\,t},\,i\neq f} \frac{A_i\cdot \left(\theta^f-\theta^i\right)}{\bar{\theta}}}{1+  \frac{A^{\mathcal{P}_{s\,|\,t}\setminus \{f\}}}{A_f}} \notag
	\end{align}
	\noindent We set $A_i\rightarrow 0^+,\, \forall i\in\mathcal{P}_{s\,|\,t},\,i\neq f$, we obtain that $x^i\rightarrow 0^-,\, \forall i\in\mathcal{P}_{s\,|\,t},\,i\neq f$ according to \ref{Relations for Partial order and Hierarchy}. While $x^f\rightarrow -\frac{\frac{\theta^f}{\bar{\theta}} +\left(1-\frac{1}{\bar{\theta}}\right)\cdot \varepsilon}{\beta}\ll 0$.
	
	Therefore, Figure \ref{Partial order and Hierarchy} can be achieved.
\end{proof}

\subsection{Main Properties and proofs}
We assume $\sum\limits_{i\in \mathcal{P}_{0\,|\,t=0}} x^i<0$. The inequality is used for simplifying algebra, it's orthogonal to the main results of properties, we examine it in lemma \ref{Perfection Switch=1} 2. $e)$.
\begin{restatable}{thm3}{LemmaPerfectionSwitchEqsOne}\label{Perfection Switch=1}
	As for $p^s=1$ with Partitions Perfection, we have:
	
	1. \textbf{Universal Properties}
	
	$a)$ Cross-bank dependence $x^i = \frac{A_i(\theta^j-\theta^i)}{\bar{\theta}} + \frac{A_i}{A_j}\cdot x^j$ holds in any $\mathcal{P}_{s\,|\,t}$.
	
	$b)$ \textbf{Hierarchy}: For $\theta^i\gg \theta^j$, then the state $x^i\geq0$ while $x^j\leq 0$ cannot hold in any Perfection Generating Space, regardless of their asset size relationship.
	
	$c)$ $\sum\limits_{j\in \mathcal{P}_{s\,|\,t}}x^j_{s\,|\,t}$ and $\sum\limits_{j\in \mathcal{P}_{s\,|\,t}}A_j\cdot L^j$ share the same sign in any $\mathcal{P}_{s\,|\,t}$, where   $L^j=\frac{1}{\bar{\theta}}\cdot \left[ \bar{\theta}-\theta^j +\left(1-\bar{\theta}\right)\cdot \varepsilon \right]$.
	
	$d)$ \textbf{No-turning-back}: for any bank $i\in \mathcal{P}_{s\,|\,t}$ with $s\geq 1$, the inclusion $i\in \mathcal{P}_{s-1\,|\, t+1}$ cannot hold. Equivalently, $d\left(\mathcal{P}^t\right)$ is non-decreasing over time.  
	
	$e)$ $\big|\mathcal{B}_0(\varepsilon)\big|$ is non-increasing in $\beta$ and $A_i$ where $i\in \mathcal{B}_0(\varepsilon)$ and $\mathcal{B}_0(\varepsilon)$ represents the Maximal Bailout Cluster of $\mathcal{P}_{0\,|\,t=0}$.
	
	$f)$ Splitting $\bigg\{ i \,\bigg|\, i\in \mathcal{P}_{0\,|\,t=0}, s^i\leq 0 \bigg\}$ into $\mathcal{B}_0(\varepsilon)$ and $\bigg\{ j \,\bigg|\, j\in \mathcal{P}_{0\,|\,t=0}\setminus \mathcal{B}_0(\varepsilon), s^j\leq 0 \bigg\}$ requires only one step. Analogously, the same holds for the partition of $\bigg\{ i \,\bigg|\, i\in \mathcal{P}_{0\,|\,t=0}, s^i\geq 0 \bigg\}$.
	
	2. \textbf{Properties of Equilibrium} 
	
	$a)$ $1\leq d\left(\bigoplus_{\mathcal{P}_{0}}\right)\leq 2$, and $\mathcal{P}_0=\mathcal{B}_0(\varepsilon)$ under $\bigoplus_{\mathcal{P}_{0}}$ or $\bigoplus_{\mathcal{P}_{0}} \,\cup\, \left\{p^s \oplus \mathcal{T}_p\right\}^{\mathring{t}}$ where $t'_0\leq \mathring{t} \leq \max\limits_s{t'_s}$.
	
	$b)$ $\max t= \max\limits_s{t_s} = t'_{\{\max s\}}$, $d\left(\bigoplus_{ \mathcal{P}_{\max\{s\}}  }\right)=1$ and $\mathcal{P}_{\max\{s\}}=\mathcal{B}^{\mathcal{I}}_0(\varepsilon)=\cdots=\mathcal{B}^{\mathcal{I}}_{\max\{s\}\,|\,t}(\varepsilon)$ where $\mathcal{B}^{\mathcal{I}}_0(\varepsilon)$ $\left(\mathcal{B}^{\mathcal{I}}_{\max\{s\}\,|\,t}(\varepsilon)\right)$ represents the Maximal Bail-in Cluster of $\mathcal{P}_{0\,|\,t=0}$ $\left(\mathcal{P}_{\max\{s\}\,|\,t}\right)$. Equivalently, newly generated space (or the last subspace of any $\mathcal{P}^{t}$) is guaranteed to include $\mathcal{B}^{\mathcal{I}}_0(\varepsilon)$.
	
	$c)$ Let $t_{\mathcal{B}^{\mathcal{I}}_0(\varepsilon)}$ denote the first time point at which $\mathcal{B}^{\mathcal{I}}_0(\varepsilon)$ appears solely, we have $1\leq t_{\mathcal{B}^{\mathcal{I}}_0(\varepsilon)}\leq \bigg| \bigg\{i\, \bigg|\, i\in \mathcal{P}_{0\,|\,t=0}\setminus \bigg(\mathcal{B}_0(\varepsilon)\cup \mathcal{B}^{\mathcal{I}}_0(\varepsilon)\bigg),s^i\leq 0 \bigg\} \bigg|+2$.
	
	$d)$ $0 \leq 2\cdot \sum\limits_{s\notin\{0,\max\{s\}\}}d\left(\bigoplus_{ \mathcal{P}_{s} }\right)\leq 8\cdot m -2   \,+\,  \left( n+1 \right)\cdot n$, where $m=\bigg|\mathcal{P}_{0\,|\,t=1}\setminus\mathcal{B}_0(\varepsilon)\bigg|$ and $n=\bigg|\mathcal{P}_{1\,|\,t=1}\setminus\mathcal{B}^{\mathcal{I}}_0(\varepsilon)\bigg|$.The necessary condition for the final equality to hold is that $\mathcal{P}_{0\,|\,t=1}=\bigg\{ i\,\bigg|\,i\in \mathcal{P}_{0\,|\,t=0},s^i\leq 0,x^i_{t=0}<0 \bigg\}$ and  $\mathcal{P}_{1\,|\,t=1}=\bigg\{ j\,\bigg|\,j\in \mathcal{P}_{0\,|\,t=0},s^j\geq 0,x^j_{t=0}>0 \bigg\}$.
	
	$e)$ $\mathcal{P}_{0}$ will eventually get stable with $\mathcal{P}_0=\mathcal{B}_0(\varepsilon)=\cdots=\mathcal{B}_{0\,|\,t}(\varepsilon)$ without the assumption $\sum\limits_{i\in \mathcal{P}_{0\,|\,t=0}} x^i<0$. Generally, any $\mathcal{P}_{s\,|\,t}$ will get perfect finally.
	
	$f)$ $\mathcal{P}_0$ is a Pure Bailout Space while $\mathcal{P}_{\max\{s\}}$ is a Pure Bail-in Space. Generally, $\mathcal{P}_s$ is either a Pure Bailout Space or a Pure Bail-in Space when the last Perfection Generating Space $\mathcal{P}^t$ gets perfect.
	
	$g)$ All bail-in banks ($s^i>0$) with $x^i\geq0$ and all bailout banks $s^i\leq0$ with $x^i<0$ when the system gets perfect.
	
	$h)$ we have $f)\Longleftrightarrow g)\Longleftrightarrow \mathcal{P}^{t}=\mathcal{P}^{t+1}$.
\end{restatable}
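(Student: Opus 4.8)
The plan is to establish Lemma~\ref{Perfection Switch=1} by proving the universal block (Part~1) first and then the equilibrium block (Part~2), since every statement in Part~2 is built from Part~1 together with the weak/strong decomposition machinery. For Part~1 I would begin from the aggregated BCP identity: summing the $|\mathcal{P}_{s\,|\,t}|$ clearing equations of any subspace gives $\sum_{j} A_j L^j = X + \bigl(1 - e^{-\beta X}\bigr) A^{\mathcal{P}_{s\,|\,t}}$ with $X = \sum_j x^j$; because the right-hand side is strictly increasing in $X$ and vanishes at $X=0$, item $c)$ follows at once. Item $a)$ comes from dividing the $i$-th clearing equation by the $j$-th — the devaluation factor $1 - e^{-\beta X}$ is common — and substituting $s^i = A_i L^i$ together with $L^i - L^j = (\theta^j - \theta^i)/\bar\theta$; item $b)$ is then a one-line sign check on the resulting formula. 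Item $e)$ is a corollary of Lemma~\ref{Lemma ForDecomposition and Compression Equivalence}(2) and its ignition condition: raising $\beta$ or any $A_i$ with $i\in\mathcal{B}_0(\varepsilon)$ raises $e^{-\beta\sum_{i\in\mathcal{B}_0}\tilde\Phi^i}$ past the lowest-leverage member's threshold, ejecting it; and item $f)$ is read off Definition~\ref{Maximal Bailout Cluster}: after one joint operation every $s^i\le 0$ bank outside $\mathcal{B}_0(\varepsilon)$ has $x^i\ge 0$, hence is in a bad state and leaves, the $s^i\ge 0$ side being handled symmetrically by Lemma~\ref{Lemma For Decomposition Chains in any Generating Space}.

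The delicate universal item is $d)$ (no-turning-back, equivalently monotonicity of $d(\mathcal{P}^t)$). I would prove it by induction on $t$ with Lemma~\ref{Lemma For Decomposition Chains in any Generating Space} as the engine: that lemma identifies the cluster leaving $\mathcal{P}_{s-1\,|\,t}$ under a joint operation and certifies that each of its members carries a clearing solution of the weakly "wrong" sign in the subspace it lands in; by the Crowding-Out Effect (Lemma~\ref{Lemma: Crowding-Out Effect Opposite} and Lemma~\ref{Lemma: Crowding-Out Effect}), re-merging such a cluster into any earlier subspace can only push its members' solutions further in that direction, so they would again be in a bad state there and could never be retained. The invariant to carry along the induction is: a bank's clearing solution in the subspace it currently occupies is at least as extreme, in the relevant direction, as the one it would receive in any subspace it previously occupied; hence no subspace is ever re-absorbed, which is the contrapositive of the dimension claim.

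For Part~2 I would extract $f)$, $g)$, $h)$ first, since they are essentially definitional once the decomposition theorems are available: at a fixed point Theorem~\ref{Decomposition and Compression Equivalence} and Theorem~\ref{Partition-induced transition} force every $\mathcal{P}_s$ to have empty residual part, i.e.\ to be purified, which by Definition~\ref{Purification} is exactly the sign-matching statement $g)$; conversely, sign-matching means no bank is in a bad state, so $\mathcal{T}_p$ is the identity and the switch (which only binds $x<0$ banks that already stay) is inert, closing $f)\Leftrightarrow g)\Leftrightarrow \mathcal{P}^t=\mathcal{P}^{t+1}$. Then $a)$ and $b)$ pin the two extreme subspaces: $\mathcal{P}_0$ loses all $s\ge 0$ banks and all bad-state $s<0$ banks in the first joint operation (by $f)$ of Part~1), leaving at most one further round to contract to $\mathcal{B}_0(\varepsilon)$, so $1\le d(\bigoplus_{\mathcal{P}_0})\le 2$; the terminal space is purified the instant it is generated, giving $d(\bigoplus_{\mathcal{P}_{\max\{s\}}})=1$, and it equals $\mathcal{B}^{\mathcal{I}}_0(\varepsilon)$ because Lemma~\ref{Lemma For Decomposition Chains in any Generating Space} sends $\mathcal{B}^{\mathcal{I}}(\varepsilon)$ strictly forward at every step. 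Item $e)$ then just observes that none of the preceding arguments used the sign of $\sum_{i\in\mathcal{P}_{0\,|\,t=0}}x^i$, only finiteness of $|\mathcal{I}_{\text{Bank}}|$ and termination of weak decomposition.

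The main obstacle is the explicit two-sided bound in $d)$ together with its equality characterization. The plan there is threefold: bound the number of intermediate subspaces by tracking the strict shrinkage of $\mathcal{P}_{s\,|\,t}\setminus\bigl(\mathcal{B}(\varepsilon)\cup\mathcal{B}^{\mathcal{I}}(\varepsilon)\bigr)$ under each weak decomposition; bound the purification cost of each intermediate subspace by the constant supplied by $f)$ of Part~1; and bound the depth of bail-in migration — the source of the quadratic term $(n+1)n$ — using the finite-risk mitigation condition~(\ref{finite-risk mitigation}) and Lemma~\ref{Lemma for changing in the same direction} to see that $\mathcal{B}^{\mathcal{I}}(\varepsilon)$ absorbs at most one new boundary bank per step while $n$ such banks remain, so the migration chain has length $O(n)$ with each link costing $O(n)$. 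The equality $8m-2+(n+1)n$ is then attained exactly when $\mathcal{P}_{0\,|\,t=1}$ and $\mathcal{P}_{1\,|\,t=1}$ are the two sign-pure splits named in the statement, because only for those configurations is no step wasted; showing that such a worst case is simultaneously realizable — tuning $A_i$, $\theta^i$, $\beta$ via Theorem~\ref{Partition-induced transition} — is the part that requires care, the remaining bookkeeping being routine. Item $c)$ drops out of the same count, restricted to the first solitary appearance of $\mathcal{B}^{\mathcal{I}}_0(\varepsilon)$.
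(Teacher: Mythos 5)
Your Part~1 and the bulk of Part~2 follow essentially the same route as the paper: item $a)$ by differencing the normalized clearing equations, $c)$ from monotonicity of the aggregated identity, $d)$ by combining Lemma~\ref{Lemma For Decomposition Chains in any Generating Space} with the Crowding-Out Effect to show the aggregate $\sum_j x^j_{s\,|\,t}$ only moves in the unfavourable direction for a returning bank, $e)$ from the comparative statics of Lemma~\ref{Lemma ForDecomposition and Compression Equivalence}, and Part~2 $a)$, $b)$, $f)$--$h)$ from the decomposition theorems plus purification. Those parts are fine.

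The genuine gap is in $2.d)$. Your three-way split gives the right \emph{order} but cannot produce the exact bound $8m-2+(n+1)n$, because the two structural facts the count actually rests on are absent from your plan. First, for the negative pre-regular chain rounds, the per-subspace cost is not the ``constant supplied by $f)$ of Part~1'' (which is one step for the specific $\mathcal{B}_0(\varepsilon)$-split): the paper must first show that no partition operation can evacuate a negative pre-regular chain entirely -- every joint operation retains at least its maximal bailout sub-chain -- and from that it bounds each intermediate negative-round subspace by \emph{four} joint operations (three for the last one), yielding the coefficient $8m-2$ after doubling. Second, the quadratic term does not come from ``$\mathcal{B}^{\mathcal{I}}(\varepsilon)$ absorbing one boundary bank per step''; it comes from the dual fact that a positive pre-regular chain sheds \emph{exactly one} element per partition operation, so the $m$-th positive-round subspace needs up to $n+1-m$ operations and the total telescopes to the triangular number $(n+1)n/2$. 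Without these two chain-partition lemmas (the paper's ``impossible partition'' and ``possible partition'' arguments), your bound would be $O(m)+O(n^2)$ with unspecified constants, and the equality characterization -- which requires exhibiting, via Lemma~\ref{Lemma ForPartitionInduced Equilibrium Transition}, attribute profiles that realize the unit-decrement worst case at every step -- cannot be closed. A smaller inaccuracy: in $2.e)$ you assert that no preceding argument used $\sum_{i\in\mathcal{P}_{0\,|\,t=0}}x^i<0$, but $2.a)$ does; dropping the assumption requires the iterative peeling of positive regular chains from ${^\chi}\mathcal{P}_{0\,|\,t}$ via Lemma~\ref{Lemma for Chain}, not just finiteness.
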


\begin{proof}[\normalfont\bfseries Proof of Lemma \ref{Perfection Switch=1}]
	\,
	
	\noindent\textbf{Proof of \textbf{Universal Properties}}: 
	
	\textbf{Proof of $a)$}: We reformulate equation system \ref{system P} as equation system \ref{system P_Switch=1} where $L^i=\frac{1}{\bar{\theta}}\cdot \left[ \bar{\theta}-\theta^i +\left(1-\bar{\theta}\right)\cdot \varepsilon \right]$.
	\begin{align}\label{system P_Switch=1}
		L^i=\frac{x^i}{A_i}+&\left(1-e^{-\beta\cdot \sum\limits_{j\in \mathcal{P}_{s\,|\,t}}x^j}\right) \quad\quad\quad\quad  \notag  \\  &\vdots \quad\quad\quad\quad\quad\quad\quad    \\   L^m=\frac{x^m}{A_m}+&\left(1-e^{-\beta\cdot \sum\limits_{j\in \mathcal{P}_{s\,|\,t}}x^j}\right) \quad\quad\quad\quad \notag
	\end{align}
	\noindent The difference between the first and last equations yields \ref{Difference in system P}.
	\begin{align}\label{Difference in system P}
		&\frac{\theta^m-\theta^i}{\bar{\theta}}=\frac{x^i}{A_i}-\frac{x^m}{A_m} \notag\\ \Longleftrightarrow &\,x^i = \frac{A_i\cdot \left(\theta^m-\theta^i\right)}{\bar{\theta}} +\frac{A_i}{A_m} \cdot x^m ,\,\forall m\in \mathcal{P}_{s\,|\,t}
	\end{align} 
	
	\textbf{Proof of $b)$}: When $\theta^i\gg \theta^j$, $x^i\geq0$ and $x^j\leq 0$, we obtain \ref{Hierarchy_impossible} which constitutes a contradiction. Therefore the state $x^i\geq0$ while $x^j\leq 0$ cannot hold when $\theta^i\gg \theta^j$.
	\begin{align}\label{Hierarchy_impossible}
		0\leq x^i = \frac{A_i\cdot \left(\theta^j-\theta^i\right)}{\bar{\theta}} +\frac{A_i}{A_j} \cdot x^j <0 ,\,\forall m 
	\end{align} 
	
	\textbf{Proof of $c)$}: Refer to Lemma \ref{Lemma ForDecomposition and Compression Equivalence}, Equation {system P Sum} has an unique solution, and $\sum\limits_{j\in \mathcal{P}_{s\,|\,t}}x^j_{s\,|\,t}=0$ when $\sum\limits_{j\in \mathcal{P}_{s\,|\,t}}A_j\cdot L^j=0$.  Refer to Lemma OA1.1 $d)$, we obtain that $\sum\limits_{j\in \mathcal{P}_{s\,|\,t}}x^j_{s\,|\,t}<0$ when $\sum\limits_{j\in \mathcal{P}_{s\,|\,t}}A_j\cdot L^j<0$, and  $\sum\limits_{j\in \mathcal{P}_{s\,|\,t}}x^j_{s\,|\,t}>0$ when $\sum\limits_{j\in \mathcal{P}_{s\,|\,t}}A_j\cdot L^j>0$. 
	
	Therefore, $\sum\limits_{j\in \mathcal{P}_{s\,|\,t}}x^j_{s\,|\,t}$ and $\sum\limits_{j\in \mathcal{P}_{s\,|\,t}}A_j\cdot L^j$ have the same sign.
	
	\textbf{Proof of $d)$}: It is equivalent to proving that $\sum\limits_{j\in \mathcal{P}_{s\,|\,t}}x^j_{s\,|\,t}$ or $\sum\limits_{j\in \mathcal{P}_{s\,|\,t}}A_j\cdot L^j$ is non-increasing over $t$ for all $1\leq s\leq \max{s}-1$. Since an increase in $\sum\limits_{j\in \mathcal{P}_{s\,|\,t}}x^j_{s\,|\,t}$ leads to less $x^f$ according to equation \ref{Do not move backward} and Crowding-out Effect (i.e., bank $f$ would fall into a bad (or worse) state if it moves backward). 
	\begin{align}\label{Do not move backward}
		s^f = \underbrace{x^f}_{\searrow} +  \underbrace{\left(1-e^{-\beta\cdot \sum\limits_{j\in \mathcal{P}_{s\,|\,t}}x^j}\right)\cdot A_f}_{\nearrow}
	\end{align} 
	\noindent Refer to Lemma \ref{Lemma For Decomposition Chains in any Generating Space} $2$,  we obtain that the Maximal Bailout Cluster of $\mathcal{P}_{s\,|\,t}$ will stay, while the Maximal Bail-in Cluster of $\mathcal{P}_{s\,|\,t}$ will leave. That is $\sum\limits_{q\in \mathcal{P}_{s\,|\,t+1}}x^q_{s\,|\,t+1}\leq \sum\limits_{j\in \mathcal{P}_{s\,|\,t}}x^j_{s\,|\,t}$. Therefore, no bank has incentive to move backward. Moreover, this implies that banks will either proliferate new space or maintain the current position in the next period. That is, $d\left(\mathcal{P}^t\right)$ is non-decreasing along the time domain.
	
	\textbf{Proof of $e)$}: Refer to Lemma \ref{Lemma ForDecomposition and Compression Equivalence} $2$ and $3$, and Lemma \ref{Lemma ForPartitionInduced Equilibrium Transition Snd}, we obtain that $\big|\mathcal{B}_0(\varepsilon)\big|$ is non-decreasing in $\beta$ and $A_i$. Since it will make more banks in set $\bigg\{ i \,\bigg|\, i\in \mathcal{P}_{0\,|\,t=0}, s^i\leq 0 \bigg\}$ change their states from $x<0$ into $x\geq 0$.
	
	\textbf{Proof of $f)$}: It's a special case of Strong Decomposition Theorem \ref{Partition-induced transition} where $\sigma=0$ and $\sigma^{\mathcal{I}}=0$.
	
	\noindent\textbf{Proof of \textbf{Properties of Equilibrium}}:
	
	\textbf{Proof of $a)$}: All bank $j\in{^\chi}\mathcal{P}_{s\,|\,t}=\left\{ j\,|\, s^j\geq 0,\, j\in \mathcal{P}_{s\,|\,t} \right\}$ will leave $\mathcal{P}_0$ since $\sum\limits_{i\in \mathcal{P}_{0\,|\,t=0}} x^i<0$. Refer to Lemma \ref{Lemma For Decomposition Chains in any Generating Space}, $\mathcal{B}_0(\varepsilon)$ will stay in $\mathcal{P}_{0\,|\,1}$. Bank $i\in\left\{ i\,|\, s^i< 0,\, i\in \mathcal{P}_{s\,|\,t},\,i\notin \mathcal{B}_0(\varepsilon) \right\}$ will leave $\mathcal{P}_0$ if $x^i\geq0$ at $t=0$, while the left part with $x^i<0$ will stay in $\mathcal{P}_0$. 
	
	If all bank $i\in\left\{ i\,|\, s^i< 0,\, i\in \mathcal{P}_{s\,|\,t},\,i\notin \mathcal{B}_0(\varepsilon) \right\}$ with $x^i\geq0$, then $\mathcal{P}_0=\mathcal{B}_0(\varepsilon)$ after the first joint operation, i.e., $d\left(\bigoplus_{\mathcal{P}_{0}}\right)=1$. If not, the left bank $i\in\left\{ i\,|\, s^i< 0,\, i\in \mathcal{P}_{s\,|\,t},\,i\notin \mathcal{B}_0(\varepsilon) ,\,  i\notin \mathcal{P}_{1\,|\,1} \right\}$ will possess $x^i_{0\,|\,1}>0$ by the definition of $\mathcal{B}_0(\varepsilon)$. In the next period, banks with $x^i_{0\,|\,1}>0$ will leave $\mathcal{P}_{0\,|\,1}$, rendering only $\mathcal{B}_0(\varepsilon)$ in $\mathcal{P}_0$ and $d\left(\bigoplus_{\mathcal{P}_{0}}\right)=2$. Refer to Lemma \ref{Lemma For Decomposition Chains in any Generating Space} again, the Maximal Bailout Cluster $\mathcal{B}_0(\varepsilon)$ will permanently stay in $\mathcal{P}_0$ in the following application of joint operations. That is $\mathcal{P}_0=\mathcal{B}_0(\varepsilon)$ under $\bigoplus_{\mathcal{P}_{0}} \,\cup\, \left\{p^s \oplus \mathcal{T}_p\right\}^{\mathring{t}}$.
	
	\textbf{Proof of $b)$}: By definition of $\mathcal{P}_{\max\{s\}}$, the subspace $\mathcal{P}_{\max\{s\}}$ is the final component to achieve stability in the system. Therefore, $d\left(\bigoplus_{ \mathcal{P}_{\max\{s\}}  }\right)=1$. According to Lemma \ref{Lemma For Decomposition Chains in any Generating Space}, $\mathcal{B}^{\mathcal{I}}_0(\varepsilon)$ will depart from its current subspace and enter newly generated space in each epoch, until the system gets perfect or the last generated subspace has no further entrants. That is $\mathcal{P}_{\max\{s\}}=\mathcal{B}^{\mathcal{I}}_0(\varepsilon)=\cdots=\mathcal{B}^{\mathcal{I}}_{\max\{s\}\,|\,t}(\varepsilon)$.
	
	\textbf{Proof of $c)$}: $t_{\mathcal{B}^{\mathcal{I}}_0(\varepsilon)}=1$ happens when $\mathcal{P}_{0\,|\,t=0}=\mathcal{B}_0(\varepsilon)\,\cup\,\mathcal{B}^{\mathcal{I}}_0(\varepsilon)$. Refer to Lemma \ref{Lemma For Decomposition Chains in any Generating Space}, the system will be stable after the first application of joint operation. That is $t_{\mathcal{B}^{\mathcal{I}}_0(\varepsilon)}=d\left(\bigoplus_{\mathcal{P}_{0}}\right)=1$.
	
	The next inequality requires $\mathcal{P}_{1\,|\,t=1}=\mathcal{P}_{0\,|\,t=0}\setminus \mathcal{B}_0(\varepsilon)$. Refer to Lemma \ref{Lemma For Decomposition Chains in any Generating Space} again, each joint operation will retain at least $\mathcal{B}^{\bigstar\left({_\chi}\mathcal{P}_{s\,|\,t},\varepsilon\right)}$, implying that removing exactly one element from $\mathcal{P}_{1\,|\,t=1}$ per epoch will maximize the hitting time for the first occurrence of $\mathcal{B}^{\mathcal{I}}_0(\varepsilon)$. 
	
	Accurately completing this segmentation process requires meeting two conditions. One of them is $\sum\limits_{i\in \mathcal{P}_{\max\{s\}\,|\,t}} x^i<0$, which gives cluster $j\in{^\chi}\mathcal{P}_{s\,|\,t}=\left\{ j\,|\, s^j\geq 0,\, j\in \mathcal{P}_{s\,|\,t} \right\}$ the motivation to leave. The other is that the internal structure of cluster $^{-}_{\chi}\mathcal{P}_{s\,|\,t}=\bigg\{i\, \bigg|\, i\in \mathcal{P}_{0\,|\,t=0}\setminus \bigg(\mathcal{B}_0(\varepsilon)\cup \mathcal{B}^{\mathcal{I}}_0(\varepsilon)\bigg),s^i\leq 0 \bigg\}$ needs to satisfy the condition that each $\mathcal{B}^{\bigstar\left({^-}{_\chi}\mathcal{P}_{s\,|\,t},\varepsilon\right)},\,\mathcal{B}^{\hollowstar\left({^-}{_\chi}\mathcal{P}_{s\,|\,t},\varepsilon\right)},\, \mathcal{B}^{\hollowstar^{2}\left({^-}{_\chi}\mathcal{P}_{s\,|\,t},\varepsilon\right)},\cdots,\,\mathcal{B}^{\hollowstar^{\big|\bigstar\left({^-}{_\chi}\mathcal{P}_{s\,|\,t},\varepsilon\right)\big|-1}\left({^-}{_\chi}\mathcal{P}_{s\,|\,t},\varepsilon\right)}$ contains only one element, meaning that in each period, we only keep one bank in the previous last generative subspace, because apart from this bank, the solutions for all other banks are greater than $0$. These two conditions can be achieved by Lemma \ref{Lemma ForPartitionInduced Equilibrium Transition}.
	
	When the last generated space only contains bank $j\in{^\chi}\mathcal{P}_{s\,|\,t}=\left\{ j\,|\, s^j\geq 0,\, j\in \mathcal{P}_{s\,|\,t} \right\}$, i.e., $t=\bigg| \bigg\{i\, \bigg|\, i\in \mathcal{P}_{0\,|\,t=0}\setminus \bigg(\mathcal{B}_0(\varepsilon)\cup \mathcal{B}^{\mathcal{I}}_0(\varepsilon)\bigg),s^i\leq 0 \bigg\} \bigg|+1$. Refer to Lemma \ref{Perfection Switch=1} $1.f)$, we only need one step to separate them into $\mathcal{B}^{\mathcal{I}}_0(\varepsilon)$ and ${^\chi}\mathcal{P}_{s\,|\,t}\setminus \mathcal{B}^{\mathcal{I}}_0(\varepsilon)$.
	
	Therefore $1\leq t_{\mathcal{B}^{\mathcal{I}}_0(\varepsilon)}\leq \bigg| \bigg\{i\, \bigg|\, i\in \mathcal{P}_{0\,|\,t=0}\setminus \bigg(\mathcal{B}_0(\varepsilon)\cup \mathcal{B}^{\mathcal{I}}_0(\varepsilon)\bigg),s^i\leq 0 \bigg\} \bigg|+2$.
	
	\textbf{Proof of $d)$}: We firstly prove that there is no partition satisfying the configuration illustrated in Figure \ref{Impossible Partition in Negative Pre-regular Chain} for any ${_\chi}\mathcal{P}_{s\,|\,t}=\left\{ i\,|\, s^i< 0,\, i\in \mathcal{P}_{s\,|\,t} \right\}$.Figure \ref{Impossible Partition in Negative Pre-regular Chain}'s leftmost diagram implies $\mathcal{B}\left(\varepsilon\right)=\left\{ x^1,\,\cdots,\, x^{n-1} \right\}$, while the central diagram implies $\mathcal{B}'\left(\varepsilon\right)=\left\{ x^1,\,\cdots,\, x^{n-k} \right\}$. This constitutes a contradiction because Lemma \ref{Lemma For Decomposition Chains in any Generating Space} establishes $|\mathcal{B}\left(\varepsilon\right)|\equiv|\mathcal{B}'\left(\varepsilon\right)|$, yet $\mathcal{B}'\left(\varepsilon\right)$ contains strictly fewer elements than $\mathcal{B}\left(\varepsilon\right)$ in the graphical representation. This implies that every partition of negative pre-regular chain $\bigstar\left({_\chi}\mathcal{P}_{s\,|\,t},\varepsilon\right)$ must retain at least one entity. We denote the departing component as $^{-}_{\chi}\mathcal{P}_{s\,|\,t}^1=\bigg\{i\, \bigg|\, i\in \bigg\{\mathcal{P}_{0\,|\,t=0}\setminus \bigg(\mathcal{B}_0(\varepsilon)\cup \mathcal{B}^{\mathcal{I}}_0(\varepsilon)\bigg)\bigg\} \setminus \mathcal{B}^{\bigstar\left({_\chi}\mathcal{P}_{s\,|\,t},\varepsilon\right)},s^i\leq 0 \bigg\}$, this notation is applied recursively, where $^{-}_{\chi}\mathcal{P}_{s\,|\,t}^0={^{-}_{\chi}}\mathcal{P}_{s\,|\,t}$ and $^{-}_{\chi}\mathcal{P}_{s\,|\,t}^2=\bigg\{i\, \bigg|\, i\in {^{-}_{\chi}}\mathcal{P}_{s\,|\,t}^1 \setminus  \mathcal{B}^{\bigstar\left({^{-}_{\chi}}\mathcal{P}_{s\,|\,t}^1,\varepsilon\right)}  \bigg\}$.
	\begin{figure}[!htbp]
		\centering
		\includegraphics[width = 0.6\textwidth]{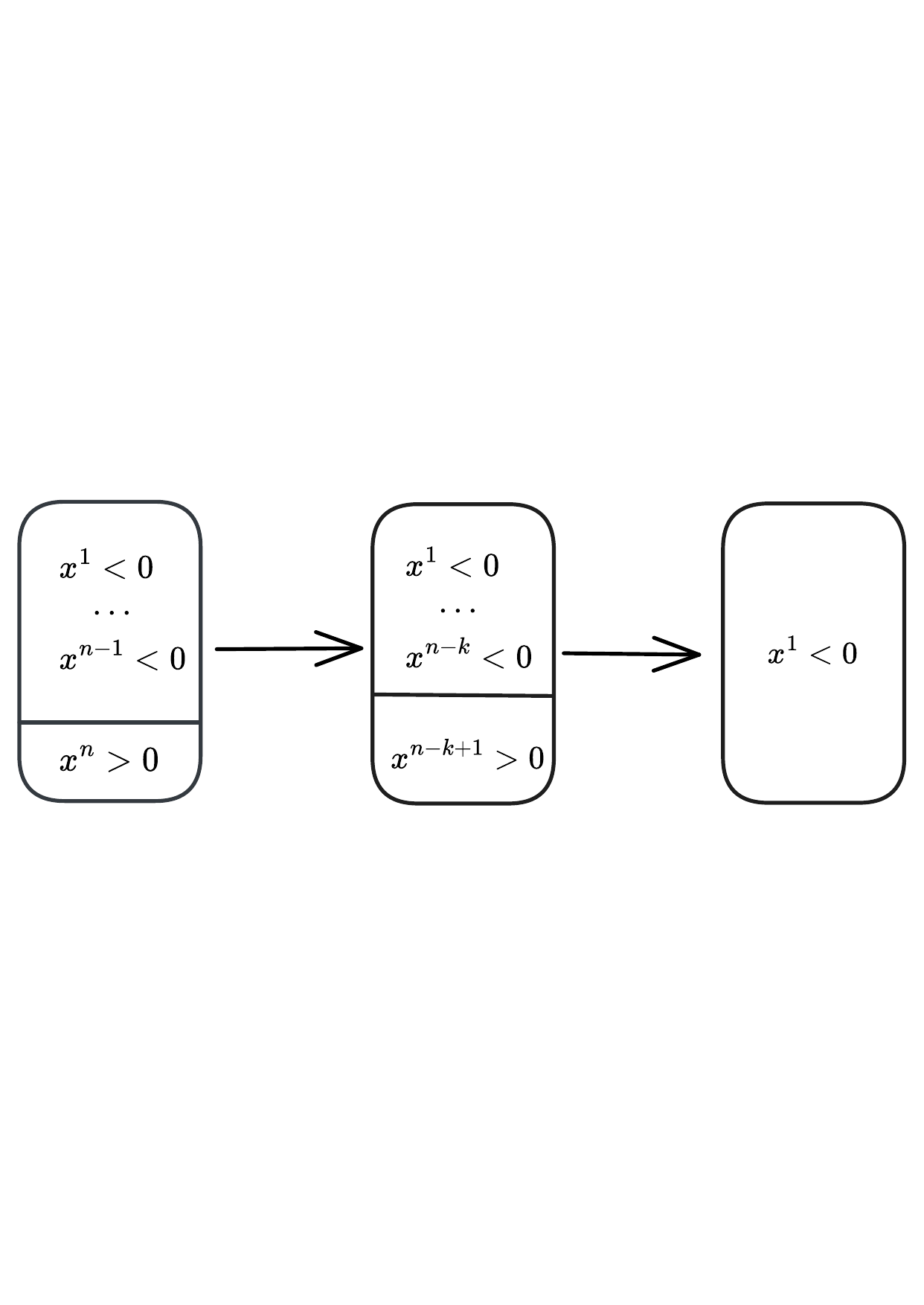}
		\caption{Impossible Partition in Negative Pre-regular Chain}
		\label{Impossible Partition in Negative Pre-regular Chain}
	\end{figure}
	
	Then we prove that a positive pre-regular chain $\bigstar\left({^\chi}\mathcal{P}_{s\,|\,t},\varepsilon\right)$ permits precisely one element removal per partition operation as Figure \ref{Possible Partition in Positive Pre-regular Chain} illustrates. In each space, the bank with the maximal index independently forms $\mathcal{B}^{\bigstar\left({^\chi}\mathcal{P}_{s\,|\,t},\varepsilon\right)}$. According to Lemma \ref{Lemma For Decomposition Chains in any Generating Space}, $\mathcal{B}^{\bigstar\left({^\chi}\mathcal{P}_{s\,|\,t},\varepsilon\right)}$ will depart from its original generative space. We denote the left part as $^{\chi}_{+}\mathcal{P}_{s\,|\,t}^1=\bigg\{i\, \bigg|\, i\in \bigg\{\mathcal{P}_{0\,|\,t=0}\setminus \bigg(\mathcal{B}_0(\varepsilon)\cup \mathcal{B}^{\mathcal{I}}_0(\varepsilon)\bigg)\bigg\} \setminus \mathcal{B}^{\bigstar\left({^\chi}\mathcal{P}_{s\,|\,t},\varepsilon\right)},s^i\geq 0 \bigg\}$, this notation is applied recursively, where ${^{\chi}_{+}}\mathcal{P}_{s\,|\,t}^2=\bigg\{i\, \bigg|\, i\in {^{\chi}_{+}}\mathcal{P}_{s\,|\,t}^1 \setminus  \mathcal{B}^{\bigstar\left({^{\chi}_{+}}\mathcal{P}_{s\,|\,t}^1,\varepsilon\right)}  \bigg\}$.
	\begin{figure}[!htbp]
		\centering
		\includegraphics[width = 0.6\textwidth]{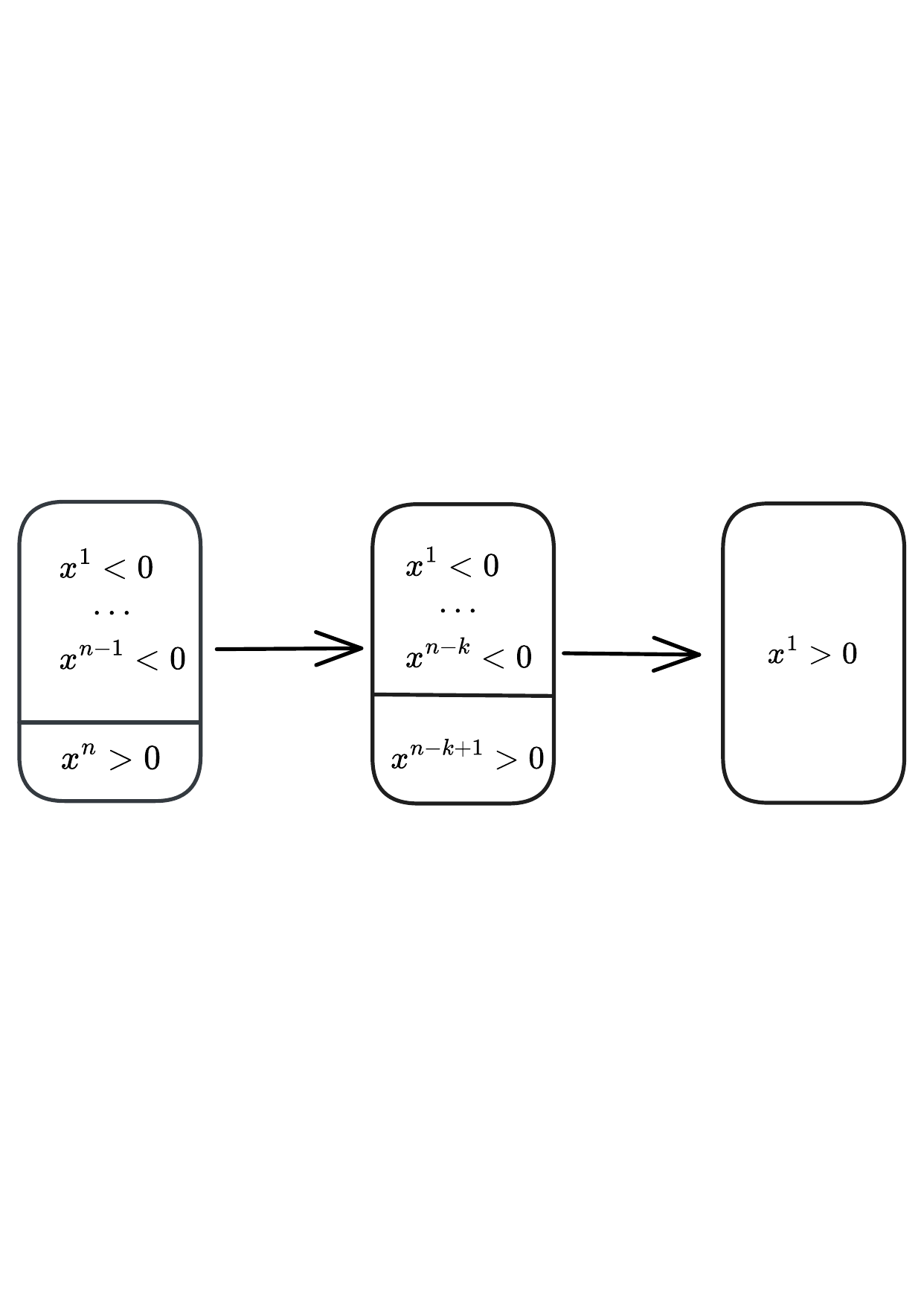}
		\caption{Possible Partition in Positive Pre-regular Chain}
		\label{Possible Partition in Positive Pre-regular Chain}
	\end{figure}
	
	Each subspace $k+1$ will experience at most four steps to get perfect during negative pre-regular chain rounds as Figure {Former Recursion} illustrates, where $k\leq \bigg|{^{-}_{\chi}}\mathcal{P}_{s\,|\,t}^0\bigg|-1,\,k\in \mathbb{N}$. In the last subspace $\bigg|{^{-}_{\chi}}\mathcal{P}_{s\,|\,t}^0\bigg|$, it only experience three steps because no further partition in a singleton negative regular chain refer to Lemma \ref{Lemma For Decomposition Chains in any Generating Space}. We argue that the unit-decrement partitioning strategy yields the largest partition count. Each subspace in negative pre-regular chain rounds requires at most four steps to get perfect, thus increasing the number of subspaces in such rounds directly extends the iteration duration. Therefore, the total step is $4\cdot\bigg| {^{-}_{\chi}}\mathcal{P}_{s\,|\,t}\bigg| -1$.
	\begin{figure}[!htbp]
		\centering
		\includegraphics[width = 0.85\textwidth]{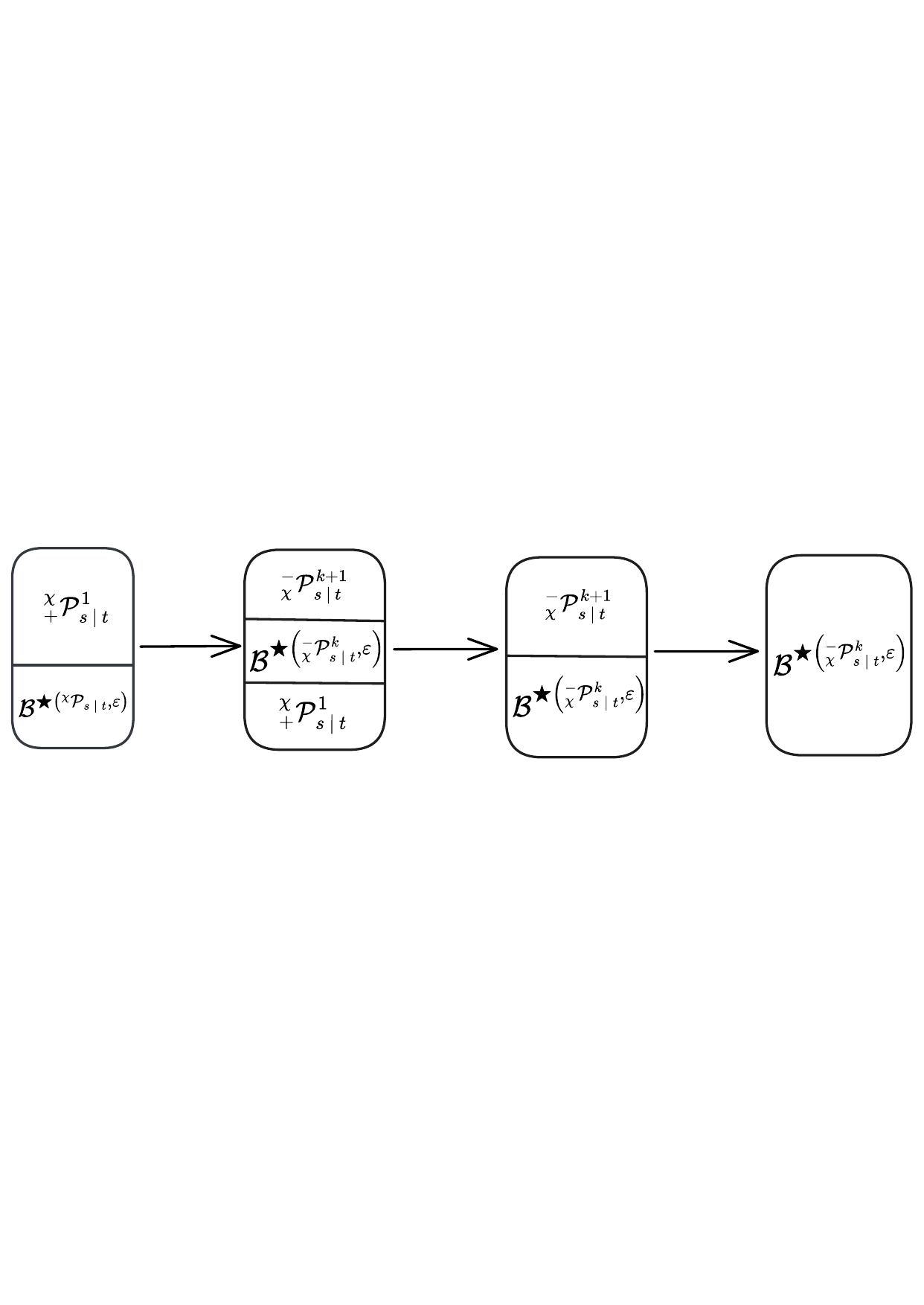}
		\caption{Iterations during negative pre-regular chain rounds}
		\label{Former Recursion}
	\end{figure}
	
	Each space $k+1+m$ requires at most $\bigg| {^{\chi}_{+}}\mathcal{P}_{s\,|\,t} \bigg|+1-m$ steps to get perfect during positive pre-regular chain rounds where $m\leq \bigg| {^{\chi}_{+}}\mathcal{P}_{s\,|\,t} \bigg|-1,\,m\in \mathbb{N_+}$. That is because we can iteratively remove single elements from space $k+1+m$ (we just proved it before, that is, ``a positive pre-regular chain $\bigstar\left({^\chi}\mathcal{P}_{s\,|\,t},\varepsilon\right)$ permits precisely one element removal per partition operation''), with each detached element to crowd out element in the next adjacent generated space which contains only one element. Figure \ref{Later Recursion} demonstrates the scenario when $m=1$. Therefore, the total step is $(1+\bigg| {^{\chi}_{+}}\mathcal{P}_{s\,|\,t} \bigg|)\cdot \bigg| {^{\chi}_{+}}\mathcal{P}_{s\,|\,t} \bigg|/2$.
	\begin{figure}[!htbp]
		\centering
		\includegraphics[width = 0.85\textwidth]{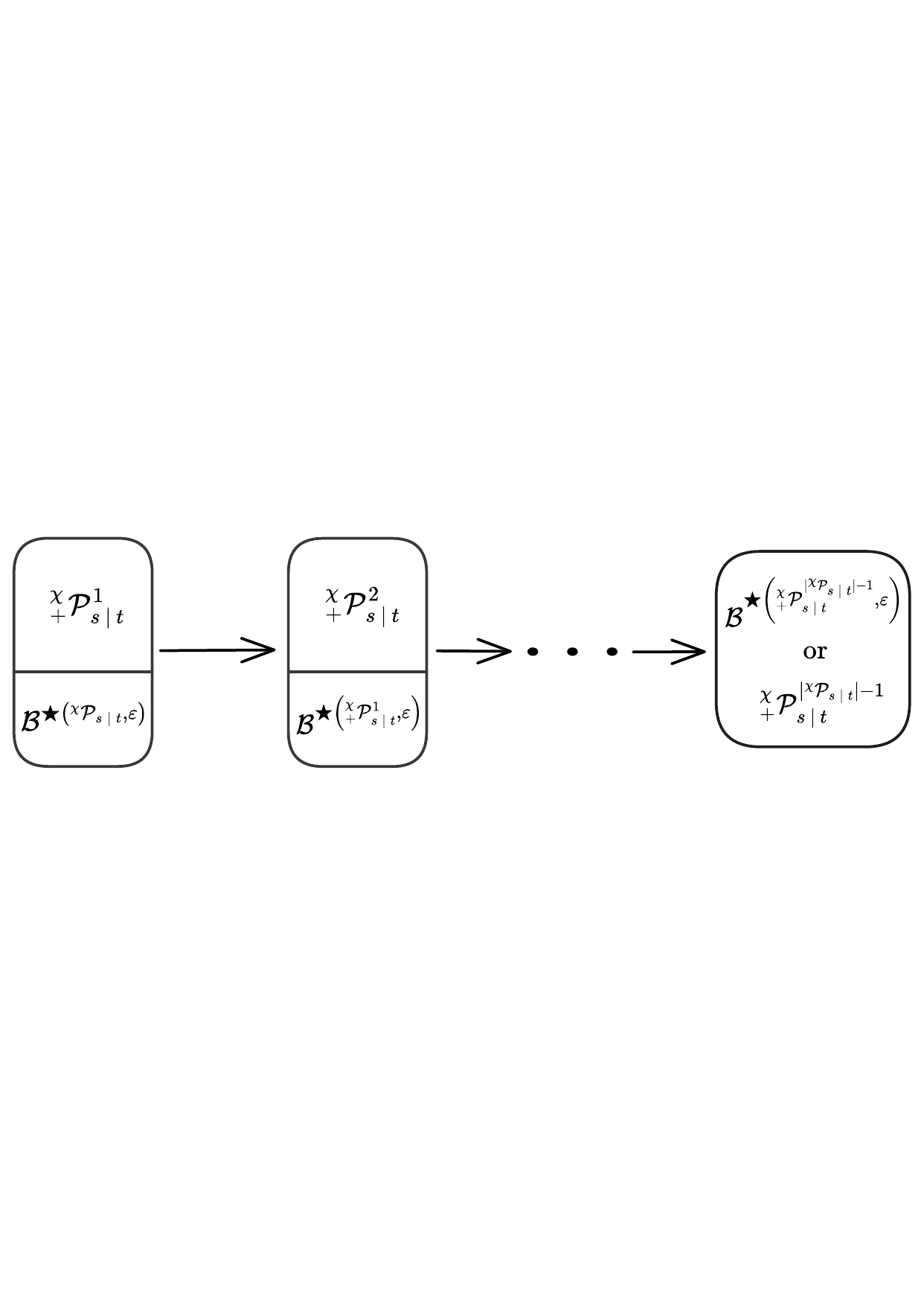}
		\caption{Iterations during positive pre-regular chain rounds}
		\label{Later Recursion}
	\end{figure}
	
	Moreover, the necessary condition for the scenario we mentioned above is ${_\chi}\mathcal{P}_{s\,|\,t}=\mathcal{P}_{0\,|\,t=1}=\bigg\{ i\,\bigg|\,i\in \mathcal{P}_{0\,|\,t=0},s^i\leq 0,x^i_{t=0}<0 \bigg\}$ and ${^\chi}\mathcal{P}_{s\,|\,t}=\mathcal{P}_{1\,|\,t=1}=\bigg\{ j\,\bigg|\,j\in \mathcal{P}_{0\,|\,t=0},s^j\geq 0,x^j_{t=0}>0 \bigg\}$. That is ${^{-}_{\chi}}\mathcal{P}_{s\,|\,t}={_\chi}\mathcal{P}_{s\,|\,t}\setminus \mathcal{B}_0(\varepsilon)=\mathcal{P}_{0\,|\,t=1}\setminus\mathcal{B}_0(\varepsilon)$ and ${^{\chi}_{+}}\mathcal{P}_{s\,|\,t}={^\chi}\mathcal{P}_{s\,|\,t}\setminus \mathcal{B}^{\mathcal{I}}_0(\varepsilon)=\mathcal{P}_{1\,|\,t=1}\setminus\mathcal{B}^{\mathcal{I}}_0(\varepsilon)$. Then we perform simple algebra to obtain $2\cdot \max\bigg[\sum\limits_{s\notin\{0,\max\{s\}\}}d\left(\bigoplus_{ \mathcal{P}_{s} }\right)\bigg]=8\cdot m -2   \,+\,  \left( n+1 \right)\cdot n$.
	
	\textbf{Proof of $e)$}: We only need to prove that $\mathcal{P}_{0}$ will eventually get stable when $\sum\limits_{i\in \mathcal{P}_{0\,|\,t=0}} x^i\geq0$. 
	Refer to Lemma \ref{Lemma for Chain} $b)$, we can decompose ${^\chi}\mathcal{P}_{s\,|\,t}=\left\{ j\,|\, s^j\geq 0,\, j\in \mathcal{P}_{s\,|\,t} \right\}$ into a positive regular chain $\bigstar\left({^\chi}\mathcal{P}_{s\,|\,t},\varepsilon\right)$ and a positive pre-regular chain ${^\chi}\mathcal{P}_{s\,|\,t}\setminus \bigstar\left({^\chi}\mathcal{P}_{s\,|\,t},\varepsilon\right)$. After the first partition, $\bigstar\left({^\chi}\mathcal{P}_{s\,|\,t},\varepsilon\right)$ will exit $\mathcal{P}_{0}$ according to Lemma \ref{Lemma For Decomposition Chains in any Generating Space}. We can further decompose ${^\chi}\mathcal{P}_{s\,|\,t}\setminus \bigstar\left({^\chi}\mathcal{P}_{s\,|\,t},\varepsilon\right)$ into a positive regular chain and a positive pre-regular chain by Lemma \ref{Lemma for Chain} $b)$ again, obviously, such a positive regular chain will exit $\mathcal{P}_{0}$. We recursively apply Lemma \ref{Lemma For Decomposition Chains in any Generating Space} and Lemma \ref{Lemma for Chain} $b)$ until a positive pre-regular chain can only be decomposed into a positive regular chain (i.e., itself, no further decomposition). Then we apply Lemma \ref{Lemma For Decomposition Chains in any Generating Space} again, such a (last negative regular) chain will leave $\mathcal{P}_{0}$, that is, $\mathcal{P}_{0}$ now is a negative pre-regular chain. Refer to Lemma \ref{Perfection Switch=1} $1.f)$,  $\mathcal{P}_{0}$ will be stable with $\mathcal{P}_{0}=\mathcal{B}_0\left(\varepsilon\right)=\mathcal{B}^{\bigstar\left({_\chi}\mathcal{P}_{s\,|\,t},\varepsilon\right)}$ in the next application of joint operation. Moreover, $\mathcal{P}_0=\mathcal{B}_0(\varepsilon)=\cdots=\mathcal{B}_{0\,|\,t}(\varepsilon)$ is guaranteed to hold refer to Lemma \ref{Lemma For Decomposition Chains in any Generating Space}.
	
	For any other any $\mathcal{P}_{s\,|\,t}$, we can divide then into ${_\chi}\mathcal{P}_{s\,|\,t}=\left\{ i\,|\, s^i< 0,\, i\in \mathcal{P}_{s\,|\,t} \right\}$ and ${^\chi}\mathcal{P}_{s\,|\,t}=\left\{ j\,|\, s^j\geq 0,\, j\in \mathcal{P}_{s\,|\,t} \right\}$ by Strong Decomposition Theorem \ref{Partition-induced transition} where $\sigma=\sigma^{\mathcal{I}}=1$. Then we can apply the same line of reasoning which used for the stability of $\mathcal{P}_{0}$ to demonstrate that any $\mathcal{P}_{s\,|\,t}$ will eventually be perfect.
	
	\textbf{Proof of $f)$}: According to Lemma \ref{Perfection Switch=1} $2.d)$, the system's stabilization time is finite, and the terminal subspace of the stable Perfection Generating Space contains only $\bigstar\left({^\chi}\mathcal{P}_{s\,|\,t},\varepsilon\right)$ which is a positive pre-regular chain, obviously the space is a Pure Bail-in Space. According to Lemma \ref{Perfection Switch=1} $2.e)$, $\mathcal{P}_{0}$ will get stable with $\mathcal{P}_{0}=\mathcal{B}_0\left(\varepsilon\right)=\bigstar\left({_\chi}\mathcal{P}_{s\,|\,t},\varepsilon\right)$, obviously $\mathcal{P}_{0}$ is a Pure Bailout Space. Lemma \ref{Perfection Switch=1} $2.e)$ implies that the system will get perfect finally. We can firstly apply Strong Decomposition Theorem \ref{Partition-induced transition} to divide any $\mathcal{P}_{s\,|\,t}$ into a positive pre-regular chain and a negative pre-regular chain. Then we can recursively apply Lemma \ref{Lemma For Decomposition Chains in any Generating Space}, Lemma \ref{Lemma for Chain} $b)$ to the decomposed pre-regular chains to get each space perfect with a regular chain, that is, any $\mathcal{P}_{s\,|\,t}$ will be a Pure Space eventually. The last negative regular chain ${_\chi}\tilde{\mathcal{P}}_{s\,|\,t}$ is the border of Pure Bail-in Space and Pure Bail-in Space according to Lemma \ref{Perfection Switch=1} $2.e)$ and Lemma \ref{Perfection Switch=1} $2.d)$. 
	
	\textbf{Proof of $g)$}: If not, suppose bank $m\in \left\{j \,|\, x^j<0,\,s^j>0,\,j\in \mathcal{P}_{s\,|\,t} \right\}$ and bank $n\in \left\{j \,|\, x^j\geq0,\,s^j>0,\,j\in \mathcal{P}_{s\,|\,t} \right\}$. Refer to the definition of Partition Perfection and Lemma \ref{Lemma For Decomposition Chains in any Generating Space}, bank $n$ who is an element of $\mathcal{B}^{\bigstar\left(\mathcal{P}_{s\,|\,t},\varepsilon\right)}$ will exit $\mathcal{P}_{s\,|\,t}$. That is the system is not perfect. Suppose bank $m\in \left\{j \,|\, x^j>0,\,s^j\leq0,\,j\in \mathcal{P}_{s\,|\,t} \right\}$ and bank $n\in \left\{j \,|\, x^j<0,\,s^j\leq0,\,j\in \mathcal{P}_{s\,|\,t} \right\}$. Obviously bank $m$ is a bad state and it implies that the system is not stable since bank $m$ will exit $\mathcal{P}_{s\,|\,t}$.
	
	\textbf{Proof of $h)$}: $f)\Longrightarrow g)$: Since $\mathcal{P}_s$ is either a Pure Bailout Space or a Pure Bail-in Space, we have all bail-in banks ($s^i>0$) with $x^i\geq0$ and all bailout banks $s^i\leq0$ with $x^i<0$. $g)\Longrightarrow \mathcal{P}^{t}=\mathcal{P}^{t+1}$: All bail-in banks are in a good state. All bailout banks cannot move. Therefore the system is stable, that is $\mathcal{P}^{t}=\mathcal{P}^{t+1}$. $\mathcal{P}^{t}=\mathcal{P}^{t+1}\Longrightarrow f)$: By definition of $\mathcal{P}^{t}=\mathcal{P}^{t+1}$, all bailout banks $s^i\leq0$ with $x^i<0$ and all bail-in banks ($s^i>0$) with $x^i\geq0$, that is $\mathcal{P}_s$ is either a Pure Bailout Space or a Pure Bail-in Space when the last Perfection Generating Space is perfect.
\end{proof}

\section{Proof of Propositions}
\subsection{Proof of Proposition \ref{Homogeneous and Homophily}}
As Figure \ref{Homgeneity and hierarchy} illustrates.
\begin{figure}[!htbp]
	\centering
	\includegraphics[width = 0.9\textwidth]{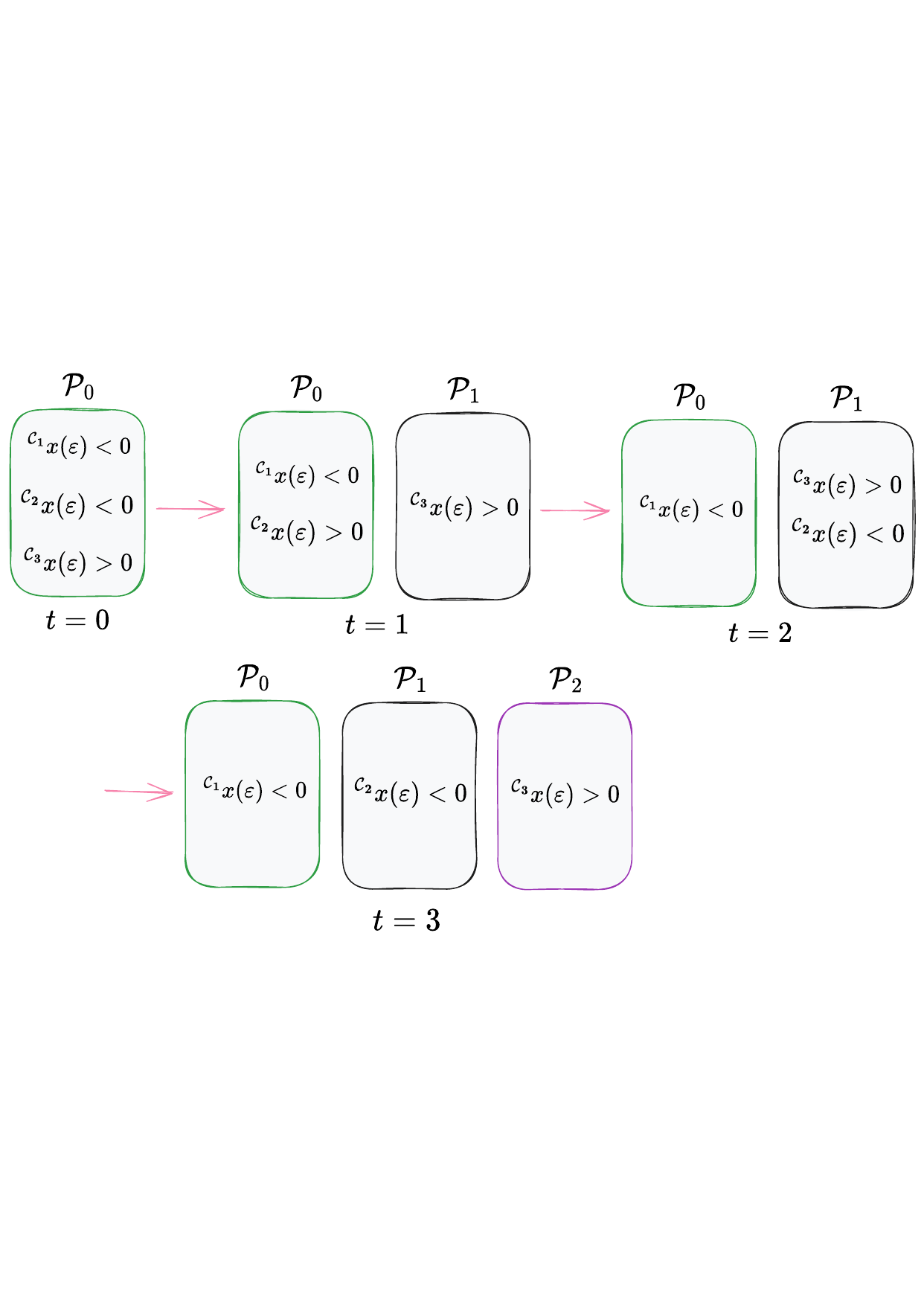}
	\caption{Perfection dynamics}
	\label{Homgeneity and hierarchy}
\end{figure}
\subsection{Proof of Proposition \ref{Schelling Point and Knightian Point}}
Refer to Lemma \ref{Lemma: Crowding-Out Effect} $(a)$, we have $p^s=1$ and $p^{\triangle}=1$.
\subsection{Proof of Proposition \ref{Emergence of Complementarity}}
Inequality \ref{inequality: Emergence of Complementarity} is a direct corollary of Lemma \ref{SubLemma: Crowding-Out Effect} $(a)$. The second inequality \ref{inequality: Emergence of Complementarity2} is a direct corollary of Lemma \ref{Lemma: Crowding-Out Effect Homophily}.
\subsection{Proof of Proposition \ref{Persistence of Substitution}}
Inequality \ref{inequality: Persistence of Substitution} is a direct corollary of Lemma \ref{SubLemma: Crowding-Out Effect} $(b)$. The second inequality is a direct corollary of Lemma \ref{Lemma: Crowding-Out Effect Homophily}.
\section{Discussion on inequality \ref{inequality: Emergence of Complementarity} and \ref{inequality: Persistence of Substitution}}\label{AppendixSection: Discussion on inequality}
We set the liquidation demand function as $s^i=x^i+\sum\limits_k\left(e^{\sum\limits_i -\beta x^i\pi_{ik}}-1 \right)M_k\cdot p_{ki}$. Obviously, $s^i$ is supmodular. We firstly prove that both $(i)$ $s^i<0$ with $x^i<0$ $\forall i$ and $(ii)$ $s^i>0$ with $x^i>0$ $\forall i$ are all possible in a subspace $\mathcal{P}_{s\,|\,t}$ (Lemma \ref{Supmodular: Properties}).
\begin{restatable}{thm3}{PropertiesofSupmodularFunction}\label{Supmodular: Properties}
	Properties of supmodular function $s^i$: 
	
	$S_1\cap S_2 \neq \varnothing$ where $S_1:=\left\{ \left(\beta,M_k\right) \,|\,  \exists (x^i,s^i)<0 \right\}$ and $S_2:=\left\{ \left(\beta,M_k\right) \,|\,  \exists (x^i,s^i)>0 \right\}$.
\end{restatable}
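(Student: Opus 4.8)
\textbf{Proof proposal for Lemma \ref{Supmodular: Properties}.}

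The plan is to exhibit, by an explicit parametric argument, that the two regions $S_1$ and $S_2$ in the parameter space $\left(\beta,M_k\right)$ overlap. The key structural fact to exploit is that the supermodular liquidation demand $s^i=x^i+\sum\limits_k\left(e^{-\beta\sum\limits_i x^i\pi_{ik}}-1\right)M_k p_{ki}$ is just the reflection of the submodular BCP map across the diagonal: the sign of the exponential-minus-one term now \emph{reinforces} rather than \emph{opposes} the direct term $x^i$. So whenever total sales $\sum_i x^i$ are negative (net credit creation in the cluster), the devaluation term is negative too, pushing $s^i$ further negative; and symmetrically for the positive case. First I would fix a single cluster $\mathcal{P}_{s\,|\,t}$ with heterogeneous banks and write the aggregated equation in the style of \ref{system P Sum}, namely $\sum_j A_j L^j = \sum_j x^j + \left(e^{-\beta\sum_j x^j}-1\right)A^{\mathcal{P}_{s\,|\,t}}$, where $L^j=\tfrac{1}{\bar\theta}\left[\bar\theta-\theta^j+(1-\bar\theta)\varepsilon\right]$ carries the dependence on the exogenous shock. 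Existence and uniqueness of the aggregate solution follow exactly as in the proof of Lemma \ref{Lemma ForDecomposition and Compression Equivalence} item 1 (the Intermediate Value Theorem argument applies verbatim since $\beta<+\infty$ and the map is monotone), and then individual components are pinned down by the cross-bank relation of Lemma \ref{Perfection Switch=1} $1.a)$.

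Next I would construct the two witnesses. For $S_1$: choose $\varepsilon$ small enough that $L^j<0$ for every bank in the cluster, so that $\sum_j A_j L^j<0$; by the sign-matching in the aggregate equation (both $\sum_j x^j$ and $e^{-\beta\sum_j x^j}-1$ must then be negative, since a positive $\sum_j x^j$ would force the right side positive) we get $\sum_j x^j<0$, and choosing the $\theta^j$ close together (near-homogeneous, as in Type-2 of Proposition \ref{Propositions about Counterfactual Monopoly Space}) forces every individual $x^i<0$ and hence every $s^i<0$. This places the corresponding $\left(\beta,M_k\right)$ in $S_1$, and crucially the argument goes through for \emph{any} $\beta$ in a whole interval and \emph{any} $M_k$ above some floor, since neither the sign of $\sum_j A_j L^j$ nor the monotone sign-matching depends on the precise values. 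For $S_2$: take $\varepsilon$ large enough that $L^j>0$ for all $j$, run the mirror-image argument to get $\sum_j x^j>0$ and, under near-homogeneity, every $s^i>0$, $x^i>0$. This puts an overlapping range of $\left(\beta,M_k\right)$ in $S_2$. Because in both constructions the admissible $\left(\beta,M_k\right)$ form an open set cut out only by mild bounds (e.g. $\beta$ below the contraction threshold $\bar\beta(\varepsilon)$ from Proposition \ref{Solutions in FCP}, and $M_k$ bounded as in the proof of that proposition), and because the only thing that changed between the two cases is the exogenous shock $\varepsilon$ rather than $\left(\beta,M_k\right)$, the two admissible sets intersect, giving $S_1\cap S_2\neq\varnothing$.

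The main obstacle I anticipate is making the ``same $\left(\beta,M_k\right)$, different $\varepsilon$'' logic airtight: I must ensure that the \emph{same} pair $\left(\beta,M_k\right)$ simultaneously admits a shock level realizing the all-negative configuration and a shock level realizing the all-positive configuration, which requires checking that the constraints on $\left(\beta,M_k\right)$ coming from the two regimes (in particular the $\varepsilon$-dependent threshold $\bar\beta(\varepsilon)$ and the boundedness constants $S(\varepsilon),W$) are mutually compatible over a nonempty range. Concretely I would pick $\beta$ strictly below $\min\{\bar\beta(\varepsilon_{\text{low}}),\bar\beta(\varepsilon_{\text{high}})\}$ and $M_k$ within the common bounded window, then verify the two sign configurations survive. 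A secondary, more routine point is confirming that near-homogeneity of the $\theta^j$ genuinely transfers the aggregate sign to every individual $x^i$ via Lemma \ref{Perfection Switch=1} $1.a)$; this is a short continuity/perturbation estimate around the fully homogeneous case where $x^i = A_i L^i / (\text{common factor})$ shares the sign of $L^i$, and I would invoke it rather than re-derive it. Once these compatibility checks are in place, the conclusion is immediate.
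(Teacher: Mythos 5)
Your proposal has a sign error at its core that breaks the construction. In the supermodular specification $s^i=x^i+\sum_k\left(e^{-\beta\sum_i x^i\pi_{ik}}-1\right)M_k p_{ki}$, the devaluation term \emph{opposes} the direct term rather than reinforcing it: if $\sum_i x^i<0$ then $e^{-\beta\sum_i x^i}-1>0$, and if $\sum_i x^i>0$ then $e^{-\beta\sum_i x^i}-1<0$ (it is the paper's original submodular map, with $1-e^{-\beta\sum_i x^i}$, that reinforces). Consequently your ``sign-matching'' step --- ``a positive $\sum_j x^j$ would force the right side positive'' --- is false, and the aggregate map $g(y)=y+\left(e^{-\beta y}-1\right)A$ is convex with a global minimum at $y^*=\ln(\beta A)/\beta$, not monotone; the existence-and-uniqueness argument of Lemma \ref{Lemma ForDecomposition and Compression Equivalence} therefore does not apply ``verbatim'' ($g(y)=c$ can have zero or two solutions), and the transfer of the sign of $\sum_j A_jL^j$ to $\sum_j x^j$ fails. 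The same opposition of signs is why membership in $S_1$ is not free: for $x<0$ the linear term must dominate the positive term $\left(e^{-\beta x}-1\right)M$, which happens only when $\beta M<1$; if $\beta M\geq 1$ the convex minimum sits at $x^*\geq 0$, so $s(x)>s(0)=0$ for every $x<0$ and $(\beta,M)\notin S_1$. Your assertion that the negative witness survives for ``any $M_k$ above some floor'' is thus backwards --- $M_k$ must lie \emph{below} the ceiling $1/\beta$.

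The paper's own proof sidesteps all of this by a direct calculus argument on the scalar function $s(x)=x+\left(e^{-\beta x}-1\right)M$: it locates the convex minimum at $x^*=\ln(\beta M)/\beta$, shows that $x^*<0$ and $s(x^*)<0$ precisely when $\beta M<1$ (giving $S_1\neq\varnothing$), and notes $s(2M)>M>0$ for every $(\beta,M)$ (giving $S_1\subseteq S_2$), whence $S_1\cap S_2=S_1\neq\varnothing$. To salvage your equilibrium-based route you would need to (i) impose $\beta M<1$ from the outset, (ii) replace the monotone fixed-point argument by an analysis of the two branches of the convex map $g$, and (iii) check that the selected branch yields individual $x^i$ of the desired sign --- none of which is a routine perturbation of the submodular case.
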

\begin{proof}[\normalfont\bfseries Proof of Lemma \ref{Supmodular: Properties}]
	\,
	
	We first prove that $S_1 \neq \varnothing$. Considering a set of symmetric solutions, we can write the system in a form similar to \ref{Equilibrium solution under BCP}, then aggregate to obtain a single equation of the form $s(x) = x + \left( e^{-\beta x} - 1 \right) M$. Differentiating with respect to $x$, we have:
	\begin{align}\label{Supmodular properties negative: Derivatives}
		\frac{ds}{dx} = 1 - \beta M e^{-\beta x} = 0
	\end{align}
	The extreme point is given by equation \ref{Supmodular properties negative: Extrem point},
	\begin{align}\label{Supmodular properties negative: Extrem point}
		e^{-\beta x} = \frac{1}{\beta M}
	\end{align}
	which implies $x = \frac{\ln(\beta M)}{\beta}$. When $x < 0$, we require $\frac{\ln(\beta M)}{\beta} < 0$. Since $\beta > 0$, this condition is equivalent to inequality \ref{Supmodular properties negative: Extrem point equivalent}
	\begin{align}\label{Supmodular properties negative: Extrem point equivalent}
		\beta \cdot M < 1
	\end{align}
	At the extreme point, $s(M) = \frac{\ln(\beta M)}{\beta} + \frac{1}{\beta} - M$, which is a monotonically increasing function of $M$, and $s(\frac{1}{\beta}) = 0$. Furthermore, due to inequality \ref{Supmodular properties negative: Extrem point equivalent}, we have $s(M) < 0$ for all $x < 0$. Therefore, $S_1 \neq \varnothing$.
	
	Furthermore, we prove that $S_2 \neq \varnothing$, $\forall \beta, M$. As $x \rightarrow \infty$, specifically taking $x = 2 \cdot M > 0$, we have:
	\begin{align}\label{Supmodular properties positive: infinity}
		\lim_{x \rightarrow \infty} s(x) &= x + \left( e^{-\beta x} - 1 \right) M \notag \\
		\Rightarrow \lim_{x \rightarrow \infty} s(x) &> x - M = M > 0
	\end{align}
	Thus, $S_2 \neq \varnothing$, and $S_1 \subsetneq S_2$. Therefore, $S_1 \cap S_2 \neq \varnothing$.
\end{proof}

Lemma \ref{Supmodular: Crowding-Out Effect} demonstrates the COE of supmodular function $s^i$.
\begin{restatable}{thm3}{LemmaCrowdingOutEffect}\label{Supmodular: Crowding-Out Effect}
	Crowding-Out Effect of supmodular function:
	
	$(a)$ \textnormal{If a bank $i$ (or a cluster $\mathcal{P^I}$) is added to the $\mathcal{P^+}$ such that $\tilde{\mathbf{\Phi}}^{i}\left(\mathbf{s}^{|\mathcal{P}^+\cup\{i\}|\times 1}\right)>0$ (or $\left[\tilde{\mathbf{\Phi}}\left(\mathbf{s}^{|\mathcal{P}^+\cup\mathcal{P^I}|\times 1}\right)\right]^{|\mathcal{P^I}|\times 1}>0$), then $\tilde{\mathbf{\Phi}}\left(\mathbf{s}^{|\mathcal{P}^+|\times 1}\right) - \left[\tilde{\mathbf{\Phi}}\left(\mathbf{s}^{|\mathcal{P}^+\cup\{i\}|\times 1}\right)\right]^{|\mathcal{P}^+|\times 1} < 0$ (or $\tilde{\mathbf{\Phi}}\left(\mathbf{s}^{|\mathcal{P}^+|\times 1}\right) - \left[\tilde{\mathbf{\Phi}}\left(\mathbf{s}^{|\mathcal{P}^+\cup\mathcal{P^I}|\times 1}\right)\right]^{|\mathcal{P}^+|\times 1} < 0$) holds.}
	
	$(b)$ \textnormal{If a bank $j$ (or a cluster $\mathcal{P^J}$) is added to the $\mathcal{P^-}$ such that $\tilde{\mathbf{\Phi}}^{j}\left(\mathbf{s}^{|\mathcal{P}^-\cup\{j\}|\times 1}\right) < 0$ (or $\left[\tilde{\mathbf{\Phi}}\left(\mathbf{s}^{|\mathcal{P}^-\cup\mathcal{P^J}|\times 1}\right)\right]^{|\mathcal{P^J}|\times 1} < 0$), then $\tilde{\mathbf{\Phi}}\left(\mathbf{s}^{|\mathcal{P}^-|\times 1}\right) - \left[\tilde{\mathbf{\Phi}}\left(\mathbf{s}^{|\mathcal{P}^-\cup\{j\}|\times 1}\right)\right]^{|\mathcal{P}^-|\times 1} > 0$ (or $\tilde{\mathbf{\Phi}}\left(\mathbf{s}^{|\mathcal{P}^-|\times 1}\right) - \left[\tilde{\mathbf{\Phi}}\left(\mathbf{s}^{|\mathcal{P}^-\cup\mathcal{P^J}|\times 1}\right)\right]^{|\mathcal{P}^-|\times 1} > 0$) holds.}
\end{restatable}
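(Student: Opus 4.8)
The plan is to run, almost verbatim, the argument behind the submodular crowding-out effect (the proofs of Lemma \ref{Lemma: Crowding-Out Effect} and Lemma \ref{SubLemma: Crowding-Out Effect}), tracking the single sign reversal that enters when the devaluation factor $1-e^{-Q_k}$, with $Q_k:=\sum_{l}\beta x^l\pi_{lk}$, is replaced by $e^{-Q_k}-1$. Everything below is understood to take place on the parameter region $S_1\cap S_2$ isolated in Lemma \ref{Supmodular: Properties}; the bound $\beta\cdot M<1$ that the proof of that lemma extracts (together with $S_1\subsetneq S_2$, so $S_1\cap S_2=S_1$) plays here the role that $\beta\le\bar{\beta}(\varepsilon)$ plays in Proposition \ref{Solutions in FCP}: I will use it to guarantee that the clearing system attached to each of $\mathcal{P^+}$, $\mathcal{P^+}\cup\mathcal{P^I}$, $\mathcal{P^-}$, $\mathcal{P^-}\cup\mathcal{P^J}$ admits a unique solution, so that every $\tilde{\mathbf{\Phi}}$ in the statement is well defined, and that the aggregated (over the cluster) form of these systems is strictly monotone in $\sum_{l}x^l$.

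First I would write out, exactly as in the proof of Lemma \ref{Lemma: Crowding-Out Effect} $(a)$, the two equation systems for $\mathcal{P^+}$ and for $\mathcal{P^+}\cup\mathcal{P^I}$, with generic row $s^i=\tilde{\mathbf{\Phi}}^i+\sum_k\bigl(e^{-\beta\pi_{ik}\sum_{l}\tilde{\mathbf{\Phi}}^l}-1\bigr)M_k p_{ki}$ and the inner sum taken over $\mathcal{P^+}$, respectively $\mathcal{P^+}\cup\mathcal{P^I}$; the right-hand side $s^i$ is pinned down by $\varepsilon,A_i,E_i,\bar{\theta}$ and is unchanged between the two systems. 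The decisive observation is that the supmodular devaluation term $\sum_k(e^{-\beta\pi_{ik}Q_k}-1)M_k p_{ki}$ is \emph{decreasing} in the aggregate $Q_k$ — the opposite of the submodular case — and that, by the strict monotonicity of the aggregated relation and the cross-bank dependence of Lemma \ref{Perfection Switch=1} (which makes the within-cluster solutions move together), appending a cluster $\mathcal{P^I}$ whose components satisfy $\bigl[\tilde{\mathbf{\Phi}}(\mathbf{s}^{|\mathcal{P^+}\cup\mathcal{P^I}|\times1})\bigr]^{|\mathcal{P^I}|\times1}>0$ strictly raises $\sum_{l}x^l$. Hence, were bank $i$ to hold its own action fixed its devaluation term would strictly fall, so to restore the fixed $s^i$ bank $i$ must strictly \emph{increase} its sales, giving $\tilde{\mathbf{\Phi}}(\mathbf{s}^{|\mathcal{P^+}|\times1})<\bigl[\tilde{\mathbf{\Phi}}(\mathbf{s}^{|\mathcal{P^+}\cup\mathcal{P^I}|\times1})\bigr]^{|\mathcal{P^+}|\times1}$, which is $(a)$. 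As in the proof of Lemma \ref{SubLemma: Crowding-Out Effect}, I would make ``must'' precise by contradiction: were the reverse weak inequality to hold for the $\mathcal{P^+}$-components, no admissible choice of the $\mathcal{P^I}$-components could simultaneously solve their own equations and keep $\bigl[\tilde{\mathbf{\Phi}}\bigr]^{|\mathcal{P^I}|\times1}>0$. Part $(b)$ is the mirror image: a cluster $\mathcal{P^J}$ with negative components strictly lowers $\sum_{l}x^l$, raises every incumbent's devaluation term, and forces each bank of $\mathcal{P^-}$ to cut sales, so $\tilde{\mathbf{\Phi}}(\mathbf{s}^{|\mathcal{P^-}|\times1})-\bigl[\tilde{\mathbf{\Phi}}(\mathbf{s}^{|\mathcal{P^-}\cup\mathcal{P^J}|\times1})\bigr]^{|\mathcal{P^-}|\times1}>0$. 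Taking $\mathcal{P^I}=\{i\}$ and $\mathcal{P^J}=\{j\}$ recovers the single-bank statements.

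The step I expect to be the genuine obstacle is the one glossed above as ``the aggregated relation is strictly monotone, hence appending a cluster moves $\sum_{l}x^l$ in the sign of the entrants' solutions''. In the submodular model this is exactly item $1$ of Lemma \ref{Lemma ForDecomposition and Compression Equivalence} (existence, uniqueness, and reduction of the sum to any single component), but here the per-bank map $\tilde{x}\mapsto\tilde{x}+\sum_k(e^{-\beta\pi_{ik}\tilde{x}}-1)M_k p_{ki}$ is no longer globally monotone, so the corresponding fixed-point and comparative-statics statement has to be re-proved. I would do this by repeating the $\mathbf{Q}$-substitution used in the proof of Proposition \ref{Solutions in FCP} and checking that on $S_1\cap S_2$ the Jacobian bound $\beta\sum_k\pi_{ik}^2 M_k p_{ki}e^{-\beta\pi_{ik}\tilde{x}}<1$ holds on the range of $\tilde{x}$ that matters — a range controlled by the extremum $\tilde{x}=\ln(\beta M)/\beta$ already computed in the proof of Lemma \ref{Supmodular: Properties} — so that the map is a contraction there and its unique equilibrium varies monotonically with the appended cluster. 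Once this monotone-comparative-statics lemma is in place, the remainder is a line-by-line sign flip of the submodular argument.
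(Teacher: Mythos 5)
Your strategy coincides with the paper's: its entire proof of Lemma \ref{Supmodular: Crowding-Out Effect} is the assertion that, ``by the same argument'' as Lemma \ref{SubLemma: Crowding-Out Effect}, the reverse inequalities cannot hold, i.e.\ exactly the sign-flipped contradiction argument you describe, and your direct ``hold the incumbent fixed, watch the devaluation term move, force the adjustment'' reading is the paper's own heuristic from the proof of Lemma \ref{Lemma: Crowding-Out Effect}.

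The step you single out as ``the genuine obstacle'' is, however, not merely a technical loose end: it is a real gap, and it sits in the paper's proof as much as in yours. In the submodular contradiction argument the two forces align: under the reverse hypothesis the incumbents' hypothesized move and the entrants' signed solutions push the aggregate $X=\sum_l x^l$ in the \emph{same} direction, so the devaluation term and the required adjustment of $x^i$ are unambiguously inconsistent. In the supermodular case (a) the reverse hypothesis has incumbents decreasing while the entrants are positive, so the sign of $X_{\mathrm{new}}-X_{\mathrm{old}}$ is ambiguous. Aggregating the system as in \ref{system P Sum} to $\sum s = X + (e^{-\beta X}-1)A^{\mathcal{P}}$ and applying the mean value theorem, one finds the reverse hypothesis is self-consistent precisely when $\beta A^{\mathcal{P}}e^{-\beta \xi}>1$ at the intermediate point $\xi$, i.e.\ on the \emph{decreasing} branch of the aggregate map --- a branch that is nonempty exactly because of the interior extremum $x=\ln(\beta M)/\beta$ you cite from the proof of Lemma \ref{Supmodular: Properties}. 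So your contraction restriction is a necessary additional hypothesis, not bookkeeping; but note two things. First, it does not come for free from $S_1\cap S_2$: the bound $\beta M<1$ only gives $1-\beta Me^{-\beta X}>0$ for $X>\ln(\beta M)/\beta$, and part (b) lives among strictly negative aggregates ($\mathcal{P^-}$ and the entrants $\mathcal{P^J}$ all have negative solutions), so you must separately verify, or assume, that the equilibrium aggregates in both systems stay above the turning point. Second, once you impose it you are proving a version of the lemma with a hypothesis absent from its statement. With that restriction made explicit your argument closes and is strictly more careful than the paper's two-line proof; without it, neither argument rules out the reverse inequality.
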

\begin{proof}[\normalfont\bfseries Proof of Lemma \ref{Supmodular: Crowding-Out Effect}]
	\,
	
	By the same argument of Lemma \ref{SubLemma: Crowding-Out Effect}, inequality \ref{Equation system: supmodular inequality proof a} cannot hold in scenario $(a)$ and \ref{Equation system: supmodular inequality proof b} cannot hold in scenario $(b)$.
	\begin{align}
		\tilde{\mathbf{\Phi}}\left(\mathbf{s}^{|\mathcal{P}^+|\times 1}\right) - \left[\tilde{\mathbf{\Phi}}\left(\mathbf{s}^{|\mathcal{P}^+\cup\mathcal{P^I}|\times 1}\right)\right]^{|\mathcal{P}^+|\times 1} > 0 \label{Equation system: supmodular inequality proof a} \\ \tilde{\mathbf{\Phi}}\left(\mathbf{s}^{|\mathcal{P}^-|\times 1}\right) - \left[\tilde{\mathbf{\Phi}}\left(\mathbf{s}^{|\mathcal{P}^-\cup\mathcal{P^J}|\times 1}\right)\right]^{|\mathcal{P}^-|\times 1} < 0  \label{Equation system: supmodular inequality proof b}
	\end{align}
\end{proof}

\begin{restatable}{thm}{EmergenceOfComplementarity}\label{Emergence of Substitution}
	The system $[\mathcal{B}^{\mathcal{I}}(\varepsilon)]$ exhibits substitution:

\begin{equation}\label{inequality: Emergence of Substitution}
	\frac{\partial\left[\tilde{\mathbf{\Phi}}^i\left(\mathbf{s}^{|\mathcal{B}^{\mathcal{I}}(\varepsilon)|\times 1}\right)-\tilde{\mathbf{\Phi}}^i\left(\mathbf{s}^{|[\mathcal{B}^{\mathcal{I}}(\varepsilon)]|\times 1}\right)\right]}{\partial \left[\mathbbm{1}_{\tilde{\mathbf{\Phi}}^j\left(\mathbf{s}^{|[\mathcal{B}^{\mathcal{I}}(\varepsilon)]|\times 1}\right)\geq 0}\right]}\leq 0,\,\forall i\in\mathcal{B}^{\mathcal{I}}(\varepsilon)
\end{equation}
\textnormal{and the inequality \ref{inequality: Emergence of Substitution2} holds when $\mathbbm{1}_{\tilde{\mathbf{\Phi}}^j\left(\mathbf{s}^{|[\mathcal{B}^{\mathcal{I}}(\varepsilon)]|\times 1}\right)\geq 0}>0$.}
\begin{equation}\label{inequality: Emergence of Substitution2}
	\frac{\partial\left[\tilde{\mathbf{\Phi}}^i\left(\mathbf{s}^{|\mathcal{B}^{\mathcal{I}}(\varepsilon)|\times 1}\right)-\tilde{\mathbf{\Phi}}^i\left(\mathbf{s}^{|[\mathcal{B}^{\mathcal{I}}(\varepsilon)]|\times 1}\right)\right]}{\partial \left[\tilde{\mathbf{\Phi}}^y\left(\mathbf{s}^{|\mathcal{B}^{\mathcal{I}}(\varepsilon)|\times 1}\right)-\tilde{\mathbf{\Phi}}^y\left(\mathbf{s}^{|[\mathcal{B}^{\mathcal{I}}(\varepsilon)]|\times 1}\right)\right]}\geq 0,\,\forall i,y\in\mathcal{B}^{\mathcal{I}}(\varepsilon)
\end{equation}
\end{restatable}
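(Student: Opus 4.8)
The plan is to reproduce, step for step, the argument behind Proposition \ref{Emergence of Complementarity}, but with the sign-reversed crowding-out lemma for the supermodular demand. First I would fix the augmented cluster $[\mathcal{B}^{\mathcal{I}}(\varepsilon)] = \mathcal{B}^{\mathcal{I}}(\varepsilon)\cup\{j\}$ and record, via Lemma \ref{Supmodular: Properties}, that under the supermodular specification $s^i = x^i + \sum_k(e^{-\sum_i\beta x^i\pi_{ik}}-1)M_k p_{ki}$ a genuine maximal bail-in cluster — one on which every equilibrium component is nonnegative — still exists, so that $\mathcal{B}^{\mathcal{I}}(\varepsilon)$ legitimately plays the role of the set $\mathcal{P}^+$ appearing in Lemma \ref{Supmodular: Crowding-Out Effect}. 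The uniqueness of the clearing solution needed to make the differences $\tilde{\mathbf{\Phi}}^i(\mathbf{s}^{|\mathcal{B}^{\mathcal{I}}(\varepsilon)|\times 1}) - \tilde{\mathbf{\Phi}}^i(\mathbf{s}^{|[\mathcal{B}^{\mathcal{I}}(\varepsilon)]|\times 1})$ well-defined is inherited from the same fixed-point and contraction argument used in the submodular case (Proposition \ref{Solutions in FCP} and Lemma \ref{Lemma ForDecomposition and Compression Equivalence} item $1$).

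For inequality \ref{inequality: Emergence of Substitution}: when the indicator $\mathbbm{1}_{\tilde{\mathbf{\Phi}}^j(\mathbf{s}^{|[\mathcal{B}^{\mathcal{I}}(\varepsilon)]|\times 1})\geq 0}$ is positive, bank $j$ enters the enlarged bail-in cluster with $\tilde{\mathbf{\Phi}}^j(\mathbf{s}^{|[\mathcal{B}^{\mathcal{I}}(\varepsilon)]|\times 1}) > 0$, so the hypothesis of Lemma \ref{Supmodular: Crowding-Out Effect} $(a)$ holds with $\mathcal{P}^+ = \mathcal{B}^{\mathcal{I}}(\varepsilon)$ and the added singleton $\{j\}$; its conclusion yields $\tilde{\mathbf{\Phi}}^i(\mathbf{s}^{|\mathcal{B}^{\mathcal{I}}(\varepsilon)|\times 1}) - \tilde{\mathbf{\Phi}}^i(\mathbf{s}^{|[\mathcal{B}^{\mathcal{I}}(\varepsilon)]|\times 1}) < 0$ for every incumbent $i\in\mathcal{B}^{\mathcal{I}}(\varepsilon)$. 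Reading this as the variation of that difference from the baseline configuration (indicator $0$, where $j$ fails to qualify as an entrant and the incumbent difference is null) to the augmented configuration (indicator $1$) gives a nonpositive increment, which is exactly \ref{inequality: Emergence of Substitution}. For inequality \ref{inequality: Emergence of Substitution2}, note that the same application of Lemma \ref{Supmodular: Crowding-Out Effect} $(a)$ makes the differences for any two incumbents $i,y\in\mathcal{B}^{\mathcal{I}}(\varepsilon)$ both strictly negative whenever $\mathbbm{1}_{\tilde{\mathbf{\Phi}}^j(\mathbf{s}^{|[\mathcal{B}^{\mathcal{I}}(\varepsilon)]|\times 1})\geq 0}>0$; two quantities of the same sign have nonnegative ratio, giving \ref{inequality: Emergence of Substitution2}. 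As an independent check one may invoke the cross-bank dependence identity of Lemma \ref{Perfection Switch=1} item $1.a)$, namely $x^i = \frac{A_i(\theta^y-\theta^i)}{\bar{\theta}} + \frac{A_i}{A_y}x^y$ with positive slope $A_i/A_y$, which forces the equilibrium responses of all banks within the cluster to move together when the entrant is absorbed.

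The main obstacle I expect is not the algebra but the bookkeeping around the indicator-valued derivative: one must argue carefully that comparing the indicator value $1$ against $0$ is the legitimate reading of $\partial/\partial\mathbbm{1}_{\tilde{\mathbf{\Phi}}^j(\mathbf{s}^{|[\mathcal{B}^{\mathcal{I}}(\varepsilon)]|\times 1})\geq 0}$, i.e. that when $j$ does not become a new entrant the augmented cluster is to be read as coinciding with $\mathcal{B}^{\mathcal{I}}(\varepsilon)$, so the difference is genuinely zero at the baseline and the monotonicity direction is unambiguous. A secondary point is that Lemma \ref{Supmodular: Crowding-Out Effect} must be applied with its hypotheses verified exactly — solvability of the enlarged equation system and the strict inequality $\tilde{\mathbf{\Phi}}^j > 0$ that pins the sign — but this step transfers verbatim from the proof of Lemma \ref{SubLemma: Crowding-Out Effect}, since supermodularity only reverses the direction of the monotone comparison that underlies the contradiction argument, not its logical skeleton.
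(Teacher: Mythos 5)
Your proposal is correct and follows the same route as the paper, whose entire proof is the one-line observation that both inequalities are direct corollaries of Lemma \ref{Supmodular: Crowding-Out Effect} $(a)$: the entrant with $\tilde{\mathbf{\Phi}}^j\geq 0$ forces $\tilde{\mathbf{\Phi}}^i\left(\mathbf{s}^{|\mathcal{B}^{\mathcal{I}}(\varepsilon)|\times 1}\right)-\tilde{\mathbf{\Phi}}^i\left(\mathbf{s}^{|[\mathcal{B}^{\mathcal{I}}(\varepsilon)]|\times 1}\right)<0$ for every incumbent, which yields \ref{inequality: Emergence of Substitution} under the indicator reading and \ref{inequality: Emergence of Substitution2} since all incumbent differences share a sign. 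Your additional care about the baseline interpretation of the indicator and the applicability of the supermodular crowding-out lemma is consistent with, and somewhat more explicit than, what the paper records.
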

\begin{proof}[\normalfont\bfseries Proof of Proposition \ref{Emergence of Substitution}]
	\,
	
	It's a direct corollary of Lemma \ref{Supmodular: Crowding-Out Effect} $(a)$.
\end{proof}

\begin{restatable}{thm}{PersistenceOfComplementarity}\label{Persistence of Complementarity}
	The system $[\mathcal{B}(\varepsilon)]$ exhibits complementarity:
	
	\begin{equation}\label{inequality: Persistence of Complementarity}
		\frac{\partial\left[\tilde{\mathbf{\Phi}}^i\left(\mathbf{s}^{|\mathcal{B}(\varepsilon)|\times 1}\right)-\tilde{\mathbf{\Phi}}^i\left(\mathbf{s}^{|[\mathcal{B}(\varepsilon)]|\times 1}\right)\right]}{\partial \left[\mathbbm{1}_{\tilde{\mathbf{\Phi}}^j\left(\mathbf{s}^{|[\mathcal{B}(\varepsilon)]|\times 1}\right)\leq 0}\right]}\geq 0,\,\forall i\in\mathcal{B}(\varepsilon)
	\end{equation}
	\textnormal{and the inequality \ref{inequality: Emergence of Substitution2} also holds for any two elements in $\mathcal{B}(\varepsilon)$ when $\mathbbm{1}_{\tilde{\mathbf{\Phi}}^j\left(\mathbf{s}^{|[\mathcal{B}(\varepsilon)]|\times 1}\right)\leq 0}>0$.}
\end{restatable}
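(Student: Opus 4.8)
The plan is to mirror the proof of Proposition~\ref{Persistence of Substitution}, replacing the submodular Crowding-Out Effect (Lemma~\ref{SubLemma: Crowding-Out Effect}) by its supermodular counterpart (Lemma~\ref{Supmodular: Crowding-Out Effect}) and flipping the inequality direction. First I would fix the supermodular liquidation demand $s^i = x^i + \sum_k\bigl(e^{-\beta\sum_i x^i\pi_{ik}} - 1\bigr)M_k p_{ki}$ and recall, via Lemma~\ref{Supmodular: Properties}, that the maximal bailout cluster $\mathcal{B}(\varepsilon)$ sits inside a subspace in which every member has $\tilde{\mathbf{\Phi}}^i<0$; this places us exactly in the case $\mathcal{P}=\mathcal{P}^-$ of Lemma~\ref{Supmodular: Crowding-Out Effect}, so that lemma can be quoted directly.

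For the first inequality I would write $[\mathcal{B}(\varepsilon)]=\mathcal{B}(\varepsilon)\cup\{j\}$ and split on the two values of the indicator. When $\mathbbm{1}_{\tilde{\mathbf{\Phi}}^j(\mathbf{s}^{|[\mathcal{B}(\varepsilon)]|\times 1})\leq 0}=1$, i.e.\ $j$ is genuinely a new credit-creating entrant, Lemma~\ref{Supmodular: Crowding-Out Effect}~$(b)$ gives $\tilde{\mathbf{\Phi}}(\mathbf{s}^{|\mathcal{B}(\varepsilon)|\times 1}) - [\tilde{\mathbf{\Phi}}(\mathbf{s}^{|[\mathcal{B}(\varepsilon)]|\times 1})]^{|\mathcal{B}(\varepsilon)|\times 1} > 0$ componentwise, so the bracketed quantity in~\eqref{inequality: Persistence of Complementarity} is strictly positive for every $i\in\mathcal{B}(\varepsilon)$; when the indicator equals $0$, $j$ is absorbed without changing the incumbents' regime and the same bracketed quantity stays at its (weakly non-positive) baseline. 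Hence the bracketed difference weakly increases as the indicator moves from $0$ to $1$, so the discrete derivative in~\eqref{inequality: Persistence of Complementarity} is $\geq 0$ — precisely the asserted complementarity, the supermodular mirror of the substitution statement of Proposition~\ref{Persistence of Substitution}.

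For the second claim — that inequality~\eqref{inequality: Emergence of Substitution2} holds for any two members $i,y\in\mathcal{B}(\varepsilon)$ once $\mathbbm{1}_{\tilde{\mathbf{\Phi}}^j\leq 0}>0$ — I would invoke the cross-bank comovement inside a cluster. The subtraction trick of Lemma~\ref{Perfection Switch=1}~$1.a)$ carries over: in the aggregated system the common (appreciation) term drops out of pairwise differences, so the shifts $\tilde{\mathbf{\Phi}}^i(\mathbf{s}^{|\mathcal{B}(\varepsilon)|\times 1})-\tilde{\mathbf{\Phi}}^i(\mathbf{s}^{|[\mathcal{B}(\varepsilon)]|\times 1})$ and $\tilde{\mathbf{\Phi}}^y(\mathbf{s}^{|\mathcal{B}(\varepsilon)|\times 1})-\tilde{\mathbf{\Phi}}^y(\mathbf{s}^{|[\mathcal{B}(\varepsilon)]|\times 1})$ carry the common sign supplied by Lemma~\ref{Supmodular: Crowding-Out Effect}~$(b)$, making their ratio non-negative. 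This is the supermodular analogue of the way the second half of Proposition~\ref{Persistence of Substitution} followed from Lemma~\ref{Lemma: Crowding-Out Effect Homophily}.

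The main obstacle I anticipate is not the sign-chasing but the underlying well-posedness: Lemma~\ref{Supmodular: Properties} only shows that the two sign regimes are simultaneously attainable, not that the counterfactual cluster solution $\tilde{\mathbf{\Phi}}$ exists and is unique for the supermodular $s^i$, whereas the contraction argument behind Proposition~\ref{Solutions in FCP} used the sign of the devaluation term. I would therefore first re-establish existence and uniqueness of the aggregated supermodular system (either by a monotone fixed-point argument exploiting that $s$ is increasing past its extreme point, or by restricting $(\beta,M_k)$ to the region identified in Lemma~\ref{Supmodular: Properties}); once that is in place, the Crowding-Out comparisons in Lemma~\ref{Supmodular: Crowding-Out Effect}~$(b)$ apply verbatim and the remainder of the proof is a routine transcription of the submodular argument with the inequality direction reversed.
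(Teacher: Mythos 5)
Your proposal is correct and follows the paper's own route: the paper proves Proposition \ref{Persistence of Complementarity} as a direct corollary of Lemma \ref{Supmodular: Crowding-Out Effect} $(b)$, which is exactly the sign-flipped crowding-out comparison you carry out, and your treatment of the second inequality via the common sign of the pairwise shifts matches how the analogous claim is handled in Propositions \ref{Emergence of Complementarity} and \ref{Persistence of Substitution}. Your added concern about existence and uniqueness of the counterfactual solution for the supermodular $s^i$ is a reasonable extra precaution that the paper itself does not address, but it does not change the argument.
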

\begin{proof}[\normalfont\bfseries Proof of Proposition \ref{Persistence of Complementarity}]
	\,
	
	It's a direct corollary of Lemma \ref{Supmodular: Crowding-Out Effect} $(b)$.
\end{proof}


\end{document}